\documentclass[journal]{IEEEtran}
\usepackage{amsfonts}
\usepackage{mathrsfs}
\usepackage{amsmath}
\usepackage{graphicx}
\usepackage{amssymb}
\usepackage{epstopdf}
\usepackage{enumerate}
\usepackage{longtable}
\usepackage{cite}
\usepackage{stmaryrd}
\usepackage{hyperref}
\DeclareGraphicsRule{.tif}{png}{.png}{`convert #1 `dirname
#1`/`basename #1 .tif`.png}

\newtheorem{corollary}{Corollary}[section]

\newtheorem{lemma}{Lemma}[section]
\newtheorem{theorem}{Theorem}[section]

\newtheorem{proposition}{Proposition}[section]

\newtheorem{construction}{Construction}[section]

\newcommand{\A}{{\mathcal A}}
\newcommand{\B}{{\mathcal B}}
\newcommand{\C}{{\mathcal C}}
\newcommand{\D}{{\mathcal D}}

\newcommand{\G}{{\mathcal G}}

\newcommand{\V}{{\mathcal{V}}}

\newcommand{\vu}{{\sf u}}
\newcommand{\vv}{{\sf v}}

\newcommand{\vzero}{{\sf 0}}

\newcommand{\supp}{{\rm supp}}

\newcommand{\bbZ}{{\mathbb{Z}}}

\title{Quaternary Constant-Composition Codes with Weight Four and Distances Five or Six}

\author{Mingzhi Zhu~and~Gennian Ge
\thanks{The research of G. Ge was supported by  the National Outstanding Youth  Science Foundation of China
under Grant No.~10825103, National Natural Science Foundation of
China under Grant No.~61171198, Specialized Research Fund for the
Doctoral Program of Higher Education,  and Zhejiang Provincial
Natural Science Foundation of China under Grant No.~D7080064.}
\thanks{M. Zhu and G. Ge (Corresponding author. Email: {\tt gnge@zju.edu.cn}) are with the Department of Mathematics, Zhejiang University,
    Hangzhou 310027, Zhejiang, People's Republic of China.}
}

\begin{document}

\maketitle

\begin{abstract}

\boldmath The sizes of optimal constant-composition codes of
weight three have been determined by Chee, Ge and Ling with  four
cases in doubt. Group divisible codes played an important role in
their constructions. In this paper, we study the problem of
constructing optimal quaternary constant-composition codes with
Hamming weight four and minimum distances five or six through  group
divisible codes and Room square approaches. The problem is solved leaving only five lengths
undetermined. Previously, the results on the sizes of such
quaternary constant-composition codes   were scarce.

\end{abstract}

\begin{keywords}
\boldmath Constant-composition codes, group divisible codes,
quaternary codes, Room square constructions.
\end{keywords}

\section{Introduction}

\PARstart{c}onstant-composition codes (CCCs) are a special type of
constant-weight codes  (CWCs) which  are important in coding theory. The
class of constant-composition codes includes the important
permutation codes and have attracted recent interest due to their
numerous applications, such as in determining the zero error
decision feedback capacity of discrete memoryless channels
\cite{Telatar:1990}, multiple-access communications
\cite{D'yachkov:1984}, spherical codes for modulation
\cite{Ericson:1995}, DNA codes
\cite{King:2003,Milenkovic:2006,Chee:2008}, powerline communications
\cite{Chu:2004,Colbourn:2004}, frequency hopping \cite{Chu:2006}, frequency permutation arrays \cite{Huczynska:2006}, and
coding for bandwidth-limited channels \cite{Costello:2007}.

Systematic study began in late 1990's \cite{Bogdanova:1998,Svanstrom:1999,Bogdanova_Ocetarova:1998}. Today, various methods have been applied to
the problem of determining the maximum size of a
constant-composition code, such as computer search methods \cite{Bogdanova:2003}, packing designs
\cite{Ding:2005(2),Chu:2006,Ding:2006,Yan:2008,YinTang:2008,Wen:2009,Yan:2009,Huczynska:2010},
tournament designs \cite{YinYan:2008}, polynomials and nonlinear
functions
\cite{Ding:2005,Ding:2005(3),Chu:2006,Ding:2008,YDing:2008},
difference triangle sets \cite{Chee:2010}, PBD-closure methods
\cite{Yeow:2007,Yeow:2008} and some other methods \cite{Svanstrom:2000,Luo:2003}.

In the paper of Svanstr{\"o}m et al. \cite{Svanstrom:2002}, some
methods for providing upper and lower bounds on the maximum size
$A_3(n,d,\overline{w})$ of a ternary code with length $n$, minimum
Hamming distance $d$, and constant composition $\overline{w}$ were
presented. The sizes of optimal ternary constant-composition codes
with weight three have been determined completely by Chee, Ge and
Ling in \cite{Yeow:2008}. The sizes of optimal ternary
constant-composition codes with  weight four and distance five
have been determined completely by Gao and Ge in \cite{Gao:2010}.

The sizes of optimal quaternary  constant-composition codes with
weight three have been determined almost completely by Chee, Ge
and Ling in \cite{Yeow:2008} with  four lengths in doubt. Recently,
 the problem of determining the sizes for optimal quaternary constant-composition codes with  weight four and
distance seven has been settled by Chee, Dau, Ling and Ling in
\cite{Chee:2010}.  In this paper, we will concentrate our
attention on quaternary CCCs  with   weight four and  distances
five or six. The problem is solved leaving only five lengths
undetermined for the case of distance five. Previously, the
results on the sizes of such quaternary constant-composition codes
were scarce.

%
%
%

\section{Preliminaries}
\vskip 10pt

\subsection{Definitions and Notations}
The set of integers $\{i,i+1,\ldots,j\}$ is denoted by $[i,j]$. The
ring $\bbZ/q\bbZ$ is denoted by $\bbZ_q$. The notation $\Lbag \cdot
\Rbag$ is used for multisets.

All sets considered in this paper are finite if not obviously
infinite. If $X$ and $R$ are finite sets, $R^X$ denotes the set of
vectors of length $|X|$, where each component of a vector $\vu\in
R^X$ has value in $R$ and is indexed by an element of $X$, that is,
$\vu=(\vu_x)_{x\in X}$, and $\vu_x\in R$ for each $x\in X$. A
{\it $q$-ary} code of {\it length} $n$ is a set $\C \subseteq \bbZ_q^X$ for some
$X$ with size $n$. The elements of $\C$ are called {\it codewords}. The
{\it Hamming norm} or the {\it Hamming weight} of a vector $\vu\in\bbZ_q^X$ is
defined as $\|\vu\|=| \{x\in X: \vu_x\not=0\}|$. The distance
induced by this norm is called the {\it Hamming distance}, denoted $d_H$,
so that $d_H(\vu,\vv)=\| \vu-\vv \|$, for $\vu,\vv\in\bbZ_q^X$. The
{\it composition} of a vector $\vu\in \bbZ_q^X$ is the tuple
$\overline{w}=[w_1,\ldots,w_{q-1}]$, where $w_j=|\{x\in X:
\vu_x=j\}|$. For any two vectors $\vu,\vv\in\bbZ_q^X$, define their
{\it support} as $\supp(\vu,\vv)=\{x\in X:\vu_x\not=\vv_x\}$. We write
$\supp(\vu)$ instead of $\supp(\vu,\vzero)$ and also call
$\supp(\vu)$ the support of $\vu$.

A code $\C$ is said to have minimum distance $d$ if $d_H(\vu,\vv)\geq d$ for
all distinct $\vu,\vv\in\C$. If $\|\vu\|=w$ for every codeword $\vu\in\C$,
then $\C$ is said to be of (constant) {\it weight $w$}. A $q$-ary code
$\C$ has {\it constant composition $\overline{w}$} if every codeword in
$\C$ has composition $\overline{w}$. A $q$-ary code of length $n$,
distance $d$, and constant composition $\overline{w}$ is referred to
as an $(n,d,\overline{w})_q$-code. The maximum size of an
$(n,d,\overline{w})_q$-code is denoted as $A_q(n,d,\overline{w})$ and
the $(n,d,\overline{w})_q$-codes achieving this size are called
{\it optimal}. Note that the following operations do not affect distance
and weight properties of an $(n,d,\overline{w})_q$-code:
\begin{enumerate}[(i)]
\item reordering the components of $\overline{w}$, and
\item deleting zero components of $\overline{w}$.
\end{enumerate}
Consequently, throughout this paper, we restrict our attention to
those compositions $\overline{w}=[w_1,\ldots,w_{q-1}]$, where
$w_1\geq \cdots\geq w_{q-1}\geq 1$.

Suppose $\vu\in\bbZ_q^X$ is a codeword of an
$(n,d,\overline{w})_q$-code, where
$\overline{w}=[w_1,\ldots,w_{q-1}]$. Let $w=\sum_{i=1}^{q-1} w_i$.
We can represent $\vu$ equivalently as a $w$-tuple $\langle
a_1, a_2, \ldots, a_w\rangle\in X^w$, where
\begin{align*}
\vu_{a_1}=\cdots=\vu_{a_{w_1}} &=1, \\
\vu_{a_{w_1+1}}=\cdots=\vu_{a_{w_1+w_2}} &=2, \\
\vdots \\
\vu_{a_{\sum_{i=1}^{q-2} w_i+1}} =\cdots=\vu_w &= q-1.
\end{align*}
Throughout this paper, we shall often represent codewords of
constant-composition codes in this form. This has the advantage of
being more succinct and more flexible in manipulation.

\subsection{General Bounds}
\vskip 10pt

\begin{lemma}[Chee et al. \cite{Yeow:2008}]\label{bound1}
\begin{equation*}
\begin{split}
& A_q(n,d,[w_1,\ldots,w_{q-1}])=\\
& \begin{cases} \binom{n}{\sum_{i=1}^{q-1}w_i}
\binom{\sum_{i=1}^{q-1}w_i}{w_1,\ldots,w_{q-1}}, & \text{if $d \leq 2$}\\
\left\lfloor \frac{n}{\sum_{i=1}^{q-1}w_i} \right\rfloor, & \text{if
$d=2
\sum_{i=1}^{q-1}w_i$}\\
1, & \text{if $d \geq 2\sum_{i=1}^{q-1}w_i+1$}.
\end{cases}
\end{split}
\end{equation*}
\end{lemma}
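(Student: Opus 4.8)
The plan is to handle the three cases of Lemma~\ref{bound1} separately, since each relies on a different elementary principle. Write $w=\sum_{i=1}^{q-1}w_i$ throughout, so that every codeword has exactly $w$ nonzero coordinates distributed among the symbols $1,\ldots,q-1$ with multiplicities $w_1,\ldots,w_{q-1}$.

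For the case $d\le 2$: since $d_H\ge 1$ is automatic for distinct vectors and $d_H$ is integer-valued, the condition ``minimum distance $d$'' with $d\le 2$ imposes no genuine constraint beyond distinctness, so an optimal code is simply the set of \emph{all} vectors in $\bbZ_q^X$ of composition $\overline{w}$. I would count these directly: choose which $w$ of the $n$ coordinates are nonzero in $\binom{n}{w}$ ways, then assign the nonzero symbols to those $w$ positions, which is a multinomial count $\binom{w}{w_1,\ldots,w_{q-1}}$. Multiplying gives the stated formula. (One should note the edge convention that if $w>n$ the code is empty and the binomial coefficient is $0$, so the formula still holds.)

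For the case $d=2w$: here two codewords at distance less than $2w$ are forbidden, and since each codeword has weight $w$, any two distinct codewords $\vu,\vv$ satisfy $d_H(\vu,\vv)\le 2w$ with equality iff their supports are disjoint. Thus the distance-$2w$ condition forces the supports of the codewords to be pairwise disjoint subsets of $X$, each of size $w$; hence $|\C|\le\lfloor n/w\rfloor$. For the matching lower bound, partition $\lfloor n/w\rfloor$ disjoint $w$-subsets of $X$ and place one fixed composition-$\overline{w}$ pattern on each; this yields a code of the required size, distance, and composition. The main thing to get right is the ``iff'' characterization of when $d_H(\vu,\vv)=2w$ for two weight-$w$ vectors, namely disjoint supports \emph{and} no coordinate where both are nonzero but equal --- disjoint supports already guarantees this.

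For the case $d\ge 2w+1$: since $d_H(\vu,\vv)\le \|\vu\|+\|\vv\|=2w$ for any two vectors of weight $w$, no two distinct codewords can be at distance $\ge 2w+1$, so $|\C|\le 1$; and a single codeword trivially forms a valid (vacuously distance-$d$) code, giving $A_q=1$. I do not anticipate a serious obstacle here; the only subtlety across all three cases is bookkeeping the degenerate ranges (empty codes, the $\binom{n}{w}=0$ convention) so that the closed-form expressions are literally correct rather than merely correct in the ``interesting'' range. Since the lemma is attributed to Chee et al.\ \cite{Yeow:2008}, I would in fact just cite that reference, but the above is the self-contained argument.
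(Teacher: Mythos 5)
The paper itself offers no proof of Lemma~\ref{bound1}: it is simply quoted from Chee, Ge and Ling \cite{Yeow:2008}, so there is no internal argument to compare yours against. Your self-contained proof follows the standard route and, writing $w=\sum_{i=1}^{q-1}w_i$ as you do, your case decomposition, the counting $\binom{n}{w}\binom{w}{w_1,\ldots,w_{q-1}}$ in the first case, the disjoint-support packing argument giving $\lfloor n/w\rfloor$ in the second, and the observation $d_H(\vu,\vv)\le\|\vu\|+\|\vv\|=2w$ in the third are exactly what the cited source uses; the degenerate-range bookkeeping you flag is indeed the only other point of care.

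One step needs tightening. In the case $d\le 2$ you justify that the distance condition is vacuous by noting that $d_H\ge 1$ is automatic for distinct vectors and integer-valued; that argument only yields minimum distance $1$, and distinct $q$-ary vectors can of course be at distance $1$, so as written it does not show that the set of \emph{all} composition-$\overline{w}$ vectors has minimum distance $2$. What rescues the claim is the constant-composition hypothesis itself: if two vectors of composition $\overline{w}$ agreed in every coordinate except one coordinate $x$, then the number of occurrences of the symbol $\vu_x$ (and of $\vv_x$) would differ between the two vectors, contradicting equality of their compositions; hence any two distinct codewords of the same composition are automatically at distance at least $2$. With that one-line observation inserted, the full set of composition-$\overline{w}$ vectors is a valid $(n,d,\overline{w})_q$-code for every $d\le 2$ and your counting gives the stated value, so the proof is correct; the remaining two cases are argued soundly as they stand.
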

\vskip 10pt

The following Johnson-type bound has been proven for
constant-composition codes.

\vskip 10pt
\begin{lemma}[Svanstr{\"o}m et al. \cite{Svanstrom:2002}]\label{bound2}
\begin{equation*}
A_q(n,d,[w_1,\ldots,w_{q-1}]) \leq \frac{n}{w_1}A_q(n-1,d,[w_1-1,
\ldots ,w_{q-1}]).
\end{equation*}
\end{lemma}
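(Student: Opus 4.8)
The plan is to use a standard double-counting (Johnson-type) argument. Let $\C \subseteq \bbZ_q^X$ be an optimal $(n,d,[w_1,\ldots,w_{q-1}])_q$-code, so that $|X| = n$ and $|\C| = A_q(n,d,[w_1,\ldots,w_{q-1}])$. For each coordinate $x \in X$, define $\C_x = \{\vu \in \C : \vu_x = 1\}$, and let $\C_x'$ be the code of length $n-1$ obtained by deleting the coordinate indexed by $x$ from every codeword in $\C_x$.

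First I would check that $\C_x'$ is an $(n-1,d,[w_1-1,w_2,\ldots,w_{q-1}])_q$-code. Since every $\vu \in \C_x$ has $\vu_x = 1$, deleting coordinate $x$ reduces the number of positions holding the value $1$ by exactly one and leaves the counts of the other symbols untouched, so each codeword of $\C_x'$ has composition $[w_1-1,w_2,\ldots,w_{q-1}]$ (with the first entry deleted when $w_1 = 1$, in accordance with the normalization fixed earlier). Moreover, for distinct $\vu,\vv \in \C_x$ we have $\vu_x = \vv_x$, hence $x \notin \supp(\vu,\vv)$, so the Hamming distance between the corresponding codewords of $\C_x'$ equals $d_H(\vu,\vv) \ge d$; in particular the puncturing map $\C_x \to \C_x'$ is injective and $\C_x'$ has minimum distance at least $d$. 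Consequently $|\C_x| = |\C_x'| \le A_q(n-1,d,[w_1-1,\ldots,w_{q-1}])$.

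Next I would count the set $\{(\vu,x) : \vu \in \C,\ \vu_x = 1\}$ in two ways. Summing over $x \in X$ gives $\sum_{x\in X}|\C_x| \le n\, A_q(n-1,d,[w_1-1,\ldots,w_{q-1}])$. Summing over $\vu \in \C$ gives $\sum_{\vu\in\C}|\{x \in X : \vu_x=1\}| = w_1|\C|$, because every codeword has composition $[w_1,\ldots,w_{q-1}]$ and hence exactly $w_1$ coordinates equal to $1$. Combining the two counts yields $w_1|\C| \le n\,A_q(n-1,d,[w_1-1,\ldots,w_{q-1}])$, and dividing through by $w_1$ (which is positive, since $w_1 \ge 1$ by our convention) gives the asserted inequality.

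There is no genuine obstacle here; the argument is elementary. The only points requiring care are the bookkeeping that puncturing at a coordinate on which all selected codewords agree preserves both the pairwise Hamming distances and the injectivity of the puncturing map, together with the edge case $w_1 = 1$, where the reduced composition loses its first coordinate under the normalization adopted earlier in the paper.
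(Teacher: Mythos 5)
Your argument is correct. The paper itself offers no proof of this lemma---it is quoted directly from Svanstr{\"o}m, {\"O}sterg\r{a}rd and Bogdanova---so there is nothing internal to compare against, but your double-counting/shortening argument is exactly the standard proof of this Johnson-type bound: fixing a coordinate, restricting to codewords with a $1$ there, puncturing (which preserves pairwise distances and injectivity since the deleted coordinate is constant on the subcode), and then counting the pairs $(\vu,x)$ with $\vu_x=1$ in two ways to get $w_1|\C|\le n\,A_q(n-1,d,[w_1-1,\ldots,w_{q-1}])$. Your handling of the $w_1=1$ edge case via the paper's convention of deleting zero components is also the right bookkeeping.
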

\vskip 10pt

Moreover, we have the following results.

\vskip 10pt
\begin{lemma}[Chee et al. \cite{Yeow:2007}] \label{bound3}
\begin{equation*}
\begin{split}
& A_q(n,d,[w_1,\ldots,w_{q-1}]) \leq\\
& \begin{cases} \left\lfloor \frac{n}{w_1} \left\lfloor
\frac{n-1}{\sum_{i=1}^{q-1}w_i-1} \right\rfloor \right\rfloor &
\text{if
$d=2\sum_{i=1}^{q-1}w_i-3$}\\
\left\lfloor \frac{n}{w_1} \left\lfloor \frac{n-1}{w_1-1}
\right\rfloor \right\rfloor & \text{if $d=2\sum_{i=1}^{q-1}w_i-2$}.\\
\end{cases}
\end{split}
\end{equation*}
\end{lemma}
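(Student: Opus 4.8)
The plan is to derive both inequalities by iterating the Johnson-type contraction that underlies Lemma~\ref{bound2}, stopping once the residual parameters land in an elementary case of Lemma~\ref{bound1}, and then reading off the nested floors. Throughout, write $W=\sum_{i=1}^{q-1}w_i$ for the total weight.

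First I would recall that contraction. Fix an optimal $(n,d,[w_1,\ldots,w_{q-1}])_q$-code $\C$ on a point set $X$ with $|X|=n$, and for $x\in X$ put $\C_x=\{\vu\in\C:\vu_x=1\}$. Any two codewords of $\C_x$ agree in coordinate $x$, so deleting that coordinate leaves Hamming distances unchanged and makes $\C_x$ into an $(n-1,d,[w_1-1,w_2,\ldots,w_{q-1}])_q$-code of size $|\C_x|$. Since every codeword of $\C$ has exactly $w_1$ coordinates equal to $1$, summing over $x$ gives $w_1|\C|=\sum_{x\in X}|\C_x|\le n\cdot A_q(n-1,d,[w_1-1,w_2,\ldots,w_{q-1}])$, and because $|\C|$ is an integer we may replace the right-hand side by its floor. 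Relabelling the alphabet, the same estimate holds with any component contracted, so the step can be repeated.

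Next I would apply it twice to the first component, which drops the total weight to $W-2$ while leaving $d$ fixed. In both cases of interest the residual distance has overtaken twice the residual weight — $2W-2\ge 2(W-2)+1$ and $2W-3\ge 2(W-2)+1$ — so the last clause of Lemma~\ref{bound1} forces the twice-contracted code to have size $1$. Substituting this back through the two contraction inequalities and collapsing each rational factor to its floor one stage at a time (legitimate since every intermediate $A_q$ is an integer) yields $\bigl\lfloor\frac{n}{w_1}\bigl\lfloor\frac{n-1}{w_1-1}\bigr\rfloor\bigr\rfloor$, which is the asserted bound for $d=2W-2$. For $d=2W-3$ the same calculation gives the same expression, and since $w_1-1\le W-1$ this is weaker than the claimed $\bigl\lfloor\frac{n}{w_1}\bigl\lfloor\frac{n-1}{W-1}\bigr\rfloor\bigr\rfloor$; to reach the stronger form one must not spend the second contraction but instead, after a single contraction, work directly with the residual $(n-1,\,2(W-1)-1,\,[w_1-1,w_2,\ldots,w_{q-1}])_q$-code, whose codewords have pairwise support intersections of size at most one — it is here, replacing the crude second contraction by a tighter incidence count, that I would invoke (or reconstruct) the argument of~\cite{Yeow:2007}.

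I expect the counting identities and the floor bookkeeping to be routine, as is the termination check (verifying $d\ge 2(W-2)+1$ in each case, so that two contractions suffice) and the handling of the degenerate range $w_1=1$, where the displayed formulas are undefined and the claim is vacuous or already subsumed by Lemma~\ref{bound1}. The one genuine obstacle is the sharper denominator $W-1$ in the first case: plain iteration of Lemma~\ref{bound2} only delivers $w_1-1$, and closing that gap is exactly the point at which the purely inductive scheme has to be supplemented by a direct, non-iterative estimate for the once-contracted code.
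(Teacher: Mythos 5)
The paper contains no proof of this lemma (it is quoted from \cite{Yeow:2007}), so your argument must be judged on its own terms, and on those terms your diagnosis of where the difficulty lies is exactly right---but the conclusion to draw from it is different from the one you drew. Your contraction machinery is correct: two applications of Lemma~\ref{bound2} followed by the clause $d\ge 2\sum w_i+1$ of Lemma~\ref{bound1} give $A_q(n,d,\overline{w})\le\bigl\lfloor\frac{n}{w_1}\bigl\lfloor\frac{n-1}{w_1-1}\bigr\rfloor\bigr\rfloor$ for every $d\ge 2W-3$ (with $w_1\ge 2$), while a single application followed by the clause $d=2\sum w_i$ gives $A_q(n,d,\overline{w})\le\bigl\lfloor\frac{n}{w_1}\bigl\lfloor\frac{n-1}{W-1}\bigr\rfloor\bigr\rfloor$ precisely when $d=2W-2$. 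In other words, what you can prove is the lemma with its two clauses interchanged---and that interchanged version is the correct statement; the version printed here is a transcription error. The paper itself confirms this: the corollary that follows extracts $A_4(n,6,[2,1,1])\le\lfloor\frac{n}{2}\lfloor\frac{n-1}{3}\rfloor\rfloor$ from the $d=2W-2$ clause, which needs the denominator $W-1=3$ rather than $w_1-1=1$; and Theorem~\ref{SSRStoCodes} together with the existence of super-simple Room squares produces $(n,5,[2,1,1])_4$-codes of size $n(n-1)/2$, which for $d=5=2W-3$ overshoots the printed bound $\lfloor\frac{n}{2}\lfloor\frac{n-1}{3}\rfloor\rfloor$ by a factor of about three.

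Consequently the ``one genuine obstacle'' you isolate---improving the denominator from $w_1-1$ to $W-1$ when $d=2W-3$---is not a missing lemma to be recovered from \cite{Yeow:2007} but an impossibility. Your proposed repair cannot succeed: after one contraction the residual code of weight $W-1$ and minimum distance $2(W-1)-1$ is constrained only to have pairwise support intersections of size at most one, and such a packing may contain on the order of $(n-1)(n-2)/\bigl((W-1)(W-2)\bigr)$ codewords, quadratic in $n$, so no incidence count will force it below the linear threshold $\lfloor\frac{n-1}{W-1}\rfloor$. Swap the two conditions and your proof is complete as written (the printed $d=2W-2$ clause with denominator $w_1-1$ remains true, since it follows from your two-step contraction, but it is the weaker of the two bounds for that distance).
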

\vskip 10pt

\begin{corollary}
\begin{equation*}
\begin{split}
A_4(n,5,[2,1,1]) \leq n \left\lfloor \frac{n-1}{2}\right\rfloor\\
\end{split}
\end{equation*}
\begin{equation*}
A_4(n,6,[2,1,1]) \leq \left\lfloor \frac{n}{2} \left\lfloor
\frac{n-1}{3} \right\rfloor \right\rfloor.
\end{equation*}
\end{corollary}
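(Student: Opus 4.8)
The plan is to prove both inequalities by applying the Johnson-type recursion of Lemma~\ref{bound2} to peel symbols off the composition $[2,1,1]$ one at a time, stopping each time at a composition whose maximum size is given \emph{exactly} by Lemma~\ref{bound1}. The only delicate points are choosing the order in which the symbols are removed, and invoking integrality of the (integer-valued) left-hand sides at the right moments so that the iterated fractions collapse to the nested floors appearing in the statement.

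For the distance-six bound I would peel off the weight-two symbol first. Lemma~\ref{bound2} gives $A_4(n,6,[2,1,1])\le\tfrac{n}{2}A_4(n-1,6,[1,1,1])$; the composition $[1,1,1]$ has total weight $3$ and $6=2\cdot 3$, so the middle case of Lemma~\ref{bound1} yields $A_4(n-1,6,[1,1,1])=\lfloor (n-1)/3\rfloor$, and since $A_4(n,6,[2,1,1])$ is an integer this collapses to $A_4(n,6,[2,1,1])\le\lfloor \tfrac{n}{2}\lfloor (n-1)/3\rfloor\rfloor$.

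For the distance-five bound I would instead first reorder the composition to bring a weight-\emph{one} symbol to the front, writing $A_4(n,5,[2,1,1])=A_4(n,5,[1,1,2])$. Lemma~\ref{bound2}, followed by deletion of the resulting zero component, then gives $A_4(n,5,[1,1,2])\le n\,A_3(n-1,5,[2,1])$. A second application of Lemma~\ref{bound2} gives $A_3(n-1,5,[2,1])\le\tfrac{n-1}{2}A_3(n-2,5,[1,1])$, and since $5\ge 2(1+1)+1$ the last case of Lemma~\ref{bound1} forces $A_3(n-2,5,[1,1])=1$; hence $A_3(n-1,5,[2,1])\le\lfloor (n-1)/2\rfloor$ by integrality, and substituting back yields $A_4(n,5,[2,1,1])\le n\lfloor (n-1)/2\rfloor$.

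I do not expect a genuine obstacle: once Lemmas~\ref{bound1} and~\ref{bound2} are in hand the computation is essentially mechanical. The one step that is not purely routine is recognizing why the peeling order matters in the distance-five case. Peeling the weight-two symbol first would land on $A_4(n-1,5,[1,1,1])$, a case covered neither by the exact formulas of Lemma~\ref{bound1} nor by Lemma~\ref{bound3} (for the composition $[1,1,1]$ one has $2\cdot 3-3=3$ and $2\cdot 3-2=4$, neither equal to $5$), and would ultimately yield only the weaker bound $\tfrac{n(n-1)}{2}$; exposing a weight-one coordinate before invoking Lemma~\ref{bound2} is exactly what sharpens the estimate to $n\lfloor (n-1)/2\rfloor$.
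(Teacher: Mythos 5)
Your proof is correct. For the distance-five bound you follow the paper's own route: reorder the composition so that a weight-one symbol leads, apply Lemma~\ref{bound2} twice, terminate with the case $d\ge 2\sum_i w_i+1$ of Lemma~\ref{bound1}, and invoke integrality of the left-hand side; this is exactly what the paper's hint ``$w_1=1$, $w_2=2$, $w_3=1$'' encodes. For the distance-six bound the paper instead cites Lemma~\ref{bound3}, whereas you derive the bound from Lemma~\ref{bound2} together with the middle case of Lemma~\ref{bound1}. The two routes are computationally the same --- the relevant case of Lemma~\ref{bound3} is itself proved by precisely this one-step peeling --- but your version is actually the safer one here: as printed, the $d=2\sum_i w_i-2$ case of Lemma~\ref{bound3} reads $\lfloor\frac{n}{w_1}\lfloor\frac{n-1}{w_1-1}\rfloor\rfloor$, which for $[2,1,1]$ and $d=6$ yields only $\lfloor n(n-1)/2\rfloor$, while the bound $\lfloor\frac{n}{2}\lfloor\frac{n-1}{3}\rfloor\rfloor$ claimed in the corollary matches the formula $\lfloor\frac{n}{w_1}\lfloor\frac{n-1}{\sum_i w_i-1}\rfloor\rfloor$ attached to the other case; the two cases of Lemma~\ref{bound3} appear to be transposed in the paper (the printed $d=2\sum_i w_i-3$ case would even contradict the size-$n(n-1)/2$ distance-five codes constructed later). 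Your direct derivation sidesteps this and confirms the corollary as stated. Your closing observation about the peeling order at distance five is also accurate: peeling the weight-two symbol first lands on $A_4(n-1,5,[1,1,1])$, which none of the exact formulas cover, and the best one can then extract is $\lfloor n(n-1)/2\rfloor$, weaker than $n\lfloor (n-1)/2\rfloor$ when $n$ is even.
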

\begin{proof}
The first equation follows by Lemma \ref{bound2} and Lemma
\ref{bound1} where $w_1=1, w_2=2$ and $w_3=1$.

The second equation follows by Lemma \ref{bound3} and Lemma
\ref{bound1} where $w_1=2, w_2=1$ and $w_3=1$.
\end{proof}
\vskip 10pt

In the following, we denote $U(n,5,[2,1,1])=n \left\lfloor
\frac{n-1}{2}\right\rfloor$ and $U(n,6,[2,1,1])=\left\lfloor
\frac{n}{2} \left\lfloor \frac{n-1}{3} \right\rfloor \right\rfloor$.

\subsection{Designs}
\vskip 10pt

Our recursive construction is based on some combinatorial structures
in design theory. The most important tools are pairwise balanced
designs (PBDs) and group divisible designs (GDDs).

Let $K$ be a subset of positive integers and let $\lambda$ be a
positive integer. A {\it pairwise balanced design} (PBD($v,K,\lambda$) or
($K,\lambda$)-PBD) of order $v$ with block sizes from $K$ is a pair
($\V,\B$), where $\V$ is a finite set (the point set) of cardinality
$v$ and $\B$ is a family of subsets (blocks) of $\V$ that satisfy
(1) if $B\in \B$, then $|B|\in K$ and (2) every pair of distinct
elements of $\V$ occurs in exactly $\lambda$ blocks of $\B$. The
integer $\lambda$ is the index of the PBD. The notations PBD($v,K$)
and $K$-PBD of order $v$ are often used when $\lambda=1$. If an
element $k\in K$ is ``starred'' (written $k^\star$), it means that
the PBD has exactly one block with size $k$.

\vskip 10pt
\begin{lemma}[\cite{HBPBD3}]
\label{PBD4-6} For any integer $v\geq 10$, a $(v,\{4,5,6\},1)$-PBD
exists with exceptions $v\in \{7,8,9,10,$ $11,12,14,15,18,19,23\}$.
\end{lemma}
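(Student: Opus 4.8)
The plan is to prove this by the standard PBD-closure method together with Wilson's asymptotic existence theorem. First I would observe that the numerical necessary conditions are vacuous here: $\gcd\{k-1:k\in\{4,5,6\}\}=1$ and $\gcd\{k(k-1):k\in\{4,5,6\}\}=2$, so every integer $v$ passes the replication and pair-count divisibility tests, and the only conceivable obstructions are sporadic small orders. By Wilson's theorem a $(v,\{4,5,6\},1)$-PBD then exists for all sufficiently large $v$, so the proof divides into two parts: (i) lowering the asymptotic threshold to an explicit, manageable value by an efficient recursion, and (ii) settling each order below that threshold by hand.

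For part (i) I would apply Wilson's Fundamental Construction: take a transversal design $\mathrm{TD}(k,m)$ (equivalently $k-2$ mutually orthogonal Latin squares of order $m$) for a convenient $k\in\{5,6\}$, truncate some of its groups if needed, and then fill in the groups — adjoin a common new point $\infty$ and place a $(g_i+1,\{4,5,6\},1)$-PBD on each group together with $\infty$ — which yields a $(1+\sum_i g_i,\{4,5,6\},1)$-PBD; truncating groups interpolates the intermediate orders. Carried out with the standard bookkeeping for the moduli involved (the Frobenius number of $\{4,5,6\}$ is $7$, so every integer $\ge 8$ is a non-negative combination of $4,5,6$ and the group and block sizes produced can always be kept admissible), this reduces the large range to finitely many cases. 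The ingredients required are a short list of small $\{4,5,6\}$-PBDs and GDDs, produced in part (ii).

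For part (ii): orders $13$, $16$ and $21$ are the Steiner systems $S(2,4,13)$, $S(2,4,16)$ and $S(2,5,21)$, i.e. the projective plane of order $3$, the affine plane of order $4$ and the projective plane of order $4$, each of which is already a $\{4,5,6\}$-PBD. Order $17$ comes from a $\mathrm{TD}(4,4)$ by adjoining $\infty$ and covering each group-plus-$\infty$ with a single block of size $5$, and order $20$ from a $\mathrm{TD}(5,4)$ (its sixteen blocks all have size $5$) by adjoining each of its five groups as a block of size $4$. The remaining orders — $22$, $24$, $25$, $26$, $27$ and so on up to the threshold — are each reached by selecting an appropriate GDD (typically a truncated transversal design of somewhat larger order, or a $\{4,5,6\}$-GDD of suitable mixed type) and filling its groups with the ingredients above, verifying in each case that no group or block of inadmissible size is created. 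Finally, for $v\in\{14,15,18,19,23\}$ (and for $v\le 12$) one checks, by combining the replication and block-count equations with a short finite case analysis, that no $(v,\{4,5,6\},1)$-PBD exists, which accounts for precisely the listed exceptions.

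The step I expect to be the main obstacle is part (ii) in the medium range, roughly from $24$ up to the asymptotic threshold: for most such $v$ there is neither a Steiner system with a single admissible block size nor a transversal design of convenient parameters, so each $v$ must be assembled from a carefully chosen GDD, and the delicate point is to organize the bookkeeping so that the group sizes are always orders for which a filling PBD is available, no block of size outside $\{4,5,6\}$ ever arises, and only finitely many genuinely ad hoc constructions remain. The non-existence arguments for the five exceptional orders are routine by comparison.
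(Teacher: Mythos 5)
The paper offers no proof to compare against: Lemma~\ref{PBD4-6} is quoted directly from the PBD-closure tables of the Handbook \cite{HBPBD3}, so what you have written is an attempt to reprove a classical result, and as it stands it has two genuine gaps. The first is the non-existence half, which is in fact the delicate content of the statement. You correctly note that the divisibility conditions for $K=\{4,5,6\}$ are vacuous, but you then assert that non-existence at the exceptional orders follows from ``the replication and block-count equations with a short finite case analysis'' and is ``routine by comparison.'' These two remarks contradict each other: precisely because the numerical conditions hold at every order, the global equation $\binom{v}{2}=6a+10b+15c$ and the per-point equation $v-1=3r_4+4r_5+5r_6$ are solvable at $v=14,15,18,19,23$ (for instance $171=6\cdot 1+15\cdot 11$ and $18=3\cdot 6$ at $v=19$, $253=6\cdot 38+10\cdot 1+15\cdot 1$ and $22=3\cdot 4+5\cdot 2$ at $v=23$), so no counting argument can exclude them. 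Non-existence is a structural fact about linear spaces with line sizes in $\{4,5,6\}$ (blocks meeting pairwise in at most one point), and for the larger exceptions $18$, $19$ and especially $23$ the known arguments are substantial case analyses; your sketch supplies none of them, yet without them the exception set in the lemma is unjustified.

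The second gap is that the existence half between the small ad hoc orders and Wilson's threshold is only a plan. Wilson's asymptotic theorem gives no explicit usable bound, and the ``standard bookkeeping'' you invoke --- truncating a ${\rm TD}(5,m)$ or ${\rm TD}(6,m)$, adjoining a point $\infty$, and filling each group $G$ with a $(|G|+1,\{4,5,6\},1)$-PBD while keeping all truncated block sizes in $\{4,5,6\}$ and all values $|G|+1$ admissible --- is exactly where the work lies. You identify this yourself as the main obstacle but do not carry it out for a single order beyond $21$, nor do you specify which transversal designs you would use or check their existence (note that ${\rm TD}(6,m)$ and ${\rm TD}(7,m)$ fail for several small $m$, cf.\ Lemma~\ref{TD}, so the recursion cannot be run blindly). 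The direct constructions you do give for $13$, $16$, $17$, $20$, $21$ are correct, but overall the proposal is a roadmap for the literature proof rather than a proof: both the finite existence range and all of the non-existence assertions remain open in your write-up.
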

\vskip 10pt

\begin{lemma}[Ling, Zhu, Colbourn, Mullin \cite{Ling:1997}]
\label{PBD5-9} For any integer $v\geq 10$, a
$(v,\{5,6,7,8,9\},1)$-PBD exists with exceptions $v\in
[10,20]\cup[22,24]\cup[27,29]\cup[32,34]$.
\end{lemma}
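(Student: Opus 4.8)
This is a classical pairwise balanced design existence theorem, and the plan is the standard ``necessary conditions plus PBD-closure plus small cases'' recipe; in the present paper it is simply imported from the literature, but here is how one would establish it from scratch. First I would record that there is essentially no arithmetic obstruction: for $K=\{5,6,7,8,9\}$ one has $\gcd\{k-1:k\in K\}=\gcd\{4,5,6,7,8\}=1$ and $\gcd\{k(k-1):k\in K\}=\gcd\{20,30,42,56,72\}=2$, so the usual congruence conditions $v-1\equiv 0$ modulo the first and $v(v-1)\equiv 0$ modulo the second are vacuous. Hence for $v\ge 10$ the admissible spectrum should be exactly $\{21,25,26,30,31\}\cup\{v:v\ge 35\}$, and the proof breaks into three parts: (i) a recursive engine that produces a $(v,K,1)$-PBD for every $v\ge 35$ from a finite stock of ``seed'' designs; (ii) explicit direct constructions for the sporadic orders $v\in\{21,25,26,30,31\}$ and for the seeds; (iii) nonexistence proofs ruling out $v\in[10,20]\cup[22,24]\cup[27,29]\cup[32,34]$.

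For part (i) the workhorse is Wilson's fundamental construction via (truncated) transversal designs. Starting from a $\mathrm{TD}(k,m)$ with $5\le k\le 9$, I would truncate up to $k-5$ of its groups so that every surviving transversal block has size in $[5,k]\subseteq[5,9]$, obtaining a $K$-GDD; then I would ``fill'' each group — possibly adjoined to a fixed pool of $w$ extra points common to one group's filled blocks — with a smaller $(K,1)$-PBD, producing a PBD whose order has the form $(k-1)m+w$ or $km+w$. Since a $\mathrm{TD}(k,m)$ exists for all sufficiently large $m$ with only a short known list of small exceptions when $k\le 9$, and since $\{5,\dots,9\}$ consists of five \emph{consecutive} integers (so every residue class can be absorbed by an appropriate choice of $w$ and $k$), one can cover $[35,\infty)$ by finitely many such constructions indexed by a convenient modulus (e.g.\ building around $6m$ from $\mathrm{TD}(7,m)$ and around $7m$ or $8m$ from larger transversal designs), taking the PBDs on the groups and on the extra points from the already-settled part of the spectrum. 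An induction on $v$ then closes the tail once the base values are in hand.

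Parts (ii) and (iii) are where all the real work lies. The sporadic orders are handled by hand: $v=25=5^2$ carries an affine plane with every line of size $5$; $v=21$ arises from a near-resolvable or cyclic construction; and $v=26,30,31$ come from $\{6\}$- or $\{7\}$-GDDs with a few adjusted blocks, difference families, or adding/deleting points in known PBDs, with a computer search as a backstop — the same techniques supply the finitely many seeds needed in (i). The genuinely hard step is (iii): since no single congruence forbids these orders, each excluded $v$ must be killed by an ad hoc counting argument built from the identities $\sum_{B}\binom{|B|}{2}=\binom{v}{2}$ and $\sum_{B\ni x}(|B|-1)=v-1$, together with a local analysis of how the blocks through a fixed point interlock, eliminating each admissible block-size profile in turn (for the smallest cases $v=10,11,12$ a one-line parity/covering count already suffices, but the larger ones such as $v=19,20,22,23,24$ require careful case splits). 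Assembling every one of these impossibility proofs, and certifying that the seed list in (i) is complete, is the main obstacle; the recursive skeleton itself is routine.
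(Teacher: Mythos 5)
The paper does not prove this lemma at all: it is quoted directly from Ling, Zhu, Colbourn and Mullin \cite{Ling:1997}, so there is no internal argument to compare with. Your three-part plan (recursive PBD-closure for large $v$, direct constructions for the sporadic admissible orders, nonexistence for the listed exceptions) is indeed the shape of the determination carried out in that reference, so the approach itself is the right one.

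As submitted, however, it is a roadmap rather than a proof, and the decisive content is exactly what is missing. First, the recursion: you never exhibit the finite seed list, nor verify that truncated transversal designs, with the actual exception lists for ${\rm TD}(k,m)$, $5\le k\le 9$, really cover every $v\ge 35$; the phrase ``one can cover $[35,\infty)$ by finitely many such constructions'' is the assertion to be proved, not an argument, and adjoining $w$ extra points requires either $w\le 1$ or ingredient PBDs containing a distinguished block of size $w$, a hypothesis you do not track. Second, the sporadic orders need no computer search: $21$ is ${\rm PG}(2,4)$, $25$ is ${\rm AG}(2,5)$, $31$ is ${\rm PG}(2,5)$, $30$ is ${\rm PG}(2,5)$ with a point deleted, and $26$ is ${\rm AG}(2,5)$ with one extra point completing a parallel class; deferring these to ``a computer search as a backstop'' indicates they were not actually done. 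Third, the nonexistence part is both deferred and miscalibrated: the single observation that in a PBD which is not a single block, a point off a largest block $B$ lies on at least $|B|$ distinct blocks, each contributing at least four new points, gives $v\ge 4\cdot 5+1=21$ and eliminates all of $[10,20]$ at once, whereas the orders genuinely requiring case analysis are $22$--$24$, $27$--$29$ and $32$--$34$ (there one first bounds the maximum block size via $v\ge 4k_{\max}+1$ and then kills the surviving block-size distributions by counting, e.g. all blocks of size $5$ forces $v\equiv 1$ or $5\pmod{20}$), not $v=19,20$ as you suggest. So the plan is sound in outline, but none of its three pillars is actually established; citing \cite{Ling:1997}, as the paper does, remains the only complete justification on the table.
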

\vskip 10pt

\begin{lemma}[Rees, Stinson \cite{Rees-Stinson:1989}] \label{PBD4*} There exists a
$(v,\{4,w^\star\},1)$-PBD with $v>w$ if and only if $v\geq 3w+1$,
and:
\begin{enumerate}
\item[(i)] $v\equiv 1$ or $4\pmod{12}$ and $w\equiv 1$ or $4\pmod{12}$;  or
\item[(ii)] $v\equiv 7$ or $10\pmod{12}$ and $w\equiv 7$ or $10\pmod{12}$.
\end{enumerate}
\end{lemma}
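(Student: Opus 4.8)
The plan is to treat the two implications separately: necessity by a short double-counting argument, and sufficiency by the standard recursive constructions of design theory.

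\emph{Necessity.} Suppose a $(v,\{4,w^\star\})$-PBD $(\V,\B)$ exists with $v>w$, and let $W\in\B$ be its unique block of size $w$; every other block has size $4$. A point $x\notin W$ lies in $(v-1)/3$ blocks, all of size $4$, so $v\equiv 1\pmod 3$. A point $x\in W$ has its $w-1$ internal pairs covered by $W$ and its remaining $v-w$ pairs covered by size-$4$ blocks, so $v\equiv w\pmod 3$; hence $v\equiv w\equiv 1\pmod 3$. Since any two points of $W$ already occur together in $W$, no size-$4$ block meets $W$ in more than one point; a point $y\notin W$ must therefore be joined to all $w$ points of $W$ by $w$ distinct blocks, so $(v-1)/3\ge w$, i.e. $v\ge 3w+1$. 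Finally the $\binom{v}{2}-\binom{w}{2}$ pairs not lying in $W$ are partitioned by the size-$4$ blocks into sets of size $\binom{4}{2}=6$, so $\binom{v}{2}\equiv\binom{w}{2}\pmod 6$; because $v\equiv w\equiv 1\pmod 3$ already forces $\binom{v}{2}\equiv\binom{w}{2}\equiv 0\pmod 3$, this is equivalent to $\binom{v}{2}\equiv\binom{w}{2}\pmod 2$. Inspecting the parity of $\binom{n}{2}$, this says that $v$ and $w$ are either both congruent to $0$ or $1$ modulo $4$, or both congruent to $2$ or $3$ modulo $4$; combined with $v\equiv w\equiv 1\pmod 3$ this is precisely the alternative (i)/(ii).

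\emph{Sufficiency.} The first observation is an equivalence: removing the block $W$ from a $(v,\{4,w^\star\})$-PBD leaves a $\{4\}$-GDD of type $1^{v-w}w^1$ (the points outside $W$ become groups of size one and $W$ becomes the unique group of size $w$), and conversely adjoining the $w$-group as a block reconstructs the PBD. So it suffices to build a $\{4\}$-GDD of type $1^{v-w}w^1$ for every admissible pair $(v,w)$. When $w\in\{1,4\}$ this object is just a Steiner system $S(2,4,v)$, which exists for all $v\equiv 1,4\pmod{12}$ by Hanani's theorem; a finite list of further small cases (small $w\equiv 7,10\pmod{12}$, and a few parameters near the bound $v=3w+1$) are handled directly by difference methods over $\bbZ_n$ and small ad hoc designs.

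For the remaining ``large'' parameters I would use Wilson's fundamental construction: start from a master $\{4\}$-GDD — or from a transversal design TD$(5,m)$ with one group truncated, which is a $\{4,5\}$-GDD — assign each point a weight in $\{0,1,3\}$ so that the weighted type is $1^{v-w}w^1$, and fill the inflated groups by induction with smaller $\{4\}$-GDDs of type $1^{a}w^1$ or with copies of $S(2,4,n')$; a complementary ``filling in holes'' step produces a $\{4\}$-GDD of type $1^{v-w}w^1$ from one of type $g^u m^1$ by adjoining a common point $\infty$ to each group of size $g$ and placing an $S(2,4,g+1)$ there, while $M\cup\{\infty\}$ receives a smaller $(m+1,\{4,w^\star\})$-PBD by induction. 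Throughout one tracks the residues of all orders modulo $12$ so that every GDD, transversal design, and PBD invoked is itself admissible and the induction closes. The delicate part is not the generic recursion but the boundary: parameters with $v$ close to $3w+1$ leave almost no room in the GDD, and there are finitely many small orders for which no suitable master design exists, so these demand explicit constructions (base blocks, small transversal and group divisible designs, occasionally computer search). Arranging the induction so that every pair $(v,w)$ permitted by (i) or (ii) reduces to a verified base case — a case analysis over the residues $1,4,7,10\pmod{12}$ — and confirming that no exceptional parameters remain, is where the real effort is concentrated.
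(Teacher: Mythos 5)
The paper does not prove this lemma at all: it is quoted verbatim as a known theorem of Rees and Stinson \cite{Rees-Stinson:1989}, so there is no internal argument to compare yours against. Judged on its own, your necessity half is correct and complete: the counts giving $v\equiv 1\pmod 3$, $v\equiv w\pmod 3$, $v\ge 3w+1$ (via the fact that a quadruple meets the long block in at most one point), and the pair-count divisibility $\binom{v}{2}\equiv\binom{w}{2}\pmod 6$, which together with $v\equiv w\equiv 1\pmod 3$ reduces to the parity split $\{1,4\}$ versus $\{7,10\}$ modulo $12$, do yield exactly conditions (i)/(ii).

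The genuine gap is the sufficiency direction, which is where the entire content of the theorem lives. What you give is a plan, not a proof: you correctly reformulate the problem as the existence of a $\{4\}$-GDD of type $1^{v-w}w^1$ and name the standard tools (Hanani's $S(2,4,v)$ spectrum, Wilson's fundamental construction with truncated transversal designs, filling in holes), but you never specify the weighting scheme, the master designs actually available for each residue pair, the ingredient GDDs of types such as $3^4$, $3^5$, $6^4$, etc., or the finite list of small and boundary cases (in particular the tight cases $v=3w+1$ and $v$ close to $3w+1$, where the ``hole'' leaves essentially no slack) that must be built explicitly. You acknowledge this yourself (``where the real effort is concentrated''), but acknowledging the hard part is not the same as doing it: without the closed induction and the explicit exceptional constructions one cannot conclude that \emph{every} admissible pair $(v,w)$ is realized, and indeed this case analysis is precisely what occupies the Rees--Stinson paper. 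As it stands, the proposal establishes only the ``only if'' half of the lemma together with a credible but unexecuted strategy for the ``if'' half.
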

\vskip 10pt

Let $K$ and $G$ be sets of positive integers and let $\lambda$ be a
positive integer. A {\it group divisible design} of index $\lambda$ and
order $v$ (($K,\lambda$)-GDD) is a triple ($\V,\G,\B$), where $\V$
is a finite set of cardinality $v$, $\G$ is a partition of $\V$ into
parts (groups) whose sizes lie in $G$, and $\B$ is a family of
subsets (blocks) of $\V$ that satisfy (1) if $B\in \B$ then $|B|\in
K$, (2) every pair of distinct elements of $\V$ occurs in exactly
$\lambda$ blocks or one group, but not both, and (3) $|\G|>1$. If
$v=a_1g_1+a_2g_2+ \dots +a_sg_s$, and if there are $a_i$ groups with
size $g_i$, $i=1, 2, \ldots, s$, then the $(K,\lambda)$-GDD is of
type $g_1^{a_1}g_2^{a_2}\dots g_s^{a_s}$. This is exponential
notation for the group type. If $K=\{k\}$, then the
($K,\lambda$)-GDD is a ($k,\lambda$)-GDD. If $\lambda=1$, the GDD is
a $K$-GDD. Furthermore, a ($\{k\},1$)-GDD is a $k$-GDD. A {\it parallel
class} or {\it resolution class} is a collection of blocks that partition
the point-set of the design. A GDD is {\it resolvable} if the blocks of
the design can be partitioned into parallel classes. A resolvable
GDD is denoted by RGDD.

\vskip 10pt
\begin{lemma} [\cite{HBGDD}]
\label{4RGDD38} There exists a $4$-RGDD of type $g^u$ for each $(g,u)\in
\{(3,8),(4,7)\}$.
\end{lemma}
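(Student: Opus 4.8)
Both objects are small, so the plan is to exhibit each one directly. First one checks the standard divisibility conditions for a $4$-RGDD of type $g^u$, namely $4\mid gu$ (each parallel class partitions $gu$ points into blocks of size four) and $3\mid g(u-1)$ (the replication number $g(u-1)/3$ must be a positive integer); both hold for $(g,u)\in\{(3,8),(4,7)\}$. So a $4$-RGDD of type $3^8$ must consist of $7$ parallel classes of $6$ blocks on $24$ points, and a $4$-RGDD of type $4^7$ of $8$ parallel classes of $7$ blocks on $28$ points.

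For the type $4^7$ case I would start from a resolvable $2$-$(28,4,1)$ design, which is known to exist since $28\equiv 4\pmod{12}$. It has $r=(28-1)/3=9$ parallel classes, each partitioning the $28$ points into $7$ blocks of size four. Delete one parallel class $\mathcal{P}_0$ and take its seven blocks to be the groups. Then a pair of points in a common group was covered by $\mathcal{P}_0$ and hence by no surviving block, a pair of points in two distinct groups is covered by exactly one surviving block, and the surviving blocks split into the remaining $8$ parallel classes. This is precisely a $4$-RGDD of type $4^7$, so the only ingredient needed is the resolvable $2$-$(28,4,1)$ design.

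For the type $3^8$ case a short direct construction is cleanest. Put the $24$ points as $(\bbZ_7\times\{0,1,2\})\cup\{\infty_0,\infty_1,\infty_2\}$, with groups $\{(i,0),(i,1),(i,2)\}$ for $i\in\bbZ_7$ together with $\{\infty_0,\infty_1,\infty_2\}$, and let $\bbZ_7$ act by $(i,j)\mapsto(i+1,j)$, $\infty_j\mapsto\infty_j$. I would look for a base parallel class $\mathcal{P}$ of six blocks of size four partitioning the point set, each block meeting every group in at most one point; such a $\mathcal{P}$ necessarily contains one block through each of $\infty_0,\infty_1,\infty_2$ and three further blocks inside $\bbZ_7\times\{0,1,2\}$. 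Because $\bbZ_7$ fixes every group setwise, each of the seven translates $\mathcal{P},\mathcal{P}+1,\dots,\mathcal{P}+6$ is again such a partition, so resolvability is automatic; what remains is to guarantee that the $42$ developed blocks cover every cross-group pair exactly once. A short count shows this reduces to a difference condition on the six base blocks (the differences in the $\bbZ_7$-coordinate, together with the $\infty$-to-label incidences, each occurring the prescribed number of times), and a small finite search supplies a valid $\mathcal{P}$; one then lists the resulting $42$ blocks. Alternatively one may simply quote the explicit block array for this parameter set from the design-theory literature and verify its properties directly.

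The surrounding verifications --- that pairs are covered once and that the blocks fall into parallel classes --- are routine. The one step needing genuine work is locating a valid base parallel class for the type $3^8$ design, and that is the step I expect to be the main obstacle; the type $4^7$ case is immediate from the existence of a resolvable $2$-$(28,4,1)$ design.
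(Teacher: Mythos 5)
The paper does not actually prove this lemma; it quotes it from the Handbook of Combinatorial Designs, so any explicit argument you give is necessarily a different route. Your treatment of type $4^7$ is complete and correct: two blocks of a $2$-$(28,4,1)$ design meet in at most one point, so deleting one parallel class of a resolvable such design (which exists since $28\equiv 4\pmod{12}$) and declaring its seven blocks to be groups yields a $4$-GDD of type $4^7$ whose remaining blocks fall into the other $8$ parallel classes --- indeed the two objects are equivalent, since reinstating the groups as a ninth parallel class recovers the resolvable design. What this buys you is a reduction to a classical existence theorem instead of a table look-up; what it costs is that the resolvable $2$-$(28,4,1)$ design is itself a nontrivial cited object, so in the end you too are leaning on the literature, just at a different point.

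The type $3^8$ half has a genuine gap. Your framework is sound: $24$ points $(\bbZ_7\times\{0,1,2\})\cup\{\infty_0,\infty_1,\infty_2\}$, groups the seven columns plus the infinite triple, one base parallel class developed under $\bbZ_7$; the counting checks out ($42$ blocks covering the $\binom{24}{2}-8\binom{3}{2}=252$ cross-group pairs six at a time), and resolvability of each translate is indeed automatic because the action preserves the groups. But the entire content of the lemma in this case is the existence of a base parallel class satisfying the difference conditions, and you never exhibit one: ``a small finite search supplies a valid $\mathcal{P}$'' announces a proof rather than giving one, and the fallback ``quote the explicit block array from the literature'' concedes the point. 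To close the gap you must either write down the six base blocks and verify directly that the $\bbZ_7$-differences and the $\infty$-incidences each occur the prescribed number of times, or cite the known $4$-RGDD of type $3^8$ explicitly --- which is exactly what the paper does.
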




\vskip 10pt
\begin{lemma} [\cite{HBGDD}]
\label{4GDD} There exists a $4$-GDD of type $g^u m^1$ for each
$(g,u,m)\in \{(3,5,0),$ $(4,6,7),$ $(12,4,18),$ $(12,5,18),$
$(15,4,21)\}$.
\end{lemma}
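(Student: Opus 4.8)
The plan is as follows. Each of the five types satisfies the standard divisibility conditions for a $4$-GDD --- in every case a point meets a multiple of $3$ other points and the number of pairs not lying inside a group is a multiple of $6$ --- so there is no arithmetic obstruction, and in fact all five types are recorded in the tables of \cite{HBGDD}. To reconstruct them I would treat the uniform type $3^5$ by a direct difference construction, the type $4^6 7^1$ (on $31$ points) by an ad hoc direct construction, and the three large types $12^4 18^1$, $12^5 18^1$, $15^4 21^1$ recursively.

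For the type $3^5$, take $\bbZ_{15}$ with the five groups the cosets of $\{0,5,10\}$ and develop the single base block $\{0,1,3,7\}$ under $\bbZ_{15}$. Its internal $\pm$-differences are $\pm1,\pm2,\pm3,\pm4,\pm6,\pm7$, i.e.\ precisely the twelve elements of $\bbZ_{15}$ that are not multiples of $5$, each exactly once; hence the $15$ translates cover every pair of points in distinct groups exactly once and no pair inside a group, which is exactly a $4$-GDD of type $3^5$. For the type $4^6 7^1$ I would search for a short-orbit construction --- e.g.\ on $\bbZ_{24}$ adjoined with seven infinite points, the six finite groups being the cosets of $\{0,6,12,18\}$ --- realizing the $68$ required blocks with a handful of base blocks, or else simply quote the value from the $4$-GDD tables.

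For the three large types, write $12 = 3\cdot 4$, $15 = 3\cdot 5$, $18 = 3\cdot 6$, $21 = 3\cdot 7$ and apply Wilson's Fundamental Construction with uniform weight $3$: inflating a master GDD of type $4^4 6^1$, $4^5 6^1$, respectively $5^4 7^1$, and replacing each block of size $k$ by a $4$-GDD of type $3^k$, yields exactly the desired type. The only $4$-GDDs of type $3^k$ available at the relevant sizes are the transversal design TD$(4,3)$ (type $3^4$) and the type-$3^5$ GDD constructed above, because no $4$-GDD of type $3^k$ exists unless $k\equiv 0,1 \pmod 4$; hence the master GDDs must have every block of size $4$ or $5$. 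These small mixed GDDs of types $4^4 6^1$, $4^5 6^1$, $5^4 7^1$ I would obtain from truncated or modified transversal designs, or once more from \cite{HBGDD}.

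The main obstacle is producing these ingredient designs. The naive choices fail: there is no pure $4$-GDD of type $4^4 6^1$ (a point of the group of size $6$ would have to meet $16\equiv 1\pmod 3$ other points), and truncating a single group of a TD$(5,n)$ produces a $\{4,5\}$-GDD of type $n^4 m^1$ only with $m\le n$, whereas in our targets the distinguished group is \emph{larger} than the others. One is therefore forced either to use a master with mixed block sizes and then break up the oversized blocks with smaller GDDs, or to find genuinely new direct constructions for these small orders. Once the right ingredients are in hand, Wilson's construction together with the standard filling-in-groups argument assembles the three large $4$-GDDs routinely, so the real content of the lemma is precisely the verification that those ingredient designs exist.
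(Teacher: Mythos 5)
This lemma is not proved in the paper at all: it is a catalogue result quoted directly from the group-divisible-design tables in \cite{HBGDD}, so the ``paper's proof'' is just the citation. Your attempt to reconstruct actual constructions is therefore more ambitious than what the paper does, and the parts you complete are sound: the feasibility arithmetic is right, the base block $\{0,1,3,7\}$ developed over $\bbZ_{15}$ with groups the cosets of $\{0,5,10\}$ does have difference list exactly $\bbZ_{15}\setminus\{0,5,10\}$ and hence yields the $4$-GDD of type $3^5$, and your observations that $4$-GDDs of type $3^k$ force $k\equiv 0,1\pmod 4$ and that no pure $4$-GDD of type $4^4 6^1$ can exist are both correct.

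The genuine gap is that four of the five cases are never actually established. For $4^6 7^1$ you produce only the block count $68$ and then defer to an unspecified search or to the tables; for $12^4 18^1$, $12^5 18^1$ and $15^4 21^1$ your Wilson-construction route is valid only conditional on master $\{4,5\}$-GDDs of types $4^4 6^1$, $4^5 6^1$ and $5^4 7^1$, and you yourself show that the obvious sources (truncated transversal designs, completed parallel classes) cannot supply them because the distinguished group is larger than the others. You end by naming this as ``the real content of the lemma'' without closing it, so the argument reduces one set of unproved existence claims to another. To finish honestly you would either have to exhibit those small ingredient designs (or the extremal cases directly, e.g.\ $12^4 18^1$ from a resolvable $3$-GDD of type $12^4$ by completing its $18$ parallel classes with $18$ ideal points), or do what the paper does and simply cite \cite{HBGDD}; as written, the proposal does neither for these four types.
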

\vskip 10pt

A $\{k\}\text{-GDD}$ of type $n^k$ is also called a {\em transversal
design} and denoted by ${\rm TD}(k,n)$.

\vskip 10pt
\begin{lemma}[\cite{Abel:2007}]
\label{TD} Let $n$ be a positive integer. Then:
\begin{enumerate}
\item[(i)] a TD$(5,n)$ exists if $n\not\in \{2,3,6,10\}$;
\item[(ii)] a TD$(6,n)$ exists if $n\not\in \{2,3,4,6,10,14,18,22$\};
\item[(iii)] a TD$(7,n)$ exists if $n\not\in \{2,3,4,5,6,10,14,15,18,20,$ $22,26,30,34,38,46,60\}$.
\item[(iv)] a TD$(8,9)$ exists.
\end{enumerate}
\end{lemma}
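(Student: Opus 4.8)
Since a ${\rm TD}(k,n)$ is precisely a $\{k\}$-GDD of type $n^k$, and such a design is equivalent to a set of $k-2$ mutually orthogonal Latin squares (MOLS) of order $n$, the statement reduces to lower bounds on $N(n)$, the maximum number of MOLS of order $n$: parts (i)--(iii) assert $N(n)\ge 3$, $N(n)\ge 4$ and $N(n)\ge 5$ for every $n$ outside the listed exception sets, while part (iv) asserts $N(9)\ge 6$. The plan is to assemble these bounds from the three classical ingredients -- the finite-field construction, the product (MacNeish) construction, and the refined recursive constructions for the residual small orders.

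First I would record the prime-power case: for a prime power $q$, $\bbGF(q)$ furnishes $N(q)=q-1$ (for each nonzero $a$ the array $L_a(i,j)=ai+j$ is a Latin square, and distinct values of $a$ give orthogonal squares), so a ${\rm TD}(k,q)$ exists whenever $q\ge k-1$. In particular this settles (iv) immediately: $9$ is a prime power, hence $N(9)=8$ and a ${\rm TD}(8,9)$ -- indeed a ${\rm TD}(10,9)$ -- exists.

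Next I would invoke the product construction: the direct product of Latin squares yields MacNeish's bound $N(m_1\cdots m_t)\ge\min_i N(m_i)$, so, together with the prime-power case, a ${\rm TD}(k,n)$ exists whenever every prime power in the factorization of $n$ is at least $k-1$. This already disposes of all but finitely many orders in each of (i)--(iii); the residual orders are exactly those divisible by a small prime power ($2$, $3$, $4$, $\dots$) for which the crude MacNeish bound falls short.

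Finally I would clear those residual orders with the sharper recursive machinery -- Wilson-type constructions for transversal designs (e.g.\ combining ${\rm TD}$s of orders $m$, $n$ and $m+1$ to produce ${\rm TD}$s of orders near $mn$, together with truncation and group-filling arguments), PBD-closure of the set of admissible orders, and group-divisible constructions -- seeded by a finite list of direct or computer-generated ${\rm MOLS}$ for the genuinely stubborn orders (for instance three ${\rm MOLS}$ of orders such as $14,15,18,22,\dots$, and four or five ${\rm MOLS}$ of the corresponding larger orders for (ii) and (iii)). What then survives is precisely the stated exception sets: orders for which the required number of ${\rm MOLS}$ is provably impossible (e.g.\ $N(2)=1$, $N(3)=2$, $N(4)=3$, $N(6)=1$), together with a few orders -- $10$ in (i); $10,14,18,22$ in (ii); and the corresponding entries of (iii) -- for which the bound is still open. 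The main obstacle is therefore not any individual construction but the exhaustive bookkeeping: verifying that the recursive constructions, fed by the finitely many seed designs, cover every order outside the exception sets and that no listed exception can be deleted. I will not reproduce this here; all four parts are quoted from \cite{Abel:2007} (see also the \emph{Handbook of Combinatorial Designs}), where the full argument and the tables of seed designs appear.
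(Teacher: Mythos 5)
The paper gives no proof of this lemma at all: it is quoted directly from the MOLS tables in \cite{Abel:2007}, and your proposal likewise (and appropriately) ends by deferring to that reference. Your preliminary reduction to MOLS, the finite-field construction settling (iv), MacNeish's product bound, and the description of which exceptions are provably impossible versus merely open are all accurate, so this matches the paper's treatment of the statement as a cited known result.
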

\vskip 10pt

%

\subsection{Group Divisible Codes}

Given $\vu \in \bbZ_q^X$ and $Y \subseteq X$, the
\emph{restriction of $\vu$ to $Y$}, written $\vu\mid_Y$, is the
vector $\vv \in \bbZ_q^X$ such that
$$\vv_x=
\begin{cases}
\vu_x, & \text{if $x \in Y$} \\
0, & \text{if $x \in X \backslash Y$}. \\
\end{cases}$$

A {\it group divisible code} (GDC) of distance $d$ is a triple
($X,\G,\C$), where $\G=\{G_1,\ldots,G_t\}$ is a partition of $X$
with cardinality $|X|=n$ and $\C\subseteq \bbZ_q^X$ is a $q$-ary
code of length $n$, such that $d_H(u,v) \ge d$ for each distinct
$u,v\in \C$, and $\|u|_{G_i}\|\le 1$ for each $u \in \C$, $1\le i \le
t$. Elements of $\G$ are called groups. We denote a GDC($X,\G,\C$)
of distance $d$ as $w$-GDC($d$) if $\C$ is of constant weight $w$.
If we want to emphasize the composition of the codewords, we denote
the GDC as $\overline{w}$-GDC($d$) when every $u\in \C$ has
composition $\overline{w}$. The type of a GDC($X,\G,\C$) is the
multiset $\Lbag |G|: G\in \G \Rbag$. As in the case of GDDs, the
exponential notation is used to describe the type of a GDC. The size
of a GDC($X,\G,\C$) is $|\C|$. Note that an
$(n,d,\overline{w})_q$-code with size $s$ is equivalent to a
$\overline{w}$-GDC($d$) of type $1^n$ with size $s$.

Constant-composition codes of larger orders can often be obtained from
GDCs via the following two constructions.

\vskip 10pt
\begin{construction}[(Filling in Groups) \cite{Yeow:2008}]
\label{FillGroups} Let $d\leq 2(w-1)$. Suppose there exists a
$w$-${\rm GDC}(d)$ $(X,\G,\C)$ of type $g_1^{t_1}\cdots g_s^{t_s}$
with size $a$. Suppose further that for each $i$, $1\leq i\leq s$,
there exists a $(g_i,d,w)_q$-code $\C_i$ with size $b_i$, then there
exists a $(\sum_{i=1}^s t_ig_i,d,w)_q$-code $\C'$ with size
$a+\sum_{i=1}^s t_ib_i$. In particular, if $\C$ and $\C_i$, $1\leq
i\leq s$, are of constant composition $\overline{w}$, then $\C'$ is
also of constant composition $\overline{w}$.
\end{construction}
\vskip 10pt

\begin{construction}[(Adjoining $y$ Points) \cite{Yeow:2008}]
\label{AdjoinPoints} Let $y\in\bbZ_{\geq 0}$. Suppose there exists a
(master) $w$-${\rm GDC}(d)$ of type $g_1^{t_1}\cdots g_s^{t_s}$ with size $a$, and suppose the following (ingredients) also exist:
\begin{enumerate}[(i)]
\item a $(g_1+y,d,w)_q$-code with size $b$,
\item a $w$-${\rm GDC}(d)$ of type $1^{g_i} y^1$ with size $c_i$ for each $2\leq i\leq s$,
\item a $w$-${\rm GDC}(d)$ of type $1^{g_1} y^1$ with size $c_1$ if $t_1\geq 2$.
\end{enumerate}
Then, there exists a $(y+\sum_{i=1}^s t_ig_i,d,w)_q$-code with size
\begin{equation*}
a+b+(t_1-1)c_1+\sum_{i=2}^{s} t_ic_i.
\end{equation*}
Furthermore, if the master and ingredient codes are of constant
composition, then so is the resulting code.
\end{construction}
\vskip 10pt

The  following two  constructions are useful for generating  GDCs of larger orders from smaller ones.

\begin{construction}[(Fundamental Construction) \cite{Yeow:2008}]
\label{FundCtr} Let $d\le 2(w-1)$, $\D=(X,\G,\A)$ be a (master) GDD,
and $\omega: X\rightarrow \bbZ_{\ge0}$ be a weight function. Suppose
that for each $A\in \A$, there exists an (ingredient) $w$-GDC($d$)
of type $\Lbag \omega(a): a\in A \Rbag$. Then there exists a
$w$-GDC($d$) $\D^{*}$ of type $\Lbag \sum_{x\in G}{\omega(x)}:G\in
\G \Rbag$. Furthermore, if the ingredient GDCs are of constant
composition $\overline{w}$, then $\D^{*}$ is also of constant
composition $\overline{w}$.
\end{construction}

\vskip 10pt
\begin{construction}
\label{Inflation} Suppose there exists a $w$-GDC($d$) of type
$g_1^{t_1}\dots g_s^{t_s}$ with size $a$. Suppose further that there
exists a TD($w,m$), then there exists a $w$-GDC($d$) of type
$(mg_1)^{t_1}\dots (mg_s)^{t_s}$ with size $am^2$. If the original
GDC is of constant composition $\overline{w}$, then so is the
derived GDC.
\end{construction}
\vskip 10pt

\section{Determining the Value of $A_4(n,5,[2,1,1])$}
\vskip 10pt

\subsection{Room Square Construction}
Let $S$ be a set of $n+1$ elements (symbols). A {\it Room square} of side
$n$ (on symbol set $S$), RS($n$), is an $n \times n$ array, $F$,
that satisfies the following properties:
\begin{enumerate}
\item every cell of $F$ either is empty or contains an unordered
pair of symbols from $S$,
\item each symbol of $S$ occurs once in each row and column of $F$,
\item every unordered pair of symbols occurs in precisely one cell
of $F$.
\end{enumerate}

\vskip 10pt
\begin{lemma} [\cite{skewRS}]
\label{RS} A Room square of side $n$ exists if and only if $n$ is
odd and $n\neq 3$ or $5$.
\end{lemma}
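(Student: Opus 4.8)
The plan is to separate the (easy) necessity from the (hard) sufficiency.

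\emph{Necessity.} That $n$ must be odd follows by counting: a fixed row of an RS($n$) contains each of the $n+1$ symbols exactly once, and each nonempty cell carries two symbols, so the row has exactly $(n+1)/2$ nonempty cells; hence $n+1$ is even. For the sporadic exceptions I would invoke the standard equivalence between an RS($n$) with $n=2m-1$ and a pair of orthogonal one-factorizations of the complete graph $K_{2m}$: the $n$ rows of $F$ form the one-factors of one factorization, the $n$ columns those of a second, and property (3) of a Room square is exactly the statement that the two factorizations are orthogonal. A short finite check shows $K_4$ and $K_6$ admit no such pair, ruling out $n=3$ and $n=5$ (while $n=1$ is trivial).

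\emph{Sufficiency.} It remains to build an RS($n$) for every odd $n\ge 7$, which I would do in three layers. (1) \textbf{Starter-adder constructions.} In an abelian group of odd order $n$, a \emph{strong starter} --- a partition of the nonzero elements into pairs $\{x_i,y_i\}$ whose differences $\pm(x_i-y_i)$ run through all nonzero elements and whose sums $x_i+y_i$ are distinct and nonzero --- develops cyclically into an RS($n$). For $n$ an odd prime power the quadratic (cyclotomic) starter of $\bbF_n$ is strong as soon as $n\ge 7$, giving RS($n$) for all such $n$; extra care is needed when $3\mid n$, where the patterned starter degenerates and one falls back on strong starters in other groups or on the frames of layer (2). (2) \textbf{Room frames and filling holes.} A Room frame of type $h^u$ is a holey Room square whose holes partition the symbol set; filling its holes with an RS on $h$ (resp.\ $h+1$) symbols produces an RS($uh$) (resp.\ RS($uh+1$)), and frames themselves can be inflated by transversal designs exactly in the style of Construction~\ref{Inflation}. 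This yields an ample recursive supply of new sides from old. (3) \textbf{PBD/GDD closure.} Coupling (2) with a $(v,K,1)$-PBD (or a suitable GDD with one long group) whose block sizes $K$ are odd numbers already known to be admissible sides yields an RS($v$); thus the set of admissible odd sides is closed under PBD-type substitution, and feeding in PBDs with odd block sizes $\ge 7$ disposes of all sufficiently large odd $n$.

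Finally I would clear the remaining small and awkward orders --- the odd values below the threshold where the PBD machinery applies, together with those divisible by $3$ that escape the quadratic-starter argument --- by explicit (and frequently computer-located) strong starters or small frames.

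\emph{Main obstacle.} The recursive skeleton in (2)--(3) is essentially bookkeeping; the real difficulty lies in the base-case analysis --- producing strong starters or frames for the sporadic small sides and for every order divisible by $3$ --- so that, together with the finite list of hand constructions, no odd $n\ge 7$ is missed and the exceptions truly stop at $3$ and $5$.
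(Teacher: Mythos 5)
This lemma is not proved in the paper at all: it is quoted from the literature (Stinson's survey, ultimately the Mullin--Wallis completion of the Room square existence problem), so there is no ``paper proof'' to match your argument against. Your necessity argument is fine: the parity count forcing $(n+1)/2$ filled cells per row is correct, and the equivalence with a pair of orthogonal one-factorizations of $K_{n+1}$ does reduce the exclusion of $n=3,5$ to a finite check. Your sufficiency outline also reproduces the correct architecture of the classical proof (strong starters, Room frames with hole-filling, PBD/GDD closure, plus sporadic small orders).

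However, as a proof the sufficiency half has a genuine gap: everything that makes the theorem hard is deferred rather than carried out. You explicitly postpone ``the base-case analysis --- producing strong starters or frames for the sporadic small sides and for every order divisible by $3$,'' but that analysis \emph{is} the content of the theorem; without an explicit, exhaustive list of starters/frames and a verification that the recursion covers every odd $n\ge 7$ with no residual exceptions, nothing has been established beyond a plausible strategy. Moreover, at least one concrete step as stated would fail: it is not true that the quadratic (cyclotomic) starter of $\bbF_n$ is strong for every odd prime power $n\ge 7$ --- for example no abelian group of order $9$ admits a strong starter at all, yet an RS($9$) exists and must be produced by other means. So the proposal should be read as a correct road map of the known proof, not as a proof; in the context of this paper the lemma is properly handled exactly as the authors do, by citation.
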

\vskip 10pt

   From  each filled cell ($r,c$) of an RS($n$),  $R$, one can obtain an underlying  $4$-subset $\{i,j,r,c\}$, where  $\{i,j\}$ occurs in column $c$ and row $r$ of $R$. A Room
square  of side $n$ is called {\it super-simple} (denoted  by SSRS($n$)),  if for any two filled cells  ($r_1,c_1$) and ($r_2,c_2$) containing the  symbols $\{i_1,j_1\}$ and
$\{i_2,j_2\}$ respectively,  the underlying $4$-subsets $\{i_1,j_1,r_1,c_1\}$ and $\{i_2,j_2,r_2,c_2\}$  share at most two common elements.

\vskip 10pt
\begin{theorem}
\label{SSRStoCodes} For an odd integer $n$, suppose there exists an
SSRS($n$), then there exists an  optimal $(n,5,[2,1,1])_4$-code with size
$n(n-1)/2=U(n,5,[2,1,1])$.
\end{theorem}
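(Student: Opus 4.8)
The plan is to turn an $\sbls(n)$—sorry, an $\text{SSRS}(n)$—directly into a constant-composition code by reading each filled cell as a codeword. Fix an $\text{SSRS}(n)$ with row set, column set and symbol set identified appropriately; say the underlying point set is $X=[1,n]$ together with the $n+1$ symbols, but in fact the standard way to get a code of length $n$ is to identify the symbol set $S$ with $[1,n+1]$ and the row (and column) index set with $[1,n]$, or—more cleanly for distance-five codes—to take $X$ of size $n$ where rows, columns and symbols are all drawn from the same $n$-set via the usual ``standardized'' form of a Room square. I would first recall that an RS($n$) is equivalent to a pair of orthogonal one-factorizations of $K_{n+1}$, and hence the filled cells are in bijection with the $\binom{n+1}{2}=n(n+1)/2$ unordered pairs; adjusting for the standardization (one symbol playing the role of ``row $=$ column'' along the diagonal), the number of off-diagonal filled cells giving genuine $4$-subsets is $n(n-1)/2$, which matches the target size $U(n,5,[2,1,1])$.

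Next I would define the code. For each filled cell $(r,c)$ containing the pair $\{i,j\}$ (with $r\neq c$), form the codeword $\vu\in\bbZ_4^X$ supported on $\{i,j,r,c\}$, setting $\vu_i=\vu_j=1$ (the two ``$1$''-coordinates), $\vu_r=2$, and $\vu_c=3$. This is well defined because in an RS in standardized form the four elements $i,j,r,c$ are pairwise distinct for every off-diagonal filled cell, so the composition is exactly $[2,1,1]$ and the code has constant weight $4$. The size is $n(n-1)/2$ by the count above, so the code is optimal provided it has minimum distance $5$, i.e. provided any two distinct codewords $\vu,\vv$ satisfy $d_H(\vu,\vv)\geq 5$. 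Since both have weight $4$, the supports have size $4$ each; if $|\supp(\vu)\cap\supp(\vv)|\leq 2$ then the symmetric-difference argument gives $d_H(\vu,\vv)\geq |\supp(\vu)\triangle\supp(\vv)|\geq 4+4-2\cdot 2=4$, and equality $d_H=4$ would force the two coordinates in the common pair to agree in both codewords on the same symbol-values—but I would rule this out by checking that whenever the underlying $4$-subsets meet in exactly two points, those two points cannot receive the same ordered pattern of values in both codewords. The case $|\supp(\vu)\cap\supp(\vv)|\geq 3$ is exactly what super-simplicity forbids, so it never occurs.

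The main work, and the step I expect to be the real obstacle, is the distance-$5$ verification in the ``intersection exactly two'' case: I must show that no two distinct codewords agree in two coordinates. This splits into several subcases according to the roles ($1$, $2$, or $3$) played by the two shared points in each codeword, and the contradictions come from the defining properties of the Room square: property (2) (each symbol once per row and once per column) kills the subcase where the two shared points are the ``$2$''-coordinate (i.e. the row index) of both codewords together with one symbol, or the ``$3$''-coordinate together with one symbol; property (3) (each unordered pair in exactly one cell) kills the subcase where both shared points are the symbol pair $\{i,j\}$—that would force $(r_1,c_1)=(r_2,c_2)$, hence $\vu=\vv$; and the orthogonality built into the RS (equivalently, property (3) applied after exchanging the roles of rows/columns and symbols, which is the content of super-simplicity) handles the remaining ``mixed'' subcases such as one shared point being a symbol of $\vu$ but the row of $\vv$. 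I would organize these as a short case table rather than prose. Once every subcase yields either a contradiction or $d_H\geq 5$, Lemma~\ref{RS} (which guarantees $\text{SSRS}$-free obstructions only at $n=3,5$—here we are told an $\text{SSRS}(n)$ exists) combined with the Johnson-type bound $A_4(n,5,[2,1,1])\leq U(n,5,[2,1,1])$ from the Corollary gives optimality, completing the proof.
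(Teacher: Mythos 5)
Your proposal is correct and follows essentially the same route as the paper: read each filled cell as a $[2,1,1]$-codeword on $\{i,j,r,c\}$, use super-simplicity to cap support intersections at two, and rule out two same-valued agreements via the Room square axioms (each pair in one cell, each symbol once per row and per column), which is exactly the paper's five-case check. The only differences are cosmetic: you treat the standardization/counting of the $n(n-1)/2$ usable cells more explicitly than the paper does, while your ``mixed-role'' subcases need no structural argument at all (a shared coordinate carrying different values already contributes to the distance), and the remaining $r_1=r_2$, $c_1=c_2$ subcase just means the two cells coincide.
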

\begin{proof}
For each filled cell $(r,c)$ of the given SSRS($n$), $R$, we form a codeword $\langle
i,j,r,c\rangle$ of type $[2,1,1]$, where  $\{i,j\}$ occurs in column $c$ and row $r$ of $R$.  We have in total of  $n(n-1)/2$ such codewords. Now, we  prove that these $n(n-1)/2$ codewords form an optimal code of length $n$.

Because $R$ is super-simple, so any two codewords intersect in at
most two coordinates. If the distance between any two codewords, $\langle
i_1,j_1,r_1,c_1\rangle$ and $\langle i_2,j_2,r_2,c_2\rangle$, is
less than $5$, then one of the following five properties  must be satisfied:
\begin{enumerate}
\item $i_1=i_2$ and $j_1=j_2$,
\item $i_1=j_2$ and $i_2=j_1$,
\item $i_l=j_m$ and $r_1=r_2$ where $l,m\in \{1,2\}$,
\item $i_l=j_m$ and $c_1=c_2$ where $l,m\in \{1,2\}$, or
\item $r_1=r_2$ and $c_1=c_2$.
\end{enumerate}
But any of these five properties  has conflicts with the properties of a Room square.
This means  the distance between any two codewords is greater than or
equal to $5$. So these codewords form an optimal code of length $n$
with size $n(n-1)/2=U(n,5,[2,1,1])$.
\end{proof}
\vskip 10pt

If $\{S_1,\ldots,S_n\}$ is a partition of a set $S$, an
{\it $\{S_1,\ldots,S_n\}$-Room frame} is an $|S|\times |S|$ array, $F$,
indexed by $S$, satisfying:
\begin{enumerate}
\item  every cell of $F$ either is empty or contains an unordered
pari of symbols of $S$,
\item  the subarrays $S_i\times S_i$ are empty, for each $1\le i\le
n$ (these subarrays are holes),
\item  each symbol $x\notin S_i$ occurs once in row (or column) s
for each $s\in S_i$, and
\item  the pairs occurring in $F$ are those $\{s,t\}$, where $(s,t)\in (S\times S)\setminus \cup_{i=1}^{n}{(S_i\times S_i)}
$.
\end{enumerate}

A Room square of side $n$ is equivalent to a Room frame of type
$1^n$.

We can define a super-simple Room frame in the same way as a
super-simple Room square.



\vskip 10pt
\begin{theorem}
\label{SSRFtoCodes}
Suppose there exists a super-simple Room frame
of type $2^t$, then there exists an  optimal $(2t,5,[2,1,1])_4$-code with size
$2t(t-1)=U(2t,5,[2,1,1])$.
\end{theorem}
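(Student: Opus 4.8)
The plan is to follow the proof of Theorem~\ref{SSRStoCodes} almost verbatim, replacing the super-simple Room square by the given super-simple Room frame $F$ of type $2^t$. Write $S$ for the index set of $F$, so $|S|=2t$, and let $\{S_1,\dots,S_t\}$ be the hole partition, $|S_k|=2$. For each filled cell $(r,c)$ of $F$ carrying the unordered pair $\{i,j\}$, I would form the codeword $\langle i,j,r,c\rangle\in\bbZ_4^S$, which has composition $[2,1,1]$ provided $i,j,r,c$ are pairwise distinct. This is where the frame structure is used: if $r\in S_k$ then property~(3) forces row $r$ to contain exactly the symbols of $S\setminus S_k$, so $i,j\notin S_k$ and in particular $i,j\neq r$; symmetrically, with $c\in S_l$, column $c$ contains only symbols of $S\setminus S_l$, so $i,j\neq c$; and $r\neq c$ since a filled cell cannot lie in a hole $S_k\times S_k$. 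Hence the codewords are well defined and of composition $[2,1,1]$.

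Second, I would count the codewords. A row indexed by an element of a hole $S_k$ receives each of the $2t-2$ symbols of $S\setminus S_k$ exactly once by property~(3), so it meets $t-1$ filled cells; summing over the $2t$ rows, $F$ has $2t(t-1)$ filled cells. Distinct cells give distinct codewords, since two cells differing in their row index (respectively column index) yield codewords differing in the coordinate carrying the symbol $2$ (respectively $3$). Thus the construction produces a $[2,1,1]$-code of length $2t$ and size exactly $2t(t-1)=U(2t,5,[2,1,1])$, which will be optimal by the Corollary once the minimum distance is verified.

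Third, the distance argument is the same as in Theorem~\ref{SSRStoCodes}. Given two distinct codewords, their supports are the underlying $4$-subsets of two distinct filled cells, hence share at most two indices by super-simplicity; if they share at most one the Hamming distance is already at least $6$, so assume they share exactly two indices $a,b$, giving $d_H\geq 4$ with equality unless the two codewords agree in both coordinates $a$ and $b$. Agreement at $a$ means $a$ carries the same value ($1$, $2$, or $3$) in both codewords, and likewise at $b$; running through the possible role-pairs for $(a,b)$ reproduces precisely the five cases listed in the proof of Theorem~\ref{SSRStoCodes}, each of which contradicts a defining property of a Room frame (a repeated pair against property~(4), or a symbol appearing twice in a common row or common column against property~(3)). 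Hence $d_H\geq 5$ and the code is optimal.

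Beyond invoking Theorem~\ref{SSRStoCodes}, the argument is essentially bookkeeping; the only genuinely frame-specific points are the well-definedness of the codewords (distinctness of $i,j,r,c$, using the holes) and the tally of filled cells used in the count, and I do not expect either to be a real obstacle.
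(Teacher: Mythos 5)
Your proposal is correct and follows essentially the same route as the paper: form a codeword $\langle i,j,r,c\rangle$ from each filled cell, count the $2t(t-1)$ filled cells to meet the bound $U(2t,5,[2,1,1])$, and verify distance $5$ exactly as in Theorem~\ref{SSRStoCodes} using super-simplicity together with the Room frame axioms. The extra details you supply (pairwise distinctness of $i,j,r,c$ via the hole structure, and the per-row tally of filled cells) are correct elaborations of steps the paper leaves implicit.
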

\begin{proof}
For each filled cell $(r,c)$ of the super-simple Room frame of type $2^t$, $R$, we form a codeword $\langle
i,j,r,c\rangle$ of type $[2,1,1]$, where  $\{i,j\}$ occurs in column $c$ and row $r$ of $R$. There are
$2\times 2t(t-1)/2=2t(t-1)$ filled cells in $R$, which equals $U(2t,5,[2,1,1])$.  Hence, we have in total of  $U(2t,5,[2,1,1])$ such codewords. The rest of the
proof is similar to that of Theorem \ref{SSRStoCodes}.
\end{proof}
\vskip 10pt

A {\it starter} in the  abelian group $G$ of odd order (written additively),
where $|G|=g$ is a set of unordered pairs $S=\{\{s_i,t_i\}:1 \le i
\le (g-1)/2 \}$ that satisfies:
\begin{enumerate}
\item $\{s_i:1 \le i \le (g-1)/2\} \cup \{t_i:1 \le i \le
(g-1)/2\}=G\setminus \{0\}$, and
\item $\{\pm (s_i-t_i): 1 \le i \le (g-1)/2\}=G\setminus  \{0\}$.
\end{enumerate}

A {\it strong starter} is a starter $S=\{\{s_i,t_i\}\}$ in the abelian
group $G$ with the additional property that $s_i+t_i=s_j+t_j$
implies $i=j$, and for each $i$, $s_i+t_i \ne 0$. Let
$S=\{\{s_i,t_i\}: 1 \le i \le (g-1)/2\}$ and $T=\{\{u_i,v_i\}: 1 \le
i \le (g-1)/2\}$ be two starters in $G$. Without loss of generality,
assume that $s_i-t_i=u_i-v_i$, for all $i$. Then $S$ and $T$ are called
{\it orthogonal} starters if $u_i-s_i=u_j-s_j$ implies $i=j$, and if $u_i
\ne s_i$ for all $i$. An {\it adder} for the starter $S=\{\{s_i,t_i\}:1
\le i \le (g-1)/2\}$ is an ordered set
$A_S=\{a_1, a_2, \ldots, a_{(g-1)/2}\}$ of $(g-1)/2$ distinct nonzero
elements from $G$ such that the set $T=\{\{s_i+a_i,t_i+a_i\}:1 \le i
\le (g-1)/2\}$ is also a starter in the group $G$. These two
starters $S$ and $T$ are orthogonal starters.

\vskip 10pt
\begin{theorem}[\cite{HBRS}]
If there exist two orthogonal starters in a group of order $n$, then
there exists a Room square of side $n$. If the group is $\bbZ_n$,
then the resulting Room square is cyclic.
\end{theorem}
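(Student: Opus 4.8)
The plan is to use the classical ``develop a starter'' construction, letting one of the two orthogonal starters govern the rows and its mate govern the columns. Write the abelian group as $G$ with $|G|=n$ (necessarily odd for a starter to exist), take the symbol set to be $G\cup\{\infty\}$ — which has the required size $n+1$ — and index the rows and columns of the array $F$ by $G$. Record the orthogonal starters as $S=\{\{s_i,t_i\}:1\le i\le (n-1)/2\}$ and, using the normalization $s_i-t_i=u_i-v_i$ built into the definition of orthogonality, $T=\{\{s_i+a_i,\,t_i+a_i\}:1\le i\le (n-1)/2\}$ with $a_i:=u_i-s_i$; the two orthogonality conditions on $S$ and $T$ say exactly that the $a_i$ are pairwise distinct and all nonzero.

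Now define $F$ by putting the pair $\{\infty,g\}$ in cell $(g,g)$ for every $g\in G$, and the pair $\{s_i+g,\,t_i+g\}$ in cell $(g,\,g-a_i)$ for every $g\in G$ and every $i$. First I would check that no cell is assigned twice: a diagonal cell is disjoint from the developed cells since $a_i\neq 0$, and $(g,g-a_i)=(g',g'-a_j)$ forces $g=g'$ and hence $a_i=a_j$, i.e.\ $i=j$. The total number of filled cells is then $n+\tfrac{n(n-1)}{2}=\binom{n+1}{2}$, the number of unordered pairs from $G\cup\{\infty\}$. Next the row property: in row $g$ the occupied columns are $g$ and the $(n-1)/2$ distinct values $g-a_i$, while the symbols appearing are $\infty$, the symbol $g$ on the diagonal, and $\bigcup_i\{s_i+g,t_i+g\}=(G\setminus\{0\})+g=G\setminus\{g\}$ by the first starter axiom for $S$ — so each of the $n+1$ symbols appears exactly once in the row. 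The column property is symmetric but invokes $T$: the developed entries in column $c$ come from $g=c+a_i$ and so are the pairs $\{s_i+a_i+c,\,t_i+a_i+c\}$, whose union over $i$ is $(G\setminus\{0\})+c=G\setminus\{c\}$ because $T$ is a starter; adjoining $\infty$ and the diagonal symbol $c$ gives each symbol once per column.

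It then remains to verify that every unordered pair occurs in $F$. Pairs containing $\infty$ sit on the diagonal; for a pair $\{x,y\}$ with $x\neq y$, the second starter axiom for $S$ supplies a (unique) index $i$ with $s_i-t_i\in\{x-y,\,y-x\}$, and then a short calculation — using that $n$ is odd — produces the $g$ with $\{s_i+g,t_i+g\}=\{x,y\}$. Hence all $\binom{n+1}{2}$ pairs occur, and since that is also the number of filled cells, each occurs exactly once; so $F$ is a Room square of side $n$. Finally, if $G=\bbZ_n$ then the translation $(g,c)\mapsto(g+1,c+1)$ sends each entry of $F$ to the entry obtained by adding $1$ to every finite symbol (and fixing $\infty$), which is precisely the statement that the resulting Room square is cyclic.

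I do not anticipate a real obstacle here: for this classical fact the only genuine decision is bookkeeping — on which of rows/columns to hang the adder, and with which sign — so that the union appearing in the column computation is handed to us directly by the starter axioms for $T$ rather than by an untwisted relative of them. The one place to stay careful is the ``every pair occurs'' step, where the oddness of $n$ is what rules out the swapped solution $\{s_i+g,t_i+g\}=\{t_i+g',s_i+g'\}$ and keeps the correspondence clean.
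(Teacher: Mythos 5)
The paper does not prove this statement at all---it is quoted as a known result with a citation to the Room squares chapter of the CRC Handbook---so there is no in-paper argument to compare against. Your proof is the standard starter--adder development construction and is correct: the bookkeeping (diagonal cells $\{\infty,g\}$ at $(g,g)$, developed cells $\{s_i+g,t_i+g\}$ at $(g,g-a_i)$, rows governed by $S$, columns by $T$, the counting argument $n+\tfrac{n(n-1)}{2}=\binom{n+1}{2}$, and the use of odd order to rule out the swapped solution in the ``every pair occurs'' step) all checks out, as does the cyclicity under translation when $G=\bbZ_n$.
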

\vskip 10pt

Let $G$ be an additive abelian group of order $g$, and let $H$ be a
subgroup of order $h$ of $G$, where $g-h$ even. An $h$-{\it frame starter}
of order $g/h$ in $G\setminus H$ is a set of unordered pairs
$S=\{\{s_i,t_i\},1\le i \le (g-h)/2\}$ such that
\begin{enumerate}
\item $\{s_i : 1\le i \le (g-h)/2\} \cup \{t_i : 1\le i \le (g-h)/2\}=G\setminus
H$, and
\item $\{\pm (s_i-t_i) : 1\le i \le (g-h)/2\}=G\setminus H$.
\end{enumerate}
A frame starter $A=\{\{s_i,t_i\}\}$ is strong if $s_i+t_i=s_j+t_j$
implies $i=j$, and $s_i+t_i\notin H$ for all $i$. Let
$A=\{\{s_i,t_i\}\}$ and $B=\{\{u_i,v_i\}\}$ be two frame starters.
We may assume that $t_i-s_i=v_i-u_i$, for each $1\le i \le (g-t)/2$. We
say that $A$ and $B$ are orthogonal frame starters if
$u_i-s_i=u_j-s_j$ implies $i=j$, and $u_i-s_i\notin H$ for all $i$.

\vskip 10pt
\begin{lemma}[\cite{Dinitz:1980}]
\label{strongframestarter} If $A=\{\{s_i,t_i\}\}$ is a strong
frame starter then $A$ and $-A=\{\{-s_i,-t_i\}\}$ are orthogonal
frame starters.
\end{lemma}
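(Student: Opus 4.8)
The plan is to verify directly, from the definitions, that $A$ and $-A$ satisfy the two defining conditions of orthogonal frame starters, namely the common-difference normalization and the injectivity/avoidance condition on $u_i - s_i$. First I would observe that $-A = \{\{-s_i, -t_i\}\}$ is itself a frame starter: the sets $\{-s_i\} \cup \{-t_i\}$ and $\{\pm(-s_i - (-t_i))\} = \{\pm(t_i - s_i)\}$ are just the images of $G \setminus H$ under the group automorphism $x \mapsto -x$, and since $H$ is a subgroup, $-(G\setminus H) = G \setminus H$. So both conditions defining a frame starter carry over immediately. This takes care of the prerequisite that the two objects we are comparing are both frame starters.

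Next I would check the normalization hypothesis. To apply the definition of orthogonality we need to match up the pairs of $A$ and $-A$ so that corresponding pairs have the same (signed) difference. Writing $A = \{\{s_i, t_i\}\}$, the pair $\{-s_i, -t_i\}$ of $-A$ has difference set $\{\pm(t_i - s_i)\}$, the same as that of $\{s_i,t_i\}$; so by ordering the elements within each pair of $-A$ appropriately (choosing $u_i = -t_i$, $v_i = -s_i$, so that $v_i - u_i = s_i - t_i = -(t_i - s_i)$, matching $t_i - s_i$ up to the sign convention $t_i - s_i = v_i - u_i$), the normalization $t_i - s_i = v_i - u_i$ holds with the pairing $\{u_i,v_i\} = \{-t_i,-s_i\}$. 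The key remaining step is then the orthogonality condition itself: with this pairing, $u_i - s_i = -t_i - s_i = -(s_i + t_i)$. Therefore $u_i - s_i = u_j - s_j$ becomes $s_i + t_i = s_j + t_j$, which by the \emph{strong} hypothesis forces $i = j$; and $u_i - s_i \notin H$ becomes $-(s_i + t_i) \notin H$, equivalently $s_i + t_i \notin H$ (again because $H$ is a subgroup, hence closed under negation), which is exactly the second part of the strong hypothesis. This establishes that $A$ and $-A$ are orthogonal frame starters.

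The only subtlety — and the one point I would be careful about — is bookkeeping of the sign conventions in the within-pair orderings, since the definition of orthogonal frame starters fixes an orientation via $t_i - s_i = v_i - u_i$ and then tests $u_i - s_i$. One must be sure the chosen identification of the $i$-th pair of $-A$ (and the labeling of its two entries as $u_i, v_i$) is consistent with that orientation; the computation above shows the natural choice $u_i = -t_i$, $v_i = -s_i$ works, and no genuine obstruction arises. Everything else is a one-line consequence of the strong-frame-starter axioms together with the fact that $x \mapsto -x$ is an automorphism of $G$ stabilizing the subgroup $H$. \qed
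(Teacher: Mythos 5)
Your verification is correct, and it is the standard argument: the paper itself gives no proof of this lemma (it is quoted from Dinitz--Stinson \cite{Dinitz:1980}), so there is nothing internal to compare against, but your direct check of the two orthogonality conditions via the labeling $u_i=-t_i$, $v_i=-s_i$ is exactly how the cited result is proved. All the essential points are present: $x\mapsto -x$ is an automorphism of the abelian group $G$ fixing $G\setminus H$ setwise, so $-A$ is a frame starter; the pairing by equal differences forces $u_i=-t_i$, $v_i=-s_i$; and then $u_i-s_i=-(s_i+t_i)$ translates both orthogonality conditions into the two clauses of strongness (using that $H$ is closed under negation). One small slip to fix in the write-up: with $u_i=-t_i$ and $v_i=-s_i$ you have $v_i-u_i=(-s_i)-(-t_i)=t_i-s_i$, not $s_i-t_i$ as written in your intermediate step; the correct value is precisely what the normalization $t_i-s_i=v_i-u_i$ requires, so your conclusion stands, but the displayed computation contradicts it as currently phrased.
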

\vskip 10pt

\begin{lemma}[\cite{Dinitz:1980}]
\label{framestarter} If there exists a pair of orthogonal $t$-frame
starters in $G\setminus H$ with $|G|=g$ and $|H|=t$, then there
exists a Room frame of type $t^u$, where $u=g/t$.
\end{lemma}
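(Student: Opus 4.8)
The plan is to realise the Room frame directly on the group $G$, letting the two orthogonal frame starters govern the rows, the columns and the entries at the same time. Write the given pair of orthogonal $t$-frame starters in $G\setminus H$ as $A=\{\{s_i,t_i\}:1\le i\le (g-t)/2\}$ and $B=\{\{u_i,v_i\}:1\le i\le (g-t)/2\}$, normalised (as one may, by the definition of orthogonal frame starters) so that $t_i-s_i=v_i-u_i$ for every $i$. Set $a_i:=u_i-s_i$. Then one also has $a_i=v_i-t_i$, so $u_i=s_i+a_i$ and $v_i=t_i+a_i$ (that is, $B$ is $A$ shifted by the ``adder'' $(a_i)$), and the definition of orthogonality says precisely that the $a_i$ are pairwise distinct and satisfy $a_i\notin H$ for all $i$.

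Next I would write down the array. Let $F$ be the $g\times g$ array with rows and columns both indexed by $G$, and declare the $u=g/t$ subarrays $(x+H)\times(x+H)$, one per coset of $H$, to be the holes. For each $i$ with $1\le i\le (g-t)/2$ and each $x\in G$, put the unordered pair $\{x+s_i,x+t_i\}$ in cell $(x-a_i,\,x)$; equivalently, cell $(y,\,y+a_i)$ then receives $\{y+u_i,\,y+v_i\}$. All other cells are left empty.

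Then I would verify the four defining properties of a Room frame of type $t^u$ over the partition of $G$ into cosets of $H$. (i) \emph{Well-definedness}: if $(x-a_i,x)=(x'-a_j,x')$ then $x=x'$ and $a_i=a_j$, hence $i=j$ by distinctness of the $a_i$, so no cell gets two pairs; and since $t_i-s_i\in G\setminus H$ the two entries $x+s_i$, $x+t_i$ lie in distinct cosets. (ii) \emph{Holes are empty}: cell $(x-a_i,x)$ could lie in a hole only if $x-a_i$ and $x$ lay in the same coset, i.e. only if $a_i\in H$, which is excluded. (iii) \emph{Rows and columns}: the filled cells in column $x$ are exactly the cells $(x-a_i,x)$, carrying the symbols $x+\bigl(\bigcup_i\{s_i,t_i\}\bigr)=x+(G\setminus H)=G\setminus(x+H)$, each exactly once; reading those same cells in the second form, the filled cells in row $y$ carry $y+\bigl(\bigcup_i\{u_i,v_i\}\bigr)=G\setminus(y+H)$, again each exactly once, which is the required ``each symbol occurs once in every row and column outside its own group'' condition. (iv) \emph{Pairs}: for any $\{p,q\}$ with $q-p\in G\setminus H$, the frame-starter condition $\{\pm(s_i-t_i):1\le i\le(g-t)/2\}=G\setminus H$ picks out a unique index $i$ and orientation, which then forces a unique $x$ with $\{x+s_i,x+t_i\}=\{p,q\}$, so every admissible pair occurs; a count of the $g(g-t)/2$ filled cells against the $t^{2}\binom{u}{2}=g(g-t)/2$ admissible pairs shows it occurs exactly once. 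These checks show $F$ is a Room frame of type $t^u$.

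The step that needs the most care is (iii): the placement rule has to be arranged so that the symbols of a column form a translate of $\{s_i\}\cup\{t_i\}$ while the symbols of a row form a translate of $\{u_i\}\cup\{v_i\}$, and it is exactly here that the compatibility $t_i-s_i=v_i-u_i$ between the two orthogonal starters is needed; a single frame starter together with an arbitrary ``adder'' would not in general make both the row and the column conditions hold simultaneously. Everything else is routine bookkeeping with cosets of $H$, the distinctness and non-membership of the $a_i$, and the two defining conditions of a frame starter.
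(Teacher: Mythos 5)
The paper does not prove this lemma --- it is quoted from Dinitz and Stinson \cite{Dinitz:1980} without proof --- so there is no internal argument to compare against. Your construction is the standard one underlying that citation, and it is correct and complete: placing $\{x+s_i,x+t_i\}$ in cell $(x-a_i,x)$ with $a_i=u_i-s_i$ makes the columns carry translates of $A$ and the rows carry translates of $B$, and your four checks correctly use, respectively, the distinctness of the $a_i$, the condition $a_i\notin H$, the first frame-starter property of both $A$ and $B$, and the second frame-starter property of $A$ together with the cell count $g(g-t)/2$.
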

\vskip 10pt

So if we have a strong starter in a group of order $n$ which can
generate an SSRS(n), then we get an optimal $(n,5,[2,1,1])_4$-code with size $U(n,5,[2,1,1])$. Similarly, if we  have  a strong frame starter
in $G\setminus H$ with $|G|=g$ and $|H|=2$ which can generate a
super-simple Room frame of type $2^{g/2}$, then we get an optimal
$(g,5,[2,1,1])_4$-code with size $U(g,5,[2,1,1])$.

\subsection{Some Small $[2,1,1]$-GDC$(5)$s and Optimal Codes with Distance $5$}
\vskip 10pt

In the sequel, we construct some small $[2,1,1]$-GDC$(5)$s and optimal
codes via computer search. The constructions are based on the familiar difference method,
where a finite group (mostly abelian group $\bbZ_u$) will be
utilized to generate all the codewords of a code or a GDC.
Thus, instead of listing all the codewords, we list a set of base
codewords and generate the others by an additive group and
perhaps some further automorphisms.  Mostly, the set of base
codewords  are divided into two parts, $P$ and $R$, where
each codeword of $P$ will be multiplied by $m^i$ for each $0 \le i \le
s-1$ to generate $s$ codewords, and $R$ is the set of the remaining
base codewords. The desired codes are generated by developing the
base codewords $+M$ modulo $n$. Then, we just need to list $n$, $m$, $s$, $M$, $P$ and $R$
for each code. Sometimes, $R$ may be empty, which will be omitted.

\vskip 10pt
\begin{proposition}
\label{GDC(5)gt}There exists a $[2,1,1]$-GDC$(5)$ of type $g^t$ with size $\frac{g^2t(t-1)}{2}$ for the following parameters:
\begin{enumerate}
\item $g=2$, $t \in \{8, 9, 11\}$,
\item $g=3$, $t \in \{7, 9\}$,
\item $g=4$, $t \in \{5, 6, 7, 8, 9, 11\}$,
\item $g=6$, $t \in \{5\}$.
\end{enumerate}
\end{proposition}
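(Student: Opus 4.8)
The plan is to establish each of these small $[2,1,1]$-GDC$(5)$s by direct (computer-aided) construction using the difference method described immediately before the proposition. For a given type $g^t$, I would work in the additive group $\bbZ_u$ with $u = gt$, taking the $t$ groups to be the cosets of the subgroup of index $t$ (so the groups are the residue classes $\{i, i+t, i+2t, \ldots\} \pmod{u}$, each of size $g$). A codeword $\langle a_1, a_2, a_3, a_4\rangle$ of composition $[2,1,1]$ must have all four coordinates in pairwise distinct groups (so their pairwise differences avoid the subgroup $t\bbZ_u$), and the GDC distance-$5$ condition means that any two codewords agree in at most two coordinates, with the additional constraint coming from the composition (recall a codeword is the pair $\{a_1,a_2\}$ carrying symbol $1$, together with $a_3$ carrying symbol $2$ and $a_4$ carrying symbol $3$). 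Translating the "distance $\ge 5$" requirement into difference conditions yields exactly the type of constraints that make the orbit of a suitable base codeword set under $x \mapsto x + M \pmod u$ (together with the multiplier action $x \mapsto m^i x$ on the part $P$) produce a valid GDC of the stated size $g^2 t(t-1)/2$.

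Concretely, the size count is forced: a $w$-GDC$(d)$ of type $g^t$ with $d \le 2(w-1)$ that meets the natural packing bound must contain, for each ordered choice of a group for symbol-$1$-pair locations (unordered pair of two distinct groups among $\binom{t}{2}$ choices) and each of the $g^2$ ways to pick the two actual points, plus then the forced choices for symbols $2$ and $3$ — I would instead simply verify that $g^2 t(t-1)/2$ is the number of orbits times orbit length predicted by the listed $(n,m,s,M,P,R)$ data, and then verify the difference conditions. So the proof reduces to: for each of the listed parameter pairs, exhibit the explicit data $n = gt$, the multiplier $m$, the multiplier-orbit length $s$, the shift set $M$ (here $M$ will typically be all of $\bbZ_n$ or the subgroup generating the required translates), and the base codeword sets $P$ and $R$; then check (i) every base codeword has its four coordinates in four distinct groups, (ii) the full collection of translated/multiplied codewords all have composition $[2,1,1]$, and (iii) no two of them violate the GDC$(5)$ condition, which is a finite check on the difference multiset.

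**The main obstacle** is purely the search for the base blocks: one must find starter-type systems $P \cup R$ whose development covers all required "pure" and "mixed" differences exactly the right number of times so that the resulting code has no two codewords at distance $\le 4$ while achieving the extremal size $g^2t(t-1)/2$. This is a constraint-satisfaction problem that is infeasible to do elegantly by hand for most of these cases ($g=4$, $t=9$ or $11$ in particular), so I would carry it out by computer search — a backtracking or clique-search over candidate base codewords — and then present, in the body of the paper, only the resulting tables of $(n, m, s, M, P, R)$ for each listed $(g,t)$, leaving verification of conditions (i)–(iii) to the reader as a routine finite computation. For the smaller cases ($g=2$, $t=8,9,11$; $g=3$, $t=7,9$), one could hope that known strong frame starters or difference-family constructions from the design-theory literature supply the base blocks directly, so I would first check those references before resorting to a fresh search.

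**One simplification worth attempting first:** several of these types coincide with group types for which Lemma~\ref{4RGDD38} or Lemma~\ref{4GDD} guarantees a $4$-RGDD or $4$-GDD (e.g.\ type $3^8$, $4^7$), and a resolvable combinatorial structure of the right type can sometimes be "colored" into a constant-composition GDC by assigning the composition symbols along parallel classes; I would check whether this shortcut applies to $g=3, t=8$-adjacent cases and to $g=4,t=7$, which would let me cite an existing design rather than run a search. For the remaining cases no such shortcut is available and the computer search is unavoidable.
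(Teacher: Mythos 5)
Your proposal matches the paper's proof: the authors work in $\bbZ_{gt}$ with groups being the cosets $\{i,t+i,\ldots,(g-1)t+i\}$ of the order-$g$ subgroup, and exhibit computer-found base codewords (split into a multiplier part $P$ and a remainder $R$, developed by $+M$ modulo $gt$), leaving the finite verification of the composition, group-transversality, and distance conditions as a routine check. Since the result is an existence statement settled by explicit difference-method data, your plan is the same argument; the only missing ingredient is the actual tables of base codewords, which the paper supplies in its appendix.
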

\begin{proof}
For each given pair $\{g, t\}$, let $X_{\{g, t\}}=\bbZ_{gt}$,
$\G_{\{g, t\}}=\{\{i,t+i, \ldots, (g-1)t+i\}:i \in \bbZ_{t}\}$ and  $\C_{\{g, t\}}$ be the set of cyclic (or quasi-cyclic) shifts of the vectors generated by the following vectors respectively. Then $(X_{\{g, t\}},\G_{\{g, t\}},\C_{\{g, t\}})$ is a
$[2,1,1]$-GDC$(5)$ of type $g^t$ with size $\frac{g^2t(t-1)}{2}$. In order to save space, we list only one example here. Other cases can be found in Propositions 6.1--6.4. For $g=2$ and $t=8$, we have $n=16$, $m=11$, $s=2$, $M=2$ and
\begin{equation*}
\footnotesize
\begin{array}{cccccc}
P:& \langle 0, 13, 15, 6\rangle\\
R:& \langle 1, 11, 0, 13\rangle & \langle 0, 1, 4, 14\rangle & \langle 0, 11, 9, 15\rangle & \langle 0, 3, 7, 10\rangle\\
&\langle 0, 6, 2, 1\rangle &\langle 0, 2, 13, 7\rangle & \langle 1, 5, 2, 11\rangle & \langle 0, 7, 14, 12\rangle\\
&\langle 0, 5, 1, 3\rangle & \langle 0, 9, 3, 4\rangle & \langle 0, 12, 6, 9\rangle & \langle 1, 3, 12, 2\rangle.\\
\end{array}
\end{equation*}
\end{proof}
\vskip 10pt

\begin{proposition}
\label{GDC(5)4t21}There exists a $[2,1,1]$-GDC$(5)$ of type $4^t2^1$
with size $8t^2$ for each $t \in \{5,6\}$.
\end{proposition}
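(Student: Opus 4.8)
The plan is to prove this by an explicit construction, in exactly the same style as Proposition~\ref{GDC(5)gt}: we specify the point set, the partition into one group of size $2$ and $t$ groups of size $4$, and a short list of base codewords, and then let $\C$ be the code obtained by developing the base codewords under a cyclic automorphism group (a translation subgroup, together if convenient with a multiplier $m$ of order $s$, exactly as described in the paragraph preceding Proposition~\ref{GDC(5)gt}). A convenient realisation is to take $X=\bbZ_{4t}\cup\{\infty_{1},\infty_{2}\}$, let the short group be $\{\infty_{1},\infty_{2}\}$, let the $t$ long groups be the cosets $\{i,i+t,i+2t,i+3t\}$ with $i\in[0,t-1]$ of the order-$4$ subgroup of $\bbZ_{4t}$, and develop base codewords modulo $4t$ with $\infty_{1},\infty_{2}$ fixed; for $t=5$ this needs base codewords whose orbits total $8\cdot 25=200$, and for $t=6$ it needs base codewords whose orbits total $8\cdot 36=288$. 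These base codewords are produced by computer search and would be tabulated (giving $n,m,s,M,P,R$) in the same way as for Proposition~\ref{GDC(5)gt}.

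Once such a list is fixed, the verification is a finite routine check with three ingredients. First, every base codeword $\langle a_{1},a_{2},a_{3},a_{4}\rangle$ has composition $[2,1,1]$ by the notational convention, and one checks that its four coordinates lie in four distinct groups: for coordinates in $\bbZ_{4t}$ this amounts to having pairwise distinct residues modulo $t$, while a coordinate equal to $\infty_{1}$ or $\infty_{2}$ lands in the short group. This guarantees that every generated vector is a legitimate GDC codeword (at most one active coordinate per group). Second, one checks that the orbit of each base codeword under the chosen group has the expected size and that distinct base codewords generate disjoint orbits, so that $|\C|=8t^{2}$. Third, and this is the only substantive point, one checks that the minimum Hamming distance is at least $5$: by the case analysis in the proof of Theorem~\ref{SSRStoCodes}, $d_{H}<5$ between two $[2,1,1]$-codewords forces one of the five coincidence patterns listed there, and by the usual difference-method argument this becomes a finite condition on the pairwise differences of the base codewords together with their multiplier images, which one verifies directly.

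The main obstacle lies not in the verification but in the search: one must exhibit a set of base codewords whose development simultaneously introduces no repeated codeword, keeps all four coordinates of every codeword in distinct groups, and covers no ``bad'' difference in the sense of the five patterns above. Because the admissible symmetry group is the semidirect product of the translation subgroup by the multiplier, the search has to be organised modulo this combined action, which is what makes it slightly delicate; however, the search space is tiny ($t\le 6$), so an exhaustive or randomised search succeeds quickly. I would carry out this search, record the resulting parameters for $t=5$ and $t=6$ in a table analogous to Propositions~6.1--6.4, and note that the three required properties are then immediate to check by hand or by machine.
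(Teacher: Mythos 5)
Your framework coincides exactly with the paper's own construction: the paper's appendix (Proposition 6.5) takes $X_t=\bbZ_{4t}\cup\{a,b\}$ with the short group $\{a,b\}$ fixed under the automorphism group, the $t$ long groups being the cosets $\{i,t+i,2t+i,3t+i\}$ of the index-$t$ subgroup, and develops a listed set of base codewords by $+M$ modulo $4t$ together with a multiplier $m$ of order $s$, precisely as in the preceding propositions. Your orbit bookkeeping is also consistent with the paper's data: for $t=5$ it lists $20$ base codewords (after applying the multiplier) whose orbits under $+2$ modulo $20$ have length $10$, giving $200=8\cdot 5^2$; for $t=6$ it lists $24$ base codewords in orbits of length $12$, giving $288=8\cdot 6^2$. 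Your description of the finite verification (one active coordinate per group, orbit disjointness, and reduction of the distance condition to the five coincidence patterns together with a difference check on the base codewords) is the right routine.

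The gap is that the write-up stops short of the one item that actually constitutes the proof of an existence statement of this kind: the base codewords themselves. ``I would carry out this search and tabulate the result'' is a plan for a proof, not a proof; the proposition is only established once a concrete list passing the three checks is exhibited, which is exactly what the paper supplies in its appendix. There is no reason to doubt your search would succeed---the paper's data shows it does, and the search space is small---but as written the essential datum is unsupplied, so the argument is incomplete rather than wrong.
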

\begin{proof}
Detailed constructions can be found in Propositions 6.5.
\end{proof}
\vskip 10pt

\begin{proposition}
\label{CCC(5)}$A_4(n,5,[2,1,1])$  =  $U(n,5,[2,1,1])$ for each $n \in
\{12,14,15,17,20,28,30,36,38,44,46,52,54,62,$
$68,70,76,78,86,92,94,110,126,134\}$.
\end{proposition}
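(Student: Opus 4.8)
Since $A_4(n,5,[2,1,1])\le U(n,5,[2,1,1])$ by the Corollary, the whole task is to construct, for each of the listed lengths, an $(n,5,[2,1,1])_4$-code of size exactly $U(n,5,[2,1,1])$. The plan is to split the list by the parity of $n$. For the two odd lengths $n\in\{15,17\}$, an RS$(n)$ exists by Lemma~\ref{RS}, and I would search over $\bbZ_n$ for a strong starter that develops into a super-simple RS$(n)$; Theorem~\ref{SSRStoCodes} then converts this SSRS$(n)$ into an optimal code of size $n(n-1)/2=U(n,5,[2,1,1])$. For the even lengths, write $n=2t$: here I would search in $\bbZ_{2t}\setminus H$ with $|H|=2$ for a strong frame starter whose development together with its negative (Lemmas~\ref{strongframestarter} and~\ref{framestarter}) is a super-simple Room frame of type $2^t$, and then invoke Theorem~\ref{SSRFtoCodes} to obtain an optimal code of size $2t(t-1)=U(2t,5,[2,1,1])$. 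Equivalently, each even case may be realized as a $[2,1,1]$-GDC$(5)$ of type $2^t$ produced directly by the difference method on $\bbZ_{2t}$, which on that point set is already a code of the required size; this is how the neighbouring lengths $16,18,22$ are treated in Proposition~\ref{GDC(5)gt}.

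Concretely, the construction data are obtained by computer search and recorded in the format introduced above, i.e.\ via the parameters $n,m,s,M,P,R$: each base codeword $\langle a_1,a_2,a_3,a_4\rangle$ is developed modulo $n$, those in $P$ being additionally multiplied by the powers $m^i$, $0\le i\le s-1$, of a unit $m$; the explicit lists for the individual lengths are collected in the propositions of Section~6. For each candidate three things must be checked: (i) every generated word has composition $[2,1,1]$, which is automatic since translation and multiplication by a unit both preserve the pattern ``two equal coordinates and two further distinct coordinates''; (ii) the minimum distance is at least $5$, which by the usual mixed-difference argument reduces to a finite computation over the base codewords and the group and is precisely the requirement that the five coincidence patterns listed in the proof of Theorem~\ref{SSRStoCodes} never occur (equivalently, that the underlying Room square or Room frame is super-simple); and (iii) the number of codewords equals $U(n,5,[2,1,1])$, which needs the orbits to have their full length---no short orbits collapsing the count---together with the correct number of base codewords.

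The main obstacle is the search coupled with step~(ii): for each $n$ up to $134$ one must locate a strong (frame) starter and then certify that the induced Room square or Room frame really is super-simple, i.e.\ that no pair of developed codewords collides in a way that would produce distance at most $4$. Finding such an object is the genuine computational work---the space of admissible strong frame starters grows with $t$---while checking any given candidate is quick and routine, and the bookkeeping in (i) and (iii) is immediate. The same search, carried out over all admissible lengths, is also what pins down the lengths left undetermined: these are exactly the lengths for which no suitable (frame) starter, and no recursive construction, was found, and any one of them would be settled by exhibiting a single further object of this kind.
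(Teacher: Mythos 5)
Your reduction of the statement to an explicit construction problem is correct (the upper bound is the Corollary in Section II-B), but the route you take is not the one the paper uses here, and it contains a genuine gap. The paper proves this proposition by exhibiting, for each of the twenty-four lengths, a set of base codewords over $\bbZ_n$ whose development (cyclic shifts together with multiplication by powers of a unit $m$) is itself an optimal $(n,5,[2,1,1])_4$-code of size $U(n,5,[2,1,1])$; that is the entire content of the corresponding appendix proposition, and no Room square, Room frame or starter appears in it. The starter machinery you describe is what the paper reserves for the companion Proposition~\ref{CCC(5s)}, which covers a disjoint set of lengths.

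The gap is that your plan requires strictly stronger objects than the proposition asserts: a super-simple ${\rm RS}(n)$ generated by a strong starter for $n=15,17$, and a super-simple Room frame of type $2^{n/2}$ generated by a strong frame starter for each even $n$ in the list. Theorems~\ref{SSRStoCodes} and~\ref{SSRFtoCodes} go only one way; the existence of an optimal code of size $U(n,5,[2,1,1])$ does not imply the existence of these super-simple structures, and you give no argument or data showing that they exist for these particular twenty-four lengths. The paper's own organization is circumstantial evidence against the plan: the authors applied the starter search systematically (odd lengths from $19$ up and many even lengths land in Proposition~\ref{CCC(5s)}), and the lengths of the present proposition are precisely those they instead settled by an unstructured cyclic search. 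Your fallback that each even case ``may be realized as a $[2,1,1]$-GDC$(5)$ of type $2^t$'' has the same problem: such GDCs are provided only for $t\in\{8,9,11\}$ (Proposition~\ref{GDC(5)gt}), i.e.\ for lengths $16,18,22$, none of which occurs here, and nothing guarantees types $2^6,2^7,2^{10},\dots$ exist. Deferring ``the explicit lists'' to Section 6 does not close the gap either, since the data recorded there are base codewords for the direct construction, not starters. To repair the argument you should either actually produce the required starters and frames (which may fail for some of these $n$) or adopt the paper's method: search directly for base codewords of a cyclic $(n,5,[2,1,1])_4$-code of the right size and verify the minimum distance by checking all pairs of translates of base codewords, with no Room-square structure imposed.
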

\begin{proof}
Detailed constructions can be found in Propositions 6.6.
\end{proof}
\vskip 10pt

\begin{proposition}
\label{CCC(5s)}$A_4(n,5,[2,1,1])$  =  $U(n,5,[2,1,1])$ for each $n \in
\{19,21,23,24,25,26,27,29,31,32,33,34,35,$
$37,39,40,41,42,43,45,47,48,49,50,51,53,55,56,57,58,$
$59,61,63,64,65,66,67,69,71,73,74,75,77,79,83,85,87,$
$88,89,93,95,99,103,104,106,107,109,111,123,125,127,$
$131,133,138,139\}$.
\end{proposition}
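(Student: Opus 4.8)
\textbf{Proof proposal for Proposition \ref{CCC(5s)}.}

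The plan is to handle this long list of lengths by the same recursive machinery established in Section II, sorting the values of $n$ according to which construction applies, and to defer the genuinely small or irregular orders to direct computer search (as the proof itself ultimately does). First I would separate the list into two regimes. For the larger values of $n$, the strategy is to write $n$ (or $n-y$ for a small adjoining parameter $y\in\{0,1,2\}$) in a form that matches the group type of a $[2,1,1]$-GDC$(5)$ produced by combining Construction \ref{FundCtr} (Fundamental Construction), Construction \ref{Inflation} (inflation by a TD$(4,m)$, valid whenever $m\notin\{2,3,6\}$ by Lemma \ref{TD}(i) together with the fact that a TD$(4,m)$ exists for all other $m$), and the seed GDCs of Proposition \ref{GDC(5)gt} and Proposition \ref{GDC(5)4t21}. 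Concretely, one takes a master GDD — a $4$-GDD or $4$-RGDD from Lemmas \ref{4RGDD38}, \ref{4GDD}, or a TD from Lemma \ref{TD}, or a PBD with block sizes in $\{4,5,6\}$ from Lemma \ref{PBD4-6} viewed as a GDD with groups of size $1$ — assigns a constant weight $g\in\{2,3,4,6\}$ to the points, and fills each block with an ingredient $[2,1,1]$-GDC$(5)$ of type $g^k$ (available for the relevant $k$ by Proposition \ref{GDC(5)gt}). This yields a $[2,1,1]$-GDC$(5)$ of type $g^t$ on $gt$ points, whose size is automatically $\tfrac{(gt)(gt-1)}{2}-\tfrac{t g(g-1)}{2}\cdot\frac{?}{}$ — more precisely, since a GDD with these parameters has exactly $\binom{gt}{2}-t\binom{g}{2}$ cross pairs and each filled cell/codeword covers one pair twice over in the $[2,1,1]$ bookkeeping, the GDC has size $\tfrac{g^2 t(t-1)}{2}$, matching the pattern of Proposition \ref{GDC(5)gt}.

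Next I would invoke Construction \ref{FillGroups} (Filling in Groups), legitimate here because $d=5\le 2(w-1)=6$, to fill each group of size $g$ with an optimal $(g,5,[2,1,1])_4$-code of size $U(g,5,[2,1,1])$; for $g\in\{12,14,15,17,20,\dots\}$ such optimal codes are supplied by Proposition \ref{CCC(5)}, and for small $g$ one uses the Room-square results (Theorems \ref{SSRStoCodes}, \ref{SSRFtoCodes}) or direct search. A short arithmetic check then confirms that the resulting $(n,5,[2,1,1])_4$-code attains $U(n,5,[2,1,1])=n\lfloor (n-1)/2\rfloor$: one verifies that $\tfrac{g^2 t(t-1)}{2}+t\cdot U(g,5,[2,1,1])$ equals $U(gt,5,[2,1,1])$, which holds because the GDC cross-pairs plus the within-group pairs reassemble all $\binom{gt}{2}$ pairs and the optimal within-group codes lose nothing. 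For the lengths in the list that are not clean multiples of a usable $g$, I would instead use Construction \ref{AdjoinPoints} with $y\in\{1,2\}$, supplying the three ingredient families it requires — a $(g_1+y,5,[2,1,1])_4$-code, a $[2,1,1]$-GDC$(5)$ of type $1^{g_i}y^1$, and a $[2,1,1]$-GDC$(5)$ of type $1^{g_1}y^1$ — from the small cases already in hand (note $1^{g}2^1$-type GDCs come from Proposition \ref{GDC(5)4t21} after inflation, and $1^g1^1=1^{g+1}$-type GDCs are just optimal codes). Lengths such as $19,21,23,\dots$ that resist all recursive decompositions, together with all the orders below the point where Lemmas \ref{PBD4-6}--\ref{TD} become uniformly available, are then settled by explicit base-codeword lists under a cyclic (or quasi-cyclic) group action, exactly as described in the paragraph preceding Proposition \ref{GDC(5)gt} and recorded in the referenced Propositions 6.6 and beyond.

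The main obstacle is bookkeeping rather than any single hard idea: one must exhibit, for \emph{each} of the roughly seventy listed lengths, a specific decomposition $n=gt$ or $n=y+\sum t_i g_i$ together with a master design and a complete set of ingredients all of whose existence has been certified earlier, and then verify the size equality — and crucially one must check that the small exceptions in Lemmas \ref{PBD4-6}, \ref{PBD5-9}, \ref{TD} do not sabotage the chosen master design for any particular $n$. Because the GDD exceptional sets are exactly what forces certain lengths off the recursive track, the residual set that must be done by direct search is not arbitrary; identifying it precisely and producing valid base codewords (with the claimed multiplier $m$, orbit length $s$, and shift $M$ actually generating a distance-$5$ constant-composition code of the right size) is the part that genuinely requires the computer. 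Accordingly, the proof delegates the explicit constructions to Proposition 6.6, and what remains to be written here is the routing argument: a case analysis assigning to each $n$ in the list its construction, which I would organize by residue classes and by the size of $n$ so that large swaths of the list are covered by a single uniform argument (e.g. inflating a fixed small GDC by a TD$(4,m)$), leaving only a short explicit tail.
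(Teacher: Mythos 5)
Your proposal does not match the paper's proof, and the mismatch exposes a real gap rather than an alternative route. The lengths collected in Proposition~\ref{CCC(5s)} are precisely the ones that the paper's recursive machinery does \emph{not} reach: most of them are odd primes ($19,23,29,31,37,\ldots,139$) or small numbers sitting inside the exceptional ranges of Lemmas~\ref{PBD4-6}, \ref{PBD5-9} and \ref{TD}, so they admit no decomposition $n=gt$ or $n=y+\sum t_ig_i$ with certified ingredients. Your ``larger values'' regime is therefore essentially empty here (the recursive constructions are deployed later, in Theorems~\ref{CCC4t}--\ref{CCC4t+3(2)}, for the lengths \emph{not} in this list), and your entire argument collapses onto the fallback ``explicit base-codeword lists,'' which you neither produce nor correctly locate (the data for this proposition is Proposition~6.7, not 6.6). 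A proof that consists of ``the residual set is done by computer'' when the residual set is the whole statement is not a proof; and the unresolved ``$\tfrac{?}{}$'' in your size computation, together with the fact that your identity $\tfrac{g^2t(t-1)}{2}+t\,U(g,5,[2,1,1])=U(gt,5,[2,1,1])$ fails when $g$ is odd and $t$ is even, confirms the details were not carried through.

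The idea you are missing is the one the paper's proof actually consists of: every length in this list is handled uniformly by the Room square machinery of Section~III-A. For odd $n$ one exhibits a strong starter in $\bbZ_n$ whose associated Room square is super-simple, and Theorem~\ref{SSRStoCodes} converts it directly into an optimal $(n,5,[2,1,1])_4$-code of size $n(n-1)/2=U(n,5,[2,1,1])$; for even $n=2t$ one exhibits a strong frame starter in $G\setminus H$ with $|H|=2$ (Lemmas~\ref{strongframestarter} and \ref{framestarter}) yielding a super-simple Room frame of type $2^t$, and Theorem~\ref{SSRFtoCodes} gives the optimal code of size $2t(t-1)=U(2t,5,[2,1,1])$. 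The computer search is thus reduced to finding a starter --- often a single base pair plus a multiplier, e.g.\ $P=\{1,3\}$ with $m=4$, $s=9$ for $n=19$ --- rather than an entire code, and the optimality certificate is supplied once and for all by the two theorems. You do cite Theorems~\ref{SSRStoCodes} and \ref{SSRFtoCodes}, but only as a source of small ingredient codes for filling groups, not as the engine of the proof; without that reorientation your plan cannot establish the statement.
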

\begin{proof}
All these optimal codes are constructed by strong starters or strong
frame starters with n odd or even respectively. By Theorems \ref{SSRStoCodes} and \ref{SSRFtoCodes}, there
exists an optimal code of such length $n$. The starters are given in a similar way as the codewords in the above propositions.
In order to save space, we list only one example here. Other cases can be found in Proposition 6.7. For  $n=19$, we have $m=4$, $s=9$, $M=1$ and $P: \{1, 3\}$.
\end{proof}
\vskip 10pt

\subsection{The Case of Length $n \equiv 0 \pmod{4}$}
\vskip 10pt

\begin{lemma}
\label{CCC(5)4 8} $A_4(4,5,[2,1,1])=1$ and $A_4(8,5,[2,1,1]) \ge
18$.
\end{lemma}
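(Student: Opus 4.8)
The plan splits along the two assertions. For the equality $A_4(4,5,[2,1,1])=1$, the lower bound is trivial: a single word of composition $[2,1,1]$ on a four-point coordinate set, say $\langle 1,2,3,4\rangle$ over $X=\{1,2,3,4\}$, already constitutes a valid $(4,5,[2,1,1])_4$-code. For the upper bound I would argue by support counting: here the weight is $w=2+1+1=4=n$, so every codeword has support all of $X$; hence for any two codewords $\vu,\vv$ we have $d_H(\vu,\vv)=|\supp(\vu,\vv)|\le|X|=4<5$, which forces the code to contain at most one word. Incidentally this already shows that the Johnson-type bound of the preceding Corollary, $U(4,5,[2,1,1])=4$, is far from tight at this length.

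For $A_4(8,5,[2,1,1])\ge 18$ it is enough to display one $(8,5,[2,1,1])_4$-code of size $18$. I would obtain such a code by the computer-aided difference approach used throughout this section: fix $X=\bbZ_8$, select a short list of base words of composition $[2,1,1]$, and generate the code by developing them under translation (together, if helpful, with a multiplier whose order divides the number of cosets used). A quick stabilizer calculation shows that on $\bbZ_8$ every weight-$4$ composition-$[2,1,1]$ word has trivial translation stabilizer, so all translation orbits have length $8$; consequently the base list and any multiplier structure must be chosen so that the total comes out to exactly $18$. One convenient variant is to look for this code in the more rigid form of a $[2,1,1]$-GDC$(5)$ of type $2^4$ on $\bbZ_8$, in which the support of every word contains exactly one point from each of the four translate-pairs $\{i,i+4\}$, so that the group condition is automatic. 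The only thing left is then a finite verification, namely that all $\binom{18}{2}$ pairwise Hamming distances are at least $5$, which the group action reduces to comparing the base words against their translates; the explicit base words and this check would be recorded with the other small constructions of the paper.

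The first half needs nothing beyond the support-counting remark. The substantive part is the second half, and the obstacle there is exactly that this is a sporadic small length: the Johnson-type upper bound $U(8,5,[2,1,1])=24$ is not met (note that $8$ appears in neither Proposition~\ref{CCC(5)} nor Proposition~\ref{CCC(5s)}), and no recursive construction from GDCs or Room frames applies directly at length $8$, so one is forced into a direct search for an $18$-word configuration that simultaneously respects the constant-composition constraint, the weight-$4$ constraint, and all distance-$5$ constraints, with the orbit bookkeeping made consistent with a total of $18$.
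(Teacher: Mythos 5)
Your treatment of $n=4$ is correct and in fact makes explicit the upper-bound argument that the paper leaves unstated: since the weight equals the length, every codeword has full support, so any two codewords are at Hamming distance at most $4<5$, forcing the code to have at most one word; the paper simply exhibits $\langle 0,1,2,3\rangle$.

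The $n=8$ half, however, contains a genuine gap. For a statement of the form $A_4(8,5,[2,1,1])\ge 18$ the proof \emph{is} the witness, and you never produce one --- you only describe a search whose output ``would be recorded.'' Moreover, the specific search you outline is internally inconsistent: you correctly observe that every composition-$[2,1,1]$ word on $\bbZ_8$ has trivial translation stabilizer, so every full orbit under development modulo $8$ has length exactly $8$; but then any code obtained by full cyclic development has size divisible by $8$, and $18$ is not. Hence the difference method as you describe it cannot reach $18$ words unless you pass to partial development (e.g.\ orbits of length $2$ under the subgroup $\{0,4\}$) or abandon the group structure entirely --- and the latter is what the paper actually does: its witness (Proposition 6.8) is a bare list of $18$ explicit codewords with no apparent automorphism. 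Your fallback of searching within the more rigid class of $[2,1,1]$-GDC$(5)$s of type $2^4$ also does not match the paper's code (whose very first codeword $\langle 3,7,5,2\rangle$ has both $3$ and $7$, a translate-pair $\{i,i+4\}$, in its support), and there is no a priori guarantee that an $18$-word code exists inside that subclass. Your surrounding framing --- that $8$ is a sporadic length where $U(8,5,[2,1,1])=24$ is not attained and no recursive construction applies, so a direct construction is unavoidable --- is accurate, but the existence claim itself remains unproved as written.
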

\begin{proof}
For $n=4$, the one required codeword is $\langle 0, 1, 2, 3\rangle$. For $n=8$, the $18$  required codewords can be found in Proposition 6.8.
\end{proof}
\vskip 10pt

\begin{theorem}
\label{CCC4t} There exists an optimal $(4t,5,[2,1,1])_4$-code with
size $4t(2t-1)$ for all $t \ge 3$.
\end{theorem}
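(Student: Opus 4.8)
Since $U(4t,5,[2,1,1])=4t\lfloor(4t-1)/2\rfloor=4t(2t-1)$, the corollary already supplies the upper bound, so it suffices to construct, for every $t\ge 3$, an $(4t,5,[2,1,1])_4$-code of size $4t(2t-1)$. The plan is to obtain most of these codes from $[2,1,1]$-GDC$(5)$s of type $4^{v}$ by a ``multiplication'' argument, and to mop up the remaining orders with small base cases. The key step is: take a $[2,1,1]$-GDC$(5)$ of type $4^{v}$ (size $8v(v-1)$, from Proposition \ref{GDC(5)gt}), inflate it by a TD$(4,m)$ via Construction \ref{Inflation} to a $[2,1,1]$-GDC$(5)$ of type $(4m)^{v}$ of size $8v(v-1)m^{2}$, and then apply Construction \ref{FillGroups} (legitimate since $5\le 2(4-1)=6$), filling each of the $v$ groups of size $4m$ with an optimal $(4m,5,[2,1,1])_4$-code of size $U(4m,5,[2,1,1])=4m(2m-1)$. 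This produces an $(4vm,5,[2,1,1])_4$-code of size
$$8v(v-1)m^{2}+v\cdot 4m(2m-1)=4vm\bigl(2(v-1)m+2m-1\bigr)=4vm(2vm-1)=U(4vm,5,[2,1,1]),$$
that is, an optimal code; so whenever $t=vm$ with a type-$4^{v}$ GDC, a TD$(4,m)$, and an optimal $(4m,\cdot)$-code all in hand, we are finished.

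To feed this step I would first widen the spectrum of admissible $v$: applying the Fundamental Construction (Construction \ref{FundCtr}) to a $(v,\{5,6,7,8,9\},1)$-PBD (Lemma \ref{PBD5-9}), giving every point weight $4$ and placing on each block of size $k\in\{5,\dots,9\}$ the ingredient $[2,1,1]$-GDC$(5)$ of type $4^{k}$ from Proposition \ref{GDC(5)gt}, yields a $[2,1,1]$-GDC$(5)$ of type $4^{v}$ for all $v\ge 10$ outside the short exception list of Lemma \ref{PBD5-9}; together with the direct cases $v\in\{5,6,7,8,9,11\}$ this covers essentially all $v\ge 5$. Letting $m$ range over the integers admitting a TD$(4,m)$ (obtained from Lemma \ref{TD} by deleting a group of a TD$(5,m)$) for which an optimal $(4m,5,[2,1,1])_4$-code is already known, the multiplication step then settles $A_4(4t,5,[2,1,1])$ for every $t$ with such a factorization.

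The residual orders---$t=3,4$, prime $t$, and $t$ whose factorizations all miss the admissible set---I would handle separately. The value $n=12$ is in Proposition \ref{CCC(5)}, and $n=16$ follows from the $[2,1,1]$-GDC$(5)$ of type $2^{8}$ of Proposition \ref{GDC(5)gt}: its groups have size $2$ and so hold no weight-$4$ codeword, so the ``filling'' is vacuous and the code already has the optimal size $2\cdot 8\cdot 7=112=U(16,5,[2,1,1])$; many further multiples of $4$ occur explicitly in Propositions \ref{CCC(5)} and \ref{CCC(5s)}. For what is left I would run an Adjoining Points argument (Construction \ref{AdjoinPoints}): adjoin $4r$ points to a type-$4^{t-r}$ master GDC, take the optimal $(4(t-r),\cdot)$- and $(4(r+1),\cdot)$-codes as the ``code'' ingredients, and reduce---via the counting identity, which forces the ingredient GDC of type $1^{4}(4r)^{1}$ to have size $4(4r+1)$---to the existence of those small auxiliary GDCs; they, and the handful of genuinely exceptional small orders, are supplied by direct constructions in the spirit of Propositions \ref{GDC(5)gt}--\ref{CCC(5s)} and by the $4$-GDD and PBD families of Lemmas \ref{4GDD} and \ref{PBD4*}. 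A final bookkeeping check confirms that the union of all these families is $\{4t:t\ge 3\}$.

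The step I expect to be the main obstacle is precisely this covering argument. Since a $[2,1,1]$-GDC$(5)$ has empty restriction to any three groups (a weight-$4$ codeword meets at least four groups) and no full $[2,1,1]$-GDC$(5)$ of type $4^{4}$ or $2^{4}$ is available, the block sizes admissible for the Fundamental Construction are confined to $\{5,6,7,8,9,11\}$, so the multiplication step alone reaches only lengths of the form $4vm$; catching the prime $t$ and the sporadic gaps in the type-$4^{v}$ spectrum is what forces the more delicate Adjoining-Points bookkeeping together with a nontrivial stock of explicit small codes and GDCs. Verifying that all the required type-$1^{4}(4r)^{1}$ GDCs (and the exceptional small orders) actually exist---largely a computational matter---is where most of the remaining work lies.
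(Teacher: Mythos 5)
Your upper bound, your ``multiplication'' step (inflate a type-$4^v$ GDC by a ${\rm TD}(4,m)$ and fill the groups of size $4m$ with optimal codes), and the size computation for that step are all correct, and this is essentially how the paper treats a handful of values such as $t\in\{15,18,21,27,28,32,33\}$. The genuine gap is in how you cover everything else. The multiplication engine only reaches $t$ admitting a factorization $t=vm$ with $v\ge 5$ and $m\ge 3$, so every prime $t$, every $t=2p$ with $p$ prime, and many other values fall into your ``residual'' class, which is infinite. Your proposed patch for these via Construction~\ref{AdjoinPoints} cannot work: adjoining $y=4r$ points to a master of type $4^{t-r}$ requires ingredient $[2,1,1]$-GDC$(5)$s of type $1^4(4r)^1$, and your own counting identity forces their size to be $4(4r+1)=16r+4$; but any $[2,1,1]$-GDC$(5)$ of type $1^4y^1$ has at most $18$ codewords. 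Indeed, each codeword's support contains at least three of the four singleton groups, hence at least three of the six singleton pairs; and for a fixed pair $\{x,y\}$ of singletons and a fixed choice of nonzero values at $x$ and $y$ there is at most one codeword, since two such codewords would agree at $x$ and $y$ and hence differ in at most $|{\rm supp}(\vu)\cup{\rm supp}(\vv)|-2\le 4$ positions. This gives at most $6\cdot 9/3=18$ codewords, while $16r+4\ge 20$ for every $r\ge 1$, so the required ingredients do not exist and the residual class is left uncovered.

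The paper closes exactly this gap by decomposing $t$ \emph{additively} rather than multiplicatively: for all $t\ge 34$ (and $t\in\{20,24,25,29,30\}$) it deletes a point from a $(t+1,\{5,6,7,8,9\},1)$-PBD to obtain a $\{5,\ldots,9\}$-GDD of type $4^i5^j6^k7^l8^m$ with $4i+5j+6k+7l+8m=t$, applies Construction~\ref{FundCtr} with weight $4$ (ingredients of type $4^k$, $5\le k\le 9$) to get a $[2,1,1]$-GDC$(5)$ of type $16^i20^j24^k28^l32^m$ on $4t$ points, and fills the groups with optimal codes of lengths $16$ through $32$. This reaches every sufficiently large $t$ irrespective of its arithmetic structure, leaving only finitely many small $t$ for the direct constructions and the sporadic inflation arguments you already have. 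Replacing your adjoining-points patch with this deleted-point weighting argument is what is needed to make the proof complete.
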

\begin{proof}
For each $3\le t\le 14$ or $t\in \{16,17,19,22,23,$ $26\}$, there exists an
optimal $(4t,5,[2,1,1])_4$-code by Propositions \ref{GDC(5)gt},
\ref{CCC(5)} and \ref{CCC(5s)}.

For each $u\in \{5,6,7,9,11\}$, there exists a $[2,1,1]$-GDC(5) of type
$4^u$. Apply Construction \ref{Inflation} with a TD($4,3$) (which
exists by Lemma \ref{TD}) to get a $[2,1,1]$-GDC(5) of type $12^u$. Filling in the groups of this GDC with an
optimal $(12,5,[2,1,1])_4$-code coming from Proposition
\ref{CCC(5)}, the result is an optimal $(4t,5,[2,1,1])_4$-code for each
$t\in \{15,18,21,27,33\}$.

For each $u\in \{7,8\}$, there exists a $[2,1,1]$-GDC(5) of type $4^u$,
and apply Construction \ref{Inflation} with a TD($4,4$) to get a
$[2,1,1]$-GDC(5) of type $16^u$ for each $u\in \{7,8\}$. Filling in the
groups of this GDC with an optimal $(16,5,[2,1,1])_4$-code (see
 Proposition \ref{GDC(5)gt}), the result is an optimal
$(4t,5,[2,1,1])_4$-code for each $t\in \{28,32\}$.

For $t=31$, take a $4$-RGDD of type $3^8$ (see Lemma~\ref{4RGDD38}). There are $7$ parallel
classes in this $4$-RGDD. Add one ideal point to each of these  $7$ parallel
classes to complete them. The result is a $5$-GDD of type $3^8 7^1$.  Apply Construction \ref{FundCtr} with
weight $4$ to each point of this $5$-GDD and fill in the groups with optimal
codes of lengths $12$ and $28$ to obtain an optimal
($124,5,[2,1,1]$)-code. Here, the input design, a $[2,1,1]$-GDC(5) of
type $4^5$, exists by Proposition \ref{GDC(5)gt}.

For all $t\geq 34$ or $t\in \{20,24,25,29,30\}$, take a
$(t+1,\{5,6,7,8,9\},1)$-PBD from Lemma \ref{PBD5-9} and remove one
point to obtain a $\{5,6,7,8,9\}$-GDD of type $4^i 5^j 6^k 7^l 8^m$
with $4i+5j+6k+7l+8m=t$. Apply Construction \ref{FundCtr} with
weight $4$ using  $[2,1,1]$-GDC(5)s of types $4^t$ for  $t\in
\{5,6,7,8,9\}$  (Proposition \ref{GDC(5)gt}) as input ingredients to obtain a
$[2,1,1]$-GDC(5) of type $16^i 20^j 24^k 28^l 32^m$ and length $4t$. Filling in the groups of this GDC with optimal
$(n,5,[2,1,1])_4$-codes for   $n\in \{16,20,24,28,32\}$ by Propositions
\ref{GDC(5)gt}, \ref{CCC(5)} and \ref{CCC(5s)}, the result is an
optimal $(4t,5,[2,1,1])_4$-code for all $t\geq 34$ or $t\in
\{20,24,25,29,30\}$.
\end{proof}

\subsection{The Case of Length $n \equiv 1 \pmod{4}$}
\vskip 10pt

\begin{lemma}
\label{CCC(5)5913} $A_4(5,5,[2,1,1])=2$, $A_4(9,5,[2,1,1])\ge 27$
and $A_4(13,5,[2,1,1])\ge 72$.
\end{lemma}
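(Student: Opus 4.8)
The plan is to handle the three assertions separately. The equality $A_4(5,5,[2,1,1])=2$ requires a matching upper bound together with a two-codeword construction, whereas the two inequalities are existence statements to be witnessed by explicit codes obtained via the difference method fixed just before Proposition~\ref{GDC(5)gt}.

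For $n=5$ I would first show $A_4(5,5,[2,1,1])\le 2$. Lemma~\ref{bound2} applied to $\overline{w}=[2,1,1]$ gives $A_4(5,5,[2,1,1])\le \frac{5}{2}\,A_4(4,5,[1,1,1])$, and $A_4(4,5,[1,1,1])=1$ because two distinct codewords of length $4$ cannot be at Hamming distance $5$; since the left side is an integer it is at most $2$. (A direct count also works: the zero coordinates of three codewords pairwise at distance $5$ are pairwise distinct, so on each of the two remaining coordinates the three symbols are all distinct; counting the occurrences of the symbol $1$ then yields at most $2+3=5$, short of the required $3\cdot 2=6$.) For the reverse inequality I would exhibit the two codewords $\langle 0,1,2,3\rangle$ and $\langle 2,3,1,4\rangle$ on the index set $[0,4]$ --- as vectors, $(1,1,2,3,0)$ and $(0,2,1,1,3)$ --- and check that each has composition $[2,1,1]$ and that they differ in every one of the five coordinates; hence $A_4(5,5,[2,1,1])=2$.

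For $n=9$ and $n=13$ the approach is purely constructive and follows the conventions set before Proposition~\ref{GDC(5)gt}. For length $9$ I would develop three base codewords cyclically modulo $9$, producing $3\cdot 9=27$ codewords; for length $13$ I would (for instance) use a group of order $12$ together with a single fixed point and develop six base codewords, producing $6\cdot 12=72$ codewords. In either case it suffices to verify that no two codewords belonging to the orbits of a pair of base codewords agree in more than three coordinate positions, which is exactly the condition $d_H\ge 5$; by the invariance of the construction under the acting group this amounts to finitely many difference conditions, checkable by hand or by computer. The explicit base codewords and this verification would be relegated to the subsequent collection of detailed constructions, as with Propositions~\ref{GDC(5)gt}--\ref{CCC(5s)}.

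The only place a genuine argument is needed is the bound $A_4(5,5,[2,1,1])\le 2$, and that is the one-line Johnson-bound step above (or the equally short count); everything else is the display of two explicit codewords and routine finite distance checks on computer-generated lists. Thus there is no real mathematical obstacle --- the hard part is the computer search that locates the base codewords realizing sizes $27$ and $72$, not any step of the write-up.
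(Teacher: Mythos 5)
Your overall route coincides with the paper's: the lemma is settled by exhibiting explicit small codes found by computer search (the paper's proof simply points to its appendix, where Proposition 6.9 lists the two codewords for $n=5$, twenty-seven explicit codewords for $n=9$, and seventy-two codewords for $n=13$ obtained as quasi-cyclic shifts, with step $2$ modulo $12$, of twelve base words over $\bbZ_{12}\cup\{a\}$ with $a$ fixed). Two remarks. First, your treatment of $n=5$ actually goes beyond the paper: the printed proof never justifies the upper bound $A_4(5,5,[2,1,1])\le 2$, whereas your application of Lemma~\ref{bound2} together with the observation that $A_4(4,5,[1,1,1])=1$ (two distinct length-$4$ words cannot be at Hamming distance $5$) gives $A_4(5,5,[2,1,1])\le 5/2$, hence $\le 2$ by integrality; your alternative direct count (distinct zero positions, all three symbols distinct on the two remaining coordinates, at most $2+3=5<6$ occurrences of the symbol $1$) is also sound, and your pair $(1,1,2,3,0)$, $(0,2,1,1,3)$ does differ in all five coordinates, so that part is fully proved. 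Second, for $n=9$ and $n=13$ your write-up contains no actual codes: you describe plausible orbit structures ($3$ base words developed cyclically mod $9$; $6$ base words over a group of order $12$ plus a fixed point) and defer the search to an appendix. As written this establishes nothing about those two lengths --- the entire content of $A_4(9,5,[2,1,1])\ge 27$ and $A_4(13,5,[2,1,1])\ge 72$ is the existence of the codes, and your guessed orbit structures are not the ones the paper actually uses (its $n=9$ code is given as a flat list of $27$ words with no stated automorphism, and its $n=13$ code develops $12$ base words under a subgroup of index $2$). So the proposal is methodologically faithful, and stronger than the paper on the $n=5$ equality, but for the two lower bounds it is a proof outline rather than a proof.
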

\begin{proof}
All the required codewords can be found in Proposition 6.9.
\end{proof}
\vskip 10pt

\begin{theorem}
\label{CCC4t+1}  There exists an optimal $(4t+1,5,[2,1,1])_4$-code
with size $2t(4t+1)$ for all $t \ge 4$.
\end{theorem}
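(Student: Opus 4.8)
The plan is to exhibit, for each $t\ge 4$, an $(4t+1,5,[2,1,1])_4$-code of size $2t(4t+1)$; since $U(4t+1,5,[2,1,1])=(4t+1)\cdot 2t=2t(4t+1)$ and $A_4(n,5,[2,1,1])\le U(n,5,[2,1,1])$, any such code is automatically optimal. For the small values nothing new is needed: $n=17$ is given by Proposition~\ref{CCC(5)}, and $n=4t+1$ for $t\in\{5,\dots,19\}\cup\{21,22,23,27,31,33\}$ by Proposition~\ref{CCC(5s)}. Between them these dispose of all small $t$ and of most of the $t$ for which $t+1$ falls in the exception set of Lemma~\ref{PBD5-9}; what remains for the recursive step is the range of $t$ with a $(t+1,\{5,6,7,8,9\},1)$-PBD available --- all $t\ge 34$ together with $t\in\{20,24,25,29,30\}$ --- plus three leftover orders.

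For the recursive step I would run the pipeline PBD $\to$ GDD $\to$ GDC $\to$ code. Start from a $(t+1,\{5,6,7,8,9\},1)$-PBD (Lemma~\ref{PBD5-9}) and delete a point, obtaining a $\{5,6,7,8,9\}$-GDD of type $4^i5^j6^k7^l8^m$ with $4i+5j+6k+7l+8m=t$ and every group of size at least $4$. Give every point weight $4$ and apply the Fundamental Construction~\ref{FundCtr} with the $[2,1,1]$-GDC$(5)$s of types $4^5,\dots,4^9$ from Proposition~\ref{GDC(5)gt} as ingredients; this yields a $[2,1,1]$-GDC$(5)$ of type $16^i20^j24^k28^l32^m$ of order $4t$. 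Finally apply Construction~\ref{AdjoinPoints} with $y=1$: for a group of size $g\in\{16,20,24,28,32\}$ the ingredients called for are a $(g+1,5,[2,1,1])_4$-code and a $[2,1,1]$-GDC$(5)$ of type $1^{g}1^1$ --- the latter, being of type $1^{g+1}$, is again just such a code --- i.e. optimal codes of lengths in $\{17,21,25,29,33\}$, all available from Propositions~\ref{CCC(5)} and~\ref{CCC(5s)}. The arithmetic is clean: the master GDC has size $8\bigl(t(t-1)-\sum h(h-1)\bigr)$, where $h$ ranges over the sizes of the GDD groups; each GDD group of size $h$ becomes a group of size $4h$ contributing an optimal code of size $U(4h+1)=2h(4h+1)$; and since $\sum h=t$ the total is $8t(t-1)+10t=2t(4t+1)$, exactly the bound.

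The three leftover orders are $n\in\{105,113,129\}$, i.e. $t\in\{26,28,32\}$, where $t+1\in\{27,29,33\}$ is a PBD exception. For $t=28$ and $t=32$ I would inflate the $[2,1,1]$-GDC$(5)$ of type $4^7$, respectively $4^8$ (Proposition~\ref{GDC(5)gt}), by a TD$(4,4)$ (Lemma~\ref{TD}, Construction~\ref{Inflation}) to type $16^7$, respectively $16^8$, and then adjoin one point as above; for $t=26$ I would inflate the $[2,1,1]$-GDC$(5)$ of type $3^7$ by a TD$(4,5)$ to type $15^7$ (order $105$) and fill the seven groups with optimal $(15,5,[2,1,1])_4$-codes from Proposition~\ref{CCC(5)} via Construction~\ref{FillGroups} --- in each case the sizes again add up to $U(4t+1)$. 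The step I expect to be the real obstacle is exactly this last patching: for each exceptional order one must locate an auxiliary design (GDD/TD) together with ingredient codes whose sizes total $U(4t+1)$ on the nose, and it is easy to come up a few units short --- for example by allowing the auxiliary GDD a group of size $1$, $2$ or $3$, which would then have to be filled by a code far below its $U$-value.
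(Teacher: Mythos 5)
Your proposal is correct and follows essentially the same route as the paper: direct constructions from Propositions~\ref{CCC(5)} and~\ref{CCC(5s)} for the small and sporadic $t$, the PBD$\to$GDD$\to$weight-$4$ Fundamental Construction with one adjoined point for $t\geq 34$ and $t\in\{20,24,25,29,30\}$, and the same TD-inflations of the $3^7$, $4^7$ and $4^8$ GDCs for the leftover $t\in\{26,28,32\}$. The size accounting you carry out also matches, so the concern raised in your last paragraph does not materialize.
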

\begin{proof}
For each $4\le t\le 19$ or $t\in \{21,22,23,27,$ $31,33\}$, there exists an
optimal $(4t+1,5,[2,1,1])_4$-code by Propositions \ref{CCC(5)} and
\ref{CCC(5s)}.

For $t=26$, there exists a $[2,1,1]$-GDC(5) of type $3^7$. Apply
Construction \ref{Inflation} with a TD($4,5$) to get a
$[2,1,1]$-GDC(5) of type $15^7$.  Fill in the groups with an
optimal $(15,5,[2,1,1])_4$-code to obtain an optimal
$(4t+1,5,[2,1,1])_4$-code.

For each $u\in \{7,8\}$, there exists a $[2,1,1]$-GDC(5) of type $4^u$.
Apply Construction \ref{Inflation} with a TD($4,4$) to get a
$[2,1,1]$-GDC(5) of type $16^u$. Adjoin one ideal
point to this GDC, and fill in the groups together with the extra point with an optimal
$(17,5,[2,1,1])_4$-code to obtain an optimal
$(4t+1,5,[2,1,1])_4$-code for each $t\in \{28,32\}$.

For all $t\geq 34$ or $t\in \{20,24,25,29,30\}$, take a
$(t+1,\{5,6,7,8,9\},1)$-PBD from Lemma \ref{PBD5-9} and remove one
point to obtain a $\{5,6,7,8,9\}$-GDD of type $4^i 5^j 6^k 7^l 8^m$
with $4i+5j+6k+7l+8m=t$. Apply construction \ref{FundCtr} with
weight 4 using $[2,1,1]$-GDC(5)s of types $4^t$  for  $t\in
\{5,6,7,8,9\}$ (Lemma \ref{GDC(5)gt})  as input ingredients to obtain a $[2,1,1]$-GDC(5)
of type $16^i 20^j 24^k 28^l 32^m$ and length $4t$. Adjoin one
ideal point to this GDC, and fill in the groups  together with the extra point with optimal
$(n,5,[2,1,1])_4$-codes for  $n\in \{17,21,25,29,33\}$ (Propositions
\ref{CCC(5)} and \ref{CCC(5s)}) to obtain an optimal
$(4t+1,5,[2,1,1])_4$-code for all $t\geq 34$ or $t\in
\{20,24,25,29,30\}$.
\end{proof}

\subsection{The Case of Length $n \equiv 2 \pmod{4}$}
\vskip 10pt

\begin{lemma}
\label{CCC(5)610} $A_4(6,5,[2,1,1])=6$, $A_4(10,5,[2,1,1]) \ge 36$.
\end{lemma}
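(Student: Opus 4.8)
The plan is to handle the two assertions separately, since the first is an exact value (needing matching bounds) and the second is merely a lower bound (needing one explicit construction).

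\emph{The value $A_4(6,5,[2,1,1])=6$.} For the upper bound, rather than the far too weak estimate $U(6,5,[2,1,1])=6\lfloor 5/2\rfloor=12$, I would apply Lemma~\ref{bound2} to the composition $[2,1,1]$ with $w_1=2$, obtaining $A_4(6,5,[2,1,1])\le \tfrac{6}{2}\,A_4(5,5,[1,1,1])$. It then suffices to show $A_4(5,5,[1,1,1])\le 2$, which is immediate: two distinct codewords of a $(5,5,[1,1,1])_4$-code are weight-$3$ vectors of length $5$ at Hamming distance at least $5$, hence at distance exactly $5$, so they disagree in every coordinate; in particular wherever one is zero the other is nonzero, so their zero sets (each of size $2$) are disjoint, and $[1,5]$ contains at most two pairwise disjoint $2$-subsets. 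Hence $A_4(6,5,[2,1,1])\le 3\cdot 2=6$. For the matching lower bound I would exhibit six codewords of length $6$; the $\bbZ_6$-orbit of $\langle 0,1,2,4\rangle$ will do. By the cyclic symmetry one only has to check the five pairs $\langle 0,1,2,4\rangle$ versus $\langle k,k+1,k+2,k+4\rangle$ for $1\le k\le 5$ (indices mod $6$), and a direct computation shows each such pair has Hamming distance $5$ or $6$; since the six shifts have distinct supports they are distinct, giving a $(6,5,[2,1,1])_4$-code of size $6$.

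\emph{The bound $A_4(10,5,[2,1,1])\ge 36$.} Here the plan is purely constructive: produce $36$ codewords of length $10$, pairwise at Hamming distance at least $5$. Since a super-simple Room frame of type $2^5$ would already yield the optimal value $U(10,5,[2,1,1])=40$ through Theorem~\ref{SSRFtoCodes} but is not available, one falls back on the difference method used throughout this section: take an additive automorphism group of the $10$ points (for instance $\bbZ_{10}$, or $\bbZ_9$ together with one fixed ideal point), list a small set of base codewords, and develop them (possibly together with a multiplier) to generate $36$ codewords in total. Verifying that the union has minimum distance $5$ reduces to a finite difference computation over the base codewords, and the explicit data are to be recorded in Proposition~6.10.

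The step I expect to be the main obstacle is the upper bound $A_4(6,5,[2,1,1])\le 6$. Because $n=6<2\cdot 4$, the codewords overlap heavily and the Johnson-type estimates of Section~II overshoot, so one must lean on the reduction to $A_4(5,5,[1,1,1])$ above and hence on the elementary but essential fact that a $5$-set carries at most two disjoint pairs. Failing a clean reduction, the alternative would be an exhaustive search over the at most $\binom{6}{2}\cdot 4\cdot 3=180$ possible codewords, checking that no seven of them are pairwise at distance $5$.
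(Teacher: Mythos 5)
Your treatment of the first assertion is correct and takes a genuinely different, more self-contained route than the paper. The paper's proof of $A_4(6,5,[2,1,1])=6$ consists solely of exhibiting the six codewords in Proposition 6.10; since $U(6,5,[2,1,1])=12$, the matching upper bound is left implicit (in effect, to exhaustive computer verification). Your reduction via Lemma~\ref{bound2} to $A_4(5,5,[1,1,1])\le \nobreak 2$ is sound: two weight-$3$ words of length $5$ at distance $\ge 5$ must differ in every coordinate, so their two-element zero sets are disjoint, and a $5$-set admits at most two disjoint pairs; hence $A_4(6,5,[2,1,1])\le 3\cdot 2=6$. This is a cleaner and fully human-checkable argument for the exactness. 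Your lower-bound construction also checks out: the $\bbZ_6$-orbit of $\langle 0,1,2,4\rangle$, i.e.\ the cyclic shifts of $(1,1,2,0,3,0)$, has all pairwise distances equal to $5$ or $6$, so it is a $(6,5,[2,1,1])_4$-code of size $6$ (the paper uses a different but equally explicit list).

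For $A_4(10,5,[2,1,1])\ge 36$, however, your proposal does not actually prove the bound: it only describes a search strategy and defers the $36$ codewords to ``Proposition~6.10.'' The paper's proof is precisely such an explicit list of $36$ codewords (given in the appendix with no stated group structure), so existence here rests entirely on exhibiting the data, which you have not done. One concrete obstruction to the specific plan you sketch: under $\bbZ_{10}$ every $[2,1,1]$-codeword has trivial stabilizer (invariance under $x\mapsto x+5$ or $x\mapsto x+2$ would force a repeated symbol), so all orbits have size $10$ and no purely cyclic $\bbZ_{10}$ construction can yield exactly $36$ codewords; similarly a quasi-cyclic construction with orbits of size $5$ cannot reach $36$. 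You would need, e.g., $\bbZ_9$ acting with a fixed ideal point ($4\times 9=36$) or simply a raw list, as the paper does. So the first half of the lemma is fully established by your argument, while the second half remains an unverified existence claim.
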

\begin{proof}
All the required codewords can be found in Proposition 6.10.
\end{proof}
\vskip 10pt

\begin{theorem}
\label{CCC4t+2(1)} There exists an optimal $(4t+2,5,[2,1,1])_4$-code
with size $2t(4t+2)$ for each $3\le t \le 37$.
\end{theorem}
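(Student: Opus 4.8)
The plan is to handle the range $3 \le t \le 37$ by splitting it into small cases that are settled directly by the propositions in Section III.C, and larger cases that are built recursively using the constructions for group divisible codes. First I would dispose of those values of $t$ for which $4t+2 \in \{14,30,38,46,54,62,70,78,86,94,110,126,134\}$ or for which $4t+2$ lies in the (larger) list of lengths in Proposition \ref{CCC(5s)}: for each such $n=4t+2$ we already have $A_4(n,5,[2,1,1]) = U(n,5,[2,1,1]) = n(n-1)/2 = 2t(4t+2)$ directly from Proposition \ref{CCC(5)} or Proposition \ref{CCC(5s)}. Note that $U(4t+2,5,[2,1,1]) = (4t+2)\lfloor (4t+1)/2 \rfloor = (4t+2)\cdot 2t = 2t(4t+2)$, so the target size is exactly the Johnson-type upper bound, and an optimal code of that size is what every construction below must produce. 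For $t=3$, i.e.\ $n=14$, this is in Proposition \ref{CCC(5)}; similarly $t=7$ ($n=30$) is covered there, and so on.

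For the remaining $t$ in $[3,37]$ the idea is to find, for each such $n = 4t+2 \equiv 2 \pmod 4$, either a $[2,1,1]$-GDC$(5)$ of a convenient type together with optimal fillers, or a PBD/GDD to feed into Construction \ref{FundCtr}. Concretely, since $2t = 2\cdot t$, a natural route is: take a $[2,1,1]$-GDC$(5)$ of type $2^s$ (which exists for $s \in \{8,9,11\}$ by Proposition \ref{GDC(5)gt}), or of type $4^s$ or $4^s 2^1$ (Propositions \ref{GDC(5)gt} and \ref{GDC(5)4t21}), inflate it via Construction \ref{Inflation} using a suitable TD (available from Lemma \ref{TD}), and then fill in the groups via Construction \ref{FillGroups} with the small optimal codes of even length guaranteed by Lemma \ref{CCC(5)610}, Proposition \ref{CCC(5)} and Proposition \ref{CCC(5s)}. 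Because $d = 5 \le 2(w-1) = 6$, the hypothesis of Constructions \ref{FillGroups}, \ref{AdjoinPoints} and \ref{FundCtr} is satisfied, and because every ingredient meets its own Johnson bound, a routine count shows the resulting code again attains $U(4t+2,5,[2,1,1])$, hence is optimal; I would carry out this arithmetic check once in general rather than $35$ times. Where a GDC of type $2^s$ is unavailable for the $s$ we need, I would instead remove a point from a suitable PBD from Lemma \ref{PBD4-6} or Lemma \ref{PBD5-9} to get a $\{4,5,6\}$-GDD or $\{5,6,7,8,9\}$-GDD, give every point weight $2$, and apply Construction \ref{FundCtr} with the ingredient $[2,1,1]$-GDC$(5)$s of types $2^k$ for $k \in \{8,9,11\}$ (after noting that a GDC of type $2^4$, $2^5$, $2^6$, $2^7$ exists trivially or must be supplied separately — see below).

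The main obstacle I expect is exactly covering the sporadic small-to-medium values of $t$ for which no clean recursive input exists: the constructions above need $[2,1,1]$-GDC$(5)$s of types $2^k$ with small $k$ (in particular $k \in \{4,5,6,7\}$), and Proposition \ref{GDC(5)gt} only provides $k \in \{8,9,11\}$, so for those we would be forced either to build the codes by hand (the difference-method computer search described in Section III.C) or to reach them through a different chain — e.g.\ an optimal $(4t+2,5,[2,1,1])_4$-code directly from a strong frame starter in $G \setminus H$ with $|H| = 2$ via Lemma \ref{framestarter} and Theorem \ref{SSRFtoCodes}, which is how many even lengths in Proposition \ref{CCC(5s)} were obtained. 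So the realistic shape of the proof is: (i) quote Propositions \ref{CCC(5)} and \ref{CCC(5s)} for all $t$ whose value of $4t+2$ appears there; (ii) for each of the finitely many leftover $t \in [3,37]$, exhibit an explicit recursive construction as above, listing the master GDD or PBD, the weight function, the TD used for inflation, and the optimal fillers, and (iii) verify once that the size produced equals $U(4t+2,5,[2,1,1]) = 2t(4t+2)$. The bookkeeping of checking that the leftover set is genuinely finite and fully covered — and that each needed ingredient has been produced somewhere — is the part that demands care rather than ingenuity.
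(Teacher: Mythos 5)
Your overall strategy is the right one and matches the paper's: quote Propositions \ref{GDC(5)gt}, \ref{CCC(5)} and \ref{CCC(5s)} for the lengths $4t+2$ that appear there, and handle the remainder recursively. But the proposal stops exactly where the proof begins. The direct propositions cover $3\le t\le 34$ with $t\notin\{20,22,24,25,28,29,30,32\}$, leaving eleven values $t\in\{20,22,24,25,28,29,30,32,35,36,37\}$, and for these you never exhibit a construction -- you only promise that one can be ``exhibited as above.'' That is not a routine bookkeeping step: each leftover length requires a specific master/ingredient combination, and the paper needs four distinct devices to cover them (inflating $4^u$ by a ${\rm TD}(4,4)$ or ${\rm TD}(4,5)$ and then \emph{adjoining two extra points} filled with a GDC of type $2^9$ or $2^{11}$ to land on $2^{8u+1}$ or $2^{10u+1}$; inflating $6^5$ by ${\rm TD}(4,t_1)$ and filling with optimal codes of length $6t_1$; and, for $t=29$, a weighted ${\rm TD}(6,5)$ giving type $20^5\,18^1$ via the mixed-weight ingredient $4^5 2^1$). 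Your main described route -- inflate and then fill groups directly -- cannot even produce a length $\equiv 2 \pmod 4$ when all groups have size $\equiv 0 \pmod 4$; the ``adjoin two points'' step is essential and you never actually invoke it.

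There is also a concrete flaw in your proposed fallback. Giving weight $2$ to every point of a $\{4,5,6\}$- or $\{5,6,7,8,9\}$-GDD requires ingredient $[2,1,1]$-GDC$(5)$s of types $2^k$ for $k\in\{4,5,6,7,8\}$, not $\{8,9,11\}$. A full-size GDC of type $2^4$ would be an $(8,5,[2,1,1])_4$-code of size $24=U(8,5,[2,1,1])$, whereas the paper only achieves $18$ and lists $n=8$ among the undetermined lengths; so this ingredient is not merely ``to be supplied separately,'' it is very likely unavailable, and the route collapses. In short: the skeleton is correct and consonant with the paper, but the eleven exceptional cases -- which are the substance of the theorem -- are not proved, and one of the two escape hatches you offer for them does not work.
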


\begin{proof}
For each $3\le t\le 34$ and $t\notin \{20,22,24,25,$ $28,29,30,32\}$, there exists an
optimal $(4t+2,5,[2,1,1])_4$-code with  size $2t(4t+2)$ by Propositions
\ref{GDC(5)gt}, \ref{CCC(5)} and \ref{CCC(5s)}.

For each $t\in\{20,24,28,32,36\}$, there exists a $[2,1,1]$-GDC(5) of type $4^u$ with $u\in\{5,6,7,8,9\}$. Apply
Construction \ref{Inflation} with a TD($4,4$) to get a
$[2,1,1]$-GDC(5) of type $16^u$ with $u\in\{5,6,7,8,9\}$. Adjoin two additional points and fill in the groups together with the two extra points with a $[2,1,1]$-GDC(5) of type $2^9$ to obtain an optimal $[2,1,1]$-GDC(5) of type $2^{8u+1}$ and lengths $82,98,114,130$ or $146$.

For each $t\in \{22, 37\}$, there exists a $[2,1,1]$-GDC(5) of type $6^5$. Let $t_1=(4t+2)/30$. Apply
Construction \ref{Inflation} with a TD($4,t_1$) to get a
$[2,1,1]$-GDC(5) of type $(6t_1)^5$. Fill in the groups with an
optimal $(6t_1,5,[2,1,1])_4$-code to obtain an optimal code of length
$4t+2$.

For each $t\in\{25,30,35\}$, there exists a $[2,1,1]$-GDC(5) of type $4^u$ with $u\in\{5,6,7\}$. Apply
Construction \ref{Inflation} with a TD($4,5$) to get a
$[2,1,1]$-GDC(5) of type $20^u$ with $u\in\{5,6,7\}$. Adjoin two additional points and fill in the groups  together with the two extra points with a $[2,1,1]$-GDC(5) of type $2^{11}$ to obtain an  optimal  $[2,1,1]$-GDC(5) of type $2^{10u+1}$ and  lengths $102,122$ or $142$.

For $t=29$, take a TD($6,5$) from Lemma \ref{TD}. Apply
Construction \ref{FundCtr} with weight $4$ to the points in the
first five groups and $4$ points in the last group,  and  weight $2$ to
the other $1$ points in the last group. All the remaining points are given weight $0$.
Note that there exist $[2,1,1]$-GDC(5)s of types $4^5$, $4^6$ and
$4^5 2^1$ by Propositions \ref{GDC(5)gt} and \ref{GDC(5)4t21}. The result is a $[2,1,1]$-GDC(5) of type $20^5
 18^1$. Filling in the groups with optimal
codes of lengths $18$ or $20$, the results are optimal codes
of length $118=4\times 29+2$.
\end{proof}
\vskip 10pt

\begin{theorem}
\label{CCC4t+2(2)} There exists an optimal $(4t+2,5,[2,1,1])_4$-code
with size $2t(4t+2)$ for all $t \ge 38$.
\end{theorem}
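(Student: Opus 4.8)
The goal is an optimal $(4t+2,5,[2,1,1])_4$-code, namely one of size $2t(4t+2)=U(4t+2,5,[2,1,1])$. I would aim directly for a $[2,1,1]$-GDC$(5)$ of type $2^{2t+1}$: by the size formula of Proposition~\ref{GDC(5)gt} such a GDC has $\tfrac12\cdot 2^{2}(2t+1)(2t)=2t(4t+2)$ codewords, and since its $2t+1$ groups all have size $2$ it is itself a $q$-ary code of length $4t+2$ with minimum distance at least $5$ and constant composition $[2,1,1]$, hence optimal (equivalently, it is the code extracted by Theorem~\ref{SSRFtoCodes} from the super-simple Room frame of type $2^{2t+1}$ underlying it). So the whole problem reduces to building a $[2,1,1]$-GDC$(5)$ of type $2^{2t+1}$ for every odd order $2t+1\geq 77$, which I would carry out with the recursive machinery of Theorems~\ref{CCC4t}--\ref{CCC4t+2(1)}.

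The main line: since $t+1\geq 39$ lies outside the exceptional set of Lemma~\ref{PBD5-9}, take a $(t+1,\{5,6,7,8,9\},1)$-PBD and delete one point to obtain a $\{5,6,7,8,9\}$-GDD of type $4^{i}5^{j}6^{k}7^{l}8^{m}$ with $4i+5j+6k+7l+8m=t$. Apply Construction~\ref{FundCtr} with the constant weight $4$, using the $[2,1,1]$-GDC$(5)$s of types $4^{5},4^{6},4^{7},4^{8},4^{9}$ of Proposition~\ref{GDC(5)gt} as ingredients; this produces a $[2,1,1]$-GDC$(5)$ of type $16^{i}20^{j}24^{k}28^{l}32^{m}$ and length $4t$. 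To append the last two coordinates, apply Construction~\ref{AdjoinPoints} with $y=2$: fill one group, of size $g_{1}$, together with the two new points by an optimal $(g_{1}+2,5,[2,1,1])_4$-code --- each of the lengths $18,22,26,30,34$ is the case $t'=4,5,6,7,8$ of Theorem~\ref{CCC4t+2(1)} --- and fill every other group, of size $g$, together with the new points by a $[2,1,1]$-GDC$(5)$ of type $1^{g}2^{1}$. The latter is obtained by refining a $[2,1,1]$-GDC$(5)$ of type $2^{g/2+1}$, splitting $g/2$ of its pair-groups into singletons; its size is then $(g/2)(g+2)=U(g+2,5,[2,1,1])$, and the Adjoining formula telescopes. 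Indeed, since filling the type-$16^{i}\cdots 32^{m}$ GDC with optimal codes gives the optimal code of length $4t$ of Theorem~\ref{CCC4t}, that GDC has size $U(4t,5,[2,1,1])-\sum t_{i}g_{i}(g_{i}-2)/2$; adding $\sum t_{i}g_{i}(g_{i}+2)/2$ gives $U(4t,5,[2,1,1])+2\sum t_{i}g_{i}=(8t^{2}-4t)+8t=8t^{2}+4t=U(4t+2,5,[2,1,1])$ because $\sum t_{i}g_{i}=4t$. The needed ingredients $2^{9},2^{11},2^{17}$ (for $g=16,20,32$) come from Proposition~\ref{GDC(5)gt} and the even entries $n=g+2$ of Proposition~\ref{CCC(5s)}, and $2^{13}$ (for $g=24$) from the entry $n=26$ there.

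The single size not handled is $g=28$, which would need a $[2,1,1]$-GDC$(5)$ of type $2^{15}$ --- not among the available ones --- and keeping this size out of play is the crux. Since a group of size $28$ occurs exactly when the deleted PBD point lies on a block of size $8$ (one such group per such block), I would delete a point lying on at most one block of size $8$, choosing the PBD so that such a point exists (for instance a Steiner system $S(2,5,t+1)$, which has no block of size $8$ at all, whenever $t+1\equiv 1,5\pmod{20}$). Then the lone size-$28$ group, if present, is taken as the distinguished group $g_{1}$ with $t_{1}=1$, so part~(iii) of Construction~\ref{AdjoinPoints} is vacuous and no type-$1^{28}2^{1}$ GDC is required. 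The residual values of $t$ for which this control fails I would dispose of as in Theorem~\ref{CCC4t+2(1)}: inflate a type-$4^{u}$ GDC (Proposition~\ref{GDC(5)gt}) by a $\mathrm{TD}(4,4)$ or $\mathrm{TD}(4,5)$ (Lemma~\ref{TD}) to type $16^{u}$ or $20^{u}$, adjoin two points, and fill them together with the enlarged groups by $2^{9}$ or $2^{11}$; or begin from a $4$-GDD of type $g^{u}m^{1}$ (Lemma~\ref{4GDD}) or a $4$-RGDD (Lemma~\ref{4RGDD38}) and use the mixed weights $4$ and $2$ with the type-$4^{5}2^{1}$ and $4^{6}2^{1}$ GDCs of Proposition~\ref{GDC(5)4t21}. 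The genuinely laborious part is to confirm that this case split exhausts all $t\geq 38$ and that in each application every block of the master design calls only for an ingredient GDC type in hand --- in particular, that a weight-$2$ point always lies on blocks of size $6$ or $7$ only.
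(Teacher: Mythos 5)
Your overall strategy --- reduce to a $[2,1,1]$-GDC$(5)$ of type $2^{2t+1}$ and build it from a $\{5,6,7,8,9\}$-GDD of type $4^i5^j6^k7^l8^m$ by giving weight $4$ and then adjoining two points --- is genuinely different from the paper's, and your size bookkeeping telescopes correctly. But the construction has a real gap exactly where you locate it: the groups of size $28$. Construction \ref{AdjoinPoints} demands a $[2,1,1]$-GDC$(5)$ of type $1^{28}2^1$ for every size-$28$ group except possibly one, and no such ingredient (equivalently, no GDC of type $2^{15}$, nor a super-simple Room frame of type $2^{15}$) is available anywhere in the paper: $n=30$ appears only in Proposition \ref{CCC(5)}, as a plain optimal code obtained from cyclic shifts, with no group structure asserted. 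Your fix is to delete a point lying on at most one block of size $8$, but Lemma \ref{PBD5-9} is a bare existence statement with no control over the block-size distribution; a counting argument does not force such a point to exist once $t\ge 14$, and the Steiner-system escape hatch only covers $t+1\equiv 1$ or $5\pmod{20}$. That leaves an infinite residual set of values of $t$, and the fallback you sketch (inflation by a ${\rm TD}(4,4)$ or ${\rm TD}(4,5)$, the handful of $4$-GDDs in Lemma \ref{4GDD}, the two $4$-RGDDs in Lemma \ref{4RGDD38}) yields only finitely many further lengths, so it cannot close the gap. As written the argument does not cover all $t\ge 38$.

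The paper sidesteps this entirely by never adjoining points. It takes a ${\rm TD}(7,r)$, gives weight $4$ to the first five groups, to $x$ points of the sixth and $y$ points of the seventh, and weight $2$ to $z$ further points of the seventh with $z$ odd, using only the ingredient GDCs of types $4^5$, $4^6$, $4^7$, $4^52^1$, $4^62^1$ from Propositions \ref{GDC(5)gt} and \ref{GDC(5)4t21}. The resulting GDC of type $(4r)^5(4x)^1(4y+2z)^1$ has exactly one group of size $\equiv 2\pmod 4$, which is filled directly with an optimal code of that length from Theorem \ref{CCC4t+2(1)}, while all other groups have size $\equiv 0\pmod 4$ and are filled from Theorem \ref{CCC4t}. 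To salvage your route you would need either an explicit $[2,1,1]$-GDC$(5)$ of type $2^{15}$, or a PBD replacement for Lemma \ref{PBD5-9} whose block sizes exclude $8$.
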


\begin{proof}
Take a TD$(7,r)$ for  $r \ge 7$ and $r\not\in \{10,14,15,18,$ $20, 22,26,30,34,38,46,60\}$
  from Lemma \ref{TD}. Apply Construction
\ref{FundCtr} with weight $4$ to the points in the first five
groups, $x$ points in the sixth group and $y$ points in the last
group, and  weight $2$ to the other $z$ points in the last group. All the remaining points are given weight $0$. Here,   we require that
$x\ge 4$, $4y+2z\ge 14$ and $z$ odd.  Note that there exist $[2,1,1]$-GDC(5)s
of types $4^5$, $4^6$, $4^7$, $4^5 2^1$, and $4^6 2^1$ by Propositions
\ref{GDC(5)gt} and \ref{GDC(5)4t21}. The result is a
$[2,1,1]$-GDC(5) of type $(4r)^5 (4x)^1 (4y+2z)^1$. Fill in the groups with optimal codes of
lengths $4u$ with $u\ge 4$ (which exist by Theorem \ref{CCC4t}) or
$4v+2$ with $3\le v \le 37$ (which exist by Theorem
\ref{CCC4t+2(1)}). The result is an optimal
$(20r+4x+4y+2z,5,[2,1,1])_4$-code, where $20r+4x+4y+2z$ can take any
value $n$ with $n \equiv 2 \pmod{4}$ and $n \ge 4\times 38+2=154$.
\end{proof}

\subsection{The Case of Length $n \equiv 3 \pmod{4}$}
\vskip 10pt

\begin{lemma}
\label{CCC(5)711} $A_4(7,5,[2,1,1])=10$, $A_4(11,5,[2,1,1])\ge 48$.
\end{lemma}
\begin{proof}
All the required codewords can be found in Proposition 6.11.
\end{proof}
\vskip 10pt

\begin{theorem}
\label{CCC4t+3(1)} There exists an optimal $(4t+3,5,[2,1,1])_4$-code
with size $(2t+1)(4t+3)$ for each $3\le t \le 37$.
\end{theorem}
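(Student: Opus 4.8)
Recall that $A_4(n,5,[2,1,1])\le U(n,5,[2,1,1])=n\lfloor (n-1)/2\rfloor$, and that since $n=4t+3$ is odd this equals $(4t+3)(2t+1)=\binom{4t+3}{2}$; so the goal is, for each $t\in[3,37]$, a $(4t+3,5,[2,1,1])_4$-code with exactly $\binom{4t+3}{2}$ codewords. I would split the range into three parts. For the many $t$ whose length $4t+3$ is already listed in Proposition \ref{CCC(5)} or Proposition \ref{CCC(5s)} — the latter arising from super-simple Room squares built from strong starters (Theorem \ref{SSRStoCodes}) — there is nothing to do.

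Next, for the larger $t$ I would imitate the $n\equiv 1\pmod 4$ argument of Theorem \ref{CCC4t+1}: take a $(t+1,\{5,6,7,8,9\},1)$-PBD from Lemma \ref{PBD5-9} (available for all $t\ge 34$ and for $t\in\{20,24,25,29,30\}$), delete a point to obtain a $\{5,6,7,8,9\}$-GDD of type $4^i5^j6^k7^l8^m$ with $4i+5j+6k+7l+8m=t$, apply the Fundamental Construction (Construction \ref{FundCtr}) with every weight equal to $4$ — the ingredients being the $[2,1,1]$-GDC$(5)$s of types $4^5,\ldots,4^9$ of Proposition \ref{GDC(5)gt} — to get a $[2,1,1]$-GDC$(5)$ of type $16^i20^j24^k28^l32^m$ of length $4t$, and then adjoin three points (Construction \ref{AdjoinPoints} with $y=3$), using optimal codes of lengths in $\{19,23,27,31,35\}$ (Proposition \ref{CCC(5s)}) and $[2,1,1]$-GDC$(5)$s of types $1^{16}3^1,1^{20}3^1,1^{24}3^1,1^{28}3^1,1^{32}3^1$. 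The size bookkeeping then closes: each group size that occurs ($16,\ldots,32$ and $19,\ldots,35$) is one for which an optimal code of the predicted size is in hand, and a $[2,1,1]$-GDC$(5)$ of type $1^g3^1$ at its maximum size $\binom{g+3}{2}-3$ contributes precisely the right amount, so the length-$(4t+3)$ code has exactly $\binom{4t+3}{2}$ codewords. These auxiliary GDCs of type $1^g3^1$ would be produced by a separate (largely computational) construction, in the spirit of Propositions \ref{GDC(5)gt}--\ref{GDC(5)4t21}.

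After these two steps only $n\in\{91,115,135\}$ remain. For $n=135$ I would inflate the $[2,1,1]$-GDC$(5)$ of type $3^9$ (Proposition \ref{GDC(5)gt}) by a ${\rm TD}(4,5)$ using Construction \ref{Inflation}, obtaining a $[2,1,1]$-GDC$(5)$ of type $15^9$, and then fill the groups (Construction \ref{FillGroups}) with optimal length-$15$ codes; the count comes out to $\binom{135}{2}$ because $15$ is odd (so the Johnson-bound floor disappears) and Proposition \ref{GDC(5)gt}'s GDCs, hence their inflations, are already of the largest possible size. For $n\in\{91,115\}$ I would use a super-simple Room frame of type $15^4\,31^1$, respectively $19^4\,39^1$ (equivalently, a super-simple frame starter in $G\setminus H$, constructed as in Lemmas \ref{strongframestarter}--\ref{framestarter} with super-simplicity verified by computer), form one $[2,1,1]$-codeword from each filled cell exactly as in Theorem \ref{SSRStoCodes}, and fill each hole with the optimal code of the corresponding odd length from Propositions \ref{CCC(5)} and \ref{CCC(5s)}; since $n$ and every hole size is odd, the total is again $\binom{n}{2}$. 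The step I expect to be the real obstacle is manufacturing the two ``odd'' ingredient families — the $[2,1,1]$-GDC$(5)$s of type $1^g3^1$ and the non-uniform super-simple Room frames $15^4\,31^1$ and $19^4\,39^1$ — because the weight-$4$ inflation/fundamental-construction engine that handles $n\equiv 0,1,2\pmod 4$ only ever creates group sizes divisible by $4$ and so cannot on its own reach an odd length, while passing to weight $3$ is blocked by the fact that $[2,1,1]$-GDC$(5)$s of type $3^k$ are known only for $k\in\{7,9\}$; the rest is the routine checking, case by case, that the GDC/Room-frame codewords together with the fill-in optimal codes number exactly $\binom{n}{2}$.
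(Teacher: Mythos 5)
Your reduction of the problem is sound as far as it goes: the upper bound, the observation that all lengths $4t+3$ with $3\le t\le 37$ except $n\in\{91,115,119,135,143,147,151\}$ are already covered by Propositions \ref{CCC(5)} and \ref{CCC(5s)}, the size bookkeeping for Construction \ref{AdjoinPoints}, and the treatment of $n=135$ (inflate the type $3^9$ GDC by a ${\rm TD}(4,5)$ and fill with length-$15$ codes) all match the paper. But the remaining six lengths are exactly where your argument rests on objects that are neither in the paper nor supplied by you: the $[2,1,1]$-GDC$(5)$s of type $1^g3^1$ for $g\in\{16,20,24,28,32\}$ (needed for $n=119,143,147,151$ via your $y=3$ adjunction) and the super-simple Room frames of types $15^4\,31^1$ and $19^4\,39^1$ (needed for $n=91,115$). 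You flag these yourself as ``the real obstacle,'' and that is an accurate self-assessment: without them the proof covers only one of the seven outstanding lengths, so this is a genuine gap rather than a routine omission.

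The gap is avoidable, and the worry that motivates your detour --- that the weight-$4$ machinery ``only ever creates group sizes divisible by $4$ and so cannot reach an odd length'' --- is the misstep. The paper reaches odd lengths by adjoining a \emph{single} ideal point to GDCs whose group sizes are $\equiv 2\pmod 4$, and it manufactures such groups either by inflating the type $6^5$ GDC (giving $18^5$ for $n=91$ and $30^5$ for $n=151$, filled with length-$19$ and length-$31$ codes) or by running Construction \ref{FundCtr} on a TD with \emph{mixed} weights $4$ and $2$, using the type $4^5 2^1$ and $4^6 2^1$ ingredients of Proposition \ref{GDC(5)4t21} (giving $20^5\,14^1$ and $20^5\,18^1$ for $n=115,119$) or pure weight $2$ on a ${\rm TD}(9,8)$ (giving $16^8\,14^1$ for $n=143$); $n=147$ comes from inflating $3^7$ by a ${\rm TD}(4,7)$ to $21^7$. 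All of these use only ingredients already established in the paper, whereas your route would require two new computational existence results before the theorem is proved.
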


\begin{proof}
For each $3\le t\le 34$ and $t\notin \{22,28,29,33\}$, there exists an
optimal $(4t+3,5,[2,1,1])_4$-code with size $(2t+1)(4t+3)$ by
Propositions \ref{CCC(5)} and \ref{CCC(5s)}.

For $t=22$, there exists a $[2,1,1]$-GDC(5) of type $6^5$. Apply
Construction \ref{Inflation} with a TD($4,3$) to get a
$[2,1,1]$-GDC(5) of type $18^5$. Adjoin one ideal point and fill in
the groups together with the extra point with an optimal $(19,5,[2,1,1])_4$-code to obtain an
optimal code of length $91=4\times 22+3$.

For each $t\in\{28,29\}$, take a TD($6,5$) from Lemma \ref{TD}. Apply
Construction \ref{FundCtr} with weight $4$ to the points in the
first five groups and  $x$ points in the last group, and weight $2$ to
the other $y$ points in the last group.  All the remaining points are given weight $0$.
Note that there exist $[2,1,1]$-GDC(5)s of types $4^5$, $4^6$ and
$4^5 2^1$ by Propositions \ref{GDC(5)gt} and \ref{GDC(5)4t21}. The result is a $[2,1,1]$-GDC(5) of type $20^5
 (4x+2y)^1$. Here, $4x+2y$ can take $14$ or $18$ when $x=3, y=1$ or
 $x=4, y=1$.  Adjoining one ideal point and filling in the groups together with the extra point with optimal codes of lengths $15$, $19$ or $21$, the results are optimal codes
of lengths $115=4\times 28+3$ or $119=4\times 29+3$.

For $t=33$, there exists a $[2,1,1]$-GDC(5) of type $3^9$ by
Proposition \ref{GDC(5)gt}. Apply Construction \ref{Inflation}
with a TD($4,5$) to get a $[2,1,1]$-GDC(5) of type $15^9$. Fill
in the groups with an optimal $(15,5,[2,1,1])_4$-code to obtain an
optimal code of length $135=4\times 33+3$.

For $t=35$, take a TD($9,8$) from Lemma~\ref{TD}. Apply Construction \ref{FundCtr}
with weight $2$ to the points in the first eight groups and $7$
points in the last group. The other points are given weight 0. Note
that there exist $[2,1,1]$-GDC(5)s of types $2^8$, $2^9$  by Proposition
\ref{GDC(5)gt}. The result is a $[2,1,1]$-GDC(5) of type $16^8
14^1$. Adjoining one ideal point and filling in the group together with the extra point with
optimal codes of lengths $15$ or $17$, the result is an optimal code of
length $143=4\times 35+3$.

For $t=36$, there exists a $[2,1,1]$-GDC(5) of type $3^7$ by
Proposition \ref{GDC(5)gt}. Apply Construction \ref{Inflation}
with a TD($4,7$) to get a $[2,1,1]$-GDC(5) of type $21^7$. Fill
in the groups with an optimal $(21,5,[2,1,1])_4$-code to obtain an
optimal code of length $147=4\times 36+3$.

For $t=37$, there exists a $[2,1,1]$-GDC(5) of type $6^5$. Apply
Construction \ref{Inflation} with a TD($4,5$) to get a
$[2,1,1]$-GDC(5) of type $30^5$. Adjoin one ideal point and fill in
the groups together with the extra point with an optimal $(31,6,[2,1,1])_4$-code to obtain an
optimal code of length $151=4\times 37+3$.
\end{proof}
\vskip 10pt

\begin{theorem}
\label{CCC4t+3(2)} There exists an optimal $(4t+3,5,[2,1,1])_4$-code
with size $2t(4t+3)$ for all $t \ge 38$.
\end{theorem}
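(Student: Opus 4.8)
The plan is to follow the same recursive scheme as the proof of Theorem~\ref{CCC4t+2(2)}, but to shift the parity of the output length from $\equiv 2$ to $\equiv 3 \pmod 4$ by adjoining a single ideal point at the group-divisible-code level. Concretely, I would first manufacture a $[2,1,1]$-GDC$(5)$ whose group sizes are all fillable by optimal codes constructed in the earlier theorems, and then invoke Construction~\ref{AdjoinPoints} with $y=1$ followed by filling in the groups (Construction~\ref{FillGroups}) to produce the required $(4t+3,5,[2,1,1])_4$-code.

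For the master structure I would take a TD$(7,r)$, which exists for all large $r$ outside the finite exceptional set listed in Lemma~\ref{TD}(iii). Applying Construction~\ref{FundCtr} with weight $4$ on the points of the first five groups, weight $4$ on $x$ points of the sixth group, and weight $4$ on $y$ points together with weight $2$ on $z$ points of the last group (all remaining points receiving weight $0$) yields a $[2,1,1]$-GDC$(5)$ of type $(4r)^5(4x)^1(4y+2z)^1$. The only block types that occur are $4^5$, $4^6$, $4^7$, $4^52^1$ and $4^62^1$, all available as ingredient GDCs from Propositions~\ref{GDC(5)gt} and~\ref{GDC(5)4t21}. I would impose $x\ge 4$, $z$ odd, and $14\le 4y+2z\le 150$, so that the last group has size $\equiv 2\pmod 4$ and the whole GDC has order $\equiv 2\pmod 4$.

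Next I would adjoin one ideal point via Construction~\ref{AdjoinPoints} (so $y=1$) and fill in the augmented groups. Their sizes are $4r+1$ and $4x+1$, both $\equiv 1\pmod 4$ and hence fillable by the optimal codes of Theorem~\ref{CCC4t+1}, together with $4y+2z+1\equiv 3\pmod 4$, fillable by the optimal codes of Theorem~\ref{CCC4t+3(1)} (whose known range $15\le\cdot\le 151$ is exactly what the bounds on $4y+2z$ guarantee); the auxiliary type-$1^{g}1^1$ ingredients demanded by Construction~\ref{AdjoinPoints} are simply these same codes. The output then has length $n=1+20r+4x+4y+2z\equiv 3\pmod 4$, and counting the contribution of the GDC together with the filled-in subcodes gives a $(4t+3,5,[2,1,1])_4$-code of the size $2t(4t+3)$ asserted in the statement for each $n=4t+3$ produced this way.

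The hard part will be the covering argument. Writing $n=4t+3$ as $1+20r+4x+4y+2z$, that is $2t+1=10r+2x+2y+z$, I must show that for every $t\ge 38$ one can choose admissible parameters: $r$ large and outside the TD$(7,\cdot)$ exception set, $x\ge 4$, $z$ odd, and $4y+2z$ within the stated window. For most $t$ this is routine---let $r$ carry the bulk of $2t+1$ and use $x,y,z$ to tune the remainder---but two sources of friction remain: the finitely many exceptional orders $r\in\{10,14,15,18,20,22,26,30,34,38,46,60\}$, and the smallest values of $t$ just above $38$, where there is little slack. These I expect to patch exactly as in Theorem~\ref{CCC4t+2(2)}, by perturbing $r$ and absorbing the change into $x,y,z$, or, where no choice of TD$(7,r)$ works, by substituting a transversal design of another order from Lemma~\ref{TD} with an alternative weighting whose block types still lie among the GDCs of Propositions~\ref{GDC(5)gt} and~\ref{GDC(5)4t21}.
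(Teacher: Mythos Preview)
Your proposal is correct and follows essentially the same route as the paper: take a TD$(7,r)$, weight the first five groups by $4$, weight $x$ points of the sixth group and $y$ points of the seventh by $4$ and $z$ further points of the seventh by $2$, obtain a $[2,1,1]$-GDC$(5)$ of type $(4r)^5(4x)^1(4y+2z)^1$, then adjoin one ideal point and fill the augmented groups with optimal codes of lengths $\equiv 1\pmod 4$ (Theorem~\ref{CCC4t+1}) and $\equiv 3\pmod 4$ (Theorem~\ref{CCC4t+3(1)}). The paper handles the covering argument in one line, simply asserting that $20r+4x+4y+2z+1$ can take every value $4t+3$ with $t\ge 38$; your more explicit discussion of the parameter choice and the TD exceptional set is accurate but not strictly needed, since the slack in $x$, $y$, $z$ over a window of length $8r$ is already enough to bridge the gaps between consecutive admissible $r$.
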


\begin{proof}
Take a TD$(7,r)$ for  $r \ge 7$ and $r\not\in \{10,14,15,18,$ $20, 22,26,30,34,38,46,60\}$
 from Lemma \ref{TD}. Apply Construction
\ref{FundCtr} with weight $4$ to the points in the first five
groups, $x$ points in the sixth group and $y$ points in the last
group, and weight $2$ to the other $z$ points in the last group.  All the remaining points are given weight $0$. We require that
$x\ge 4$ and $4y+2z\ge 14$. Note that there exist $[2,1,1]$-GDC(5)s
of types $4^5$, $4^6$, $4^7$, $4^5 2^1$, and $4^6 2^1$ by Propositions
\ref{GDC(5)gt} and \ref{GDC(5)4t21}. The result is a
$[2,1,1]$-GDC(5) of type $(4r)^5 (4x)^1 (4y+2z)^1$. Adjoin one ideal point and fill in the
groups together with the extra point with optimal codes of lengths $4u+1$ with $u\ge 4$ (which exist
by Theorem \ref{CCC4t}) or $4v+3$ with $3\le v \le 37$ (which
exist by Theorem \ref{CCC4t+3(1)}). The result is an optimal
$(20r+4x+4y+2z+1,5,[2,1,1])_4$-code, where $20r+4x+4y+2z+1$ can take
any value of $4t+3$ with $t$ greater than $38$.
\end{proof}
\vskip 10pt

\section{Determining the Value of $A_4(n,6,[2,1,1])$}
\vskip 10pt

\subsection{Some Small $[2,1,1]$-GDC$($6$)$ and Optimal Codes with Distance $6$}
\vskip 10pt

First, we construct some small $[2,1,1]$-GDC$(6)$s and optimal
codes via computer search. In the codes with infinite points, the subscripts on the elements
$x_0\in \{x\}\times \bbZ_u$ for $x\in \{a,b,c,d,e\}$ are developed modulo the unique subgroup
in the abelian group $\bbZ_{n}$ of order $u$.
\vskip 10pt

\begin{proposition}
\label{GDC(6)gt}There exists a $[2,1,1]$-GDC$(6)$ of type $g^t$ with size $\frac{g^2t(t-1)}{6}$ for the following parameters:
\begin{enumerate}
\item $g=2$, $t \in \{10, 13, 16, 19, 22, 25, 28, 34\}$,
\item $g=3$, $t \in \{5, 7\}$,
\item $g=4$, $t \in \{4, 7\}$,
\item $g=6$, $t \in \{4, 5, 6, 7\}$,
\item $g\in \{7, 10, 13, 22\}$, $t \in \{4\}$.
\end{enumerate}
\end{proposition}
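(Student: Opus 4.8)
The plan is to establish the proposition by exhibiting, for each listed pair $(g,t)$, an explicit cyclic (or quasi-cyclic) $[2,1,1]$-GDC$(6)$, in exactly the style of Proposition~\ref{GDC(5)gt}. Concretely, I would take the point set to be $\bbZ_{gt}$, the group partition to be $\G=\{\{i,\,t+i,\,\ldots,\,(g-1)t+i\}:i\in\bbZ_t\}$ (the cosets of the order-$g$ subgroup $\langle t\rangle$, which indeed gives $t$ groups of size $g$), and the code $\C$ to be the union of the orbits of a short list of base codewords under translation modulo $gt$ --- together with a multiplier $m$, a unit of $\bbZ_{gt}$, acting on a distinguished subset $P$ of the base list in the quasi-cyclic cases, and, where necessary, a handful of ``infinite'' coordinates developed only modulo a proper subgroup as flagged in the paragraph preceding the statement. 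The complete base lists for all pairs $(g,t)$ would be deferred to an appendix, displaying in the text a single representative example, just as is done for the distance-$5$ constructions in the proof of Proposition~\ref{GDC(5)gt}. What then remains to be argued is that the resulting $\C$ satisfies the GDC$(6)$ axioms and has exactly $g^{2}t(t-1)/6$ codewords.

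That verification splits into four routine parts. (i) Constant composition $[2,1,1]$: translation and multiplication by a unit are permutations of the coordinate set $\bbZ_{gt}$, hence carry a weight-$4$ codeword of composition $[2,1,1]$ to another such codeword, so it suffices that each base codeword has composition $[2,1,1]$. (ii) At most one nonzero coordinate per group: it suffices that the four support coordinates of each base codeword are pairwise incongruent modulo $t$, a property preserved by translation and by multiplication by $m$ (which is invertible modulo $t$ as well). (iii) Minimum distance $\geq 6$: since all codewords have weight $4$, for distinct $\vu,\vv\in\C$ one has $d_{H}(\vu,\vv)=8-a-b$, where $a=|\supp(\vu)\cap\supp(\vv)|$ and $b$ is the number of coordinates in $\supp(\vu)\cap\supp(\vv)$ on which $\vu$ and $\vv$ agree (every disagreeing support coordinate contributes $1$). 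Hence $d_{H}\geq 6$ is equivalent to: $a\leq 2$ for every pair of codewords, and whenever $a=2$ the two codewords disagree on both shared coordinates. By the orbit structure this is, via the usual difference method, a finite check on the differences arising within and between the base codewords and their multiples. (iv) Size: the base list is chosen so that its orbits, each of its expected length, total $g^{2}t(t-1)/6$; one checks that no orbit collapses to a shorter length than expected.

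The real work, and the only genuinely hard part, is producing base lists that simultaneously meet the stringent distance-$6$ intersection condition of (iii) and realise the orbit arithmetic of (iv). The distance-$6$ restriction is markedly tighter than the distance-$5$ one underlying Proposition~\ref{GDC(5)gt}, and for several of the listed pairs a single pass of full-length cyclic orbits over $\bbZ_{gt}$ does not hit the target size $g^{2}t(t-1)/6$ exactly; there one must either mix orbit lengths (using a suitable multiplier $m$ and set $P$) or pass to the infinite-point variant, in which certain coordinates are developed only modulo a proper subgroup. Searching for base lists with all of these properties is carried out by computer, and --- as in Proposition~\ref{GDC(5)gt} --- the detailed tables are relegated to the appendix with one worked example shown in the text. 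Given such lists, checks (i)--(iv) are mechanical, and the proposition follows.
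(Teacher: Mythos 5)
Your proposal matches the paper's approach exactly: the paper's proof is simply a pointer to appendix propositions that give explicit base codewords over $\bbZ_{gt}$ with groups the cosets of the order-$g$ subgroup, developed cyclically (or quasi-cyclically, with a multiplier acting on a distinguished part $P$ of the base list), found by computer search. Your reduction of the distance-$6$ condition to ``supports meet in at most two coordinates, and disagreement on both when they meet in two'' is the correct finite check, so the proposal is sound and essentially identical in route to the paper.
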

\begin{proof}
Detailed constructions can be found in Propositions 7.1--7.6, 7.9 and 7.10.
\end{proof}
\vskip 10pt

\begin{proposition}
\label{GDC(6)12tu1}There exists a $[2,1,1]$-GDC$(6)$ of type
$12^tu^1$ with size $12t(2t+\frac{u}{3}-2)$ for the following parameters:
\begin{enumerate}
\item $u=9$, $t \in \{4,5,\ldots,15,17,18,19, 23\}$,
\item $u=15$, $t \in \{7, 8, \ldots, 15\}$.
\end{enumerate}
\end{proposition}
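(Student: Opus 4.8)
The plan is to obtain every GDC in the list by a direct construction using the base-codeword (difference) method, in the same spirit as Proposition~\ref{GDC(6)gt}. Fix an admissible pair $(t,u)$. The point set is $\bbZ_{12t}$ together with $u$ further points forming the last group; the $t$ groups of size $12$ are the cosets $\{i,\,t+i,\,\dots,\,11t+i\}$, $i\in\bbZ_t$, of the unique subgroup of order $12$, and the $u$ extra points are handled by the convention stated just before Proposition~\ref{GDC(6)gt} (a small named symbol set times a cyclic group, developed modulo the corresponding subgroup of $\bbZ_{12t}$). The code $\C$ is then presented as a union of orbits --- under translation by $\bbZ_{12t}$, and, where needed, a multiplier $x\mapsto mx$ acting on a distinguished sublist $P$ of the base codewords --- of an explicitly listed family of base codewords of type $[2,1,1]$, found by computer search and recorded in the corresponding later propositions.

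The target size $12t(2t+u/3-2)$ is dictated by the downstream use of these GDCs. Indeed $U(12,6,[2,1,1])=18$; for $u=9$ one has $U(9,6,[2,1,1])=9$ and $U(12t+9,6,[2,1,1])=(12t+9)(2t+1)=12t(2t+1)+18t+9$, while for $u=15$ one has $U(15,6,[2,1,1])=30$ and $U(12t+15,6,[2,1,1])=(12t+15)(2t+2)=12t(2t+3)+18t+30$. So in each case, filling the $t$ groups of size $12$ and the group of size $u$ of a $[2,1,1]$-GDC$(6)$ of the stated size with optimal codes of lengths $12$ and $u$ via Construction~\ref{FillGroups} ($d=6\le 2(w-1)$) produces a code of size $U(12t+u,6,[2,1,1])$. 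Thus the aim is precisely to hit this cardinality on the nose.

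With the base lists in hand, I would verify three things for each listed $(t,u)$. First, \emph{group-transversality}: since translation permutes the size-$12$ groups cyclically, the requirement that each developed codeword $\langle i,j,r,c\rangle$ have its four coordinates in four distinct groups reduces to a congruence condition on the coordinates of the base codewords modulo $t$, together with the analogous condition involving the $u$-group. Second, \emph{minimum distance $6$}: by the usual difference argument two developed codewords can fail to be at Hamming distance $\ge 6$ only if one of a short, explicit list of coincidences among coordinate differences occurs --- the distance-$6$ refinement of the five forbidden configurations in the proof of Theorem~\ref{SSRStoCodes}, which now also permits two shared coordinates provided the two shared values disagree between the codewords --- and one checks directly that the base lists avoid all of these. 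Third, \emph{the count}: a full orbit contributes $12t$ codewords and a short orbit $12t/d$ for its stabiliser order $d$, and the lists are arranged so that these totals sum to $12t(2t+u/3-2)$. Finally one runs through all pairs $(t,u)$ with $u=9$, $t\in\{4,\dots,15,17,18,19,23\}$ and $u=15$, $t\in\{7,\dots,15\}$.

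The main obstacle is this search itself. For each admissible $(t,u)$ one must produce a base-codeword list that simultaneously respects the group partition, saturates enough differences to force distance $6$, and has exactly the cardinality making the orbit count equal $12t(2t+u/3-2)$; because the admissible set of $(t,u)$ is not a single congruence class, no uniform recursion (via Construction~\ref{FundCtr} or Construction~\ref{Inflation} from a smaller GDC) covers all of it, so the bulk of the work is the direct computer search. The mathematical content lies in certifying the output rather than in any single clever argument: once the lists are recorded, the verification above is finite and routine.
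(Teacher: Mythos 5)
Your proposal matches the paper's approach: the authors prove this proposition exactly by deferring to explicit computer-found base-codeword lists (their Propositions 7.7 and 7.8), developed over $\bbZ_{12t}$ with the extra group realized as $\{a,b,c\}\times\bbZ_3$ (for $u=9$) or $\{a,b,c,d,e\}\times\bbZ_3$ (for $u=15$) and subscripts developed modulo $3$, and the correctness check is the same finite verification of group-transversality, distance, and orbit count that you describe. Your size computation $12t(2t+u/3-2)$ and its motivation via filling in groups also agree with the paper's intended use.
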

\begin{proof}
Detailed constructions can be found in Propositions 7.7 and 7.8.
\end{proof}
\vskip 10pt

\begin{proposition}
\label{GDC(6)14_2}There exists a $[2,1,1]$-GDC$(6)$ of type
$1^{12}2^1$ with size $28$.
\end{proposition}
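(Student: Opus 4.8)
The plan is to construct this $[2,1,1]$-GDC$(6)$ explicitly by the difference method and then verify the defining properties by a bounded computation. I would take the point set $X=\bbZ_{14}$, declare the order-$2$ subgroup $H=\{0,7\}$ to be the unique group of size $2$, and let the twelve singletons $\{i\}$, $i\in\bbZ_{14}\setminus H$, be the groups of size $1$; a codeword is a $4$-tuple $\langle a,b,c,d\rangle$ carrying the values $1,1,2,3$ on $a,b,c,d$, i.e.\ composition $[2,1,1]$. The code $\C$ is generated by developing two base codewords $B_1,B_2$ under the translations $+1\pmod{14}$. Translation preserves the composition, and each $B_i$ has a full orbit of length $14$: a nontrivial stabilizer in $\bbZ_{14}$ would force the $4$-element support either to be a union of antipodal pairs $\{x,x+7\}$ (excluded below) or to have size divisible by $7$, neither of which is possible. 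Hence $|\C|=28$, which is exactly $U(14,6,[2,1,1])=\left\lfloor\frac{14}{2}\left\lfloor\frac{13}{3}\right\rfloor\right\rfloor$.

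With this framework fixed, the verification has three parts. First, the group condition $\|u\mid_{G_i}\|\le 1$ is automatic for the singletons, and for $H$ it asserts that no codeword contains both $0$ and $7$; a translate $B_i+j$ contains $\{0,7\}$ exactly when $B_i$ contains the pair $\{-j,\,7-j\}$, so, since these pairs run over all antipodal pairs as $j$ varies, it suffices to choose $B_1,B_2$ whose supports contain no antipodal pair. Second, every codeword has composition $[2,1,1]$ by construction. Third, the minimum distance must be at least $6$: since $6=2w-2$ with $w=4$, two distinct codewords $u,v$ satisfy $d_H(u,v)\ge 6$ if and only if $|\supp(u)\cap\supp(v)|\le 1$, or $|\supp(u)\cap\supp(v)|=2$ while $u,v$ disagree on both common coordinates. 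Expressed through the differences between translates of $B_1$ and $B_2$, this becomes a finite list of conditions to be checked within each orbit and between the two orbits (the same conditions with strict inequality also guarantee the two orbits are disjoint, so that $|\C|=28$ rather than less).

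The substantive step is finding admissible $B_1,B_2$: they must avoid antipodal pairs and simultaneously satisfy the intersection-and-value condition above for all pairs among the $28$ codewords, and it is not obvious a priori that the Johnson-type value $28$ remains attainable once one of the $91$ point pairs has been forbidden. I would run a backtracking search over the $\binom{14}{4}$ possible supports and their admissible $1,1,2,3$-labelings, pruning with the difference conditions; the assertion that size $28$ is reached guarantees that such a search succeeds. Once $B_1$ and $B_2$ are exhibited — equivalently, once the $28$ codewords are listed outright, as one expects in the accompanying appendix proposition — the three checks above complete the proof.
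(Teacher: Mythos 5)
There is a genuine gap: your argument never produces the object whose existence is being asserted. Everything before the last paragraph is a (correct) verification framework --- the characterization of $d_H\ge 6$ for weight-$4$, composition-$[2,1,1]$ words as ``supports meet in at most one point, or in exactly two points with both values differing,'' the observation that the hole condition means no codeword contains an antipodal pair, and the full-orbit count $2\times 14=28$ --- but the substantive content of the proposition is the existence of admissible base codewords, and for that you offer only ``the assertion that size $28$ is reached guarantees that such a search succeeds.'' That is circular: the truth of the proposition is exactly what is to be proved, so it cannot be used to certify that your backtracking search terminates successfully. Worse, it would not even suffice if the proposition were granted, because you are implicitly assuming something strictly stronger, namely that a solution exists which is invariant under the full translation group of $\bbZ_{14}$, splits into exactly two length-$14$ orbits, and has hole $\{0,7\}$. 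The proposition only asserts existence of \emph{some} GDC of type $1^{12}2^1$ and size $28$; nothing forces a cyclic one of your special shape to exist, and you give no evidence (no base codewords, no search output) that it does.

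For comparison, the paper's proof is a bare explicit construction: it takes $X=\bbZ_{14}$, declares the group of size $2$ to be $\{0,1\}$ (not the subgroup $\{0,7\}$), and simply lists all $28$ codewords, with no cyclic automorphism claimed at all --- consistent with the possibility that the strongly structured solution you are banking on may be hard to find or nonexistent. To repair your write-up you must either exhibit $B_1,B_2$ (and then your three checks, carried out over the finitely many translate pairs, do complete the proof), or abandon the cyclic ansatz and exhibit the $28$ codewords directly as the paper does; as it stands, the existence step is missing and its justification is circular.
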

\begin{proof}
Detailed construction can be found in Proposition 7.11.
\end{proof}
\vskip 10pt

\begin{proposition}
\label{GDC(6)1t111}There exists a $[2,1,1]$-GDC$(6)$ of type
$1^t11^1$ with size $6u^2+20u$, where $6u=t$ for each $t \in \{30,36,54,$
$66,78\}$.
\end{proposition}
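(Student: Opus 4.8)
The plan is to build each of these codes directly by the difference method, exactly as in Propositions~\ref{GDC(6)gt}--\ref{GDC(6)14_2}, so that the argument reduces to exhibiting, for each $u\in\{5,6,9,11,13\}$ (with $t=6u$), a short list of base codewords. I would carry the $6u$ singleton points by an abelian group $\Gamma$ acting freely on them---$\bbZ_{6u}$ in one orbit, $\bbZ_{2u}$ in three orbits, or $\bbZ_{6}\times\bbZ_{u}$ being the natural candidates---and adjoin $11$ further points that form the distinguished group of size~$11$; some of these are ``infinite'' points developed under the order-$u$ subgroup of $\Gamma$ and the rest are fully fixed, following the convention stated just above the proposition. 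The $\Gamma$-orbits of singletons together with the $11$-point block then constitute the $6u+1$ groups of the required type.

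The group $\Gamma$ should be chosen so that the target size $6u^{2}+20u$ is an exact sum of $\Gamma$-orbit lengths: with $\Gamma=\bbZ_{2u}$, for instance, a family of $3u+10$ base codewords each having a full orbit of length $2u$ gives exactly $2u(3u+10)=6u^{2}+20u$ codewords. The customary split of the base codewords into a part $P$ that is multiplied by an automorphism of order $s$ and a remaining part $R$ is arranged precisely to make this orbit count come out exactly.

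The substantive step is verifying that the developed set is a $[2,1,1]$-GDC$(6)$. Two things must hold for every pair of codewords. The group condition says that, for a base codeword $\langle a_{1},a_{2},a_{3},a_{4}\rangle$, no two of the $a_{i}$ lie in the same $\Gamma$-orbit and at most one lies in the $11$-block. The distance condition $d_H\ge 6$ is equivalent, for weight-$4$ composition-$[2,1,1]$ codewords, to the requirement that any two codewords share at most two support coordinates, and when they share exactly two they must disagree in value on both---a strong ``super-simple''-type condition. Rewriting this in difference-method terms gives a finite list of forbidden coincidences among the componentwise differences of the base codewords (a single difference repeated, a matched pair of differences repeated, or a difference equal to a value difference), which I would check by computer. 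Producing, for each of the five values of $u$, a base family that meets these constraints and yields exactly $6u^{2}+20u$ codewords is the main obstacle: the distance-$6$ constraint is markedly tighter than the distance-$5$ one dealt with earlier, so the lengths must be handled one at a time.

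For individual values of $u$ I would also look for a recursive derivation as a cross-check: inflate an already-constructed small $[2,1,1]$-GDC$(6)$ (say of type $6^{u'}$, $12^{t'}9^{1}$, or $1^{12}2^{1}$) by a transversal design via Construction~\ref{Inflation}, fill all but one group with short optimal codes, and keep the last group as a hole. The trouble is that a surviving group of size exactly $11$ is awkward to obtain from the transversal designs and PBDs available, and matching the total to $6u^{2}+20u$ pins the parameters very tightly; so I expect the direct construction to be the one actually needed in most, if not all, of the five cases.
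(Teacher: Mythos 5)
Your plan is exactly the paper's method: the proof given there (Proposition 7.12 of the appendix) takes $X_t=\bbZ_t\cup(\{a,b,c,d,e\}\times\bbZ_2)\cup\{f\}$, makes the size-$11$ group out of the ten infinite points plus the fixed point $f$, and develops a list of base codewords by $+3$ modulo $6u$, so that the singletons fall into three orbits of length $2u$ and each base codeword contributes an orbit of length $2u$; the paper's lists contain exactly $3u+10$ base codewords per case, matching your count $2u(3u+10)=6u^2+20u$. Your reformulation of the distance-$6$ condition (at most two common support positions, with disagreement in value on both when there are exactly two) is also the correct criterion to check. However, as a proof your proposal is incomplete in the one place that matters: for a statement of this kind the proof \emph{is} the explicit list of base codewords together with the (computer) verification that the developed family satisfies the group and distance conditions, and you have not produced that data for any of the five values of $u$. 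You acknowledge this yourself by calling it ``the main obstacle,'' but until the five base families are exhibited and checked, existence is not established; the recursive cross-check you sketch does not rescue this, since (as you note) no available master design leaves a hole of size exactly $11$ with the right size count. So the verdict is: right framework, right accounting, but the substantive content of the proof is missing.
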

\begin{proof}
Detailed constructions can be found in Proposition 7.12.
\end{proof}
\vskip 10pt

\begin{proposition}
\label{CCC(6)}$A_4(n,6,[2,1,1])$  =  $U(n,6,[2,1,1])$ for each $n
\in \{6, 8, 10, 11, 13, 14, 16, 17, 19, 22, 23, 25, 28, 31, 34,$ $
35, 37, 43, 55, 67, 79, 103\}$.
\end{proposition}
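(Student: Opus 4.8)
The plan is to reduce each equality to a matching lower bound. Since $A_4(n,6,[2,1,1]) \le U(n,6,[2,1,1])$ for every $n$ by the corollary in Section~II, where $U(n,6,[2,1,1]) = \lfloor \tfrac{n}{2}\lfloor\tfrac{n-1}{3}\rfloor\rfloor$, it suffices to construct, for each listed $n$, an $(n,6,[2,1,1])_4$-code (equivalently a $[2,1,1]$-GDC$(6)$ of type $1^n$) of size exactly $U(n,6,[2,1,1])$. I would build all of these by the difference method: fix a finite group, record a short list of base codewords, and develop them --- over $\bbZ_n$ directly when the arithmetic permits, over a group with one or more adjoined ``infinite'' coordinates whose subscripts run over a distinguished subgroup (as described just before Proposition~\ref{GDC(6)gt}) when $n$ factors conveniently, and with an extra multiplier automorphism whenever that shortens the base list; for the prime lengths $31,37,43,67,79,103$ a single base codeword under a suitable multiplier already accounts for the whole code.

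The one technical ingredient is an exact criterion for when two weight-$4$ codewords lie at Hamming distance at least $6 = 2\cdot 4 - 2$. Write a codeword of composition $[2,1,1]$ as a $4$-tuple $\langle a_1,a_2,a_3,a_4\rangle$, with symbol $1$ on the unordered pair $\{a_1,a_2\}$, symbol $2$ on $a_3$, and symbol $3$ on $a_4$. For two such codewords, let $s$ be the number of coordinates carrying a nonzero symbol in both and $t\le s$ the number of those at which the two symbols actually differ; since each support misses the other in exactly $4-s$ coordinates, the Hamming distance equals $2(4-s)+t = 8-2s+t$. Hence distance $\ge 6$ is equivalent to $s\le 1$, or $s=2$ with the symbols differing at both common coordinates. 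So the construction requirement --- a ``super-simple''-type condition on the base codewords, their pairwise differences, and (in the multiplier constructions) their orbits --- is that no two codewords agree in three or more coordinate positions, and any two agreeing in exactly two positions carry distinct symbols at both. Turning this into the difference conditions on the base codewords is routine, and the problem reduces to a finite search for base families meeting them.

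Concretely I would settle the small lengths $6,8,10,11,13,14,16,17,19,22,23,25,28$ by a short backtracking search over (quasi-)cyclic base codewords, and the larger or more structured lengths $31,34,35,37,43,55,67,79,103$ by exploiting multipliers and adjoined infinite points so that only a handful of base codewords is needed; each candidate is then checked against the distance criterion above, and its size against $U(n,6,[2,1,1])$, with the successful base families recorded explicitly. The main obstacle is practical rather than conceptual: the number of base codewords, hence the search space, grows with $n$, and there is no a~priori guarantee that an optimal cyclic or quasi-cyclic code exists for a given $n$ --- so for the larger lengths the construction must be steered by a careful choice of underlying group, multiplier, and infinite-point configuration. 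This is why exactly these lengths are singled out here, the remaining lengths being handled by the recursive constructions in the following subsections.
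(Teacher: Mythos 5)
Your approach is exactly the paper's: the upper bound is $U(n,6,[2,1,1])$ from the corollary in Section~II, and each listed $n$ is handled by exhibiting an explicit $(n,6,[2,1,1])_4$-code of that size, found by computer search and presented compactly via the difference method (base codewords, a multiplier $m$ of order $s$, development modulo $n$). Your distance criterion is also correct and is the right verification tool: with $s$ the size of the support intersection and $t$ the number of common support positions where the symbols differ, $d_H=8-2s+t$, so $d_H\ge 6$ iff $s\le 1$, or $s=2$ with disagreement at both shared positions. Your guess that the prime lengths are realized by very short base lists under a multiplier is also accurate (e.g.\ $n=31,37,43,67,79,103$ each need only one base codeword in the paper's data).

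The gap is that the proof is never actually completed: the proposition is a pure existence statement for an ad hoc list of lengths, and its entire content is the data. You describe a search and its success criteria, but you do not exhibit any of the codes, and --- as you yourself concede --- there is no a priori guarantee that a cyclic or quasi-cyclic optimal code exists for a given $n$, nor that the search terminates successfully. The paper's proof consists precisely of the explicit base codewords recorded in Proposition~7.13 (together with full listings in Tables~\ref{17,6,[2,1,1]}, \ref{23,6,[2,1,1]}, \ref{35,6,[2,1,1]} for $n=17,23,35$, where no short cyclic presentation was used). Without that data, or at least a verifiable certificate for each $n$, the statement is not proved. A secondary inaccuracy: for this particular proposition the paper's constructions use no adjoined infinite points (those appear only in the GDC constructions such as Propositions~7.7, 7.8 and 7.12), so that part of your plan, while harmless, does not reflect what is actually needed here.
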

\begin{proof}
Detailed constructions can be found in Proposition 7.13.
\end{proof}
\vskip 10pt

\subsection{The Case of Length $n \equiv 0, 1 \pmod{6}$}
\vskip 10pt

\begin{lemma}
\label{CCC(6)7} $A_4(7,6,[2,1,1])=4$.
\end{lemma}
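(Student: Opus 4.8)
The plan is to prove the two bounds $A_4(7,6,[2,1,1])\ge 4$ and $A_4(7,6,[2,1,1])\le 4$ separately. Note that the Johnson-type bound only gives $U(7,6,[2,1,1])=7$, so the upper bound here genuinely needs a direct combinatorial argument. First I would recast the distance condition. Identify a codeword of composition $[2,1,1]$ with a pair $(S,\varphi)$, where $S\subseteq[1,7]$ is its $4$-element support and $\varphi\colon S\to\{1,2,3\}$ records the values (so $|\varphi^{-1}(1)|=2$ and $|\varphi^{-1}(2)|=|\varphi^{-1}(3)|=1$). A direct count of the seven coordinates shows that for two such codewords $(S,\varphi)$ and $(S',\varphi')$, with $k=|S\cap S'|$ and $a=|\{x\in S\cap S':\varphi(x)=\varphi'(x)\}|$, one has $d_H=8-k-a$. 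Hence $d_H\ge 6$ is equivalent to $k+a\le 2$, which gives two facts I would record: (i) two distinct codewords meet in at most two support coordinates, and if they meet in exactly two they disagree on both of them; (ii) if two codewords both contain a coordinate $x$ carrying the same value, their supports meet only in $\{x\}$.

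For the upper bound I would show that every coordinate $x\in[1,7]$ lies in at most three codewords. Fix $x$, and split the codewords through $x$ by their value $v\in\{1,2,3\}$ at $x$, with counts $n_1,n_2,n_3$. If some $n_v=2$, then by (ii) the two corresponding codewords meet only at $x$, so their supports cover all of $[1,7]$ and partition the other six coordinates into two $3$-sets $A,B$; then by (i) no further codeword contains $x$ (such a codeword would have to choose all three of its remaining coordinates from $A\cup B$, but at most one from each of $A,B$), so $x$ lies in exactly two codewords. Otherwise $n_1,n_2,n_3\le 1$ and $x$ lies in at most three codewords, with equality forcing $n_1=n_2=n_3=1$. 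Summing the degrees gives $4N=\sum_x d_x\le 21$, so $N\le 5$. To rule out $N=5$, observe that then $\sum_x d_x=20$ with each $d_x\le 3$, which forces six coordinates of degree $3$ and one of degree $2$; each degree-$3$ coordinate carries the symbol $1$ in exactly one codeword, and the degree-$2$ coordinate in at most two, so the total number of (codeword, symbol-$1$ position) incidences is at most $8$, contradicting $2N=10$. Hence $A_4(7,6,[2,1,1])\le 4$.

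For the lower bound I would exhibit an explicit optimal code on coordinate set $\{0,1,\dots,6\}$, for instance
$\langle 0,1,3,2\rangle$, $\langle 4,5,0,1\rangle$, $\langle 2,6,4,0\rangle$, $\langle 3,6,1,5\rangle$,
whose supports form the configuration $\{0,1,2,3\},\{0,1,4,5\},\{0,2,4,6\},\{1,3,5,6\}$ (coordinate degrees $3,3,2,2,2,2,2$), with the values chosen so that coordinates shared by two codewords always carry different values; one then checks that every pairwise Hamming distance equals $6$. The main obstacle is clearly the upper bound: the crux is the per-coordinate degree bound of $3$, together with the symbol-$1$ counting that eliminates the remaining unit; the construction side, though it requires a little care to meet all the "disagree on the shared pair" constraints at once, is routine verification.
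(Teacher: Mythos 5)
Your proof is correct, and I verified the details: the reformulation $d_H=8-k-a$ (so $d_H\ge 6$ iff $k+a\le 2$) is right; the per-coordinate analysis is sound (if two codewords through $x$ carry the same symbol at $x$, their supports meet only in $\{x\}$, cover all seven coordinates, and then no third codeword can pass through $x$, while otherwise at most one codeword per symbol gives degree at most $3$); the count $4N\le 21$ gives $N\le 5$, and your symbol-$1$ incidence count (six degree-$3$ coordinates contribute one incidence each, the degree-$2$ coordinate at most two, versus $2N=10$ needed) correctly excludes $N=5$; finally, your four codewords all have composition $[2,1,1]$ and pairwise distance exactly $6$. Your route is genuinely different from, and more complete than, the paper's: the paper's proof of this lemma consists solely of exhibiting four explicit codewords (namely $\langle 0,1,2,3\rangle$, $\langle 0,4,5,6\rangle$, $\langle 2,3,4,5\rangle$, $\langle 5,6,1,2\rangle$, found by computer search), i.e.\ only the lower bound $A_4(7,6,[2,1,1])\ge 4$; the matching upper bound is never argued there, and the paper's general Johnson-type bound only yields $U(7,6,[2,1,1])=7$. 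What the paper's approach buys is brevity (optimality being settled implicitly, presumably by exhaustive search); what yours buys is a self-contained hand proof that $n=7$ is a genuine exception to the formula $\left\lfloor \frac{n}{2}\left\lfloor \frac{n-1}{3}\right\rfloor\right\rfloor$, supplying exactly the half of the argument the paper leaves unstated.
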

\begin{proof}
All the required codewords can be found in Proposition 7.14.
\end{proof}
\vskip 10pt

\begin{lemma}
\label{CCC6t} For any positive integer $t$, if
$A_4(6t+1,6,[2,1,1])=U(6t+1,6,[2,1,1])$ then
$A_4(6t,6,[2,1,1])=U(6t,6,[2,1,1])$.
\end{lemma}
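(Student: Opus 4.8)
The plan is to obtain the length-$6t$ code from the hypothesized optimal length-$(6t+1)$ code by a puncturing-and-averaging argument, so that no new designs are needed.

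First I would record the relevant values of the bound function. For $n=6t+1$ we have $\lfloor (n-1)/3\rfloor = 2t$, so $U(6t+1,6,[2,1,1]) = (6t+1)t = 6t^2+t$; for $n=6t$ we have $\lfloor (n-1)/3\rfloor = 2t-1$, so $U(6t,6,[2,1,1]) = 3t(2t-1) = 6t^2-3t$. The gap between the two values is exactly $4t$, and this is the quantity that makes the argument run.

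Next, let $\C$ be an optimal $(6t+1,6,[2,1,1])_4$-code on a point set $X$ with $|X|=6t+1$ and $|\C|=6t^2+t$. For each $x\in X$ put $a_x=|\{\vu\in\C:\vu_x\neq 0\}|$. Since every codeword has Hamming weight $4$, double counting gives $\sum_{x\in X}a_x=4|\C|=4t(6t+1)$, so averaging over the $6t+1$ points shows that some point $x_0$ satisfies $a_{x_0}\le 4t$. I would then delete the coordinate $x_0$ and keep only those codewords taking the value $0$ at $x_0$; this produces a set $\C'$ of at least $|\C|-a_{x_0}\ge 6t^2-3t$ vectors of length $6t$. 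These still have composition $[2,1,1]$, because no nonzero entry was removed, and they still have pairwise Hamming distance at least $6$, because any two of them already agreed in the deleted coordinate and hence their distances are unchanged. Thus $\C'$ is a $(6t,6,[2,1,1])_4$-code of size at least $U(6t,6,[2,1,1])$, and combined with the upper bound $A_4(6t,6,[2,1,1])\le U(6t,6,[2,1,1])$ from the Corollary we conclude $A_4(6t,6,[2,1,1])=U(6t,6,[2,1,1])$.

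There is essentially no obstacle here. The only point needing care is the averaging step: it works precisely because the average $4t(6t+1)/(6t+1)=4t$ matches the gap $(6t^2+t)-(6t^2-3t)=4t$ exactly, which is forced by the optimality of the length-$(6t+1)$ code. One should also verify explicitly that restricting to the zero-value codewords preserves both the constant composition and the minimum distance, but both are immediate.
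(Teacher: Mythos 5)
Your proof is correct and follows essentially the same route as the paper: shorten the optimal $(6t+1,6,[2,1,1])_4$-code at one coordinate, keeping the codewords that vanish there. The only (minor, and welcome) difference is that you justify the choice of coordinate by averaging the total of $4|\C|=4t(6t+1)$ nonzero entries over $6t+1$ positions to find one with at most $4t$ nonzero entries, whereas the paper simply asserts that every coordinate has exactly $4t$ nonzero entries.
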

\begin{proof}
In an optimal $(6t+1,6,[2,1,1])_4$-code, every coordinate has exactly $2t+2t=4t$ non-zero
elements. Fix a  coordinate $x$ and remove all the $4t$ codewords containing   non-zero
elements in this  coordinate $x$. Shorten all the remaining codewords by deleting the element $0$ in   coordinate $x$ from them.  The resultant
codewords form an optimal $(6t,6,[2,1,1])_4$-code with size
$U(6t+1,6,[2,1,1])-4t=6t^2-3t=U(6t,6,[2,1,1])$.
\end{proof}
\vskip 10pt

\begin{lemma}
\label{GDC(6)12t} There exists a $[2,1,1]$-GDC(6) of type $12^t$ for
all $t\ge 4$.
\end{lemma}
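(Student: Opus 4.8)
The plan is to run a Wilson-type PBD-closure argument on the set $\mathcal{T}=\{t\ge 4:\text{there exists an optimal }[2,1,1]\text{-GDC}(6)\text{ of type }12^t\}$, where ``optimal'' means size $24t(t-1)$ (the size needed so that such a GDC can later be filled to an optimal $(12t,6,[2,1,1])_4$-code). The engines are Constructions \ref{Inflation} and \ref{FundCtr}; since $w=4$ and $d=6=2(w-1)$, the hypothesis $d\le 2(w-1)$ of Construction \ref{FundCtr} is met. The key bookkeeping fact is this: if $\D$ is a \emph{uniform} $k$-GDD of type $s^t$ with $s\mid 12$, then Construction \ref{FundCtr} applied to $\D$ with the constant weight $w_0=12/s$, using on every block an optimal $[2,1,1]$-GDC$(6)$ of type $w_0^k$ (which has size $w_0^2k(k-1)/6$, cf.\ Proposition \ref{GDC(6)gt}), outputs a $[2,1,1]$-GDC$(6)$ of type $12^t$ of size $b\cdot w_0^2k(k-1)/6$, where $b=s^2t(t-1)/(k(k-1))$ is the number of blocks of $\D$; this equals $s^2w_0^2t(t-1)/6=24t(t-1)$, i.e.\ an optimal GDC. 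The same count, applied to a $(t,K)$-PBD viewed as a $K$-GDD of type $1^t$ with weight $12$ and optimal $12^k$-ingredients, gives the implication $K\subseteq\mathcal T\ \Rightarrow\ t\in\mathcal T$.

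First I would dispose of small and congruence-restricted orders. For $t\in\{4,7\}$, inflate the optimal $[2,1,1]$-GDC$(6)$ of type $4^t$ (Proposition \ref{GDC(6)gt}) by a TD$(4,3)$ through Construction \ref{Inflation}; for $t=5$, inflate the type $3^5$ code by a TD$(4,4)$. For $t=8$, take the $4$-RGDD of type $3^8$ (Lemma \ref{4RGDD38}) as a $4$-GDD and apply the recipe above with $w_0=4$ and the optimal type-$4^4$ ingredient. More generally, for $t\equiv 0,1\pmod 4$ a $4$-GDD of type $3^t$ is available from \cite{HBGDD} and the weight-$4$/ingredient-$4^4$ recipe gives $12^t$; for $t\equiv 0,1\pmod 5$ a $5$-GDD of type $4^t$ is available and weight $3$ with the type-$3^5$ ingredient gives $12^t$. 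This settles every $t\in\{4,5,6,7,8,9,10,11,12,13,15,16,17,20,21\}$ (and more).

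Next I would run the recursion. For every $t\ge 10$ with $t\notin\{10,11,12,14,15,18,19,23\}$ there is a $(t,\{4,5,6\})$-PBD by Lemma \ref{PBD4-6}; applying Construction \ref{FundCtr} to it with weight $12$ and the optimal $12^4$-, $12^5$-, and $12^6$-ingredients built above produces an optimal $12^t$-GDC. Putting this together with the previous paragraph, the only orders still requiring individual treatment are $t\in\{14,18,19,23\}$. For each of these I would search \cite{HBGDD} for a single master GDD with block sizes in $\{4,5,6,7\}$ and all groups of size $3$, $4$, or $6$ (so that weight $4$, $3$, or $2$ turns them into $12$) — for instance a $\{4,7\}$-GDD of type $3^t$ fed by the optimal type-$4^4$ and type-$4^7$ codes of Proposition \ref{GDC(6)gt} — whereupon the bookkeeping identity again forces optimality of the output.

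The main obstacle will be exactly this short leftover list $t\in\{14,18,19,23\}$ (and, to a lesser extent, $t=6$): these orders escape both the $\pmod 4$ and the $\pmod 5$ GDD families and possess no $\{4,5,6\}$-PBD, so there is no uniform recipe. For each one must find, essentially by hand, a master design whose block and group sizes fit the meagre stock of available ingredient GDCs ($3^5,3^7,4^4,4^7,6^4,\dots,6^7$, together with the small optimal $12^t$-codes already in hand), and then verify the arithmetic identity that pins the output size at $24t(t-1)$. Everything else — the base cases, the two congruence families, and the $\{4,5,6\}$-PBD recursion — is routine once the correct ingredients and masters are lined up.
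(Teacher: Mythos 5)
Your overall framework is sound --- the bookkeeping identity for Construction \ref{FundCtr} applied to a $k$-GDD of type $s^t$ with constant weight $12/s$ is correct (the block count $s^2t(t-1)/(k(k-1))$ times the ingredient size $(12/s)^2k(k-1)/6$ does give $24t(t-1)$), and your insistence on tracking the size is actually more careful than the lemma's bare existence statement. But there is a genuine gap: after your base cases and your $(t,\{4,5,6\})$-PBD recursion, the orders $t\in\{14,18,19,23\}$ remain, and you explicitly leave them as cases where ``one must find, essentially by hand, a master design.'' That is not a proof of those cases; it is a statement of the remaining problem. Your ``for instance a $\{4,7\}$-GDD of type $3^t$'' is the right object, but you never establish that it exists for these four values of $t$.

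The missing piece is already in the paper's preliminaries: Lemma \ref{PBD4*} (Rees--Stinson) gives a $(3t+1,\{4,7^\star\},1)$-PBD whenever $3t+1\equiv 7$ or $10\pmod{12}$ (i.e.\ $t\equiv 2,3\pmod 4$) and $3t+1\geq 22$ (i.e.\ $t\geq 7$); deleting one point outside the unique block of size $7$ yields a $\{4,7\}$-GDD of type $3^t$, which with weight $4$ and the $4^4$, $4^7$ ingredients of Proposition \ref{GDC(6)gt} settles $t=14,18,19,23$ (and indeed every $t\equiv 2,3\pmod 4$ with $t\geq 7$). This is exactly how the paper proceeds, and once you use it, most of your scaffolding becomes unnecessary: the two Rees--Stinson families ($\{4\}$-GDDs of type $3^t$ for $t\equiv 0,1\pmod 4$ and $\{4,7\}$-GDDs of type $3^t$ for $t\equiv 2,3\pmod 4$, $t\geq 7$) cover all $t\geq 4$ except $t=6$, which the paper handles with a $\{5\}$-GDD of type $4^6$ at weight $3$ and the $3^5$ ingredient. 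So your route is workable in outline but incomplete precisely at the orders it was designed to reach, and the repair both closes the gap and shortens the argument.
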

\begin{proof}
When $t \equiv 0$ or $1\pmod{4}$ and $t\geq 4$, there exists a
$(3t+1,\{4\},1)$-PBD   by Lemma \ref{PBD4*}. Deleting one
point from the point set gives a $\{4\}$-GDD of type $3^t$. When
$t\equiv 2$ or $3\pmod{4}$ and  $t\geq 7$, there exists a
$(3t+1,\{4,7^\star\},1)$-PBD  by Lemma \ref{PBD4*}.  Remove one
point from this PBD which is not in the unique block with size
$7$ to  obtain  a
$\{4,7^\star\}$-GDD of type $3^t$. Hence, we always have a
$\{4,7\}$-GDD of type $3^t$ for all $t\geq 4$ and $t\neq 6$.

Apply Construction \ref{FundCtr} with weight $4$ to obtain a
$[2,1,1]$-GDC($6$) of type $12^t$ for all $t\geq 4$ and $t\neq 6$.
Here, the input $[2,1,1]$-GDC($6$)s of types $4^4$ and $4^7$ exist by
Proposition \ref{GDC(6)gt}.

For $t=6$, take a $\{5\}$-GDD of type $4^6$ (see \cite{GeLing:2005})
and apply Construction \ref{FundCtr} with weight $3$ to obtain a
$[2,1,1]$-GDC($6$) of type $12^6$. Here, the input
$[2,1,1]$-GDC($6$) of type $3^5$ exists by Proposition
\ref{GDC(6)gt}.
\end{proof}

\vskip 10pt
\begin{theorem}
\label{CCC6t+1(1)} There exists an optimal
$(12t+1,6,[2,1,1])_4$-code with size $24t^2+2t$ for all $t\ge 1$.
\end{theorem}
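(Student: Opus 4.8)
The plan is to treat the three small values $t\in\{1,2,3\}$ via the known small codes and to handle the full range $t\ge 4$ by adjoining a single ideal point to a group divisible code of type $12^t$. First I would note that $U(12t+1,6,[2,1,1])=\left\lfloor\frac{12t+1}{2}\left\lfloor\frac{12t}{3}\right\rfloor\right\rfloor=(12t+1)\cdot 2t=24t^2+2t$, so it suffices to exhibit a $(12t+1,6,[2,1,1])_4$-code of size $24t^2+2t$; any such code is automatically optimal. For $t=1,2,3$ the lengths $12t+1$ are $13,25,37$, all of which appear in Proposition~\ref{CCC(6)}, and $U(13,6,[2,1,1])=26=24\cdot1+2$, $U(25,6,[2,1,1])=100=24\cdot4+4$, $U(37,6,[2,1,1])=222=24\cdot9+6$, so these cases are immediate.

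For $t\ge 4$ I would start from the $[2,1,1]$-GDC$(6)$ of type $12^t$ supplied by Lemma~\ref{GDC(6)12t}, whose size is $24t(t-1)$ (forced by the construction there, and matching the size formula used for GDCs of type $g^t$ in Proposition~\ref{GDC(6)gt}), and then apply Construction~\ref{AdjoinPoints} with $y=1$. Here $s=1$, $g_1=12$, $t_1=t\ge 2$; ingredient (i) is a $(13,6,[2,1,1])_4$-code, and since $t_1\ge 2$ ingredient (iii) is a $[2,1,1]$-GDC$(6)$ of type $1^{12}1^1$, i.e.\ again a $(13,6,[2,1,1])_4$-code, both of size $26$ by Proposition~\ref{CCC(6)}. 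In the paper's usual phrasing this is ``adjoin one ideal point and fill each group together with the extra point by an optimal $(13,6,[2,1,1])_4$-code''. Construction~\ref{AdjoinPoints} then produces a code of length $1+12t$ and size
\begin{equation*}
24t(t-1)+26+(t-1)\cdot 26=24t^2+2t=U(12t+1,6,[2,1,1]),
\end{equation*}
which is therefore optimal.

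The hypotheses are met because $w=4$ and $d=6=2(w-1)$, the borderline case admitted by Constructions~\ref{FillGroups} and \ref{AdjoinPoints}; the internal distance estimates are tight but valid. The real difficulty is not in this assembly but in the ingredients — Lemma~\ref{GDC(6)12t} (which rests on the PBD results of Lemma~\ref{PBD4*}) and the computer-search codes of Proposition~\ref{CCC(6)}. The one point I would be careful about in the write-up is that it is the exact \emph{size} of ingredient (iii), not merely its existence, that makes the total match the Johnson-type bound: with all groups of size $12$ and $t_1\ge 2$, ingredient (iii) genuinely must be invoked, and the count above confirms the total is exactly $24t^2+2t$, yielding optimality rather than just a code of the correct length.
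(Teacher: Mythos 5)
Your proposal is correct and follows essentially the same route as the paper: the cases $t\in\{1,2,3\}$ are read off from Proposition~\ref{CCC(6)}, and for $t\ge 4$ one adjoins a single ideal point to the $[2,1,1]$-GDC$(6)$ of type $12^t$ from Lemma~\ref{GDC(6)12t} and fills each group plus the extra point with an optimal $(13,6,[2,1,1])_4$-code. Your explicit size count $24t(t-1)+t\cdot 26=24t^2+2t$ and the identification of the GDC's size via the $\frac{g^2t(t-1)}{6}$ formula are exactly the (implicit) bookkeeping behind the paper's one-line argument.
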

\begin{proof}
For all $t\ge 4$, adjoin one ideal point to a $[2,1,1]$-GDC($6$) of
type $12^t$ (Lemma \ref{GDC(6)12t}) and fill in the groups together with the extra point with an
optimal $(13,6,[2,1,1])_4$-code (which exists by Proposition
\ref{CCC(6)}) to obtain an optimal $(12t+1,6,[2,1,1])_4$-code with size
$24t^2+2t$.

For each $t\in \{1,2,3\}$, there exists an optimal
$(12t+1,6,[2,1,1])_4$-code with size $24t^2+2t$ by Proposition
\ref{CCC(6)}.
\end{proof}
\vskip 10pt

\begin{theorem}
\label{CCC6t+1(2)} There exists an optimal
$(12t+7,6,[2,1,1])_4$-code with size $(12t+7)(2t+1)$ for each $1 \le
t\le 16$.
\end{theorem}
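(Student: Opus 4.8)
The plan is to settle the short lengths directly and the rest by a single group-divisible recursion. For $t\in\{1,2,3,4,5,6,8\}$ the lengths $12t+7\in\{19,31,43,55,67,79,103\}$ all occur in Proposition~\ref{CCC(6)}, which already gives optimal $(12t+7,6,[2,1,1])_4$-codes of size $U(12t+7,6,[2,1,1])=(12t+7)(2t+1)$ (the two floors defining $U$ collapse because $12t+6\equiv 0\pmod 3$ and $4t+2$ is even). So only the values $t\in\{7,9,10,11,12,13,14,15,16\}$ remain.

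For each remaining $t$, put $s=t-1\in\{6,8,9,\dots,15\}$ and build the code from a $[2,1,1]$-${\rm GDC}(6)$ of type $12^{s} 18^{1}$, whose size by the standard pair count is $\tfrac16\bigl[(12s+18)^2-(144s+324)\bigr]=24s(s+2)$. I would obtain this GDC by Construction~\ref{FundCtr} with weight $4$ on a $\{4\}$- or $\{4,7\}$-GDD of type $3^{s} 6^{1}$ --- the latter coming from a $(3s+7,\{4,7^\star\},1)$-PBD of Lemma~\ref{PBD4*} by deleting a point of the distinguished block --- using the $[2,1,1]$-${\rm GDC}(6)$s of types $4^{4}$, $4^{7}$ from Proposition~\ref{GDC(6)gt} as block ingredients; alternatively, by inflating a $[2,1,1]$-${\rm GDC}(6)$ of type $4^{s} 6^{1}$ by a ${\rm TD}(4,3)$ (Lemma~\ref{TD}) via Construction~\ref{Inflation}. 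For the occasional $s$ not reached this way, I would instead start from a $12^{s'} u^{1}$-GDC with $u\equiv 6\pmod{12}$, chosen so that the distinguished group enlarged by the adjoined point can be filled by an optimal $(u+1,6,[2,1,1])_4$-code listed in Proposition~\ref{CCC(6)} (for instance $u=30$ with a $(31,6,[2,1,1])_4$-code).

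Given the $12^{s} 18^{1}$-GDC, I adjoin one ideal point and apply Construction~\ref{AdjoinPoints} with $y=1$, the size-$12$ groups playing the role of $g_1$ (multiplicity $s$) and the size-$18$ group that of $g_2$ (multiplicity $1$): fill each size-$12$ group together with the adjoined point by an optimal $(13,6,[2,1,1])_4$-code of size $26$, and the size-$18$ group together with the adjoined point by an optimal $(19,6,[2,1,1])_4$-code of size $57$, both furnished by Proposition~\ref{CCC(6)}; clause~(iii) of the construction applies since $s\ge 2$. The resulting code has length $12s+18+1=12(s+1)+7=12t+7$ and size
\[
24s(s+2)+26+(s-1)\cdot 26+57 = 24s^2+74s+57 = (12s+19)(2s+3) = (12t+7)(2t+1) = U(12t+7,6,[2,1,1]),
\]
so it is optimal.

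I expect the real difficulty to lie in securing the ingredient GDCs of type $12^{s} 18^{1}$ (or the fallback types) for every $s$ in the required range: the GDDs and PBDs that feed the Fundamental Construction are guaranteed only for restricted parameters, so a handful of values of $s$ will need individually tailored arguments, most plausibly by direct or computer search as with the other small GDCs in the paper. It is precisely this limited supply of ingredient codes that confines the theorem to $1\le t\le 16$; the lengths $12t+7$ with $t>16$ are then left to a separate, PBD- and TD-based construction.
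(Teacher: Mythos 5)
Your reduction of the problem is sound as far as it goes: the seven lengths $19,31,43,55,67,79,103$ are indeed covered by Proposition~\ref{CCC(6)}, and the arithmetic showing that a $[2,1,1]$-GDC$(6)$ of type $12^{s}18^{1}$ of size $24s(s+2)$, with one point adjoined and groups filled by optimal codes of lengths $13$ and $19$, yields an optimal $(12t+7,6,[2,1,1])_4$-code with $t=s+1$ is correct. The gap is that you never actually establish the existence of these ingredient GDCs, and your primary proposed route to them is wrong: applying Construction~\ref{FundCtr} with constant weight $4$ to a $\{4\}$- or $\{4,7\}$-GDD of type $3^{s}6^{1}$ inflates the size-$6$ group to $6\cdot 4=24$, producing a GDC of type $12^{s}24^{1}$, not $12^{s}18^{1}$; no uniform weight can send $3\mapsto 12$ and $6\mapsto 18$ simultaneously. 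Your fallbacks fare no better with the resources the paper provides: a $[2,1,1]$-GDC$(6)$ of type $4^{s}6^{1}$ (to be inflated by a TD$(4,3)$) appears nowhere in the paper's stock of small GDCs, and the $12^{s'}u^{1}$ variant with $u\equiv 6\pmod{12}$ would require, e.g., GDCs of type $12^{s'}30^{1}$, whereas Proposition~\ref{GDC(6)12tu1} only supplies $u\in\{9,15\}$. Your closing admission that ``a handful of values of $s$ will need individually tailored arguments, most plausibly by direct or computer search'' is an acknowledgement that the essential content of the proof is missing.

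For comparison, the paper does not insist on the uniform master type $12^{s}18^{1}$ at all. For each of the nine residual values $t\in\{7,9,10,\ldots,16\}$ it builds a bespoke $[2,1,1]$-GDC$(6)$ with group sizes drawn from $\{12,18,24,30\}$ (types $18^5$, $24^4 18^1$, $24^4 30^1$, $30^4 18^1$, $30^5$, $30^5 12^1$, $30^5 24^1$, $24^7 18^1$, $24^7 30^1$), obtained by Construction~\ref{FundCtr} with weight $6$ applied to small $\{4,5\}$-GDDs that are explicitly available from TD$(5,4)$, TD$(5,5)$, TD$(6,5)$ and the $4$-RGDD of type $4^7$, using the ingredient GDCs of types $6^4$ and $6^5$ from Proposition~\ref{GDC(6)gt}; it then adjoins one point and fills with optimal codes of lengths $13$, $19$, $25$, $31$. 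If you want to salvage your approach, you would need to either exhibit the required $\{4,5,6,7\}$-GDDs of type $2^{s}3^{1}$ (to use weight $6$) for every $s\in\{6,8,9,\ldots,15\}$, or follow the paper in allowing a more flexible family of group sizes.
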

\begin{proof}
For each $t\in \{1,2,3,4,5,6,8\}$, there exists an optimal
$(12t+7,6,[2,1,1])_4$-code with size $(12t+7)(2t+1)$ by Proposition
\ref{CCC(6)}.

For the other $9$ values of $t$, we first construct $9$  GDCs  of types $18^5$, $24^4 18^1$,  $24^4 30^1$,  $30^4 18^1$, $30^5$,  $30^5 12^1$,  $30^5 24^1$, $24^7 18^1$ and $24^7 30^1$ as follows: For $t=7$, take a $\{4\}$-GDD of type $3^5$ (which exists by Lemma
\ref{4GDD}). Apply Construction \ref{FundCtr} with weight $6$ to
obtain a $[2,1,1]$-GDC($6$) of type $18^5$. For $t=9$, take a TD($5,4$) from Lemma \ref{TD}. Remove one point
from a group to obtain a $\{4,5\}$-GDD of type $4^4 3^1$. Apply
Construction \ref{FundCtr} with weight $6$ to obtain a
$[2,1,1]$-GDC($6$) of type $24^4 18^1$.
For $t=10$, take a TD($5,5$) from Lemma \ref{TD}. Remove $4$ points
from a block to obtain a $\{4,5\}$-GDD of type $4^4 5^1$. Apply
Construction \ref{FundCtr} with weight $6$ to obtain a
$[2,1,1]$-GDC($6$) of type $24^4 30^1$.
For $t=11$, take a TD($5,5$) from Lemma \ref{TD}. Remove $2$ points
from a group to obtain a $\{4,5\}$-GDD of type $5^4 3^1$. Apply
Construction \ref{FundCtr} with weight $6$ to obtain a
$[2,1,1]$-GDC($6$) of type $30^4 18^1$.
For $t=12$, take a TD($5,5$) from Lemma \ref{TD}. Apply Construction
\ref{FundCtr} with weight $6$ to obtain a $[2,1,1]$-GDC($6$) of type
$30^5$.
For each $t\in \{13,14\}$, take a TD($6,5$) from Lemma \ref{TD}. Remove
$3$ or $1$ points from a group to obtain a $\{4,5\}$-GDD of types
$5^5 2^1$ or $5^5 4^1$. Apply Construction
\ref{FundCtr} with weight $6$ to obtain a $[2,1,1]$-GDC($6$) of types
$30^5 12^1$ or $30^5 24^1$.
For each $t\in \{15,16\}$, take a $4$-RGDD of type $4^7$ (see Lemma~\ref{4RGDD38}). There are $8$ parallel
classes in this $4$-RGDD. Add one ideal point to each of the $u$ parallel
classes for $u\in \{3, 5\}$ to complete them. The result is a $\{4,5\}$-GDD of type
$4^7 u^1$.  Apply Construction \ref{FundCtr} with weight $6$ to obtain a $[2,1,1]$-GDC($6$) of types $24^7 18^1$ or $24^7 30^1$.
Here, the input $[2,1,1]$-GDC($6$)s of types $6^4$ and $6^5$ exist by Proposition \ref{GDC(6)gt}.
Adjoining one ideal point to each of the above GDCs and filling in the groups together with the extra point with an optimal code of length $n_1 \in \{13, 19, 25, 31\}$, the result is an optimal code of length $12t+7$ with $t\in \{7, 9, 10, 11, 12, 13, 14, 15, 16\}$ as desired.
\end{proof}
\vskip 10pt

\begin{theorem}
\label{CCC6t+1(3)} There exists an optimal
$(12t+7,6,[2,1,1])_4$-code with size $(12t+7)(2t+1)$ for all $t\ge
17$.
\end{theorem}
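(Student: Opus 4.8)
The plan is to realize every length $n = 12t+7$ with $t \ge 17$ by the same sort of recursive route used elsewhere in this section: inflate a pairwise balanced design with weight $6$, adjoin one ideal point, and fill in the groups with the small optimal codes already furnished by Theorems \ref{CCC6t+1(1)} and \ref{CCC6t+1(2)}. The one new idea is the choice of the design's order, which must be taken so that the group sizes after deletion land in $\{3,4,5\}$, forcing the fill-in lengths into $\{19,25,31\}$.

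Concretely, given $t \ge 17$, put $v = 2t+2$. Then $v$ is even with $v \ge 36$, so $v$ avoids every exception listed in Lemma \ref{PBD4-6} and a $(v,\{4,5,6\},1)$-PBD exists. Deleting one point of this PBD turns the blocks through that point into groups and leaves a $\{4,5,6\}$-GDD of type $3^a 4^b 5^c$ with $3a+4b+5c = v-1 = 2t+1$. Apply Construction \ref{FundCtr} with the constant weight function $6$; this is legitimate since $d = 6 = 2(w-1)$ with $w=4$, and the ingredient $[2,1,1]$-GDC$(6)$s of types $6^4$, $6^5$, $6^6$ are available from Proposition \ref{GDC(6)gt}. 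The output is a $[2,1,1]$-GDC$(6)$ of type $18^a 24^b 30^c$. Finally, invoke Construction \ref{AdjoinPoints} with $y = 1$, i.e.\ adjoin one ideal point and fill each group of size $18$, $24$, $30$ together with that point by an optimal code of length $19$, $25$, $31$, which exist by Theorems \ref{CCC6t+1(2)}, \ref{CCC6t+1(1)}, \ref{CCC6t+1(2)} respectively. The result is a $(12t+7,6,[2,1,1])_4$-code.

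The only step requiring any care is verifying that this code attains $U(12t+7,6,[2,1,1])$, and this is a short counting argument. If $M$ denotes the size of the GDC$(6)$ of type $18^a 24^b 30^c$, then the fundamental construction gives $M = \sum_{B} 6|B|(|B|-1) = 12\sum_{B}\binom{|B|}{2}$, the sum running over the blocks $B$ of the $\{4,5,6\}$-GDD; since every cross-group pair of that GDD lies in exactly one block, $\sum_{B}\binom{|B|}{2} = \binom{v-1}{2} - 3a - 6b - 10c$. The three fill-in codes contribute $57a + 100b + 155c$, these being the optimal sizes $U(19,6,[2,1,1])$, $U(25,6,[2,1,1])$, $U(31,6,[2,1,1])$. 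Adding everything and using $3a+4b+5c = v-1$, the total collapses to $(v-1)(6v-5) = U(6(v-1)+1,6,[2,1,1]) = U(12t+7,6,[2,1,1])$, so the code is optimal. Since a code is produced for every single $t \ge 17$, no length in the asserted range is missed, and the theorem follows. Beyond this bookkeeping there is no real obstacle: the construction is entirely parallel to the ones in the preceding theorems, and the essential novelty is only the choice $v = 2t+2$.
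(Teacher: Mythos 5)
Your proposal is correct, but it follows a genuinely different route from the paper. The paper takes a ${\rm TD}(6,2t)$, gives weight $6$ to the first four groups and to $2x$ points of the fifth and $y$ points of the sixth (weight $0$ elsewhere), obtains a $[2,1,1]$-GDC$(6)$ of type $(12t)^4(12x)^1(6y)^1$, adjoins one ideal point, and fills with optimal codes of lengths $12r+1$ and $12u+7$ ($1\le u\le 16$); the parameter constraints leave the single value $n=295$ ($t=24$) unreachable, which the paper patches with a separate ${\rm TD}(7,8)$ construction. You instead delete a point from a $(2t+2,\{4,5,6\},1)$-PBD (which exists since $2t+2\ge 36$ clears all exceptions of Lemma \ref{PBD4-6}) to get a $\{4,5,6\}$-GDD of type $3^a4^b5^c$ with $3a+4b+5c=2t+1$, apply Construction \ref{FundCtr} with weight $6$ using the ingredients $6^4,6^5,6^6$ of Proposition \ref{GDC(6)gt}, and adjoin one point filling with optimal codes of lengths $19,25,31$ — exactly the template the paper itself uses for Theorem \ref{CCC6t+2(3)}. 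Your route is uniform in $t$ and needs no exceptional case (indeed it would already work for $t\ge 11$), and your explicit size count $12\bigl[\binom{2t+1}{2}-3a-6b-10c\bigr]+57a+100b+155c=(2t+1)(12t+7)$ is correct, making the optimality claim more transparent than the paper's bare assertion. The paper's TD-based truncation buys nothing here that your PBD approach does not, so both arguments are valid; yours is simply a cleaner packaging of the same recursive machinery.
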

\begin{proof}
Take a TD$(6,2t)$ from Lemma \ref{TD}. Apply Construction
\ref{FundCtr} with weight $6$ to the points in the first $4$ groups,
$2x$ points in the fifth group, and $y$ points in the last group.
The other points are given weight 0. Note that there exist
$[2,1,1]$-GDC($6$)s of types $6^4$, $6^5$, $6^6$ by Proposition
\ref{GDC(6)gt}. The result is a $[2,1,1]$-GDC($6$) of type $(12t)^4
(12x)^1 (6y)^1$. We require that $y\ge 3$. Adjoin one ideal point
and fill in the groups together with the extra point with optimal codes of lengths $12t+1$ with
$t\ge 1$ (which exist by Theorem \ref{CCC6t+1(1)}) or
$12u+7$ with $1\le u \le 16$ (which exist by Theorem
\ref{CCC6t+1(2)}). The result is an optimal
$(48t+12x+6y+1,6,[2,1,1])_4$-code, where $48t+12x+6y+1$ can take any
value greater than $12\times 17+7=211$ except for the case of
$295=12\times 24+7$.

For $t=24$, take a TD$(7,8)$. Apply Construction  \ref{FundCtr}
with weight $6$ to the points in the first $5$ groups, $6$ points in
the sixth group, and $3$ points in the last group. The other points
are given weight 0. Note that there exist $[2,1,1]$-GDC($6$)s of
types $6^5$, $6^6$, $6^7$ by Proposition \ref{GDC(6)gt}. The result
is a $[2,1,1]$-GDC($6$) of type $48^5 36^1 18^1$. Adjoin one
ideal point and fill in the groups together with the extra point with optimal codes of lengths
$37$, $49$ (both exist by Theorem \ref{CCC6t+1(1)}) or $19$ (which exists by Theorem \ref{CCC6t+1(2)}). The result is an optimal $(295,6,[2,1,1])_4$-code.
\end{proof}
\vskip 10pt

\begin{theorem} There exists an optimal $(6t,6,[2,1,1])_4$-code with size $6t^2-3t$ for all $t\ge
1$.
\end{theorem}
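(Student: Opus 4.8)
The plan is to deduce the statement from Lemma~\ref{CCC6t}, which asserts that whenever $A_4(6t+1,6,[2,1,1])=U(6t+1,6,[2,1,1])$ one automatically gets $A_4(6t,6,[2,1,1])=U(6t,6,[2,1,1])$. A one-line floor computation gives $U(6t,6,[2,1,1])=\lfloor 3t\lfloor (6t-1)/3\rfloor\rfloor=3t(2t-1)=6t^2-3t$, so an optimal $(6t,6,[2,1,1])_4$-code of size $6t^2-3t$ exists as soon as the ``companion'' equality $A_4(6t+1,6,[2,1,1])=U(6t+1,6,[2,1,1])$ holds for the relevant $t$.

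First I would dispose of $t=1$ by hand. Here $6t+1=7$, and by Lemma~\ref{CCC(6)7} we have $A_4(7,6,[2,1,1])=4<7=U(7,6,[2,1,1])$, so Lemma~\ref{CCC6t} is \emph{not} applicable. Instead one reads off $A_4(6,6,[2,1,1])=U(6,6,[2,1,1])=3=6\cdot 1^2-3\cdot 1$ directly from Proposition~\ref{CCC(6)}, which lists $n=6$ among the lengths meeting the Johnson-type bound.

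For $t\ge 2$ I would split on the parity of $t$ so as to land on one of the already-established families of optimal codes of length $\equiv 1\pmod 6$. If $t=2s$ with $s\ge 1$, then $6t+1=12s+1$, and Theorem~\ref{CCC6t+1(1)} produces an optimal $(12s+1,6,[2,1,1])_4$-code of size $24s^2+2s=U(12s+1,6,[2,1,1])$, i.e. $A_4(6t+1,6,[2,1,1])=U(6t+1,6,[2,1,1])$. If $t=2s+1$ with $s\ge 1$, then $6t+1=12s+7$, and the optimal $(12s+7,6,[2,1,1])_4$-code of size $(12s+7)(2s+1)=U(12s+7,6,[2,1,1])$ is supplied by Theorem~\ref{CCC6t+1(2)} when $1\le s\le 16$ and by Theorem~\ref{CCC6t+1(3)} when $s\ge 17$; again $A_4(6t+1,6,[2,1,1])=U(6t+1,6,[2,1,1])$. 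Since $t=2s$ ($s\ge 1$) and $t=2s+1$ ($s\ge 1$) together exhaust all $t\ge 2$, Lemma~\ref{CCC6t} then gives $A_4(6t,6,[2,1,1])=U(6t,6,[2,1,1])=6t^2-3t$ for every $t\ge 2$, and together with the case $t=1$ this finishes the proof.

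There is no genuinely hard step; the work is entirely in the assembly. The one point that needs care is that the reduction through Lemma~\ref{CCC6t} fails precisely at $t=1$ (length $7$), because the bound $U(7,6,[2,1,1])=7$ is simply not attained there, so that case must be pulled out and settled by the direct small construction of Proposition~\ref{CCC(6)}. The only other thing to check is the bookkeeping identities $U(12s+1,6,[2,1,1])=24s^2+2s$ and $U(12s+7,6,[2,1,1])=(12s+7)(2s+1)$, an immediate floor-function calculation which is exactly what makes the cited theorems read ``optimal''.
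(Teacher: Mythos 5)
Your proposal is correct and follows essentially the same route as the paper: the paper's own (one-sentence) proof likewise combines Lemma~\ref{CCC6t} with Theorems~\ref{CCC6t+1(1)}--\ref{CCC6t+1(3)} and handles $t=1$ via the optimal $(6,6,[2,1,1])_4$-code of size $3$ from Proposition~\ref{CCC(6)}. Your write-up merely makes explicit the parity split and the reason the $t=1$ case must be treated separately (namely $A_4(7,6,[2,1,1])=4<U(7,6,[2,1,1])$), which the paper leaves implicit.
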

\begin{proof}
The result follows by combining Lemma \ref{CCC6t}, Theorems
\ref{CCC6t+1(1)}--\ref{CCC6t+1(3)}
and the fact that there exists an optimal $(6,6,[2,1,1])_4$-code with size $3$ (see Proposition \ref{CCC(6)}).
\end{proof}
\vskip 10pt

\subsection{The Case of Length $n \equiv 2 \pmod{6}$}
\vskip 10pt

\begin{lemma}
\label{CCC6t+2(1)} Every $[2,1,1]$-GDC($6$) of type $2^{3t+1}$ is an
optimal $(6t+2,6,[2,1,1])_4$-code with size $6t^2+2t$.
\end{lemma}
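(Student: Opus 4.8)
The plan is to read off optimality from the Johnson-type bound, once we observe that a $[2,1,1]$-GDC$(6)$ of type $2^{3t+1}$ is nothing but a $4$-ary constant-composition code in disguise. Indeed, such a GDC lives on a point set of size $n=2(3t+1)=6t+2$, all its codewords have composition $[2,1,1]$, and any two of them have Hamming distance at least $6$; forgetting the group structure (which only imposes the harmless extra condition that no codeword is nonzero in both points of a group) it is therefore an $(6t+2,6,[2,1,1])_4$-code. Its size is
\[
\frac{2^{2}(3t+1)(3t)}{6}=2t(3t+1)=6t^{2}+2t .
\]

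It remains to show that $6t^{2}+2t$ is best possible, i.e. that $A_4(6t+2,6,[2,1,1])\le 6t^{2}+2t$. This is exactly the second inequality of the Corollary (immediately following Lemma~\ref{bound3}): taking $w_1=2$, $w_2=w_3=1$ (so that $d=6=2\sum w_i-2$) in Lemma~\ref{bound3} and combining with Lemma~\ref{bound1} gives
\[
A_4(6t+2,6,[2,1,1])\le U(6t+2,6,[2,1,1])=\left\lfloor\frac{6t+2}{2}\left\lfloor\frac{6t+1}{3}\right\rfloor\right\rfloor .
\]
Since $6t+1\equiv 1\pmod 3$, the inner floor is $\lfloor(6t+1)/3\rfloor=2t$, and the whole expression collapses to $(3t+1)\cdot 2t=6t^{2}+2t$. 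Hence the GDC, viewed as an $(6t+2,6,[2,1,1])_4$-code, attains the upper bound and is optimal, and $A_4(6t+2,6,[2,1,1])=6t^{2}+2t$.

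I do not expect a genuine obstacle here: the whole content is the identification of a $[2,1,1]$-GDC$(6)$ of type $2^{3t+1}$ with an $(6t+2,6,[2,1,1])_4$-code, together with the arithmetic check that the Johnson-type bound of the Corollary is tight in the residue class $n\equiv 2\pmod 6$. The single point requiring a moment's care is the evaluation of the nested floors, for which one only needs $\lfloor(6t+1)/3\rfloor=2t$.
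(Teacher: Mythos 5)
Your proposal is correct and follows essentially the same route as the paper: view the GDC as an $(6t+2,6,[2,1,1])_4$-code (the group condition only restricts), note its size is $6t^2+2t$, and check that this meets the Johnson-type upper bound $U(6t+2,6,[2,1,1])=(3t+1)\cdot 2t$. The paper's proof is just a one-line version of this; your explicit evaluation of the nested floors is the only added detail.
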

\begin{proof}
The size of a $[2,1,1]$-GDC($6$) of type $2^{3t+1}$ is $6t^2+2t$
which meets the upper bound of an optimal $(6t+2,6,[2,1,1])_4$-code.
\end{proof}
\vskip 10pt

\begin{theorem}
\label{CCC6t+2(2)} There exists an optimal $(6t+2,6,[2,1,1])_4$-code
with size $6t^2+2t$ for each $t\in \{1,2,\ldots,11\} \cup
\{14,17,18,22\}$.
\end{theorem}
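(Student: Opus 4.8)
The plan is to produce, for each admissible $t$, a $(6t+2,6,[2,1,1])_4$-code of size $2t(3t+1)$; since $U(6t+2,6,[2,1,1])=2t(3t+1)$ is the established upper bound, any such code is optimal. The central observation I would use is Lemma~\ref{CCC6t+2(1)}: any $[2,1,1]$-GDC$(6)$ of type $2^{3t+1}$ is itself an optimal $(6t+2,6,[2,1,1])_4$-code, its size being $\frac{4(3t+1)(3t)}{6}=2t(3t+1)$. So wherever such a GDC is available the job is done, and I would split the list into three tranches: (a) for $t=1,2$ (lengths $8,14$) the optimal codes come directly from Proposition~\ref{CCC(6)}; (b) for $t\in\{3,4,5,6,7,8,9,11\}$ we have $3t+1\in\{10,13,16,19,22,25,28,34\}$, and a $[2,1,1]$-GDC$(6)$ of each of these types $2^{3t+1}$ is listed in Proposition~\ref{GDC(6)gt}(1), so Lemma~\ref{CCC6t+2(1)} finishes them; (c) the remaining values $t\in\{10,14,17,18,22\}$, i.e.\ lengths $62,86,104,110,134$, need recursive constructions.

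For $t\in\{10,14,18,22\}$ I would exploit that $6t+2=12s+2$ with $s=t/2\in\{5,7,9,11\}$. Take the $[2,1,1]$-GDC$(6)$ of type $12^{s}$ provided by Lemma~\ref{GDC(6)12t} (of size $24s(s-1)$) as the master, adjoin $y=2$ points via Construction~\ref{AdjoinPoints}, and use as ingredients a $[2,1,1]$-GDC$(6)$ of type $1^{12}2^1$ of size $28$ (Proposition~\ref{GDC(6)14_2}) together with an optimal $(14,6,[2,1,1])_4$-code of size $28$ (Proposition~\ref{CCC(6)}). The output has length $12s+2=6t+2$ and size $24s(s-1)+28+(s-1)\cdot 28=24s^2+4s$, and a one-line check gives $24s^2+4s=2t(3t+1)$ for $t=2s$, so the code is optimal. (Everything invoked is legitimate, since $d=6=2(w-1)$.)

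The outlier is $t=17$ (length $104$), where $104-2$ is not a multiple of $12$ and the trick above fails; instead I would factor $104=8\cdot 13$. Inflate the $[2,1,1]$-GDC$(6)$ of type $2^{13}$ of Proposition~\ref{GDC(6)gt}(1) (of size $104$) by a TD$(4,4)$ through Construction~\ref{Inflation} to obtain a $[2,1,1]$-GDC$(6)$ of type $8^{13}$ of size $104\cdot 16=1664$, then fill in its thirteen groups by Construction~\ref{FillGroups} with optimal $(8,6,[2,1,1])_4$-codes of size $8$ (Proposition~\ref{CCC(6)}). This yields a code of length $104$ and size $1664+13\cdot 8=1768=U(104,6,[2,1,1])$, hence optimal.

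I expect the real difficulty to be the bookkeeping rather than any single hard construction: for each of the five recursive lengths one must choose a decomposition all of whose ingredient GDCs and filler codes already occur in the earlier propositions \emph{and} whose sizes sum to exactly $2t(3t+1)$ with no slack. The case $t=17$ is the touchiest, since the uniform ``adjoin two points to a $12^{s}$'' recipe is unavailable and one is forced into the $8^{13}$ detour, which in turn relies on having an optimal length-$8$ code of full size $8$ as the filler block.
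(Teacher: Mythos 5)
Your proposal is correct and follows essentially the same route as the paper: Proposition~\ref{CCC(6)} for $t=1,2$; the GDCs of type $2^{3t+1}$ from Proposition~\ref{GDC(6)gt} via Lemma~\ref{CCC6t+2(1)} for $t\in\{3,\ldots,9,11\}$; adjoining two points to a $12^{t/2}$ GDC and filling with the $1^{12}2^1$ ingredient for $t\in\{10,14,18,22\}$; and the $2^{13}\to 8^{13}$ inflation filled with optimal length-$8$ codes for $t=17$. The only cosmetic difference is that you invoke Construction~\ref{AdjoinPoints} with an explicit $(14,6,[2,1,1])_4$-code on one group, whereas the paper fills every group with the $1^{12}2^1$ GDC and relies on the resulting size meeting the bound; both give $28$ per group and the identical count.
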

\begin{proof}
For each $t\in \{1,2\}$, there exists an optimal
$(6t+2,6,[2,1,1])_4$-code with size $6t^2+2t$ by Proposition
\ref{CCC(6)}.

For each $t\in \{3,4,5,6,7,8,9,11\}$, there exists a $[2,1,1]$-GDC($6$)
of type $2^{3t+1}$ by Proposition \ref{GDC(6)gt}.

For each $t\in \{10,14,18,22\}$, take a $[2,1,1]$-GDC($6$) of type
$12^{t/2}$ (which exists by Lemma \ref{GDC(6)12t}). Adjoin two ideal
points and fill in the groups together with the two extra points with a $[2,1,1]$-GDC($6$) of type
$1^{12} 2^1$ (which exists by Proposition \ref{GDC(6)14_2}). The
result is an optimal $(6t+2,6,[2,1,1])_4$-code with size $6t^2+2t$.

For $t=13$, take an optimal $(10,6,[2,1,1])_4$-code with size $10$
which  exists by Proposition \ref{CCC(6)}. This code can also be
regarded as a $[2,1,1]$-GDC($6$) of type $1^{10}$. Apply Construction
\ref{Inflation} with a TD($4,8$) to get a $[2,1,1]$-GDC($6$) of type
$8^{10}$. Fill in the groups with an optimal
$(8,6,[2,1,1])_4$-code. The result is an optimal code of length
$80=6\times 13+2$.

For $t=17$, take a $[2,1,1]$-GDC($6$) of type $2^{13}$ (which exists
by Proposition \ref{GDC(6)gt}). Apply Construction \ref{Inflation}
with a TD($4,4$) to get a $[2,1,1]$-GDC($6$) of type $8^{13}$.
Fill in the groups with an optimal $(8,6,[2,1,1])_4$-code. The
result is an optimal code of length $104=6\times 17+2$.
\end{proof}
\vskip 10pt

\begin{theorem}
\label{CCC6t+2(3)} There exists an optimal $(6t+2,6,[2,1,1])_4$-code
with size $6t^2+2t$ for all $t\ge 23$ and $t\in\{12,15,16,$
$19,20,21\}$.
\end{theorem}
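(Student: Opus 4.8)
The plan is to treat Theorem \ref{CCC6t+2(3)} as the remaining, mostly asymptotic, case for lengths $n\equiv 2\pmod 6$. By Lemma \ref{CCC6t+2(1)} it suffices to construct, for each relevant $t$, a $[2,1,1]$-GDC($6$) of type $2^{3t+1}$. I would first dispose of the even values: for even $t\ge 8$, apply Construction \ref{AdjoinPoints} with $y=2$ to a $[2,1,1]$-GDC($6$) of type $12^{t/2}$ (which exists by Lemma \ref{GDC(6)12t}), using an optimal $(14,6,[2,1,1])_4$-code on the designated group and the type-$1^{12}2^1$ GDC($6$) of Proposition \ref{GDC(6)14_2} on the others; this already yields the optimal $(6t+2,6,[2,1,1])_4$-code, settling $t\in\{12,16,20\}$ and every even $t\ge 24$.

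For the odd values the key point is that the $[2,1,1]$-GDC($6$)s of type $2^k$ with $k\in\{10,13,16,19,22,25,28,34\}$ in Proposition \ref{GDC(6)gt} are exactly the ingredients required to run the Fundamental Construction (Construction \ref{FundCtr}) with constant weight $2$ on a pairwise balanced design all of whose block sizes lie in that set: since the master PBD has singleton groups, the output is a $[2,1,1]$-GDC($6$) of type $2^N$, where $N$ is the order of that PBD. Every admissible block size is $\equiv 1\pmod 3$, so the only numerical constraint on $N$ is $N\equiv 1\pmod 3$; as $3t+1\equiv 1\pmod 3$, PBDs of order $3t+1$ with block sizes from $\{10,13,\dots\}$ exist for all sufficiently large $t$ by Wilson's theorem (a Hanani-type existence lemma, in the spirit of Lemmas \ref{PBD4-6} and \ref{PBD5-9}, would be quoted here). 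Thus, for all $t$ beyond an explicit threshold, one obtains $2^{3t+1}$ and hence the desired optimal code. An alternative, parallel to the proof of Theorem \ref{CCC4t+2(2)}, is to truncate a TD$(7,r)$ — weight $6$ on five full groups, weight $6$ on $x$ points of the sixth, weight $6$ on $y$ points and weight $2$ on $z$ points of the last, with $z\equiv 1\pmod 3$ — which gives a $[2,1,1]$-GDC($6$) of type $(6r)^5(6x)^1(6y+2z)^1$ that is then filled with optimal $(6s,6,[2,1,1])_4$-codes and one small optimal $(6c+2,6,[2,1,1])_4$-code from Theorem \ref{CCC6t+2(2)}; this version additionally uses the small GDC($6$)s of types $6^5,6^6,6^7$ (Proposition \ref{GDC(6)gt}) and the analogues of types $6^5 2^1$ and $6^6 2^1$ of Proposition \ref{GDC(5)4t21}.

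It remains to settle the finitely many odd $t$ below the threshold, together with $t\in\{15,19,21\}$. Many of these are reached by Construction \ref{Inflation}: inflating an optimal $(6c+2,6,[2,1,1])_4$-code (available for several small $c$ by Proposition \ref{CCC(6)} and Theorem \ref{CCC6t+2(2)}) by a TD$(4,m)$ with $m\equiv 1\pmod 3$ (which exists by Lemma \ref{TD}), and then filling the $m$-groups with an optimal $(m,6,[2,1,1])_4$-code, gives an optimal $(m(6c+2),6,[2,1,1])_4$-code, and $m(6c+2)\equiv 2\pmod 6$; the residual handful are handled one at a time by a weight-$6$ Fundamental Construction on a small truncated transversal design or resolvable GDD, exactly as in the proof of Theorem \ref{CCC6t+2(2)}.

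The step I expect to be the real work is the covering and bookkeeping: confirming that the chosen family of master PBD orders (or of truncated TD parameters) leaves no gap among the integers $\equiv 2\pmod 6$ above the threshold — which rests on the asymptotic PBD-existence statement and on the overlap of the achievable intervals — and then verifying that every small ingredient invoked (the $2^k$-type GDC($6$)s, the few optimal $(6c+2,6,[2,1,1])_4$-codes, and the relevant transversal designs) is genuinely available in the required range, so that all the remaining sporadic values of $t$ are indeed covered.
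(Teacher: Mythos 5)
Your reduction via Lemma \ref{CCC6t+2(1)} to constructing a $[2,1,1]$-GDC($6$) of type $2^{3t+1}$ is the right target, and your treatment of the even values ($t\in\{12,16,20\}$ and even $t\ge 24$, via Construction \ref{AdjoinPoints} applied to a $12^{t/2}$-type GDC from Lemma \ref{GDC(6)12t} with the $1^{12}2^1$ ingredient) is correct and mirrors what the paper already does for $t\in\{10,14,18,22\}$. The gap is in the odd case, which is infinitely many of the required values ($t\in\{15,19,21\}$ and all odd $t\ge 23$). There you rely on Wilson's asymptotic existence theorem for PBDs with block sizes in $\{10,13,16,19,22,25,28,34\}$, which only guarantees existence for all sufficiently large orders $\equiv 1\pmod 3$ with an unspecified (and in practice enormous, certainly far above $3\cdot 23+1$) threshold; you then defer ``the finitely many odd $t$ below the threshold'' to ad hoc arguments that you do not carry out. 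Since those deferred cases are essentially the entire content of the theorem in the range $23\le t\le(\text{threshold})$, the statement is not actually proved. Moreover, your fallback route truncating a TD$(7,r)$ requires distance-$6$ GDCs of types $6^5 2^1$ and $6^6 2^1$, which you describe as ``analogues'' of Proposition \ref{GDC(5)4t21}; no such distance-$6$ objects are established anywhere in the paper, so that alternative cannot be executed with the available ingredients. The inflation step $m(6c+2)$ produces only a sparse set of lengths and cannot close the gap either.

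The paper's proof avoids all of this with one uniform, fully effective construction: take a $(t+1,\{4,5,6\},1)$-PBD (Lemma \ref{PBD4-6}), delete a point to get a $\{4,5,6\}$-GDD of type $3^i4^j5^k$ with $3i+4j+5k=t$, apply Construction \ref{FundCtr} with weight $6$ (ingredients $6^4,6^5,6^6$ from Proposition \ref{GDC(6)gt}) to get a GDC of type $18^i24^j30^k$, then adjoin \emph{two} ideal points and fill each group plus the two extra points with a GDC of type $2^{10}$, $2^{13}$ or $2^{16}$ (again Proposition \ref{GDC(6)gt}), yielding type $2^{3t+1}$ directly. The exception list of Lemma \ref{PBD4-6} is exactly what produces the set ``$t\ge 23$ or $t\in\{12,15,16,19,20,21\}$'' in the statement, with no residual cases to chase. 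The idea you are missing is that the block sizes of the master design need not match the hole sizes $\{10,13,16,\dots\}$ of the small $2^k$ GDCs: a weight-$6$ blow-up of a $\{4,5,6\}$-PBD followed by adjoining two points lands precisely on those hole sizes.
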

\begin{proof}
For each $t\geq 23$ or $t\in \{12,15,16,19,20,21\}$, take a
$(t+1,\{4,5,6\},1)$-PBD from Lemma \ref{PBD4-6} and remove one point
to obtain a $\{4,5,6\}$-GDD of type $3^i 4^j 5^k$ with $3i+4j+5k=t$.
Apply construction \ref{FundCtr} with weight $6$ and input
$[2,1,1]$-GDC($6$)s of types $6^t$ for $t\in \{4,5,6\}$ (Lemma
\ref{GDC(6)gt}) to obtain a $[2,1,1]$-GDC($6$) of type $18^i 24^j
30^k$ and length $6t$. Adjoining two ideal points and filling
in the groups together with the two extra points with $[2,1,1]$-GDC($6$)s of types $2^u$ for
$u\in \{10,13,16\}$ (see Proposition \ref{GDC(6)gt}), the result is a
$[2,1,1]$-GDC($6$) of type $2^{3t+1}$ for all $t\geq 23$ or $t\in
\{12,15,16,19,20,21\}$. By Lemma \ref{CCC6t+2(1)}, the resultant $[2,1,1]$-GDC($6$) is an
optimal $(6t+2,6,[2,1,1])_4$-code with size $6t^2+2t$.
\end{proof}
\vskip 10pt

\subsection{The Case of Length $n \equiv 3, 4 \pmod{6}$}
\vskip 10pt

\begin{lemma}
\label{CCC(6)4} $A_4(4,6,[2,1,1])=1$.
\end{lemma}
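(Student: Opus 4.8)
The plan is to sandwich $A_4(4,6,[2,1,1])$ between $1$ and $1$. First I would record the degenerate feature of this instance: the composition $[2,1,1]$ has weight $w=2+1+1=4$, which equals the length $n=4$. Hence every codeword of a $(4,6,[2,1,1])_4$-code has no zero coordinate and is simply an arrangement of the multiset $\{1,1,2,3\}$ in the four positions. In particular such a codeword exists — for instance $\langle 0,1,2,3\rangle$ over the point set $\bbZ_4$ — and a single codeword vacuously satisfies the minimum distance requirement, so $A_4(4,6,[2,1,1])\ge 1$.

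For the upper bound I would invoke the trivial fact that the Hamming distance between any two vectors of length $4$ over $\bbZ_4$ is at most $4$, which is strictly less than the prescribed minimum distance $6$. Consequently a $(4,6,[2,1,1])_4$-code cannot contain two distinct codewords, i.e. $A_4(4,6,[2,1,1])\le 1$. Combining the two inequalities yields $A_4(4,6,[2,1,1])=1$.

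There is essentially no obstacle in this argument; the only point worth flagging is that one should not try to meet the generic Johnson-type estimate recorded in the Corollary, which here gives $U(4,6,[2,1,1])=\lfloor\frac{4}{2}\lfloor\frac{3}{3}\rfloor\rfloor=2$ and therefore overshoots in this short-length regime. The genuine bound comes from the length constraint $d_H\le n$, not from the Corollary, so the lemma is really an observation that the trivial bound beats the generic one when $d>n$.
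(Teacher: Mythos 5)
Your proof is correct and matches the paper's approach: the paper simply exhibits the single codeword $\langle 0,1,2,3\rangle$ and leaves the upper bound implicit, which is exactly the trivial observation you spell out, namely that $d_H(\vu,\vv)\le n=4<6$ forbids a second codeword. Your remark that the generic bound $U(4,6,[2,1,1])=2$ overshoots here is a correct and worthwhile clarification, but the argument itself is the same.
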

\begin{proof}
The one required codeword is $\langle 0, 1, 2, 3\rangle$.
\end{proof}

\vskip 10pt

\begin{lemma}
\label{CCC6t3} For any positive integer $t$, if
$A_4(6t+4,6,[2,1,1])=U(6t+4,6,[2,1,1])$ then
$A_4(6t+3,6,[2,1,1])=U(6t+3,6,[2,1,1])$.
\end{lemma}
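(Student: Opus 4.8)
The plan is to mimic the shortening argument that proved Lemma~\ref{CCC6t}. First I would pin down the two bound values. Since $\lfloor(6t+3)/3\rfloor = 2t+1$ and $(6t+4)/2 = 3t+2$, we get $U(6t+4,6,[2,1,1]) = (3t+2)(2t+1) = 6t^2+7t+2$, which is already an integer, so the outer floor in $U(\cdot)$ is inactive; since $\lfloor(6t+2)/3\rfloor = 2t$, we get $U(6t+3,6,[2,1,1]) = \frac{6t+3}{2}\cdot 2t = 6t^2+3t$, again an integer. In particular $U(6t+4,6,[2,1,1]) - U(6t+3,6,[2,1,1]) = 4t+2$.

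Next I would take an optimal $(6t+4,6,[2,1,1])_4$-code $\C$, which by hypothesis has $|\C| = 6t^2+7t+2 = (3t+2)(2t+1)$. Each codeword has exactly four nonzero coordinates, so summing over the $6t+4 = 2(3t+2)$ coordinates gives $\sum_x |\{\vu\in\C : \vu_x\ne 0\}| = 4|\C| = (6t+4)(4t+2)$; hence the average number of codewords that are nonzero in a fixed coordinate is exactly $4t+2$, and so some coordinate $x$ carries at most $4t+2$ such codewords. (Using the upper bound $A_4(6t+3,6,[2,1,1])\le U(6t+3,6,[2,1,1])$ one could even show the count is exactly $4t+2$ in every coordinate, but this is not needed.)

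Finally, delete from $\C$ the (at most $4t+2$) codewords that are nonzero in coordinate $x$, and then delete coordinate $x$ itself from each surviving codeword. Removing a coordinate on which all remaining codewords vanish changes neither the composition $[2,1,1]$ nor any pairwise Hamming distance, so the result is a $(6t+3,6,[2,1,1])_4$-code of size at least $|\C| - (4t+2) = 6t^2+3t = U(6t+3,6,[2,1,1])$; combined with the matching upper bound, this forces $A_4(6t+3,6,[2,1,1]) = U(6t+3,6,[2,1,1])$. I do not expect a genuine obstacle here: the whole argument hinges on the clean arithmetic identities $6t+4 = 2(3t+2)$ and $6t^2+7t+2 = (3t+2)(2t+1)$, which make the incidence average come out to the exact integer $4t+2$, so the only care needed is in evaluating the two floor expressions correctly.
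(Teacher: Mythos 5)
Your proof is correct and follows essentially the same shortening argument as the paper: compute the two bound values, observe their difference is $4t+2$, delete the codewords supported on a suitable coordinate, and shorten. The only (harmless) difference is that you use an averaging argument to find a coordinate meeting the count $4t+2$, whereas the paper simply asserts every coordinate has exactly $4t+2$ non-zero entries; your version is, if anything, slightly more carefully justified.
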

\begin{proof}
In an optimal $(6t+4,6,[2,1,1])_4$-code, every coordinate has exactly $2t+1+2t+1=4t+2$ non-zero
elements. Fix a  coordinate $x$ and remove all the $4t+2$ codewords containing   non-zero
elements in this  coordinate $x$. Shorten all the remaining codewords by deleting the element $0$ in   coordinate $x$ from them.  The resultant
codewords form an optimal $(6t+3,6,[2,1,1])_4$-code with size
$U(6t+4,6,[2,1,1])-4t-2=6t^2+3t=U(6t+3,6,[2,1,1])$.
\end{proof}
\vskip 10pt

\begin{lemma}
\label{GDC12t91(1)} There exists a $[2,1,1]$-GDC($6$) of type $12^t
9^1$ for all $t\ge 4$.
\end{lemma}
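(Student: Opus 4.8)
The plan is to bootstrap from the cases already settled in Proposition \ref{GDC(6)12tu1} by a PBD-closure argument driven by the Fundamental Construction (Construction \ref{FundCtr}). First, for every $t\in\{4,5,\ldots,15,17,18,19,23\}$ the required GDC is furnished directly by Proposition \ref{GDC(6)12tu1} with $u=9$. For the remaining $t$ I would take a ground set $\{1,\ldots,t\}\cup\{\infty\}$ of size $t+1$, place on it a $(t+1,\{5,6,7,8,9\},1)$-PBD (viewed as a $\{5,6,7,8,9\}$-GDD of type $1^{t+1}$), which by Lemma \ref{PBD5-9} exists for all $t+1\ge 10$ outside a finite exceptional set, and apply Construction \ref{FundCtr} with the weight function that gives weight $12$ to every point of $\{1,\ldots,t\}$ and weight $9$ to $\infty$.

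For a block $B$ of the PBD the ingredient needed is a $[2,1,1]$-GDC$(6)$ of type $12^{|B|}$ if $\infty\notin B$ and of type $12^{|B|-1}9^1$ if $\infty\in B$. Since $|B|\in\{5,6,7,8,9\}$, the first exists by Lemma \ref{GDC(6)12t} (as $|B|\ge 4$), and the second exists by Proposition \ref{GDC(6)12tu1} (as $|B|-1\in\{4,5,6,7,8\}$ lies in its list for $u=9$). Hence Construction \ref{FundCtr} outputs a $[2,1,1]$-GDC$(6)$ of type $12^t 9^1$. To check that its size is the optimal value $12t(2t+1)$, sum the ingredient sizes over the blocks: the $\infty$-free blocks contribute $\sum_{\infty\notin B}48\binom{|B|}{2}$, and the blocks through $\infty$ contribute $\sum_{\infty\in B}12(|B|-1)(2|B|-1)=48\sum_{\infty\in B}\binom{|B|-1}{2}+36t$, using $\sum_{\infty\in B}(|B|-1)=t$. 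Because the pairs within $\{1,\ldots,t\}$ are partitioned among all the blocks, the two binomial sums together equal $\binom{t}{2}$, so the total size is $48\binom{t}{2}+36t=12t(2t+1)$, exactly as required.

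Combining the two paragraphs covers every $t\ge 4$ except the nine orders $t\in\{16,21,22,26,27,28,31,32,33\}$, for which $t+1$ falls into the exceptional set of Lemma \ref{PBD5-9} while $t$ is not in the list of Proposition \ref{GDC(6)12tu1}. For each of these I would replace the master design by an ad hoc one: a $(t+1,\{4,5,6\},1)$-PBD from Lemma \ref{PBD4-6} whose distinguished point $\infty$ lies in no block of size $4$ (so that every $\infty$-block again calls only for a type $12^{m}9^1$ with $m\in\{4,5\}$ available from Proposition \ref{GDC(6)12tu1}), a truncated transversal design from Lemma \ref{TD} arranged so that all blocks through $\infty$ have admissible size, or a direct construction. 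This last step is the main obstacle: a block of size $4$ through $\infty$ would require a $[2,1,1]$-GDC$(6)$ of type $12^3 9^1$, which is not supplied by Proposition \ref{GDC(6)12tu1}, so one must hand-tailor a master design avoiding such blocks for each of the nine remaining values separately rather than invoke a single closure theorem.
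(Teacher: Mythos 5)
Your main construction is essentially the paper's: both arguments are driven by the $(t+1,\{5,6,7,8,9\},1)$-PBDs of Lemma~\ref{PBD5-9} together with Construction~\ref{FundCtr}, the input GDCs of types $12^k$ from Lemma~\ref{GDC(6)12t} and $12^{k}9^1$ from Proposition~\ref{GDC(6)12tu1}. (The paper deletes a point to get a $\{5,\ldots,9\}$-GDD of type $4^i5^j6^k7^l8^m$, inflates by weight $12$, and then adjoins $9$ ideal points and fills the groups with $12^u9^1$ GDCs; you instead keep the PBD as a GDD of type $1^{t+1}$ and give $\infty$ weight $9$ directly. These are interchangeable, and your size count $48\binom{t}{2}+36t=12t(2t+1)$ is correct.) The real issue is the nine orders $t\in\{16,21,22,26,27,28,31,32,33\}$, where $t+1$ lies in the exceptional set of Lemma~\ref{PBD5-9} and $t$ is not in the list of Proposition~\ref{GDC(6)12tu1}. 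You correctly identify these as uncovered, but your proposal does not actually construct them: the suggested $(t+1,\{4,5,6\},1)$-PBD fallback fails because a block of size $4$ through the weight-$9$ point would demand a $[2,1,1]$-GDC$(6)$ of type $12^39^1$, which is not available, and there is no reason a $\{4,5,6\}$-PBD should contain a point lying only in blocks of size $5$ or $6$. So as written the proof has a genuine gap at these nine values.

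For the record, the paper closes this gap by first building, for each of the nine $t$, an auxiliary $[2,1,1]$-GDC$(6)$ whose group sizes are all multiples of $12$ except possibly for the flexibility needed (types $48^4$, $48^460^1$, $48^472^1$, $48^572^1$, $60^484^1$, $84^4$, $48^684^1$, $84^448^1$, $84^460^1$), obtained from transversal designs and the $4$-GDDs of Lemma~\ref{4GDD} via Construction~\ref{FundCtr} with weights $4$, $6$ or $12$; it then adjoins $9$ ideal points and fills each group of size $12v$ (together with the $9$ extra points) with a $12^v9^1$ GDC from Proposition~\ref{GDC(6)12tu1} with $v\in\{4,5,6,7\}$. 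Some such explicit case analysis (or an equivalent set of master designs in which every block through the weight-$9$ point has size at least $5$) is needed to complete your argument.
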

\begin{proof}
For each $t\in \{4,5,\ldots,15\} \cup \{17,18,19,23\}$, a
$[2,1,1]$-GDC($6$) of type $12^t 9^1$ exists by Proposition
\ref{GDC(6)12tu1}.

For $t\in \{16, 21, 22, 26, 27, 28, 31, 32, 33\}$, we first construct $9$  GDCs  of types $48^4$, $48^4 60^1$, $48^4 72^1$, $48^5 72^1$, $60^4 84^1$, $84^4$, $48^6 84^1$, $84^4 48^1$ and  $84^4 60^1$ as follows: For $t=16$, take a TD($4,8$) from Lemma \ref{TD}. Apply Construction
\ref{FundCtr} with weight $6$ to obtain a $[2,1,1]$-GDC($6$) of type
$48^4$. For $t=21$, take a TD($5,5$) from Lemma \ref{TD}. Remove $4$ points
from a block to obtain a $\{4,5\}$-GDD of type $4^4 5^1$. Apply
Construction \ref{FundCtr} with weight $12$ to obtain a
$[2,1,1]$-GDC($6$) of type $48^4 60^1$.
For each $t\in\{22,26\}$, take a $4$-GDD of type $12^u 18^1$ for
$u\in\{4,5\}$ from Lemma \ref{4GDD}. Apply Construction \ref{FundCtr}
with weight $4$ to obtain a $[2,1,1]$-GDC($6$) of type $48^u 72^1$.
For $t=27$, take a $4$-GDD of type $15^4 21^1$ from Lemma \ref{4GDD}.
Apply Construction \ref{FundCtr} with weight $4$ to obtain a
$[2,1,1]$-GDC($6$) of type $60^4 84^1$.
For $t=28$, take a TD($4,14$) from  Lemma \ref{TD}. Apply Construction
\ref{FundCtr} with weight $6$ to obtain a $[2,1,1]$-GDC($6$) of type
$84^4$.
For $t=31$, take a $4$-GDD of type $4^6 7^1$ from  Lemma \ref{4GDD}.
Apply Construction \ref{FundCtr} with weight $12$ to obtain a
$[2,1,1]$-GDC($6$) of type $48^6 84^1$.
For each $t\in \{32,33\}$, take a TD($5,14$) from  Lemma \ref{TD}. Delete
$6$ or $4$ points from a group to get a $\{4,5\}$-GDD of types $14^4
8^1$ or $14^4 10^1$. Apply Construction \ref{FundCtr}
with weight $6$ to obtain a $[2,1,1]$-GDC($6$) of types $84^4 48^1$
or $84^4 60^1$.
Here, the input $[2,1,1]$-GDC($6$)s of types $4^4, 6^4, 12^4$ and $12^5$ exist by
Proposition \ref{GDC(6)gt}. Now, adjoin $9$ ideal points to each of the above GDCs and fill in
the groups together with the nine extra points with a $[2,1,1]$-GDC($6$) of type $12^v 9^1$ with $v \in \{4, 5, 6, 7\}$ (which
all exist by Proposition \ref{GDC(6)12tu1}) to obtain a
$[2,1,1]$-GDC($6$) of type $12^t 9^1$ as desired.

For all $t\geq 34$ or $t\in \{20,24,25,29,30\}$, take a
$(t+1,\{5,6,7,8,9\},1)$-PBD from Lemma \ref{PBD5-9} and remove one
point to obtain a $\{5,6,7,8,9\}$-GDD of type $4^i 5^j 6^k 7^l 8^m$
with $4i+5j+6k+7l+8m=t$. Apply construction \ref{FundCtr} with
weight $12$ and input $[2,1,1]$-GDC($6$)s of types $12^t$ for $t\in
\{5,6,7,8,9\}$ (Lemma \ref{GDC(6)12t}) to obtain a
$[2,1,1]$-GDC($6$) of type $48^i 60^j 72^k 84^l 96^m$ and length
$12t$. Adjoin $9$ ideal points and fill in the groups together with the nine extra points with
$[2,1,1]$-GDC($6$)s of types $12^u 9^1$ for $u\in \{4,5,6,7,8\}$
(which exist by Proposition \ref{GDC(6)12tu1}) to obtain a
$[2,1,1]$-GDC($6$) of type $12^t 9^1$.
\end{proof}
\vskip 10pt

\begin{theorem}
\label{CCC12t+10(1)} There exists an optimal
$(12t+10,6,[2,1,1])_4$-code with size $(4t+3)(6t+5)$ for all $t \ge
0$.
\end{theorem}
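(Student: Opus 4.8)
The plan is to dispatch the whole range $t\ge 4$ by a single group-divisible-code construction and to treat the finitely many small values $t\in\{0,1,2,3\}$ directly.

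First I would verify that the asserted size is exactly the upper bound. Since $12t+10\equiv 1\pmod 3$ we have $\lfloor (12t+9)/3\rfloor=4t+3$, whence $U(12t+10,6,[2,1,1])=\lfloor \frac{1}{2}(12t+10)(4t+3)\rfloor=(6t+5)(4t+3)$. So it suffices to exhibit, for every $t\ge 0$, a $(12t+10,6,[2,1,1])_4$-code of size $(4t+3)(6t+5)$, and optimality will be automatic.

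For $t\ge 4$ the master object is a $[2,1,1]$-GDC$(6)$ of type $12^t9^1$, which exists by Lemma~\ref{GDC12t91(1)} and whose size is $12t(2t+1)$ by Proposition~\ref{GDC(6)12tu1} (with $u=9$). I would adjoin one ideal point and fill in via Construction~\ref{AdjoinPoints}, taking the single group of size $9$ as the distinguished group $g_1$, so that its clause (iii) is vacuous. The only ingredients then needed are an optimal $(10,6,[2,1,1])_4$-code (covering the size-$9$ group together with the new point) and, for each of the $t$ groups of size $12$, a $[2,1,1]$-GDC$(6)$ of type $1^{12}1^1=1^{13}$, i.e.\ an optimal $(13,6,[2,1,1])_4$-code; these have sizes $U(10,6,[2,1,1])=15$ and $U(13,6,[2,1,1])=26$ and exist by Proposition~\ref{CCC(6)}. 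The output code has length $1+9+12t=12t+10$ and size $12t(2t+1)+15+26t=24t^2+38t+15=(4t+3)(6t+5)$, so it is optimal.

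There remain $t\in\{0,1,2,3\}$, i.e.\ lengths $10,22,34,46$. The first three are supplied outright by Proposition~\ref{CCC(6)}. The case $t=3$, length $46$, is the one place where the uniform argument fails: Proposition~\ref{CCC(6)} does not cover it, and Lemma~\ref{GDC12t91(1)} only furnishes a $[2,1,1]$-GDC$(6)$ of type $12^39^1$ for $t\ge 4$. Here I would manufacture the needed type-$12^39^1$ GDC directly: apply the Fundamental Construction (Construction~\ref{FundCtr}) with weight $4$ on three groups and weight $3$ on the fourth to a $4$-GDD of type $3^4$; since every block meets every group exactly once, the only ingredient required is a $[2,1,1]$-GDC$(6)$ of type $4^33^1$ on $15$ points, which must attain its natural size $28$ so that the result has size $252=12\cdot 3\cdot 7$. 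Then the adjoin-one-point-and-fill step above turns this into an optimal $(46,6,[2,1,1])_4$-code of size $252+15+78=345$. I expect this last case to be the main obstacle: the recursive machinery stops just short of length $46$, so one is pushed into a hand or computer construction of an irregular small GDC, and the delicate point is checking that this small ingredient has exactly the size needed to keep the final code optimal.
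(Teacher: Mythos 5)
Your bound computation and your treatment of $t\in\{0,1,2\}$ and of all $t\ge 4$ are correct and coincide with the paper's proof: for $t\ge 4$ one adjoins a single ideal point to the $[2,1,1]$-GDC$(6)$ of type $12^t9^1$ from Lemma~\ref{GDC12t91(1)} and fills with optimal codes of lengths $10$ and $13$, and the size arithmetic $12t(2t+1)+15+26t=(4t+3)(6t+5)$ checks out.

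The gap is at $t=3$. Your construction of a $[2,1,1]$-GDC$(6)$ of type $12^39^1$ rests on an ingredient $[2,1,1]$-GDC$(6)$ of type $4^33^1$ with exactly $28$ codewords, and this object is established nowhere --- not in Propositions~\ref{GDC(6)gt}--\ref{GDC(6)1t111}, not elsewhere in the paper, and not by you; you explicitly defer it to ``a hand or computer construction.'' Asserting that it ``must attain its natural size $28$'' is not an existence proof (it is not even clear a priori that $28$ is attainable for this mixed type under the distance-$6$ constraint), so as written the length-$46$ case is open. Note that the paper closes this case without any new small ingredient: take the $[2,1,1]$-GDC$(6)$ of type $3^5$ of size $30$ from Proposition~\ref{GDC(6)gt}, apply Construction~\ref{Inflation} with a TD$(4,3)$ to obtain a $[2,1,1]$-GDC$(6)$ of type $9^5$ of size $30\cdot 3^2=270$, adjoin one ideal point, and fill each group plus that point with an optimal $(10,6,[2,1,1])_4$-code of size $15$; this yields $270+5\cdot 15=345=(4\cdot 3+3)(6\cdot 3+5)$ codewords of length $46$. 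Replacing your $t=3$ paragraph with this argument (or actually exhibiting the type-$4^33^1$ GDC) would complete the proof.
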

\begin{proof}
For each $t\in\{0,1,2\}$, there exists an optimal
$(12t+10,6,[2,1,1])_4$-code with size $(4t+3)(6t+5)$ by Proposition
\ref{CCC(6)}.

For $t=3$, take a $[2,1,1]$-GDC($6$) of type $3^5$ (which exists by
Proposition \ref{GDC(6)gt}). Apply Construction \ref{Inflation} with
a TD($4,3$) to get a $[2,1,1]$-GDC($6$) of type $9^5$.  Adjoin one
ideal point and fill in the groups together with the extra point with an optimal
$(10,6,[2,1,1])_4$-code. The result is an optimal code of length
$46=12\times 3+10$.

For all $t\ge 4$, there exists a $[2,1,1]$-GDC($6$) of type $12^t 9^1$.
Adjoin one ideal point to this GDC and fill in the groups together with the extra point with
optimal codes of lengths $10$ or $13$ to obtain an optimal
$(12t+10,6,[2,1,1])_4$-code with size $(4t+3)(6t+5)$.
\end{proof}
\vskip 10pt

\begin{theorem}
\label{CCC12t+16(1)} There exists an optimal
$(12t+16,6,[2,1,1])_4$-code with size $(4t+5)(6t+8)$ for each $0 \le t
\le 15$.
\end{theorem}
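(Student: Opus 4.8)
The plan is to meet, for each admissible length $n=12t+16$ with $0\le t\le 15$, the Johnson‑type upper bound $U(12t+16,6,[2,1,1])=(6t+8)(4t+5)$ on the nose; optimality then follows from the inequality $A_4(n,6,[2,1,1])\le U(n,6,[2,1,1])$. For $t\in\{0,1\}$, i.e. $n\in\{16,28\}$, the codes are already furnished by Proposition~\ref{CCC(6)}, so all the work lies in $2\le t\le 15$.

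For the bulk range $7\le t\le 15$ I would run one uniform construction. Start from a $[2,1,1]$-GDC$(6)$ of type $12^t 15^1$, which exists by Proposition~\ref{GDC(6)12tu1} and has size $12t(2t+3)=24t^2+36t$. Adjoin one ideal point (Construction~\ref{AdjoinPoints}), filling the group of size $15$ together with the new point by an optimal $(16,6,[2,1,1])_4$-code of size $40$ and each group of size $12$ by an optimal $(13,6,[2,1,1])_4$-code of size $26$, both coming from Proposition~\ref{CCC(6)}. The resulting code has length $12t+16$ and size $(24t^2+36t)+40+26t=24t^2+62t+40=(6t+8)(4t+5)$, hence is optimal.

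Four of the five remaining lengths are of the form $4m$ with $m$ listed in Proposition~\ref{CCC(6)}: $40=4\cdot 10$, $52=4\cdot 13$, $64=4\cdot 16$, $88=4\cdot 22$. For each I would take a $[2,1,1]$-GDC$(6)$ of type $m^4$ (of size $2m^2$) and fill in its four groups with an optimal $(m,6,[2,1,1])_4$-code via Construction~\ref{FillGroups} (applicable since $6\le 2(4-1)$); the total $2m^2+4\,U(m)$ equals $U(4m)$ in every case. The required $[2,1,1]$-GDC$(6)$ of type $m^4$ is given directly by Proposition~\ref{GDC(6)gt} for $m\in\{10,13,22\}$, while for $m=16$ it is obtained by inflating the $[2,1,1]$-GDC$(6)$ of type $4^4$ (Proposition~\ref{GDC(6)gt}) with a TD$(4,4)$ (which exists) through Construction~\ref{Inflation}.

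The hard case will be $t=5$, i.e. $n=76$. Here $76=4\cdot 19$, but neither Proposition~\ref{GDC(6)gt} nor any TD-inflation supplies a $[2,1,1]$-GDC$(6)$ of type $19^4$ at the extremal density one would need, and the clean ``$12^t 15^1$ plus one point'' route would require type $12^5 15^1$, which is outside the stated range of Proposition~\ref{GDC(6)12tu1}. A viable substitute is to take a $[2,1,1]$-GDC$(6)$ of type $12^5 9^1$ (Lemma~\ref{GDC12t91(1)}), adjoin seven ideal points, fill the augmented group of size $16$ by an optimal $(16,6,[2,1,1])_4$-code and each group of size $12$ by a $[2,1,1]$-GDC$(6)$ of type $1^{12}7^1$; a size count shows this demands that the type-$1^{12}7^1$ ingredient attain its extremal value $50$. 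Establishing the existence of such a dense $[2,1,1]$-GDC$(6)$ of type $1^{12}7^1$ — most plausibly by a direct or computer search in the style of Propositions~\ref{GDC(6)gt}--\ref{GDC(6)1t111} — is the main obstacle, together with the recurring bookkeeping that each chosen master GDC plus its fill-in ingredients is dense enough to reach $U(12t+16)$ exactly: a bare fill-in of a $[2,1,1]$-GDC$(6)$ is generically short of the bound by an amount linear in $t$, which is precisely why the right number of ideal points must be adjoined so that one building block becomes a genuinely optimal short code rather than merely a GDC.
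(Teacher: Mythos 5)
Your treatment of $t\in\{0,1\}$, of $n\in\{40,52,88\}$ via the GDCs of types $10^4$, $13^4$, $22^4$, and of $7\le t\le 15$ via $12^t15^1$ plus one adjoined point is correct and matches the paper (your size counts all check out: $12t(2t+3)+40+26t=(4t+5)(6t+8)$ and $2m^2+4U(m,6,[2,1,1])=U(4m,6,[2,1,1])$ for $m\in\{10,13,16,22\}$). Your route for $n=64$ --- inflating the type-$4^4$ GDC by a ${\rm TD}(4,4)$ to type $16^4$ and filling with $(16,6,[2,1,1])_4$-codes, giving $512+4\cdot 40=672=U(64,6,[2,1,1])$ --- differs from the paper (which uses $3^7$ inflated to $9^7$ plus one ideal point) but is equally valid.

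The genuine gap is $t=5$, $n=76$, which you leave unresolved. Your proposed substitute hinges on a $[2,1,1]$-GDC$(6)$ of type $1^{12}7^1$ of size exactly $50$, which is established nowhere in the paper and whose existence you would still have to prove; as it stands the case is open in your argument. The idea you are missing is the same inflation trick you could have reused from $n=64$: the paper takes the $[2,1,1]$-GDC$(6)$ of type $3^5$ (Proposition~\ref{GDC(6)gt}), inflates it by a ${\rm TD}(4,5)$ (Construction~\ref{Inflation}) to a $[2,1,1]$-GDC$(6)$ of type $15^5$ of size $30\cdot 25=750$, adjoins one ideal point, and fills each group together with that point by an optimal $(16,6,[2,1,1])_4$-code of size $40$, giving $750+5\cdot 40=950=U(76,6,[2,1,1])$. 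With that replacement for $t=5$ your proof is complete; no new computer-searched ingredient is needed.
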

\begin{proof}
For each $t\in \{0,1\}$, there exists an optimal
$(12t+16,6,[2,1,1])_4$-code with size $(4t+5)(6t+8)$ by Proposition
\ref{CCC(6)}.

For $t=2$, take a $[2,1,1]$-GDC($6$) of type $10^4$ (which exists by
Proposition \ref{GDC(6)gt}). Fill in the groups with an optimal
$(10,6,[2,1,1])_4$-code. The result is an optimal code of length
$40=12\times 2+16$.

For $t=3$, take a $[2,1,1]$-GDC($6$) of type $13^4$ (which exists by
Proposition \ref{GDC(6)gt}). Fill in the groups with an optimal
$(13,6,[2,1,1])_4$-code. The result is an optimal code of length
$52=12\times 3+16$.

For $t=4$, take a $[2,1,1]$-GDC($6$) of type $3^7$ (which exists by
Proposition \ref{GDC(6)gt}). Apply Construction \ref{Inflation} with
a TD($4,3$) to get a $[2,1,1]$-GDC($6$) of type $9^7$. Adjoin one
ideal point and fill in the groups together with the extra point with an optimal
$(10,6,[2,1,1])_4$-code. The result is an optimal code of length
$64=12\times 4+16$.

For $t=5$, take a $[2,1,1]$-GDC($6$) of type $3^5$ (which exists by
Proposition \ref{GDC(6)gt}). Apply Construction \ref{Inflation} with
a TD($4,5$) to get a $[2,1,1]$-GDC($6$) of type $15^5$. Adjoin one
ideal point and fill in the groups together with the extra point with an optimal
$(16,6,[2,1,1])_4$-code. The result is an optimal code of length
$76=12\times 5+16$.

For $t=6$, take a $[2,1,1]$-GDC($6$) of type $22^4$ (which exists by
Proposition \ref{GDC(6)gt}). Fill in the groups with an optimal
$(22,6,[2,1,1])_4$-code. The result is an optimal code of length
$88=12\times 6+16$.

For each $t\in \{7,8,\ldots,15\}$, there exists a $[2,1,1]$-GDC(6) of
type $12^t 15^1$ by Proposition \ref{GDC(6)12tu1}. Adjoining one
ideal point and filling in the groups together with the extra point with optimal codes of lengths
$13$ or $16$, the result is an optimal $(12t+16,6,[2,1,1])_4$-code.
\end{proof}
\vskip 10pt

\begin{theorem}
\label{CCC12t+16(2)} There exists an optimal
$(12t+16,6,[2,1,1])_4$-code with size $(4t+5)(6t+8)$ for all $t \ge
16$.
\end{theorem}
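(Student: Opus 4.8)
The plan is to mimic the way Theorem~\ref{CCC12t+10(1)} is derived from Lemma~\ref{GDC12t91(1)}: the whole statement reduces to the existence of a $[2,1,1]$-${\rm GDC}(6)$ of type $12^t15^1$ for every $t\ge 16$. Granting such a GDC, I would adjoin one ideal point and apply Construction~\ref{AdjoinPoints}, filling each group of size $12$ together with the extra point by an optimal $(13,6,[2,1,1])_4$-code (of size $26$, Proposition~\ref{CCC(6)}) and the group of size $15$ together with the extra point by an optimal $(16,6,[2,1,1])_4$-code (of size $40$, Theorem~\ref{CCC12t+16(1)} or Proposition~\ref{CCC(6)}). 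Since the ${\rm GDC}(6)$ of type $12^t15^1$ has (maximal) size $12t(2t+3)$, the resulting code has length $12t+16$ and size $12t(2t+3)+26t+40=24t^2+62t+40=(4t+5)(6t+8)=U(12t+16,6,[2,1,1])$, hence is optimal; together with Theorem~\ref{CCC12t+16(1)} this settles all lengths $n\equiv 4\pmod{12}$.

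For the ${\rm GDC}(6)$ of type $12^t15^1$ I would follow the three-tier scheme of Lemma~\ref{GDC12t91(1)}. The recursive tier handles all $t\ge 34$ together with $t\in\{20,24,25,29,30\}$: take a $(t+1,\{5,6,7,8,9\},1)$-PBD (Lemma~\ref{PBD5-9}), delete a point to obtain a $\{5,6,7,8,9\}$-GDD of type $4^i5^j6^k7^l8^m$ with $4i+5j+6k+7l+8m=t$, apply Construction~\ref{FundCtr} with weight $12$ using the $[2,1,1]$-${\rm GDC}(6)$s of types $12^u$, $u\in\{5,\dots,9\}$ from Lemma~\ref{GDC(6)12t}, to get a $[2,1,1]$-${\rm GDC}(6)$ of type $48^i60^j72^k84^l96^m$, then adjoin $15$ ideal points and fill each part $12v$ together with these shared points by a $[2,1,1]$-${\rm GDC}(6)$ of type $12^v15^1$ for $v\in\{4,5,6,7,8\}$. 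The intermediate tier treats the finitely many remaining values $t\in\{16,17,18,19,21,22,23,26,27,28,31,32,33\}$ by building, for each, an auxiliary GDC all of whose parts are multiples of $12$ — using truncated transversal designs, the $4$-RGDD of type $4^7$ completed by ideal points on its parallel classes, and the $4$-GDDs with one long group from Lemmas~\ref{TD},~\ref{4RGDD38} and~\ref{4GDD}, under Construction~\ref{FundCtr} with weight $12$, $6$ or $4$ — then adjoining $15$ points and filling as above. The ground tier supplies the small ingredients $12^v15^1$: for $v\in\{7,\dots,15\}$ these are given by Proposition~\ref{GDC(6)12tu1}, and for $v\in\{4,5,6\}$ they must be produced directly.

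The main obstacle is exactly this ground tier, together with the bookkeeping of the intermediate tier. A weight-$12$ fundamental construction can only create parts that are multiples of $12$, so the part of size $15$ cannot be manufactured by the recursion itself; the GDCs $12^415^1,12^515^1,12^615^1$ therefore have to come from a direct computer search (extending the range of Proposition~\ref{GDC(6)12tu1}) or from a mixed-weight construction, for which the stock of uniform-type $g^t$-${\rm GDC}(6)$s is thin — e.g.\ only $3^5$ and $3^7$ are available among the $3^t$'s, which blocks weight-$3$ constructions over block sizes $4$ or $6$. Once those pieces are in hand, verifying that every $t$ with $16\le t\le 33$ is realized by an auxiliary GDC whose $v$-parameters all lie in $\{4,\dots,15\}$, and carrying out the (routine) size accounting, follows the pattern already rehearsed in Theorems~\ref{CCC12t+10(1)} and~\ref{CCC12t+16(1)}.
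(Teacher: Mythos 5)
Your reduction to the existence of a $[2,1,1]$-GDC$(6)$ of type $12^t15^1$ for all $t\ge 16$ is sound in principle, and your size accounting ($12t(2t+3)+26t+40=(4t+5)(6t+8)$) is correct. But the proposal has a genuine gap, and you have identified it yourself without closing it: the recursive tier (weight $12$ on a $\{5,6,7,8,9\}$-GDD of type $4^i5^j6^k7^l8^m$, then adjoining $15$ ideal points) forces you to fill groups of size $48$ with a $[2,1,1]$-GDC$(6)$ of type $12^415^1$, and more generally needs $12^v15^1$ for $v\in\{4,5,6\}$. Proposition~\ref{GDC(6)12tu1} only supplies $u=15$ for $t\in\{7,\ldots,15\}$, and nothing else in the paper produces the missing three ingredients; a weight-$12$ fundamental construction cannot create a part of size $15$, so these objects would have to come from a fresh direct construction. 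Declaring them ``the main obstacle'' and leaving them to a hypothetical computer search means the argument is not complete as written. (The intermediate tier for $16\le t\le 33$ is likewise only sketched, but that is secondary.)

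The paper avoids this obstacle entirely by never building a $12^t15^1$ GDC. It starts from a TD$(6,2r)$, gives weight $6$ to the first four groups, $2x$ points of the fifth and $y$ points of the sixth, obtaining a GDC of type $(12r)^4(12x)^1(6y)^1$; it then adjoins \emph{nine} ideal points and fills with the $12^u9^1$ family of Lemma~\ref{GDC12t91(1)} (which exists for \emph{all} $u\ge 4$), producing a GDC of type $12^{4r+x}(6y+9)^1$; finally one more ideal point is adjoined and the last group of size $6y+10$ (with $y$ odd, $1\le y\le 31$, so $6y+10\in\{16,28,\ldots,196\}$) is filled by the small optimal codes already established in Theorem~\ref{CCC12t+16(1)}. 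In other words, the residue $\equiv 4\pmod{12}$ is absorbed by a single flexible group handled with the known small-length codes, rather than by a new family of $12^v15^1$ ingredients. If you want to salvage your route, you would need to either construct $12^v15^1$ for $v\in\{4,5,6\}$ directly or restructure the recursion along the paper's lines so that only the $12^u9^1$ family is required.
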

\begin{proof}
For all $t\ge 16$, take a TD($6,2r$)  for $r\ge 4$ and $r\not\in \{5,7,9,11\}$  from Lemma \ref{TD}. Apply
Construction  \ref{FundCtr} with weight $6$ to the points in the
first $4$ groups, $2x$ points in the fifth group, and $y$ points in
the last group. We require that $x=0$ or $x\ge4$ and $1\le y \le 31$, $y$ odd. The other points are given weight 0. Note that there
exist $[2,1,1]$-GDC($6$)s of types $6^4$, $6^5$ and $6^6$ by Proposition
\ref{GDC(6)gt}. The result is a $[2,1,1]$-GDC($6$) of type $(12r)^4
(12x)^1 (6y)^1$.
Adjoining $9$ ideal points and filling in the groups together with the nine extra points with
$[2,1,1]$-GDC(6)s of types $12^u 9^1$ for all $u\ge 4$, the result is a
$[2,1,1]$-GDC(6) of type $12^{4r+x} (6y+9)^1$. Adjoin one more
ideal point and fill in the groups together with the extra point with optimal codes of lengths $13$
or $6y+10$ for $1\le y \le 31$ and $y$ odd (which exist
by Theorem \ref{CCC12t+16(1)}). The result is an optimal
$(48r+12x+6y+10,6,[2,1,1])_4$-code, where $48r+12x+6y+10$ can take
any value of $12t+16$ with $t$ greater than $16$.
\end{proof}
\vskip 10pt

\begin{theorem} There exists an optimal $(6t+3,6,[2,1,1])_4$-code with size $6t^2+3t$ for all $t\ge
1$.
\end{theorem}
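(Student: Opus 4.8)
The plan is to obtain this statement as an immediate consequence of the shortening argument in Lemma~\ref{CCC6t3} together with the complete solution for the lengths $n\equiv 4\pmod 6$ that is already in hand from the previous subsection. The key observation is that $6t+4$ is even and, according to the parity of $t$, lands in exactly one of the two residue classes modulo $12$ that were treated above: if $t$ is odd then $6t+4=12s+10$ with $s=(t-1)/2\ge 0$, and if $t$ is even then $6t+4=12s+16$ with $s=(t-2)/2\ge 0$.

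First I would dispose of the case $t$ odd by applying Theorem~\ref{CCC12t+10(1)}, which produces an optimal $(12s+10,6,[2,1,1])_4$-code for every $s\ge 0$; this yields $A_4(6t+4,6,[2,1,1])=U(6t+4,6,[2,1,1])$ for all odd $t\ge 1$. Next I would handle $t$ even by invoking Theorem~\ref{CCC12t+16(1)} when $s\le 15$ and Theorem~\ref{CCC12t+16(2)} when $s\ge 16$, which together give an optimal $(12s+16,6,[2,1,1])_4$-code for every $s\ge 0$, hence $A_4(6t+4,6,[2,1,1])=U(6t+4,6,[2,1,1])$ for all even $t\ge 1$. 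Combining the two parities, $A_4(6t+4,6,[2,1,1])=U(6t+4,6,[2,1,1])$ for every $t\ge 1$.

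With that in place, Lemma~\ref{CCC6t3} applies verbatim and gives $A_4(6t+3,6,[2,1,1])=U(6t+3,6,[2,1,1])$ for all $t\ge 1$. The last step is the routine evaluation $U(6t+3,6,[2,1,1])=\lfloor\tfrac{6t+3}{2}\lfloor\tfrac{6t+2}{3}\rfloor\rfloor=(6t+3)t=6t^2+3t$, which matches the claimed size. I do not expect any genuine difficulty: no new design or computer-search construction is needed, and the only thing to be careful about is the modular bookkeeping that pins down which of Theorems~\ref{CCC12t+10(1)}, \ref{CCC12t+16(1)}, \ref{CCC12t+16(2)} covers a given value of $6t+4$ (and noting that the excluded value $t=0$, which would correspond to $n=4$, is harmless since only $t\ge 1$ is asserted).
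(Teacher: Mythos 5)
Your proposal is correct and is essentially the paper's own proof: the paper likewise combines Lemma~\ref{CCC6t3} with Theorems~\ref{CCC12t+10(1)}--\ref{CCC12t+16(2)}, and your parity split of $6t+4$ into the classes $12s+10$ and $12s+16$ is exactly the (implicit) bookkeeping needed to see that those theorems cover all $t\ge 1$. The closing computation $U(6t+3,6,[2,1,1])=6t^2+3t$ is also right.
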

\begin{proof}
The result follows by combining Lemma \ref{CCC6t3} and Theorems
\ref{CCC12t+10(1)}--\ref{CCC12t+16(2)}.
\end{proof}
\vskip 10pt

\subsection{The Case of Length $n \equiv 5 \pmod{6}$}
\vskip 10pt

\begin{lemma}
\label{CCC(6)5} $A_4(5,6,[2,1,1])=1$.
\end{lemma}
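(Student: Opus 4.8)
The plan is to exploit the fact that the prescribed minimum distance, $6$, is strictly larger than the code length, $5$, which forces the code to be trivial. First I would record the elementary bound: for any index set $X$ with $|X|=5$ and any $\vu,\vv\in\bbZ_4^X$ one has $d_H(\vu,\vv)=\|\vu-\vv\|\le 5<6$. Hence no pair of distinct codewords can attain Hamming distance $6$, so any $(5,6,[2,1,1])_4$-code contains at most one codeword; that is, $A_4(5,6,[2,1,1])\le 1$.

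For the reverse inequality it suffices to produce a single codeword of composition $[2,1,1]$ inside a set of size $5$. Since the weight $w=2+1+1=4$ satisfies $w\le 5$, such a codeword exists; concretely one takes $\langle 0,1,2,3\rangle$, i.e. the vector in $\bbZ_4^{[0,4]}$ equal to $1$ on coordinates $0$ and $1$, to $2$ on coordinate $2$, to $3$ on coordinate $3$, and to $0$ on coordinate $4$. The singleton set consisting of this one vector is vacuously a code of minimum distance $6$, so $A_4(5,6,[2,1,1])\ge 1$, and combining the two bounds gives the claimed equality. This mirrors the situation of Lemma \ref{CCC(6)4}, one unit of length higher.

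As for difficulties, there really is none here: the whole argument is immediate once one notices $d>n$, and the only point worth a line of verification is that the composition $[2,1,1]$ can actually be realised on $5$ coordinates, which is clear since $4\le 5$. I would also remark, for context, that none of Lemmas \ref{bound1}--\ref{bound3} is needed, and in fact $1<U(5,6,[2,1,1])=2$, so this short length is one of the cases in which the Johnson-type upper bound is not met.
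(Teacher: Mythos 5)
Your proof is correct and matches the paper's: the paper simply exhibits the single codeword $\langle 0,1,2,3\rangle$, leaving implicit the trivial upper bound that you spell out, namely that $d_H(\vu,\vv)\le n=5<6$ forces any such code to have at most one codeword. Your added remark that $U(5,6,[2,1,1])=2$ is not attained here is also accurate.
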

\begin{proof}
The one required codeword is $\langle 0, 1, 2, 3\rangle$.
\end{proof}

\vskip 10pt

\begin{theorem}
\label{CCC12t+11(1)} There exists an optimal
$(12t+11,6,[2,1,1])_4$-code with size $24t^2+40t+16$ for all $t \ge
0$.
\end{theorem}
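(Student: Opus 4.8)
The plan is first to note that $U(12t+11,6,[2,1,1])=24t^2+40t+16$: indeed $\lfloor(12t+10)/3\rfloor=4t+3$ and $(12t+11)(4t+3)=48t^2+80t+33$ is odd, so halving and flooring gives $24t^2+40t+16$. Hence, by the upper bound $A_4(12t+11,6,[2,1,1])\le U(12t+11,6,[2,1,1])$, it suffices to produce a $(12t+11,6,[2,1,1])_4$-code of that size. For $t\in\{0,1,2\}$ the lengths $11,23,35$ are among those treated in Proposition~\ref{CCC(6)}. For $t=3$, that is length $47$, I would take a $[2,1,1]$-GDC$(6)$ of type $1^{36}11^1$ of size $336$ from Proposition~\ref{GDC(6)1t111} and fill in its groups (Construction~\ref{FillGroups}) with the empty code on each singleton and an optimal $(11,6,[2,1,1])_4$-code of size $16$ (Proposition~\ref{CCC(6)}) on the group of size $11$; this yields an optimal code of size $336+16=352$.

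For every $t\ge4$ the heart of the argument is a single recursive step. By Lemma~\ref{GDC12t91(1)} there is a $[2,1,1]$-GDC$(6)$ of type $12^t9^1$, and, as that lemma is designed to furnish and as the counting behind Proposition~\ref{GDC(6)12tu1} shows, it has the maximum size $24t^2+12t$. I would then apply Construction~\ref{AdjoinPoints} with $y=2$, choosing the group of size $9$ as the distinguished first group, so that no $[2,1,1]$-GDC$(6)$ of type $1^{9}2^1$ is needed: ingredient~(i) is an optimal $(11,6,[2,1,1])_4$-code of size $16$ (Proposition~\ref{CCC(6)}), serving the size-$9$ group together with the two adjoined points, and for each of the $t$ groups of size $12$ the corresponding ingredient is a $[2,1,1]$-GDC$(6)$ of type $1^{12}2^1$ of size $28$ (Proposition~\ref{GDC(6)14_2}). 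The resulting code has length $2+12t+9=12t+11$ and size
\begin{equation*}
(24t^2+12t)+16+t\cdot28=24t^2+40t+16=U(12t+11,6,[2,1,1]),
\end{equation*}
so it is optimal; together with the cases $t\le3$ this covers all $t\ge0$.

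I expect no structural obstacle, only two bookkeeping points to settle. First, one must know that Lemma~\ref{GDC12t91(1)} produces a $12^t9^1$-GDC of the \emph{full} size $24t^2+12t$ for every $t\ge4$, not merely some GDC of that type; but this is exactly what that lemma does (via Proposition~\ref{GDC(6)12tu1} for the listed $t$, and the Fundamental Construction (Construction~\ref{FundCtr}) applied to the $\{4,5,\ldots,9\}$-GDDs from Lemma~\ref{PBD5-9} and the $4$-GDDs from Lemma~\ref{4GDD} for the rest), so no new design-existence problem is introduced. Second, one checks that the assembled code has minimum distance $6$ and composition $[2,1,1]$: since a codeword of a $[2,1,1]$-GDC$(6)$ of type $1^{12}2^1$ never meets both adjoined points, any two codewords coming from distinct blocks of Construction~\ref{AdjoinPoints} share at most one coordinate, and with the distance-$6$ property inside each block this forces distance $6$ overall — precisely the configuration Construction~\ref{AdjoinPoints} is built to handle. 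What then remains is the numerical identity that $16$, $28$ and $24t^2+12t$ combine to the Johnson-type bound, that is, the displayed equation.
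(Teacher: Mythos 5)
Your proposal is correct and follows essentially the same route as the paper: lengths $11,23,35$ from Proposition \ref{CCC(6)}, length $47$ by filling the groups of the type $1^{36}11^1$ GDC from Proposition \ref{GDC(6)1t111}, and for $t\ge 4$ adjoining two ideal points to the type $12^t9^1$ GDC of Lemma \ref{GDC12t91(1)} and filling with the type $1^{12}2^1$ GDC and an optimal code of length $11$. Your extra remarks (the explicit evaluation of $U(12t+11,6,[2,1,1])$ and the check that the $12^t9^1$ GDCs have full size $24t^2+12t$) are accurate bookkeeping that the paper leaves implicit.
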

\begin{proof}
For each $t\in \{0,1,2\}$, there exists an optimal
$(12t+11,6,[2,1,1])_4$-code with size $24t^2+40t+16$ by Proposition
\ref{CCC(6)}.

For $t=3$, there exists a $[2,1,1]$-GDC(6) of type $1^{36} 11^1$ by
Proposition \ref{GDC(6)1t111}. Fill in the groups with an optimal
$(11,6,[2,1,1])_4$-code to obtain an optimal
$(47,6,[2,1,1])_4$-code.

For all $t\ge 4$, there exists a $[2,1,1]$-GDC(6) of type $12^t 9^1$ by
Lemma \ref{GDC12t91(1)}. Adjoining $2$ ideal points and
filling in the groups together with the two extra points with a $[2,1,1]$-GDC(6) of type $1^{12} 2^1$
(which exists by Proposition \ref{GDC(6)14_2}) and an optimal
$(11,6,[2,1,1])_4$-code, the result is an optimal
$(12t+11,6,[2,1,1])_4$-code with size $24t^2+40t+16$.
\end{proof}
\vskip 10pt

\begin{theorem}
\label{CCC12t+17(1)} There exists an optimal
$(12t+17,6,[2,1,1])_4$-code with size $24t^2+64t+42$ for all $t \ge
0$.
\end{theorem}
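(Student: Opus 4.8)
The plan is to treat this as the $n\equiv 5\pmod{12}$ counterpart of Theorem~\ref{CCC12t+11(1)}, so that together with it (and with the fact $A_4(5,6,[2,1,1])=1$) every $n\equiv 5\pmod 6$ is settled and a concluding theorem on optimal $(6t+5,6,[2,1,1])_4$-codes follows. Following the template of Theorems~\ref{CCC12t+10(1)} and~\ref{CCC12t+11(1)}, I would dispose of a few small $t$ by direct constructions and treat all larger $t$ uniformly through the family of $[2,1,1]$-GDC($6$)s of type $12^t9^1$, which by Lemma~\ref{GDC12t91(1)} exists for every $t\ge 4$; this is precisely why the case fits into a single theorem here, rather than two as for $n\equiv 4\pmod{12}$, for which the relevant $12^t15^1$ family is only available for $t\le 15$.

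The main step would run as follows. For $t\ge 4$, take a $[2,1,1]$-GDC($6$) of type $12^t9^1$ (of size $12t(2t+1)$), adjoin a new group $H$ of $8$ ideal points, and for each of the $t$ groups of size $12$ fill that group together with $H$ by a $[2,1,1]$-GDC($6$) of type $1^{12}8^1$. A short distance check shows this yields a $[2,1,1]$-GDC($6$) of type $1^{12t}8^19^1$ (each size-$12$ group has been broken into singletons); since no codeword meets both $H$ and the group of size $9$, those two can be merged into one group of size $17$, giving a $[2,1,1]$-GDC($6$) of type $1^{12t}17^1$. Filling the group of size $17$ by an optimal $(17,6,[2,1,1])_4$-code (Proposition~\ref{CCC(6)}) via Construction~\ref{FillGroups} (legitimate since $d=6=2(w-1)$) then produces a code of length $12t+17$; counting, its size is $12t(2t+1)+52t+42=24t^2+64t+42=U(12t+17,6,[2,1,1])$, so it is optimal. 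For $t\in\{0,1,2,3\}$, i.e.\ lengths $17,29,41,53$, the optimal codes would be supplied separately: $t=0$ comes from Proposition~\ref{CCC(6)}, and the three remaining ones from small GDC constructions or computer search.

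The pivotal missing ingredient is therefore a $[2,1,1]$-GDC($6$) of type $1^{12}8^1$ of size $52$ --- which is the largest conceivable value, since adjoining an optimal $(8,6,[2,1,1])_4$-code of size $8$ on its group of size $8$ would give a length-$20$ code of size at most $U(20,6,[2,1,1])=60$. I expect \emph{this} to be the real work: a focused computer search for base codewords developed under a cyclic (or quasi-cyclic) group, in the spirit of Propositions~\ref{GDC(6)gt} and~\ref{GDC(6)14_2}, together with analogous searches for the base lengths $29,41,53$. Should a clean $1^{12}8^1$ construction not materialise, the fallback is to replace the uniform large-$t$ step by the recursive transversal-design scheme of Theorem~\ref{CCC12t+16(2)}: start from a TD$(6,2r)$, apply Construction~\ref{FundCtr} with weight $6$ (using the $[2,1,1]$-GDC($6$)s of types $6^4,6^5,6^6$ from Proposition~\ref{GDC(6)gt}), then adjoin ideal points and fill in so as to reduce length $12t+17$ to a strictly smaller $6y+11=12t'+17$ and finish by induction; the cost is then the usual bookkeeping of checking that the admissible triples $(r,x,y)$ reach every large $t$ and handling a short list of exceptional values by hand.
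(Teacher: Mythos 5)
Your overall architecture (a handful of small $t$ by direct construction, all large $t$ via the $12^t9^1$ family plus adjoined ideal points, sizes summing to $24t^2+64t+42=U(12t+17,6,[2,1,1])$) is sound, and your arithmetic checks out. But the proof as written has a genuine gap: the pivotal ingredient, a $[2,1,1]$-GDC$(6)$ of type $1^{12}8^1$ of size $52$, is never constructed, and you yourself note that $52$ is the extreme value permitted by the bound $U(20,6,[2,1,1])-A_4(8,6,[2,1,1])=60-8=52$, so its existence is a tight, nontrivial claim that cannot simply be presumed; the base lengths $29,41,53$ are likewise deferred to unspecified searches. Until those objects are exhibited, the main case $t\ge 4$ is not proved.

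The paper sidesteps exactly this difficulty by never adjoining more than $2$ ideal points, so that the only ``filling'' GDC it needs is the type $1^{12}2^1$ of size $28$, which it has already constructed (Proposition~\ref{GDC(6)14_2}). Concretely: for $7\le t\le 15$ it takes the $12^t15^1$ family of Proposition~\ref{GDC(6)12tu1}, adjoins $2$ points, fills the $12$-groups with $1^{12}2^1$ and the $15$-group (plus the two points) with the optimal length-$17$ code; for $t\in\{2,4,5,6\}$ it fills the $1^{12t+6}11^1$ GDCs of Proposition~\ref{GDC(6)1t111} with a length-$11$ code; for $t=1$ and $t=3$ it uses the $7^4$ and $13^4$ GDCs plus one ideal point; and for $t\ge 16$ it runs the TD$(6,2r)$ recursion down to lengths $6y+11$ with $y$ odd, which is precisely your fallback. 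So your fallback \emph{is} the paper's proof, but in your write-up it is also left as ``bookkeeping''; either it or the $1^{12}8^1$ search must actually be carried out for the theorem to be established.
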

\begin{proof}
For $t=0$, there exists an optimal $(17,6,[2,1,1])_4$-code with size
$42$ by Proposition \ref{CCC(6)}.

For $t=1$, take a $[2,1,1]$-GDC($6$) of type $7^4$ (which exists by
Proposition \ref{GDC(6)gt}). Adjoin one ideal point and fill in the
groups together with the extra point with an optimal $(8,6,[2,1,1])_4$-code. The result is an
optimal code of length $29=12\times 1+17$.

For each $t\in \{2,4,5,6\}$, there exists a $[2,1,1]$-GDC(6) of type
$1^{12t+6} 11^1$ by Proposition \ref{GDC(6)1t111}. Fill in the
groups with an optimal $(11,6,[2,1,1])_4$-code to obtain an optimal
$(12t+17,6,[2,1,1])_4$-code.

For $t=3$, take a $[2,1,1]$-GDC($6$) of type $13^4$ (which exists by
Proposition \ref{GDC(6)gt}). Adjoin one ideal point and fill in the
groups together with the extra point with an optimal $(14,6,[2,1,1])_4$-code. The result is an
optimal code of length $53=12\times 3+17$.

For each $t\in \{7,8,\ldots,15\}$, there exists a $[2,1,1]$-GDC(6) of
type $12^t 15^1$ by Proposition \ref{GDC(6)12tu1}. Adjoining $2$
ideal points and filling in the groups together with the two extra points with a $[2,1,1]$-GDC(6) of
type $1^{12} 2^1$ and an optimal code of length $17$, the result is
an optimal $(12t+17,6,[2,1,1])_4$-code.

For all $t\ge 16$, take a TD($6,2r$) for $r\ge 4$ and $r\not\in \{5,7,9,11\}$ from Lemma \ref{TD}. Apply
Construction  \ref{FundCtr} with weight $6$ to the points in the
first $4$ groups, $2x$ points in the fifth group, and $y$ points in
the last group. We require that $x=0$ or $x\ge 4$ and $1\le y
\le 31$, $y$ odd. The other points are given weight 0. Note that there
exist $[2,1,1]$-GDC($6$)s of types $6^4$, $6^5$ and $6^6$ by Proposition
\ref{GDC(6)gt}. The result is a $[2,1,1]$-GDC($6$) of type $(12r)^4
(12x)^1 (6y)^1$.
Adjoining $9$ ideal points and filling in the groups together with the nine extra points with
$[2,1,1]$-GDC(6)s of types $12^u 9^1$ for all $u\ge 4$, the result is a
$[2,1,1]$-GDC(6) of type $12^{4r+x} (6y+9)^1$. Adjoin $2$ more
ideal points and fill in the groups together with the two extra points with a $[2,1,1]$-GDC(6) of type
$1^{12} 2^1$ and an optimal code of length $6y+11$ with $1\le y
\le 31$ and $y$ odd (which exists by Theorem \ref{CCC12t+16(1)}).
The result is an optimal $(48r+12x+6y+11,6,[2,1,1])_4$-code, where
$48r+12x+6y+11$ can take any value of $12t+17$ with $t$
greater than $16$.
\end{proof}
\vskip 10pt

\section{Conclusion}
\vskip 10pt

In this paper, we determine almost completely the spectrum with
sizes for optimal quaternary constant-composition codes with
weight four and minimum distances five or six. We summarize our
main results of this paper as follows:

\vskip 10pt
\begin{theorem} For any integer $n \ge 4$
\begin{equation*}
A_4(n,5,[2,1,1])=\begin{cases}
1,&\text{if $n=4$} \\
2,&\text{if $n=5$} \\
6,&\text{if $n=6$} \\
10,&\text{if $n=7$} \\
n \left\lfloor \frac{n-1}{2}\right\rfloor,&\text{if  $n\ge 12$ and
$n \ne 13$.}\end{cases}
\end{equation*}
\begin{equation*}
A_4(8,5,[2,1,1])\ge 18,
\end{equation*}
\begin{equation*}
A_4(9,5,[2,1,1])\ge 27,
\end{equation*}
\begin{equation*}
A_4(10,5,[2,1,1])\ge 36,
\end{equation*}
\begin{equation*}
A_4(11,5,[2,1,1])\ge 48,
\end{equation*}
\begin{equation*}
A_4(13,5,[2,1,1])\ge 72.
\end{equation*}

\end{theorem}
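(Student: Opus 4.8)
The plan is to assemble the final summary theorem entirely from results already established in the body of the paper, since it is a consolidation rather than a fresh argument. First I would recall that the corollary to Lemmas~\ref{bound1}--\ref{bound3} gives the upper bound $A_4(n,5,[2,1,1]) \le U(n,5,[2,1,1]) = n\lfloor (n-1)/2 \rfloor$ for every $n$, so for the main formula it suffices to verify matching lower bounds (i.e. the existence of optimal codes) in each residue class modulo $4$, together with the handful of small exceptional lengths.

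Next I would dispose of the small cases directly: $A_4(4,5,[2,1,1])=1$ and $A_4(5,5,[2,1,1])=2$ come from Lemmas~\ref{CCC(5)4 8} and \ref{CCC(5)5913}, while the equalities for $n=6$ and $n=7$ follow from Lemmas~\ref{CCC(5)610} and \ref{CCC(5)711} once one checks that $U(6,5,[2,1,1])=6$ and $U(7,5,[2,1,1])=10$ are not the right targets and the stated values $6$ and $10$ are in fact optimal (these are quoted there as equalities, not merely lower bounds). The six displayed inequalities for $n\in\{8,9,10,11,13\}$ are exactly the lower-bound statements in Lemmas~\ref{CCC(5)4 8}, \ref{CCC(5)5913}, \ref{CCC(5)610}, and \ref{CCC(5)711}, so they require no further work.

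The core of the proof is the claim that $A_4(n,5,[2,1,1]) = n\lfloor (n-1)/2 \rfloor$ for all $n\ge 12$ with $n\ne 13$. I would split this by residue of $n$ modulo $4$: the case $n\equiv 0$ is Theorem~\ref{CCC4t} (all $t\ge 3$, i.e. $n\ge 12$); the case $n\equiv 1$ is Theorem~\ref{CCC4t+1} (all $t\ge 4$, i.e. $n\ge 17$), so that together with $A_4(13,5,[2,1,1])\ge 72$ being only a lower bound, $n=13$ is correctly excluded while $n=17,21,\dots$ are covered; the case $n\equiv 2$ is Theorems~\ref{CCC4t+2(1)} and \ref{CCC4t+2(2)} (all $t\ge 3$, i.e. $n\ge 14$); and the case $n\equiv 3$ is Theorems~\ref{CCC4t+3(1)} and \ref{CCC4t+3(2)} (all $t\ge 3$, i.e. $n\ge 15$). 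Combining these four families with the small-length checks gives the stated piecewise formula, and the remaining displayed inequalities are quoted verbatim from the earlier lemmas.

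The only genuine subtlety—and the step I would be most careful about—is bookkeeping at the boundary between ``small'' lengths and the general recursive constructions: one must confirm that $n=12$ really is handled (it is, via $t=3$ in Theorem~\ref{CCC4t}, the code itself coming from Proposition~\ref{CCC(5)}), that $n=13$ is genuinely outside the scope of every equality statement (Theorem~\ref{CCC4t+1} starts at $t=4$ and Lemma~\ref{CCC(5)5913} only asserts $\ge 72$), and that no residue class leaves a gap between the smallest $n$ covered by its theorem and the small cases enumerated separately. Since every such $n$ in $[4,11]$ plus $n=13$ is listed explicitly with its value or lower bound, and every $n\ge 12$, $n\ne 13$, falls into one of the four recursive theorems, the union is exhaustive; verifying this exhaustiveness is the main obstacle, and it is purely a matter of matching index ranges rather than of new mathematics.
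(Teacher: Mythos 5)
Your proposal is correct and matches the paper's (implicit) argument exactly: the summary theorem is a consolidation of the Corollary's upper bound $U(n,5,[2,1,1])$ with the small-length Lemmas and the four residue-class Theorems, and your index bookkeeping (in particular that $n=12$ is covered by $t=3$ of the $n\equiv 0$ case while $n=13$ falls outside Theorem~\ref{CCC4t+1}, which starts at $t=4$) is accurate. The only quibble is a wording slip where you write $U(6,5,[2,1,1])=6$ and $U(7,5,[2,1,1])=10$ (in fact $U(6)=12$ and $U(7)=21$); your intended point, that the $U$ bound is not attained for $n\le 7$ and the exact values come from the equalities in the cited lemmas, is nonetheless the right one.
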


\vskip 10pt
\begin{theorem} For any integer $n \ge 4$
\begin{equation*}
A_4(n,6,[2,1,1])=\begin{cases}
1,&\text{if $n=4,5$} \\
4,&\text{if $n=7$} \\
\left\lfloor \frac{n}{2} \left\lfloor \frac{n-1}{3} \right\rfloor
\right\rfloor, &\text{if  $n\ge 6$ and $n \ne 7$.}\end{cases}
\end{equation*}
\end{theorem}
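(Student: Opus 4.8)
The plan is to obtain the formula for $A_4(n,6,[2,1,1])$ by assembling the results of this section: first establish a single universal upper bound, then pin down the three sporadic lengths $n\in\{4,5,7\}$ where the generic formula fails, and finally verify that every remaining length $n\ge 6$ with $n\ne 7$ lies in exactly one residue class modulo $6$ for which an optimal code of the claimed size has already been constructed. The upper bound comes for free from the corollary following Lemma~\ref{bound3} (itself a consequence of Lemmas~\ref{bound3} and~\ref{bound1}): $A_4(n,6,[2,1,1])\le U(n,6,[2,1,1])=\lfloor\frac n2\lfloor\frac{n-1}3\rfloor\rfloor$ for all $n$. So apart from the sporadic lengths it remains only to exhibit codes meeting this bound.

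For $n=4$ and $n=5$ the value $U(n,6,[2,1,1])=2$ is not attained, and I would prove $A_4(n,6,[2,1,1])=1$ directly: any two distinct weight-$4$ codewords $u,v$ of length $n\le 5$ have supports $S_u,S_v$ of size $4$, so $|S_u\cup S_v|=8-|S_u\cap S_v|\le 5$ (since $|S_u\cap S_v|\ge 8-n\ge 3$), and hence $d_H(u,v)\le|S_u\cup S_v|\le 5<6$; therefore such a code has at most one codeword, and the codeword $\langle 0,1,2,3\rangle$ of Lemmas~\ref{CCC(6)4} and~\ref{CCC(6)5} shows $1$ is achieved. For $n=7$, where $A_4(7,6,[2,1,1])=4$ is strictly below $U(7,6,[2,1,1])=7$, this value — both the code and the matching bound — is precisely the content of Lemma~\ref{CCC(6)7}, obtained by direct search, so nothing more is needed there.

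For the generic range $n\ge 6$, $n\ne 7$, I would split on $n\bmod 6$. If $n\equiv 0$, write $n=6t$ with $t\ge 1$ and invoke the theorem producing optimal $(6t,6,[2,1,1])_4$-codes for all $t\ge 1$ (derived from the case $n\equiv 1$ via the shortening Lemma~\ref{CCC6t}). If $n\equiv 1$, then $n\equiv 1$ or $7\pmod{12}$, so $n=12t+1$ or $n=12t+7$; since $n\ne 7$ one has $t\ge 1$, and Theorems~\ref{CCC6t+1(1)}, \ref{CCC6t+1(2)}, \ref{CCC6t+1(3)} cover these. If $n\equiv 2$, write $n=6t+2$ with $t\ge 1$ and combine Theorems~\ref{CCC6t+2(2)} and~\ref{CCC6t+2(3)}, whose $t$-ranges together exhaust all $t\ge 1$. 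If $n\equiv 3$, write $n=6t+3$ with $t\ge 1$ and invoke the theorem producing optimal $(6t+3,6,[2,1,1])_4$-codes for all $t\ge 1$ (derived from $n\equiv 4$ via Lemma~\ref{CCC6t3}). If $n\equiv 4$, then $n=12t+10$ or $n=12t+16$ with $t\ge 0$ (the value $n=4$ being excluded), handled by Theorems~\ref{CCC12t+10(1)}, \ref{CCC12t+16(1)}, \ref{CCC12t+16(2)}. Finally, if $n\equiv 5$, then $n=12t+11$ or $n=12t+17$ with $t\ge 0$ ($n=5$ excluded), handled by Theorems~\ref{CCC12t+11(1)} and~\ref{CCC12t+17(1)}. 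In each case the constructed code has size $U(n,6,[2,1,1])$ and so is optimal; concatenating this with the upper bound yields the stated equality.

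The main obstacle is not any individual construction but the \emph{bookkeeping of coverage}: for each residue class one must check that the union of the finitely many exceptional $t$-values handled case by case and the ``for all $t\ge T$'' families leaves no gap, and in particular that the transversal-design exceptions of Lemma~\ref{TD} and the PBD exceptions of Lemmas~\ref{PBD4-6} and~\ref{PBD5-9}, which were quarantined inside Theorems~\ref{CCC6t+2(3)}, \ref{CCC12t+16(2)} and~\ref{CCC12t+17(1)}, are all absorbed by the explicit small-order codes collected in Proposition~\ref{CCC(6)}. Once this arithmetic is verified, the theorem follows purely by combining the universal upper bound with the appropriate existence statement for each $n$.
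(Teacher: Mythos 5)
Your proposal is correct and is essentially the assembly the paper itself intends: the concluding theorem carries no separate proof precisely because it is the conjunction of the universal upper bound $U(n,6,[2,1,1])$ from the corollary to Lemma~\ref{bound3} with the existence theorems for each residue class modulo $6$, together with the sporadic values at $n\in\{4,5,7\}$ (your direct argument that two weight-$4$ codewords of length at most $5$ are at distance at most $5$ nicely supplies the upper bound of $1$ that the paper leaves implicit). One bookkeeping caveat: as literally stated, Theorems~\ref{CCC6t+2(2)} and~\ref{CCC6t+2(3)} both omit $t=13$ from their ranges, so your claim that their $t$-ranges together exhaust all $t\ge 1$ is justified only because the proof of Theorem~\ref{CCC6t+2(2)} does in fact construct the $t=13$ (length $80$) code even though its statement neglects to list that value.
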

\vskip 5pt

\section*{Acknowledgments}

The authors thank Prof. Olgica Milenkovic, the Associate Editor, Prof.
Helmut B{\"o}lcskei, the Editor-in-Chief of IEEE
Transactions on Information Theory, and the two anonymous
referees for their constructive comments and suggestions that
greatly improved the readability of this article.

\vskip 10pt
\bibliographystyle{IEEEtran}

\newpage

The following information is for referees only, not for publication.\\

\section{Base codewords for CCCs and GDCs with distance $5$ and type $[2,1,1]$}
\vskip 10pt

\begin{proposition}
There exists a $[2,1,1]$-GDC$(5)$ of type $2^t$ with size $2t(t-1)$ for each $t \in \{9,11\}$, which is also an optimal
$(2t,5,[2,1,1])_4$-code.
\end{proposition}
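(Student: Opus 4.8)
The plan is to construct each of these GDCs explicitly by the difference method over the cyclic group $\bbZ_{2t}$, following the pattern used for the $t=8$ case in Proposition~\ref{GDC(5)gt}. For $t\in\{9,11\}$ put $n=2t$, take point set $X=\bbZ_n$, and let $\G=\{\{i,i+t\}:i\in\bbZ_t\}$, a partition of $X$ of type $2^t$. I would exhibit a short list of base codewords $\langle a,b,c,d\rangle\in X^4$ of composition $[2,1,1]$ (positions $a,b$ carrying the symbol $1$, position $c$ the symbol $2$, position $d$ the symbol $3$), split into a multiplier part $P$ and a remainder part $R$; the code $\C$ is generated from these by the prescribed action --- developing by the step $M$ modulo $n$, together with multiplication by a unit $m\in\bbZ_n^{\times}$ of order $s$ on the codewords of $P$. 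The parameters $m,s,M$ and the lists $P,R$ are supplied by a computer search, and I would display them for $n=18$ and $n=22$ in the appendix.

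The verification has three ingredients, each a finite computation. (a) \emph{Size:} one checks that every base codeword has trivial stabiliser under the chosen action, so the orbits are full and pairwise disjoint and $|\C|=2t(t-1)$. (b) \emph{Group condition:} since developing by $M$ and multiplying by a unit both permute the groups $\{i,i+t\}$, it suffices to verify for each base codeword, and each of its multiplier images, that no two of its four coordinates differ by $t$ modulo $n$. (c) \emph{Minimum distance $5$:} for two distinct weight-$4$ codewords of composition $[2,1,1]$ a direct count gives $d_H\ge 5$ unless $2c+e\ge 4$, where $c$ is the number of coordinates at which the two codewords carry equal nonzero symbols and $e$ the number of coordinates lying in both supports but carrying different symbols. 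A matched-symbol coincidence at external difference $\delta\in\bbZ_n\setminus\{0\}$ between the orbit of a base codeword $B_1$ and that of $B_2$ corresponds to a solution of $x-y=\delta$ with $x$ a symbol-$v$ position of $B_1$ and $y$ a symbol-$v$ position of $B_2$ for some $v\in\{1,2,3\}$; thus (c) reduces to checking, for every ordered pair of base codewords (including multiplier images) and every relevant nonzero $\delta$, that the matched-symbol count $c(\delta)$ and the common-support count $c(\delta)+e(\delta)$ satisfy $2c(\delta)+e(\delta)\le 3$.

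The only part needing real work is finding base-codeword sets that pass all of (a)--(c) simultaneously; this is exactly what the computer search provides, and once $P,R$ are fixed the checks are entirely mechanical, involving finitely many difference equations in $\bbZ_{18}$ and $\bbZ_{22}$. Finally, a $[2,1,1]$-GDC$(5)$ of type $2^t$ is in particular a code of length $2t$, minimum distance at least $5$, and composition $[2,1,1]$; since its size $2t(t-1)$ equals the upper bound $U(2t,5,[2,1,1])=2t\lfloor(2t-1)/2\rfloor$, the underlying code is optimal, which gives the second assertion. (Alternatively these objects could be obtained from super-simple Room frames of type $2^t$ --- equivalently, from suitable strong frame starters as in Lemmas~\ref{strongframestarter} and~\ref{framestarter} --- but the direct construction is more economical here.)
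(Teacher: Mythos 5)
Your proposal follows exactly the paper's approach: the paper's proof of this proposition is precisely a difference-method construction over $\bbZ_{2t}$ with groups $\{i,i+t\}$, base codewords split into a multiplier part $P$ (expanded by powers of $m$) and a remainder part $R$, all developed by $+M$ modulo $n=2t$, and your verification conditions (full disjoint orbits giving size $2t(t-1)$, no two coordinates of any generated codeword differing by $t$, and the criterion $d_H=8-2c-e\ge 5$ iff $2c+e\le 3$, plus the observation that $2t(t-1)=U(2t,5,[2,1,1])$ forces optimality) are the correct ones. The only thing the paper adds is the explicit witness data --- for $t=9$: $n=18$, $m=5$, $s=2$, $M=1$ with one $P$-codeword and six $R$-codewords; for $t=11$: $n=22$, $m=3$, $s=5$, $M=2$ with four $P$-codewords --- which is the entire substantive content of such a computer-search existence proof, so your argument is complete only once that data is actually exhibited.
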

\begin{proof}For each $t \in \{9,11\}$, let $X_t=\bbZ_{2t}$,
$\G_t=\{\{i,t+i\}:i \in \bbZ_t\}$ and  $\C_t$ be the set of cyclic (or quasi-cyclic) shifts of the vectors generated by the following vectors respectively. Then $(X_t,\G_t,\C_t)$ is a
$[2,1,1]$-GDC$(5)$ of type $2^t$ with size $2t(t-1)$, where
\begin{enumerate}

\item $t=9$, $n=18$, $m=5$, $s=2$, $M=1$
\begin{equation*}
\begin{array}{cccc}
P:& \langle 0, 2, 1, 15\rangle\\
R:& \langle 0, 3, 15, 17\rangle & \langle 0, 11, 14, 1\rangle & \langle 0, 13, 6, 7\rangle\\
&\langle 0, 12, 4, 10\rangle & \langle 0, 1, 8, 5\rangle & \langle 0, 4, 2, 6\rangle\\
\end{array}
\end{equation*}

\item $t=11$, $n=22$, $m=3$, $s=5$, $M=2$
\begin{equation*}
\begin{array}{cccc}
P:& \langle 0, 1, 2, 3\rangle & \langle 0, 2, 7, 4\rangle & \langle 0, 7, 20, 19\rangle\\
&\langle 1, 3, 5, 8\rangle.\\
\end{array}
\end{equation*}
\end{enumerate}
\end{proof}
\vskip 10pt

\begin{proposition}
There exists a $[2,1,1]$-GDC$(5)$ of type $3^t$ with size $\frac{9t(t-1)}{2}$ for each $t \in \{7,9\}$.
\end{proposition}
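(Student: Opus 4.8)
The plan is to reproduce, for the two new parameter pairs, exactly the difference-method construction already described for Proposition~\ref{GDC(5)gt}. First I would take the point set $X_t=\bbZ_{3t}$ and the groups $\G_t=\{\{i,t+i,2t+i\}:i\in\bbZ_t\}$, i.e.\ the $t$ cosets of the unique subgroup of order $3$ in $\bbZ_{3t}$; this choice is essentially forced, since the two symmetry operations to be used---translation by $\bbZ_{3t}$ and multiplication by a unit $m$---both permute this family of cosets. Next, for each of $t=7$ and $t=9$, I would run a computer search for a short list of base codewords $\langle a_1,a_2,a_3,a_4\rangle$ over $\bbZ_{3t}$ of composition $[2,1,1]$, organised in the customary way into a set $P$ (each member of which is first expanded to its orbit of size $s$ under the multiplier $m$) together with a remainder set $R$, chosen so that developing everything additively modulo $3t$ yields precisely $\frac{9t(t-1)}{2}$ codewords: $189$ codewords in $\bbZ_{21}$ for $t=7$, and $324$ codewords in $\bbZ_{27}$ for $t=9$. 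It is worth noting that $\frac{9t(t-1)}{2}$ is the largest conceivable size here, since two distinct codewords cannot share their two value-$1$ coordinates and there are exactly $\frac{9t(t-1)}{2}$ unordered pairs of points lying in distinct groups; so the construction must realise every such pair exactly once.

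Given the base codewords, the verification is a finite, mechanical check with three ingredients, parallel to the case analysis in the proof of Theorem~\ref{SSRStoCodes}. (i) Composition: translating or multiplying coordinate positions merely permutes them, so every developed codeword still has composition $[2,1,1]$. (ii) Group condition: one checks that the four coordinates of each base codeword are pairwise incongruent modulo $t$; since both symmetry operations act on the set of cosets of $\{0,t,2t\}$, the same incongruence persists throughout each orbit, giving $\|\vu|_{G}\|\le 1$ for every group $G$. (iii) Minimum distance: two weight-four codewords of composition $[2,1,1]$ lie at Hamming distance less than $5$ precisely when their supports meet in at least two coordinates carrying matching values, and rewriting this condition for translates and $m$-multiples of the base codewords produces a bounded list of forbidden ``coincidence'' equations over $\bbZ_{3t}$; one confirms directly that none of them is solvable. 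It is exactly this family of non-solvability conditions that the search is designed to enforce.

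The real work, and the only genuine obstacle, is the search itself: one must exhibit base codewords whose full development is simultaneously a packing (no pair at distance below $5$) and of the exact admissible size $\frac{9t(t-1)}{2}$, so that no orbit collapses or overlaps another. In practice I would fix the cyclic (or quasi-cyclic) structure, enumerate candidate base blocks subject to the composition and group constraints, and prune aggressively using the difference conditions of step~(iii); once a valid configuration is located, the proposition follows at once by displaying it. Accordingly, in the final text I would present the construction as above and then simply tabulate $n$, $m$, $s$, $M$ together with the lists $P$ and $R$ for $t=7$ and $t=9$, deferring the routine distance verification.
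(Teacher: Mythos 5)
Your plan is the same construction the paper uses: point set $\bbZ_{3t}$, groups equal to the cosets of the order-$3$ subgroup, and a set of base codewords organised into a multiplier-expanded part $P$ and a remainder $R$, developed additively modulo $3t$ to give $189$ codewords for $t=7$ and $324$ for $t=9$ (the paper's data, $n=21$, $m=5$, $s=2$ with $|P|=2$, $|R|=5$, and $n=27$, $m=2$, $s=3$ with $|P|=2$, $|R|=6$, indeed yield $(2\cdot 2+5)\cdot 21=189$ and $(2\cdot 3+6)\cdot 27=324$). Your side observation that $\frac{9t(t-1)}{2}$ is the number of cross-group pairs and hence an upper bound is correct and a nice addition. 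However, two points keep this from being a complete proof. First, the entire content of the paper's proof is the explicit list of base codewords; until you exhibit them, you have a search strategy, not an existence proof, and nothing in the surrounding theory guarantees a priori that such a configuration exists for these parameters.

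Second, and more substantively, your distance criterion in step (iii) is wrong as stated. For two weight-$4$ codewords $\vu,\vv$ one has $d_H(\vu,\vv)=8-|\supp(\vu)\cap\supp(\vv)|-a$, where $a$ is the number of coordinates on which the two codewords agree with a nonzero value. Thus $d_H<5$ is equivalent to $|\supp(\vu)\cap\supp(\vv)|+a\ge 4$, not to $a\ge 2$. Having at least two matching nonzero values is sufficient for $d_H<5$ but not necessary: for example $\vu=\langle x_1,x_2,x_3,x_4\rangle$ and $\vv=\langle x_3,x_4,x_1,x_2\rangle$ have identical supports, no coordinate with matching values, and distance $4$. (The corresponding case analysis in the proof of Theorem \ref{SSRStoCodes} is only exhaustive because super-simplicity already caps the support intersection at $2$; here you have no such a priori cap.) A search that prunes only on your family of coincidence equations could therefore return a candidate whose developed code contains pairs at distance $3$ or $4$ with large support overlap but few value agreements. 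You need to forbid all solutions of $|\supp(\vu)\cap\supp(\vv)|+a\ge 4$, or simply check all pairwise distances directly, and then supply the resulting tables of base codewords.
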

\begin{proof}For each $t \in \{7,9\}$, let $X_t=\bbZ_{3t}$,
 $\G_t=\{\{i,t+i,2t+i\}:i \in \bbZ_t\}$ and  $\C_t$ be the set of cyclic shifts of the vectors generated by the following vectors respectively. Then $(X_t,\G_t,\C_t)$
is a $[2,1,1]$-GDC$(5)$ of type $3^t$ with size $\frac{9t(t-1)}{2}$,
where
\begin{enumerate}
\item $t=7$, $n=21$, $m=5$, $s=2$, $M=1$
\begin{equation*}
\begin{array}{cccc}
P:& \langle 0, 13, 16, 18\rangle & \langle 0, 12, 4, 8\rangle\\
R:& \langle 0, 10, 8, 2\rangle & \langle 0, 17, 1, 12\rangle & \langle 0, 15, 12, 3\rangle\\
&\langle 0, 1, 10, 11\rangle & \langle 0, 5, 11, 20\rangle\\
\end{array}
\end{equation*}

\item $t=9$, $n=27$, $m=2$, $s=3$, $M=1$
\begin{equation*}
\begin{array}{cccc}
P:& \langle 0, 1, 23, 6\rangle & \langle 0, 8, 12, 7\rangle\\
R:& \langle 0, 14, 15, 17\rangle & \langle 0, 21, 14, 2\rangle & \langle 0, 17, 3, 11\rangle\\
&\langle 0, 20, 26, 15\rangle & \langle 0, 15, 25, 19\rangle & \langle 0, 3, 5, 16\rangle.\\
\end{array}
\end{equation*}
\end{enumerate}
\end{proof}
\vskip 10pt

\begin{proposition}
There exists a $[2,1,1]$-GDC$(5)$ of type $4^t$ with size $8t(t-1)$ for each $t \in \{5,6,7,8,9,11\}$.
\end{proposition}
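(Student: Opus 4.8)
The plan is to build all six GDCs directly over a cyclic group, by the difference method already used in the two preceding propositions. For each $t\in\{5,6,7,8,9,11\}$ I take $X_t=\bbZ_{4t}$ and let $\G_t$ be the partition of $X_t$ into the $t$ cosets $\{i,\,t+i,\,2t+i,\,3t+i\}$, $i\in\bbZ_t$, of the subgroup of order $4$; two points then lie in the same group exactly when they are congruent modulo $t$. The code $\C_t$ is generated from a short list of base $4$-tuples $\langle a_1,a_2,a_3,a_4\rangle$, encoding composition-$[2,1,1]$ codewords in the $w$-tuple notation of the paper, by developing each base tuple under the translations $x\mapsto x+M$ modulo $4t$, those tuples in a distinguished subset $P$ being additionally multiplied by $m^{0},m^{1},\dots,m^{s-1}$ for a unit $m$ of $\bbZ_{4t}$ with $m\equiv\pm1\pmod 4$; the last condition guarantees that multiplication by $m$ fixes $\G_t$ setwise. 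The body of the proof is then just the six tables of parameters $(n,m,s,M,P,R)$, with $t\in\{5,6,7\}$ needing only a handful of base codewords and $t=11$ giving the largest table.

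For each table, two properties have to be checked, and both are finite verifications preserved by the generating group. First, $\|\vu|_{G}\|\le 1$ holds as soon as the four coordinates of every base codeword are pairwise incongruent modulo $t$, since both a translation and multiplication by the unit $m$ respect congruence mod $t$. Second, for the distance: two weight-four codewords of composition $[2,1,1]$, written as $4$-tuples, satisfy $d_H = 8-k-a$, where $k$ is the number of common coordinate positions and $a\le k$ the number of those positions carrying equal symbols, so $d_H\ge 5$ is equivalent to $k+a\le 3$; hence one must check that no translate or multiplier-image of one base codeword meets another base codeword, or a nontrivial image of itself under the generating group, in a configuration with $k+a\ge 4$. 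This is exactly a condition on the multiset of signed coordinate differences, each tagged with the pair of symbols it carries. Finally, a count confirms the orbit has exactly $8t(t-1)=\frac{1}{2}\cdot 4^{2}\cdot t(t-1)$ codewords, matching the claimed size, so $(X_t,\G_t,\C_t)$ is the asserted $[2,1,1]$-GDC$(5)$ of type $4^t$.

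The one real obstacle is producing the base tables themselves: this is a backtracking search over $4$-tuples in $\bbZ_{4t}$ whose orbit under the prescribed translation-plus-multiplier group is simultaneously group-respecting and of distance $5$, and the search space grows quickly with $t$. The purpose of the multiplier $m$ and the subset $P$ is precisely to reduce the number of free base codewords so that the search is feasible; once a consistent table has been found, re-verifying the two conditions above is mechanical and, for the smaller values of $t$, can even be done by hand.
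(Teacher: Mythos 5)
Your framework is exactly the one the paper uses: point set $\bbZ_{4t}$, groups equal to the cosets of the order-$4$ subgroup $\{0,t,2t,3t\}$, and a code generated from a short list of base $4$-tuples by translation modulo $4t$ together with a multiplier acting on a distinguished subset $P$. Your two verification criteria are also right: the group condition reduces to the four entries of each base codeword being pairwise incongruent modulo $t$, and it is preserved by translations and by multiplication by any unit of $\bbZ_{4t}$ (the subgroup $t\bbZ_{4t}$ is fixed by every unit, so your extra hypothesis $m\equiv\pm1\pmod{4}$ is automatic for odd $m$ and not actually needed); and the distance condition $d_H\ge 5$ is correctly reformulated as $k+a\le 3$ and hence as a condition on symbol-tagged differences that only needs checking on orbit representatives. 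The orbit count $8t(t-1)=\tfrac{1}{2}\cdot 4^2\cdot t(t-1)$ is also right.

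The gap is that the proposition is a bare existence claim whose entire content is the data: for each of the six values of $t$ one must actually exhibit a parameter set $(n,m,s,M,P,R)$ and verify it. You describe the backtracking search that would produce such tables and assert it is feasible, but you do not produce a single table, so nothing has been proved; there is no a priori argument that the search must succeed for these particular $t$. The paper's proof consists precisely of the six explicit tables --- for instance, for $t=11$ it is the four base codewords $\langle 0,1,2,4\rangle$, $\langle 0,8,16,25\rangle$, $\langle 0,18,41,14\rangle$, $\langle 0,21,19,39\rangle$ developed under multiplication by $3$ (five times) and all translations modulo $44$. Until such tables are supplied and checked for each $t\in\{5,6,7,8,9,11\}$, your write-up establishes only a search strategy, not the existence of the GDCs. (A minor side point: in the paper the $t=11$ table is in fact the smallest, with four base codewords, while $t=9$ requires twelve.)
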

\begin{proof}For each $t \in \{5,6,7,8,9,11\}$, let $X_t=\bbZ_{4t}$,
 $\G_t=\{\{i,t+i,2t+i,3t+i\}:i \in \bbZ_t\}$  and  $\C_t$ be the set of cyclic shifts of the vectors generated by the following vectors respectively.  Then
$(X_t,\G_t,\C_t)$ is a $[2,1,1]$-GDC$(5)$ of type $4^t$ with size
$8t(t-1)$, where
\begin{enumerate}
\item $t=5$, $n=20$, $m=3$, $s=2$, $M=1$
\begin{equation*}
\begin{array}{cccc}
P:& \langle 0, 4, 11, 13\rangle&\langle 0, 6, 4, 2\rangle\\
R:& \langle 0, 9, 6, 3\rangle & \langle 0, 1, 9, 12\rangle & \langle 0, 3, 2, 1\rangle\\
&\langle 0, 7, 3, 4\rangle\\
\end{array}
\end{equation*}

\item $t=6$, $n=24$, $m=29$, $s=2$, $M=1$
\begin{equation*}
\begin{array}{cccc}
P:& \langle 0, 11, 9, 10\rangle & \langle 0, 10, 23, 21\rangle\\
R:& \langle 0, 15, 16, 8\rangle & \langle 0, 1, 4, 15\rangle & \langle 0, 20, 7, 16\rangle\\
&\langle 0, 3, 8, 4\rangle & \langle 0, 5, 20, 3\rangle & \langle 0, 8, 10, 13\rangle\\
\end{array}
\end{equation*}

\item $t=7$, $n=28$, $m=3$, $s=3$, $M=1$
\begin{equation*}
\begin{array}{cccc}
P:& \langle 0, 8, 5, 4\rangle & \langle 0, 13, 23, 15\rangle & \langle 0, 27, 3, 25\rangle\\
R:& \langle 0, 2, 22, 11\rangle & \langle 0, 6, 24, 5\rangle & \langle 0, 10, 26, 13\rangle\\
\end{array}
\end{equation*}

\item $t=8$, $n=32$, $m=5$, $s=3$, $M=1$
\begin{equation*}
\begin{array}{cccc}
P:& \langle 0, 3, 14, 21\rangle & \langle 0, 1, 27, 6\rangle\\
R:& \langle 0, 19, 12, 7\rangle & \langle 0, 4, 1, 19\rangle & \langle 0, 22, 5, 17\rangle\\
&\langle 0, 14, 31, 28\rangle & \langle 0, 20, 9, 11\rangle & \langle 0, 2, 20, 1\rangle\\
&\langle 0, 6, 28, 10\rangle & \langle 0, 9, 13, 12\rangle\\
\end{array}
\end{equation*}

\item $t=9$, $n=36$, $m=11$, $s=2$, $M=1$
\begin{equation*}
\begin{array}{cccc}
P:& \langle 0, 5, 21, 19\rangle & \langle 0, 32, 30, 7\rangle & \langle 0, 13, 33, 2\rangle\\
&\langle 0, 14, 24, 4\rangle\\
R:& \langle 0, 3, 31, 6\rangle & \langle 0, 34, 35, 30\rangle & \langle 0, 21, 26, 20\rangle\\
&\langle 0, 24, 13, 16\rangle & \langle 0, 6, 17, 21\rangle & \langle 0, 7, 29, 31\rangle\\
&\langle 0, 11, 19, 12\rangle & \langle 0, 16, 23, 33\rangle\\
\end{array}
\end{equation*}

\item $t=11$, $n=44$, $m=3$, $s=5$, $M=1$
\begin{equation*}
\begin{array}{cccc}
P:& \langle 0, 1, 2, 4\rangle & \langle 0, 8, 16, 25\rangle & \langle 0, 18, 41, 14\rangle\\
&\langle 0, 21, 19, 39\rangle.\\
\end{array}
\end{equation*}
\end{enumerate}
\end{proof}
\vskip 10pt

\begin{proposition}
There exists a $[2,1,1]$-GDC$(5)$ of type $6^5$ with size $360$.
\end{proposition}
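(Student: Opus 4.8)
The plan is to use the same cyclic difference construction employed in Propositions 6.1--6.3. I would take the point set $X=\bbZ_{30}$ and let $\G$ consist of the five cosets of the unique subgroup $H$ of order $6$ in $\bbZ_{30}$, that is, $\G=\{\{i,5+i,10+i,15+i,20+i,25+i\}:i\in\bbZ_{5}\}$; this realises the group partition of type $6^{5}$. I would then exhibit a short list of base codewords of composition $[2,1,1]$ together with development parameters $m,s,M$ (as in the earlier propositions, split into a part $P$ whose codewords are first multiplied by the powers $m^{0},\dots,m^{s-1}$ and a remaining part $R$), and set $\C$ equal to the union of the orbits obtained by developing all of these codewords through $+M$ modulo $30$. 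Since $360=12\cdot 30$, the list must contribute exactly twelve full orbits of length $30$; with $M=1$ one can, for instance, take a couple of codewords in $P$ developed under an $m$ of order $s$ together with enough codewords in $R$ to reach the count $12$. A natural choice for $m$ is a unit of $\bbZ_{30}$ fixing $H$ setwise, so that multiplication by $m$ permutes the groups and the group structure is respected throughout.

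Granting such a list, three things need to be verified. \emph{Group condition:} each base codeword $\langle a_{1},a_{2},a_{3},a_{4}\rangle$ must have $a_{1},a_{2},a_{3},a_{4}$ pairwise distinct modulo $5$, so that its four coordinates lie in four distinct groups; this property is preserved by translation and by multiplication by $m$. \emph{Size:} one checks that each of the twelve contributing orbits has full length $30$ and that the orbits are pairwise disjoint, so that $|\C|=360=\frac{g^{2}t(t-1)}{2}$ with $g=6$, $t=5$. \emph{Minimum distance:} for two distinct $[2,1,1]$ codewords $\vu,\vv$ of length $4$ one has $d_{H}(\vu,\vv)=8-a-e$, where $a=|\supp(\vu)\cap\supp(\vv)|$ and $e$ is the number of those common coordinates on which $\vu$ and $\vv$ take the same value, so $d_{H}(\vu,\vv)\ge 5$ is equivalent to $a+e\le 3$. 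Because $\C$ is invariant under both translation by $M$ and multiplication by $m$ on the coordinate set, it suffices to test $a+e\le 3$ between each generating codeword (including its $m$-multiples) and every $\bbZ_{30}$-translate of each generating codeword; this reduces to a finite check on the mixed differences of the list, i.e.\ the differences of the positions carrying each prescribed pair of symbols.

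I expect the main obstacle to be the search itself: locating twelve orbits' worth of base codewords over $\bbZ_{30}$ that avoid \emph{every} forbidden configuration $a+e\ge 4$ while still covering exactly $360$ distinct codewords. This is carried out by computer, exactly as for the smaller types in Propositions 6.1--6.4, and once a valid list is in hand the three checks above are routine --- indeed the same difference computation that certifies $d_{H}\ge 5$ also certifies that no two orbits coincide, so the size is automatically $360$. I would then record the outcome in the compact form used elsewhere, listing $n=30$ together with the values of $m$, $s$, $M$, $P$ and $R$.
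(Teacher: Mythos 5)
Your framework is exactly the one the paper uses: $X=\bbZ_{30}$, the five cosets of the order-$6$ subgroup as the groups, twelve full $\bbZ_{30}$-orbits of base codewords split into a multiplier part $P$ and a remainder part $R$, and your three checks (coordinates pairwise distinct modulo $5$, orbit count $12\times 30=360$, and the condition $a+e\le 3$ being equivalent to $d_H\ge 5$ for weight-$4$ words) are all stated correctly. But the proposal stops exactly where the proof begins. For a statement of the form ``there exists a GDC of type $6^5$ with size $360$,'' the entire mathematical content is the explicit witness; describing the shape of a witness and observing that a computer could search for one does not establish that one exists. Nothing in your argument rules out the possibility that the search comes up empty, and you concede as much yourself when you call the search ``the main obstacle.''

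The paper supplies the witness (Proposition 6.4 of the appendix): $n=30$, $m=13$, $s=2$, $M=1$, with $P$ consisting of $\langle 0,9,7,3\rangle$ and $\langle 0,2,29,28\rangle$, and $R$ consisting of $\langle 0,18,26,29\rangle$, $\langle 0,13,22,19\rangle$, $\langle 0,11,14,2\rangle$, $\langle 0,14,16,27\rangle$, $\langle 0,7,13,14\rangle$, $\langle 0,22,11,23\rangle$, $\langle 0,29,23,17\rangle$, $\langle 0,6,18,22\rangle$; the two $P$-codewords each contribute $s=2$ multiplier images, which together with the eight $R$-codewords yields the required twelve base codewords and hence twelve translation orbits of length $30$. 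Your verification scheme, applied to this explicit list, is what actually closes the argument; without the list there is no proof.
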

\begin{proof}Let $X=\bbZ_{30}$,  $\G=\{\{i,5+i,10+i,15+i,20+i,25+i\}:i \in \bbZ_5\}$
 and  $\C$ be the set of cyclic  shifts of the vectors generated by the following vectors. Then $(X,\G,\C)$
is a $[2,1,1]$-GDC$(5)$ of type $6^5$ with size $360$, where $n=30$,
$m=13$, $s=2$, $M=1$ and
\begin{equation*}
\begin{array}{cccc}
P:& \langle 0, 9, 7, 3\rangle & \langle 0, 2, 29, 28\rangle\\
R:& \langle 0, 18, 26, 29\rangle & \langle 0, 13, 22, 19\rangle & \langle 0, 11, 14, 2\rangle\\
&\langle 0, 14, 16, 27\rangle & \langle 0, 7, 13, 14\rangle & \langle0, 22, 11, 23\rangle\\
&\langle 0, 29, 23, 17\rangle & \langle 0, 6, 18, 22\rangle.\\
\end{array}
\end{equation*}
\end{proof}
\vskip 10pt

\begin{proposition}
There exists a $[2,1,1]$-GDC$(5)$ of type $4^t2^1$
with size $8t^2$ for each $t \in \{5,6\}$.
\end{proposition}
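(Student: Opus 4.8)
The plan is to exhibit each of the two required codes explicitly by the difference method, in exactly the style of Propositions 6.1--6.4 above. For $t=5$ I would take $X_5=\bbZ_{20}\cup\{\infty_1,\infty_2\}$ and for $t=6$ take $X_6=\bbZ_{24}\cup\{\infty_1,\infty_2\}$; in both cases the group of size $2$ is $\{\infty_1,\infty_2\}$, while the $t$ groups of size $4$ are the cosets $\{i,\,t+i,\,2t+i,\,3t+i\}$, $i\in\bbZ_t$, of the unique order-$4$ subgroup of $\bbZ_{4t}$ (so $\{i,5+i,10+i,15+i\}$ when $t=5$ and $\{i,6+i,12+i,18+i\}$ when $t=6$). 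A short list of base codewords $\langle a_1,a_2,a_3,a_4\rangle$ (positions $a_1,a_2$ of value $1$, $a_3$ of value $2$, $a_4$ of value $3$) is then developed: those in a part $R$ by the translations $x\mapsto x+M$ of $\bbZ_{4t}$ with $\infty_1,\infty_2$ held fixed, and those in a part $P$ additionally by multiplication by a unit $m$ of $\bbZ_{4t}$ of order $s$ (possibly with $\{\infty_1,\infty_2\}$ permuted by $m$). Here one uses that multiplication by any unit of $\bbZ_{4t}$ maps the order-$4$ subgroup to itself and hence permutes the coset partition, so the group structure is respected. Thus the construction reduces to producing the tuple $(n,m,s,M,P,R)$ with $n=4t$ for $t=5$ and for $t=6$.

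Given such lists, I would verify the three defining properties of a $[2,1,1]$-GDC$(5)$. (i) Each base codeword has composition $[2,1,1]$ and meets each of the $t+1$ groups in at most one coordinate; since every translation and every multiplication by $m$ permutes the group partition and preserves values, all developed codewords inherit this. (ii) The size is $8t^2$: each member of $R$ generates a full orbit of length $4t$ and each member of $P$ a full orbit of length $s\cdot 4t$, so one checks that the orbits have full length and are pairwise disjoint, whence $|R|\cdot 4t+|P|\cdot s\cdot 4t=8t^2$, i.e. $|R|+s|P|=2t$. (iii) Minimum distance $5$: by the same bookkeeping as in the proof of Theorem \ref{SSRStoCodes}, two codewords of composition $[2,1,1]$ are at Hamming distance less than $5$ exactly when their supports coincide, or they share three coordinates of which at most two carry different values in the two words, or they share two coordinates both of which carry the same value in the two words; by translation- and $m$-invariance of the code, ruling these out is a finite check on the differences and partial coincidence patterns generated by the few base codewords.

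The hard part is (iii) together with the size requirement: one must find base codewords whose developed images realize all $8t^2$ distinct codewords while avoiding every forbidden coincidence, and there is no shortcut beyond a constrained computer search. Once a valid $(n,m,s,M,P,R)$ has been found for $t=5$ and for $t=6$ — which I would display in the tabular $\langle\cdot,\cdot,\cdot,\cdot\rangle$ format used in Propositions 6.1--6.4 — the verifications in (i)--(iii) become entirely routine, and the two GDCs together establish Proposition~\ref{GDC(5)4t21}.
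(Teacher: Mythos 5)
Your plan is essentially the paper's own approach: the proof of this proposition is precisely an explicit difference-method construction (given in the appendix as Proposition 6.5), with point set $X_t=\bbZ_{4t}\cup\{a,b\}$, long groups the cosets $\{i,t+i,2t+i,3t+i\}$ of the order-$4$ subgroup, the short group $\{a,b\}$ held fixed by the automorphisms, and parameters $(m,s,M)=(3,2,2)$ for $t=5$ and $(5,2,2)$ for $t=6$ together with explicit lists $P$ and $R$. Your verification criteria are also correct, including the characterization of when two weight-$4$ words of composition $[2,1,1]$ are at distance less than $5$ in terms of the size of the support intersection and the number of agreeing values on it.

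Two caveats. First, as you concede, the entire mathematical content of a statement of this kind is the explicit list of base codewords; a description of the search space and of the checks one would run does not establish existence, so the proposal is a correct plan rather than a proof until the tuples $(n,m,s,M,P,R)$ are actually displayed and verified. Second, a small bookkeeping slip: the paper develops by $+M$ with $M=2$, so each base codeword has orbit length $2t$ (not $4t$), and the size identity is $|R|+s|P|=4t$ — indeed $12+2\cdot 4=20$ for $t=5$ and $16+2\cdot 4=24$ for $t=6$ — whereas your identity $|R|+s|P|=2t$ presupposes development by the full translation group $\bbZ_{4t}$. This does not affect the soundness of the method, only the expected number of base codewords one must search for.
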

\begin{proof}
For each $t \in \{5,6\}$, let $X_t=\bbZ_{4t} \cup \{a,b\}$,
$\G_t=\{\{i,t+i,2t+i,3t+i\}:i \in \bbZ_t\} \cup \{\{a,b\}\}$  and  $\C_t$ be the set of quasi-cyclic shifts of the vectors generated by the following vectors respectively.  Here, the elements $a,b$ keep fixed under the action of the automorphism group. Then
$(X_t,\G_t,\C_t)$ is a $[2,1,1]$-GDC$(5)$ of type $4^t2^1$ with size
$8t^2$, where
\begin{enumerate}

\item $t=5$, $m=3$, $s=2$, $M=2$
\begin{equation*}
\begin{array}{cccc}
P:& \langle 1, 19, 13, 7\rangle & \langle 1, 5, 8, 14\rangle & \langle 0, 4, 1, 7\rangle\\
&\langle 0, 2, 14, 8\rangle\\
R:& \langle 0, 17, a, 11\rangle & \langle 1, 12, 5, 4\rangle & \langle 0, 19, 7, 16\rangle\\
&\langle 1, 18, 7, a\rangle & \langle 1, 14, 0, b\rangle & \langle1, 8, b, 2\rangle\\
&\langle b, 0, 8, 19\rangle & \langle 0, 1, 18, 17\rangle & \langle b, 1, 19, 12\rangle\\
&\langle a, 0, 19, 2\rangle & \langle a, 1, 12, 13\rangle & \langle0, 11, 4, 13\rangle\\
\end{array}
\end{equation*}

\item $t=6$, $m=5$, $s=2$, $M=2$
\begin{equation*}
\begin{array}{cccc}
P:& \langle 0, 23, 10, 14\rangle & \langle 0, 13, 8, 11\rangle & \langle 0, 22, 7, 15\rangle\\
&\langle 0, 5, 15, 10\rangle\\
R:& \langle 0, 7, 4, 23\rangle & \langle a, 0, 20, 21\rangle & \langle 0, 20, 1, 4\rangle\\
&\langle 1, 23, 14, 3\rangle & \langle 1, 14, 9, 10\rangle & \langle 1, 17, 21, 12\rangle\\
&\langle b, 0, 23, 19\rangle & \langle 1, 22, 15, b\rangle & \langle 0, 8, 22, 9\rangle\\
&\langle 1, 10, 23, a\rangle & \langle 1, 11, 4, 21\rangle & \langle a, 1, 17, 14\rangle\\
&\langle b, 1, 10, 0\rangle & \langle 0, 21, b, 5\rangle & \langle 0, 9, a, 16\rangle\\
&\langle 1, 5, 6, 22\rangle.\\
\end{array}
\end{equation*}
\end{enumerate}
\end{proof}
\vskip 10pt

\begin{proposition}
$A_4(n,5,[2,1,1])$  =  $U(n,5,[2,1,1])$ for each $n \in
\{12,14,15,17,20,28,30,36,38,44,46,52,54,62,$
$68,70,76,78,86,92,94,110,126,134\}$.
\end{proposition}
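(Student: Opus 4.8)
The plan is to dispatch each of the twenty-four lengths individually by exhibiting an explicit optimal $(n,5,[2,1,1])_4$-code, built by the difference (multiplier) method over the cyclic group $\bbZ_n$; these are precisely the lengths not already handled by the starter constructions of Proposition~\ref{CCC(5s)}, so a direct search is what is called for. For a fixed $n$ I would search for parameters $m,s,M$ together with two small sets $P,R$ of base codewords, each written in the form $\langle i,j,r,c\rangle$ with coordinates $i,j$ carrying the value $1$, coordinate $r$ the value $2$, and coordinate $c$ the value $3$, so that the code obtained by (i) replacing each $\langle\cdot\rangle\in P$ by its $s$ multiplier images $m^{\ell}\langle\cdot\rangle$, $0\le \ell\le s-1$, and then (ii) developing all of these together with all of $R$ through the translations $x\mapsto x+M$ modulo $n$, has composition $[2,1,1]$, minimum distance $5$, and size $U(n,5,[2,1,1])=n\lfloor(n-1)/2\rfloor$. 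A few of the larger $n$ could alternatively be assembled from a $[2,1,1]$-GDC$(5)$ of type $g^t$ with $gt=n$ taken from Proposition~\ref{GDC(5)gt} (possibly after inflating by a transversal design via Construction~\ref{Inflation} and Lemma~\ref{TD}) and then filling the groups with a smaller optimal code; but filling groups generally sacrifices optimality here, so the uniform direct construction is the route I would actually carry out and record.

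Given candidate base codewords, the verification for each $n$ splits into three routine parts. \emph{Composition}: each base codeword has type $[2,1,1]$, and both $x\mapsto mx$ (with $m$ a unit of $\bbZ_n$) and $x\mapsto x+M$ act as coordinate permutations, hence preserve composition, so every developed codeword has composition $[2,1,1]$. \emph{Size}: one checks that every orbit has full length and that distinct orbits are disjoint, so that the codeword count is $(n/\gcd(M,n))(s|P|+|R|)$; the base sets are chosen precisely so that this equals $n\lfloor(n-1)/2\rfloor$. \emph{Minimum distance}: for two weight-four codewords, letting $I$ be the number of common support coordinates and $a$ the number of those on which the two codewords carry the same value, one has $d_H=8-I-a$, so distance less than $5$ happens exactly when $I+a\ge 4$ — a condition captured by a short explicit list of forbidden patterns on pairs of ordered quadruples (essentially the five patterns in the proof of Theorem~\ref{SSRStoCodes}, plus the case of coinciding supports). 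Since the whole code is invariant under the group generated by $x\mapsto mx+M$, it suffices to test the base codewords pairwise, and each against its own multiplier images, under the finitely many group elements that could produce such a collision; this is a bounded computation, i.e.\ a finite set of difference conditions on the base codewords.

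Optimality is then immediate: by the upper bound $A_4(n,5,[2,1,1])\le n\lfloor(n-1)/2\rfloor=U(n,5,[2,1,1])$ (the Corollary to Lemmas~\ref{bound1} and~\ref{bound2}), each constructed code meets the bound. For the even lengths $n=2t$ one may additionally note that the object produced is a super-simple Room frame of type $2^{t}$ in the sense of Theorem~\ref{SSRFtoCodes}, a pleasant consistency check though not needed for the argument.

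The real work, and the main obstacle, is the search itself. Producing base codewords that simultaneously yield full-length, pairwise-disjoint orbits and satisfy all the difference conditions becomes delicate as $n$ grows: for the largest lengths $n\in\{110,126,134\}$ one needs on the order of $\lfloor(n-1)/2\rfloor$ orbits, so the search must be pruned by fixing a multiplier $m$ of as large an admissible order $s$ as possible and working modulo its action, which reduces the number of free base codewords to roughly $\lfloor(n-1)/2\rfloor/s$. Once the tuples $(n,m,s,M,P,R)$ are exhibited for all twenty-four lengths, certifying the finite difference check completes the proof; there is no conceptual difficulty beyond producing the constructions.
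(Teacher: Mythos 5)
Your plan is exactly the paper's method: Proposition 6.6 of the appendix proves this statement by listing, for each of the twenty-four lengths, a tuple $(n,m,s,M,P,R)$ of base codewords over $\bbZ_n$ developed by the multiplier $m$ (of order $s$) and translation by $M$, found by computer search; your verification scheme (composition preserved by affine maps, orbit counting, the distance criterion $d_H=8-I-a$ reduced to finitely many difference conditions on base codewords, and optimality from the Johnson-type upper bound) is the correct and standard certification of such data. The one thing to be clear about is that, as written, this is a proof schema rather than a proof: for a pure existence-by-construction statement the explicit base codewords \emph{are} the content, and until the tuples are exhibited for all twenty-four lengths (in particular for $n=110,126,134$, where the orbit counts are large) nothing has been established. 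You correctly anticipate this and correctly reject the GDC-filling alternative for these lengths, since filling groups of size $>1$ falls short of the bound $n\lfloor (n-1)/2\rfloor$; so the only gap between your proposal and the paper's proof is the search output itself.
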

\begin{proof}
For each given $n$, we take point set $X=\bbZ_n$. The codes are the sets of cyclic shifts of the vectors generated by the following vectors respectively.

\begin{enumerate}
\item $n=12$, $m=1$, $s=1$, $M=1$
\begin{equation*}
\begin{array}{cccc}
P:& \langle0, 1, 3, 9\rangle & \langle0, 2, 7, 5\rangle & \langle0, 3, 11, 7\rangle\\
&\langle0, 4, 10, 2\rangle & \langle0, 5, 9, 11\rangle\\
\end{array}
\end{equation*}

\item $n=14$, $m=9$, $s=3$, $M=1$
\begin{equation*}
\begin{array}{cccc}
P:& \langle 0, 1, 2, 10\rangle\\
R:& \langle 0, 4, 10, 7\rangle & \langle 0, 6, 13, 5\rangle & \langle0, 2, 5, 4\rangle\\
\end{array}
\end{equation*}

\item $n=15$, $m=1$, $s=1$, $M=1$
\begin{equation*}
\begin{array}{cccc}
P:& \langle0, 1, 2, 4\rangle & \langle0, 2, 12, 9\rangle & \langle0, 3, 9, 13\rangle\\
&\langle0, 4, 7, 6\rangle & \langle0, 5, 13, 1\rangle & \langle0, 6, 5, 14\rangle\\
&\langle0, 7, 11, 12\rangle\\
\end{array}
\end{equation*}


\item $n=17$, $m=2$, $s=2$, $M=1$
\begin{equation*}
\begin{array}{cccc}
P:& \langle 0, 1, 7, 6\rangle\\
R:& \langle 0, 6, 2, 9\rangle & \langle 0, 8, 11, 15\rangle & \langle0, 3, 1, 2\rangle\\
&\langle 0, 10, 9, 4\rangle & \langle 0, 4, 8, 1\rangle & \langle 0,5, 10, 13\rangle\\
\end{array}
\end{equation*}

\item $n=20$, $m=3$, $s=2$, $M=1$
\begin{equation*}
\begin{array}{cccc}
P:& \langle 0, 18, 9, 16\rangle & \langle 0, 4, 12, 17\rangle &\langle 0, 3, 5, 4\rangle\\
R:& \langle 0, 7, 10, 2\rangle & \langle 0, 1, 18, 7\rangle & \langle0, 5, 19, 10\rangle\\
\end{array}
\end{equation*}


\item $n=28$, $m=5$, $s=2$, $M=1$
\begin{equation*}
\begin{array}{cccc}
P:& \langle 0, 1, 2, 4\rangle & \langle 0, 2, 11, 8\rangle & \langle 0, 3, 18, 16\rangle\\
R:& \langle 0, 21, 24, 19\rangle & \langle 0, 16, 20, 21\rangle & \langle 0, 4, 26,18\rangle\\
&\langle 0, 6, 14, 23\rangle & \langle 0, 20, 13, 27\rangle & \langle 0, 9, 16, 10\rangle\\
&\langle 0, 11, 23, 22\rangle\\
\end{array}
\end{equation*}

\item $n=30$, $m=23$, $s=3$, $M=1$
\begin{equation*}
\begin{array}{cccc}
P:& \langle 0, 1, 2, 4\rangle & \langle 0, 2, 11, 14\rangle\\
R:& \langle 0, 25, 12, 24\rangle & \langle 0, 26, 6, 7\rangle &\langle 0, 9, 5, 19\rangle\\
&\langle 0, 12, 7, 13\rangle & \langle 0, 24, 22, 17\rangle & \langle0, 13, 3, 21\rangle\\
&\langle 0, 27, 15, 25\rangle & \langle 0, 10, 24, 15\rangle\\
\end{array}
\end{equation*}

\item $n=36$, $m=31$, $s=2$, $M=1$
\begin{equation*}
\begin{array}{cccc}
P:& \langle 0, 1, 2, 4\rangle & \langle 0, 2, 5, 9\rangle & \langle 0, 3, 13, 14\rangle\\
&\langle 0, 4, 27, 23\rangle & \langle 0, 11, 15, 8\rangle\\
R:& \langle 0, 22, 18, 10\rangle & \langle 0, 9, 34, 31\rangle & \langle 0, 23, 35,12\rangle\\
&\langle 0, 29, 17, 28\rangle & \langle 0, 8, 28, 6\rangle & \langle 0, 6, 14, 26\rangle\\
&\langle 0, 12, 6, 30\rangle\\
\end{array}
\end{equation*}

\item $n=38$, $m=13$, $s=4$, $M=1$
\begin{equation*}
\begin{array}{cccc}
P:& \langle 0, 32, 26, 21\rangle & \langle 0, 30, 5, 14\rangle & \langle 0, 7, 37, 13\rangle\\
R:& \langle 0, 14, 15, 11\rangle & \langle 0, 21, 35, 29\rangle & \langle 0, 29, 2,33\rangle\\
&\langle 0, 37, 28, 18\rangle & \langle 0, 25, 20, 37\rangle & \langle 0, 3, 22,28\rangle\\
\end{array}
\end{equation*}

\item $n=44$, $m=5$, $s=4$, $M=1$
\begin{equation*}
\begin{array}{cccc}
P:& \langle 0, 1, 2, 4\rangle & \langle 0, 2, 14, 7\rangle & \langle0, 3, 24, 21\rangle\\
R:& \langle 0, 24, 33, 32\rangle & \langle 0, 36, 39, 28\rangle & \langle 0, 27, 23, 40\rangle\\
&\langle 0, 16, 35, 14\rangle & \langle 0, 33, 7, 11\rangle & \langle 0, 18, 38,1\rangle\\
&\langle 0, 4, 15, 34\rangle & \langle 0, 9, 22, 33\rangle & \langle 0, 12, 43, 38\rangle\\
\end{array}
\end{equation*}

\item $n=46$, $m=7$, $s=4$, $M=1$
\begin{equation*}
\begin{array}{cccc}
P:& \langle 0, 1, 2, 4\rangle & \langle 0, 2, 12, 20\rangle & \langle0, 5, 32, 29\rangle\\
R:& \langle 0, 30, 41, 31\rangle & \langle 0, 17, 34, 16\rangle & \langle 0, 8, 16,43\rangle\\
&\langle 0, 26, 13, 33\rangle & \langle 0, 19, 44, 13\rangle & \langle 0, 22, 45, 15\rangle\\
&\langle 0, 34, 31, 25\rangle & \langle 0, 9, 18, 36\rangle & \langle 0, 36, 29,22\rangle\\
&\langle 0, 18, 37, 23\rangle\\
\end{array}
\end{equation*}

\item $n=52$, $m=7$, $s=5$, $M=1$
\begin{equation*}
\begin{array}{cccc}
P:& \langle 0, 1, 2, 4\rangle & \langle 0, 2, 5, 24\rangle & \langle 0, 4, 20, 5\rangle\\
R:& \langle 0, 44, 30, 2\rangle & \langle 0, 13, 19, 26\rangle & \langle 0, 25, 12,44\rangle\\
&\langle 0, 17, 32, 11\rangle & \langle 0, 37, 11, 15\rangle & \langle 0, 22, 47, 17\rangle\\
&\langle 0, 5, 34, 39\rangle & \langle 0, 41, 22, 18\rangle & \langle 0, 19, 17,33\rangle\\
&\langle 0, 23, 13, 48\rangle\\
\end{array}
\end{equation*}

\item $n=54$, $m=29$, $s=4$, $M=1$
\begin{equation*}
\begin{array}{cccc}
P:& \langle 0, 1, 2, 4\rangle & \langle 0, 2, 5, 12\rangle & \langle 0, 5, 11, 1\rangle\\
&\langle 0, 10, 32, 53\rangle\\
R:& \langle 0, 45, 43, 6\rangle & \langle 0, 12, 53, 30\rangle & \langle 0, 32, 23,34\rangle\\
&\langle 0, 51, 18, 11\rangle & \langle 0, 36, 28, 27\rangle & \langle 0, 24, 50, 41\rangle\\
&\langle 0, 48, 19, 7\rangle & \langle 0, 11, 38, 9\rangle & \langle 0, 33, 42, 28\rangle\\
&\langle 0, 15, 30, 36\rangle\\
\end{array}
\end{equation*}

\item $n=62$, $m=7$, $s=7$, $M=1$
\begin{equation*}
\begin{array}{cccc}
P:& \langle 0, 1, 2, 4\rangle & \langle 0, 2, 17, 42\rangle & \langle 0, 3, 59, 51\rangle\\
R:& \langle 0, 22, 46, 39\rangle & \langle 0, 32, 22, 61\rangle & \langle 0, 46, 44, 59\rangle\\
&\langle 0, 42, 11, 5\rangle & \langle 0, 38, 34, 31\rangle & \langle 0, 50, 42,45\rangle\\
&\langle 0, 53, 9, 41\rangle & \langle 0, 18, 48, 53\rangle & \langle 0, 6, 12, 33\rangle\\
\end{array}
\end{equation*}

\item $n=68$, $m=7$, $s=7$, $M=1$
\begin{equation*}
\begin{array}{cccc}
P:& \langle 0, 1, 2, 4\rangle & \langle 0, 5, 64, 19\rangle & \langle 0, 2, 29, 66\rangle\\
R:& \langle 0, 58, 62, 51\rangle & \langle 0, 40, 10, 25\rangle & \langle 0, 4, 58,17\rangle\\
&\langle 0, 60, 28, 41\rangle & \langle 0, 24, 50, 1\rangle & \langle 0, 36, 12,2\rangle\\
&\langle 0, 51, 34, 20\rangle & \langle 0, 16, 33, 15\rangle & \langle 0, 12, 60, 43\rangle\\
&\langle 0, 20, 66, 7\rangle & \langle 0, 25, 20, 48\rangle & \langle 0, 29, 45,39\rangle\\
\end{array}
\end{equation*}

\item $n=70$, $m=13$, $s=4$, $M=1$
\begin{equation*}
\begin{array}{cccc}
P:& \langle 0, 1, 2, 4\rangle & \langle 0, 2, 6, 9\rangle & \langle 0, 3, 12, 16\rangle\\
&\langle 0, 4, 3, 22\rangle & \langle 0, 6, 37, 64\rangle\\
R:& \langle 0, 56, 18, 61\rangle & \langle 0, 61, 36, 28\rangle & \langle 0, 51, 5,23\rangle\\
&\langle 0, 50, 10, 25\rangle & \langle 0, 60, 50, 31\rangle & \langle 0, 47, 25, 57\rangle\\
&\langle 0, 45, 66, 40\rangle & \langle 0, 40, 35, 59\rangle & \langle 0, 63, 55, 53\rangle\\
&\langle 0, 49, 42, 35\rangle & \langle 0, 65, 15, 50\rangle & \langle 0, 28, 56, 43\rangle\\
&\langle 0, 15, 64, 14\rangle & \langle 0, 33, 40, 30\rangle\\
\end{array}
\end{equation*}

\item $n=76$, $m=25$, $s=7$, $M=1$
\begin{equation*}
\begin{array}{cccc}
P:& \langle 0, 1, 2, 4\rangle & \langle 0, 2, 6, 7\rangle & \langle 0,4, 7, 10\rangle\\
&\langle 0, 7, 40, 48\rangle\\
R:& \langle 0, 9, 18, 38\rangle & \langle 0, 73, 70, 64\rangle & \langle 0, 63, 56,52\rangle\\
&\langle 0, 58, 36, 45\rangle & \langle 0, 6, 64, 61\rangle & \langle 0, 21, 53,57\rangle\\
&\langle 0, 57, 19, 8\rangle & \langle 0, 64, 55, 46\rangle & \langle 0, 36, 63,14\rangle\\
\end{array}
\end{equation*}

\item $n=78$, $m=11$, $s=5$, $M=1$
\begin{equation*}
\begin{array}{cccc}
P:& \langle 0, 17, 76, 34\rangle & \langle 0, 9, 15, 56\rangle & \langle 0, 35, 42,2\rangle\\
&\langle 0, 14, 63, 18\rangle & \langle 0, 18, 28, 73\rangle\\
R:& \langle 0, 32, 5, 75\rangle & \langle 0, 13, 29, 9\rangle & \langle 0, 40, 26, 66\rangle\\
&\langle 0, 41, 60, 30\rangle & \langle 0, 30, 44, 3\rangle & \langle 0, 58, 2, 15\rangle\\
&\langle 0, 50, 27, 24\rangle & \langle 0, 62, 37, 5\rangle & \langle 0, 11, 31,39\rangle\\
&\langle 0, 26, 65, 13\rangle & \langle 0, 15, 38, 48\rangle & \langle 0, 44, 52, 50\rangle\\
&\langle 0, 4, 17, 40\rangle\\
\end{array}
\end{equation*}

\item $n=86$, $m=9$, $s=7$, $M=1$
\begin{equation*}
\begin{array}{cccc}
P:& \langle 0, 3, 15, 25\rangle & \langle 0, 1, 2, 37\rangle & \langle 0, 6, 16, 11\rangle\\
&\langle 0, 2, 5, 74\rangle\\
R:& \langle 0, 72, 70, 68\rangle & \langle 0, 28, 51, 77\rangle & \langle 0, 57, 85, 27\rangle\\
&\langle 0, 63, 29, 6\rangle & \langle 0, 52, 35, 51\rangle & \langle 0, 16, 46,59\rangle\\
&\langle 0, 17, 77, 12\rangle & \langle 0, 42, 24, 10\rangle & \langle 0, 67, 38, 76\rangle\\
&\langle 0, 40, 83, 4\rangle & \langle 0, 48, 67, 3\rangle & \langle 0, 65, 59, 83\rangle\\
&\langle 0, 7, 72, 15\rangle & \langle 0, 35, 79, 20\rangle\\
\end{array}
\end{equation*}

\item $n=92$, $m=7$, $s=11$, $M=1$
\begin{equation*}
\begin{array}{cccc}
P:& \langle 0, 1, 2, 4\rangle & \langle 0, 16, 8, 13\rangle & \langle 0, 2, 47, 90\rangle\\
R:& \langle 0, 21, 90, 51\rangle & \langle 0, 47, 38, 9\rangle & \langle 0, 35, 46,69\rangle\\
&\langle 0, 75, 62, 6\rangle & \langle 0, 33, 74, 14\rangle & \langle 0, 23, 66,1\rangle\\
&\langle 0, 3, 25, 49\rangle & \langle 0, 27, 85, 2\rangle & \langle 0, 61, 19, 79\rangle\\
&\langle 0, 39, 29, 81\rangle & \langle 0, 5, 91, 15\rangle & \langle 0, 37, 23,63\rangle\\
\end{array}
\end{equation*}

\item $n=94$, $m=5$, $s=11$, $M=1$
\begin{equation*}
\begin{array}{cccc}
P:& \langle 0, 64, 45, 75\rangle & \langle 0, 21, 14, 9\rangle & \langle 0, 80, 24,64\rangle\\
R:& \langle 0, 31, 66, 67\rangle & \langle 0, 23, 78, 26\rangle & \langle 0, 88, 15, 23\rangle\\
&\langle 0, 89, 67, 81\rangle & \langle 0, 61, 64, 47\rangle & \langle 0, 1, 30,53\rangle\\
&\langle 0, 15, 9, 66\rangle & \langle 0, 19, 51, 54\rangle & \langle 0, 9, 53, 57\rangle\\
&\langle 0, 3, 84, 24\rangle & \langle 0, 25, 77, 74\rangle & \langle 0, 45, 92,77\rangle\\
&\langle 0, 37, 48, 15\rangle\\
\end{array}
\end{equation*}


\item $n=110$, $m=57$, $s=9$, $M=1$
\begin{equation*}
\begin{array}{cccc}
P:& \langle 0, 38, 25, 2\rangle & \langle 0, 17, 1, 85\rangle & \langle 0, 14, 26, 13\rangle\\
&\langle 0, 69, 10, 59\rangle\\
R:& \langle 0, 60, 66, 106\rangle & \langle 0, 100, 7, 82\rangle &\langle 0, 15, 68, 44\rangle\\
&\langle 0, 105, 11, 50\rangle & \langle 0, 37, 55, 15\rangle &\langle 0, 74, 36, 108\rangle\\
&\langle 0, 25, 83, 103\rangle & \langle 0, 20, 33, 97\rangle &\langle 0, 65, 32, 48\rangle\\
&\langle 0, 22, 44, 11\rangle & \langle 0, 70, 49, 54\rangle &\langle 0, 80, 4,27\rangle\\
&\langle 0, 99, 107, 102\rangle & \langle 0, 44, 43, 1\rangle &\langle 0, 19, 88, 41\rangle\\
&\langle 0, 75, 60, 21\rangle & \langle 0, 18, 99, 79\rangle &\langle 0, 33, 64, 66\rangle\\
\end{array}
\end{equation*}

\item $n=126$, $m=11$, $s=6$, $M=1$
\begin{equation*}
\begin{array}{cccc}
P:& \langle 0, 50, 87, 20\rangle & \langle 0, 69, 58, 10\rangle &\langle 0, 113, 52, 2\rangle\\
&\langle 0, 52, 100, 35\rangle &\langle 0, 96, 41, 11\rangle & \langle 0, 106, 78, 44\rangle\\
& \langle 0, 103, 110, 75\rangle & \langle 0, 37, 57, 68\rangle\\
R:& \langle 0, 108, 99, 42\rangle & \langle 0, 7, 90, 108\rangle &\langle 0, 99,89, 78\rangle\\
&\langle 0, 49, 108, 54\rangle & \langle 0, 81, 97, 99\rangle &\langle 0, 84, 21, 72\rangle\\
&\langle 0, 98, 18, 125\rangle &\langle 0, 35, 54, 90\rangle &\langle 0, 70, 72, 115\rangle\\
&\langle 0, 14, 50, 117\rangle & \langle 0, 90, 9, 84\rangle &\langle 0, 117, 22, 21\rangle\\
& \langle 0, 21, 105, 102\rangle & \langle 0, 54, 81, 63\rangle  \\
\end{array}
\end{equation*}

\item $n=134$, $m=117$, $s=10$, $M=1$
\begin{equation*}
\begin{array}{cccc}
P:& \langle 0, 5, 81, 30\rangle & \langle 0, 120, 32, 132\rangle &\langle 0, 79, 24, 74\rangle\\
&\langle 0, 23, 63, 62\rangle & \langle 0, 72, 25, 9\rangle\\
R:& \langle 0, 87, 74, 101\rangle & \langle 0, 112, 84, 75\rangle &\langle 0, 56, 61, 67\rangle\\
&\langle 0, 118, 75, 20\rangle & \langle 0, 66, 104, 93\rangle &\langle 0, 82, 49, 54\rangle\\
&\langle 0, 50, 35, 131\rangle & \langle 0, 13, 19, 78\rangle &\langle 0, 90, 72, 8\rangle \\
&\langle 0, 54, 65, 112\rangle & \langle 0, 60, 83, 79\rangle & \langle0, 46, 10,77\rangle\\
&\langle 0, 114,94, 68\rangle & \langle 0, 106, 80, 23\rangle &\langle 0, 4, 109, 90\rangle\\
&\langle 0, 15, 82, 50\rangle.\\
\end{array}
\end{equation*}
\end{enumerate}
\end{proof}
\vskip 10pt

\begin{proposition}
$A_4(n,5,[2,1,1])$  =  $U(n,5,[2,1,1])$ for each $n \in
\{21,23,24,25,26,27,29,31,32,33,34,35,$
$37,39,40,41,42,43,45,47,48,49,50,51,53,55,56,57,58,$
$59,61,63,64,65,66,67,69,71,73,74,75,77,79,83,85,87,$
$88,89,93,95,99,103,104,106,107,109,111,123,125,127,$
$131,133,138,139\}$.
\end{proposition}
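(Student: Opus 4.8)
\emph{Proof strategy.} The plan is to handle every $n$ by one uniform device: produce a single small generating object --- a strong starter in $\bbZ_n$ when $n$ is odd, or a strong $2$-frame starter in $\bbZ_n\setminus H$, with $H$ the subgroup of order $2$, when $n$ is even --- and let the Room-square constructions of this section finish the job. When $n$ is odd, a strong starter $S=\{\{s_i,t_i\}\}$ admits the adder $A_S=\{-(s_i+t_i)\}$, whose entries are distinct and nonzero precisely because $S$ is strong, so $S$ and $-S$ are orthogonal starters and hence, by the Theorem of \cite{HBRS}, developing $S$ over $\bbZ_n$ yields a Room square of side $n$; when $n$ is even, Lemma \ref{framestarter} turns a strong $2$-frame starter into a Room frame of type $2^{n/2}$. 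In both cases the number of filled cells is exactly $U(n,5,[2,1,1])$, so as soon as the array is shown to be super-simple, Theorem \ref{SSRStoCodes} (for odd $n$) or Theorem \ref{SSRFtoCodes} (for even $n$) outputs an optimal $(n,5,[2,1,1])_4$-code. Hence it suffices to display, for each $n$ in the list, one strong (frame) starter whose generated square (frame) is super-simple.

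The starters are recorded in the same compressed form used for codewords above: for each $n$ we list the multiplier $m$, span $s$, shift $M$, and pair sets $P$ (and $R$, suppressed when empty); the pairs in $P$ are replicated under multiplication by $m^i$ for $0\le i\le s-1$, those in $R$ are kept fixed, and together they form the $(n-1)/2$ (resp.\ $(n-2)/2$) pairs of the starter, which is then developed $+M$ modulo $n$. Choosing $m$ of large multiplicative order both keeps the tables short and shrinks the search space by the same factor, which is what makes the larger lengths --- say $n\in\{103,123,125,127,131,133,138,139\}$ --- feasible; for example, in $\bbZ_{19}$ the single pair $\{1,3\}$ with $m=4$, $s=9$ already gives a strong starter generating a super-simple Room square.

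For each listed object one must check: (i) the pairs partition $\bbZ_n\setminus\{0\}$, resp.\ $\bbZ_n\setminus H$; (ii) the half-differences $\pm(s_i-t_i)$ again cover that set; (iii) strongness, i.e.\ the sums $s_i+t_i$ are pairwise distinct and lie outside $\{0\}$, resp.\ outside $H$; and (iv) super-simplicity, i.e.\ any two distinct filled cells of the developed array give $4$-subsets $\{i,j,r,c\}$ sharing at most two elements. Items (i)--(iii) are short finite verifications in $\bbZ_n$, and once they hold the argument of the first paragraph already produces a Room square or frame. The real work is (iv): strong starters are plentiful, but super-simple ones are comparatively scarce, and (iv) is exactly the requirement that none of the five coincidence patterns listed in the proof of Theorem \ref{SSRStoCodes} occurs between any two developed cells --- this is the condition the computer search must enforce. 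The proof is then completed by the table of starters recorded below, one for each $n$, together with the routine confirmation that each of them satisfies (i)--(iv).
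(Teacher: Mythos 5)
Your proposal follows essentially the same route as the paper: exhibit a strong starter (for odd $n$) or strong $2$-frame starter (for even $n$) for each listed length, verify that the generated Room square or Room frame of type $2^{n/2}$ is super-simple, and invoke Theorems \ref{SSRStoCodes} and \ref{SSRFtoCodes}; the paper likewise defers the explicit starter data to a table produced by computer search. Your explicit remark that super-simplicity is the extra, non-automatic condition the search must enforce is a correct and worthwhile clarification of what the paper leaves implicit, so the argument is sound modulo the same tabulated data the paper relies on.
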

\begin{proof}
All these optimal codes are constructed by strong starters or strong
frame starters with n odd or even respectively.
The starters are given in a similar way as the codewords in the above propositions.

\begin{enumerate}



\item $n=21$, $m=1$, $s=1$, $M=1$
\begin{equation*}
\begin{array}{ccc}
P:& \{3, 19\}, \{2, 14\}, \{6, 13\}, \{8, 10\}, \{15, 16\},\\
&\{11, 17\}, \{5, 9\}, \{1, 4\}, \{7, 18\}, \{12, 20\}\\
\end{array}
\end{equation*}

\item $n=23$, $m=2$, $s=11$, $M=1$
\begin{equation*}
\begin{array}{ccc}
P:& \{1, 5\}\\
\end{array}
\end{equation*}

\item $n=24$, $m=5$, $s=2$, $M=1$
\begin{equation*}
\begin{array}{ccc}
P:& \{ 1, 10\}, \{ 8, 21\}\\
R:& \{ 3, 7\}, \{ 4, 18\}, \{ 6, 11\}, \{ 13, 15\}, \{ 14, 22\},\\
&\{ 17, 23\}, \{ 19, 20\}\\
\end{array}
\end{equation*}

\item $n=25$, $m=2$, $s=2$, $M=1$
\begin{equation*}
\begin{array}{ccc}
P:& \{ 1, 8\}, \{ 6, 9\}, \{ 7, 15\}, \{ 10, 23\}\\
R:& \{ 3, 24\}, \{ 4, 19\}, \{ 11, 13\}, \{ 17, 22\}\\
\end{array}
\end{equation*}

\item $n=26$, $m=3$, $s=3$, $M=1$
\begin{equation*}
\begin{array}{ccc}
P:& \{ 1, 5\}, \{ 2, 10\}, \{ 7, 22\}, \{ 8, 25\}\\
\end{array}
\end{equation*}

\item $n=27$, $m=5$, $s=2$, $M=1$
\begin{equation*}
\begin{array}{ccc}
P:& \{ 13, 17\}, \{ 7, 22\}, \{ 6, 14\}\\
R:& \{ 1, 10\}, \{ 5, 21\}, \{ 9, 12\}, \{ 15, 25\}, \{ 18, 23\},\\
&\{ 19, 20\}, \{ 24, 26\}\\
\end{array}
\end{equation*}

\item $n=29$, $m=7$, $s=7$, $M=1$
\begin{equation*}
\begin{array}{ccc}
P:& \{ 1, 3\}, \{ 4, 10\}\\
\end{array}
\end{equation*}

\item $n=31$, $m=7$, $s=15$, $M=1$
\begin{equation*}
\begin{array}{ccc}
P:& \{ 1, 3\}\\
\end{array}
\end{equation*}

\item $n=32$, $m=3$, $s=2$, $M=1$
\begin{equation*}
\begin{array}{ccc}
P:& \{ 13, 18\}, \{ 19, 30\}, \{ 3, 31\}, \{ 11, 14\}, \{ 8, 27\},\\
&\{ 2, 12\}\\
R:& \{ 5, 23\}, \{ 15, 21\}, \{ 20, 28\}\\
\end{array}
\end{equation*}

\item $n=33$, $m=5$, $s=2$, $M=1$
\begin{equation*}
\begin{array}{ccc}
P:& \{ 1, 15\}, \{ 6, 11\}, \{ 16, 29\}, \{ 17, 24\}, \{ 7, 31\},\\
&\{ 4, 27\}, \{ 25, 28\}\\
R:& \{ 10, 32\}, \{ 12, 18\}\\
\end{array}
\end{equation*}

\item $n=34$, $m=3$, $s=4$, $M=1$
\begin{equation*}
\begin{array}{ccc}
P:& \{ 1, 4\}, \{ 13, 18\}, \{ 16, 30\}, \{ 21, 33\}\\
\end{array}
\end{equation*}

\item $n=35$, $m=2$, $s=3$, $M=1$
\begin{equation*}
\begin{array}{ccc}
P:& \{ 1, 3\}, \{ 7, 20\}, \{ 22, 25\}, \{ 29, 34\}\\
R:& \{ 8, 24\}, \{ 13, 27\}, \{ 16, 17\}, \{ 19, 26\}, \{ 21, 32\}\\
\end{array}
\end{equation*}

\item $n=37$, $m=7$, $s=9$, $M=1$
\begin{equation*}
\begin{array}{ccc}
P:& \{ 1, 3\}, \{ 2, 5\}\\
\end{array}
\end{equation*}

\item $n=39$, $m=4$, $s=3$, $M=1$
\begin{equation*}
\begin{array}{ccc}
P:& \{ 11, 34\}, \{ 2, 31\}, \{ 21, 27\}, \{ 18, 29\}\\
R:& \{ 1, 4\}, \{ 9, 16\}, \{ 10, 12\}, \{ 13, 22\}, \{ 14, 26\},\\
&\{ 17, 25\}, \{ 23, 36\}\\
\end{array}
\end{equation*}

\item $n=40$, $m=3$, $s=2$, $M=1$
\begin{equation*}
\begin{array}{ccc}
P:& \{ 1, 18\}, \{ 4, 7\}, \{ 9, 36\}, \{ 17, 35\}, \{ 22, 24\},\\
&\{ 16, 23\}, \{ 31, 39\}, \{ 2, 30\}, \{ 33, 38\}\\
R:& \{ 5, 15\}\\
\end{array}
\end{equation*}

\item $n=41$, $m=10$, $s=5$, $M=1$
\begin{equation*}
\begin{array}{ccc}
P:& \{ 1, 3\}, \{ 2, 5\}, \{ 4, 14\}, \{ 11, 35\}\\
\end{array}
\end{equation*}

\item $n=42$, $m=23$, $s=2$, $M=1$
\begin{equation*}
\begin{array}{ccc}
P:& \{ 1, 18\}, \{ 5, 38\}, \{ 4, 6\}, \{ 22, 27\}, \{ 7, 25\},\\
&\{ 9, 41\}, \{ 16, 17\}, \{ 20, 28\}\\
R:& \{ 3, 15\}, \{ 10, 24\}, \{ 11, 26\}, \{ 30, 37\}\\
\end{array}
\end{equation*}

\item $n=43$, $m=9$, $s=21$, $M=1$
\begin{equation*}
\begin{array}{ccc}
P:& \{ 1, 3\}\\
\end{array}
\end{equation*}

\item $n=45$, $m=2$, $s=3$, $M=1$
\begin{equation*}
\begin{array}{ccc}
P:& \{ 13, 23\}, \{ 4, 28\}, \{ 18, 31\}, \{ 6, 20\}, \{ 21, 44\}\\
R:& \{ 3, 15\}, \{ 5, 32\}, \{ 9, 25\}, \{ 10, 19\}, \{ 14, 29\},\\
&\{ 30, 38\}, \{ 33, 37\}\\
\end{array}
\end{equation*}

\item $n=47$, $m=2$, $s=23$, $M=1$
\begin{equation*}
\begin{array}{ccc}
P: \{ 1, 5\}\\
\end{array}
\end{equation*}

\item $n=48$, $m=29$, $s=2$, $M=1$
\begin{equation*}
\begin{array}{ccc}
P:& \{ 19, 46\}, \{ 39, 43\}, \{ 13, 32\}, \{ 14, 21\}, \{ 2, 45\},\\
&\{ 15, 29\}, \{ 37, 40\}, \{ 11, 28\}, \{ 5, 35\}\\
R:& \{ 4, 6\}, \{ 12, 20\}, \{ 18, 34\}, \{ 26, 36\}, \{ 30, 42\}\\
\end{array}
\end{equation*}

\item $n=49$, $m=10$, $s=11$, $M=1$
\begin{equation*}
\begin{array}{ccc}
P:& \{ 1, 3\}\\
R:& \{ 5, 29\}, \{ 27, 46\}, \{ 17, 18\}, \{ 42, 45\}, \{ 7, 34\},\\
&\{ 21, 33\}, \{ 35, 41\}, \{ 23, 37\}, \{ 25, 36\}, \{ 15, 43\},\\
&\{ 19, 26\}, \{ 9, 14\}, \{ 28, 38\}\\
\end{array}
\end{equation*}

\item $n=50$, $m=7$, $s=2$, $M=1$
\begin{equation*}
\begin{array}{ccc}
P:& \{ 9, 27\}, \{ 6, 36\}, \{ 4, 49\}, \{ 12, 46\}, \{ 23, 31\},\\
&\{ 38, 45\}, \{ 18, 20\}, \{ 1, 14\}, \{ 19, 41\}, \{ 5, 44\},\\
&\{ 3, 32\}, \{ 30, 47\}\\
\end{array}
\end{equation*}

\item $n=51$, $m=2$, $s=3$, $M=1$
\begin{equation*}
\begin{array}{ccc}
P:& \{ 7, 23\}, \{ 1, 31\}, \{ 27, 47\}, \{ 25, 32\}, \{ 9, 10\},\\
& \{ 42, 48\}\\
R:& \{ 5, 30\}, \{ 8, 44\}, \{ 12, 17\}, \{ 16, 24\}, \{ 19, 29\},\\
&\{ 21, 38\}, \{ 34, 37\}\\
\end{array}
\end{equation*}

\item $n=53$, $m=10$, $s=13$, $M=1$
\begin{equation*}
\begin{array}{ccc}
P:& \{ 1, 3\}, \{ 4, 14\}\\
\end{array}
\end{equation*}

\item $n=55$, $m=2$, $s=7$, $M=1$
\begin{equation*}
\begin{array}{ccc}
P:& \{ 1, 3\}, \{ 5, 18\}\\
R:& \{ 15, 42\}, \{ 49, 54\}, \{ 23, 44\}, \{ 28, 38\}, \{ 11, 51\},\\
&\{ 31, 53\}, \{ 30, 47\}, \{ 14, 33\}, \{ 29, 43\}, \{ 7, 37\},\\
&\{ 19, 39\}, \{ 21, 22\}, \{ 35, 46\}\\
\end{array}
\end{equation*}

\item $n=56$, $m=3$, $s=3$, $M=1$
\begin{equation*}
\begin{array}{ccc}
P:& \{ 30, 38\}, \{ 11, 44\}, \{ 12, 17\}, \{ 39, 40\}, \{ 16, 18\},\\
&\{ 1, 53\}\\
R:& \{ 7, 21\}, \{ 10, 31\}, \{ 13, 35\}, \{ 14, 45\}, \{ 19, 26\},\\
&\{ 22, 49\}, \{ 23, 42\}, \{ 25, 55\}, \{ 27, 37\}\\
\end{array}
\end{equation*}

\item $n=57$, $m=4$, $s=4$, $M=1$
\begin{equation*}
\begin{array}{ccc}
P:& \{ 13, 30\}, \{ 8, 31\}, \{ 11, 53\}, \{ 35, 43\}, \{ 3, 7\}\\
R:& \{ 2, 22\}, \{ 9, 27\}, \{ 15, 45\}, \{ 18, 51\}, \{ 19, 25\},\\
&\{ 23, 42\}, \{ 33, 54\}, \{ 36, 38\}\\
\end{array}
\end{equation*}

\item $n=58$, $m=3$, $s=7$, $M=1$
\begin{equation*}
\begin{array}{ccc}
P:& \{ 1, 12\}, \{ 17, 30\}, \{ 28, 46\}, \{ 41, 57\}\\
\end{array}
\end{equation*}

\item $n=59$, $m=3$, $s=29$, $M=1$
\begin{equation*}
\begin{array}{ccc}
P:& \{1, 6\}\\
\end{array}
\end{equation*}

\item $n=61$, $m=12$, $s=15$, $M=1$
\begin{equation*}
\begin{array}{ccc}
P:& \{ 1, 3\}, \{ 2, 6\}\\
\end{array}
\end{equation*}

\item $n=63$, $m=11$, $s=5$, $M=1$
\begin{equation*}
\begin{array}{ccc}
P:& \{ 7, 24\}, \{ 15, 38\}, \{ 8, 20\}, \{ 32, 61\}\\
R:& \{ 2, 9\}, \{ 16, 42\}, \{ 17, 30\}, \{ 18, 43\}, \{ 19, 54\},\\
&\{ 21, 45\}, \{ 22, 58\}, \{ 27, 48\}, \{ 35, 53\}, \{ 36, 50\},\\
& \{ 46, 55\}\\
\end{array}
\end{equation*}

\item $n=64$, $m=11$, $s=4$, $M=1$
\begin{equation*}
\begin{array}{ccc}
P:& \{ 21, 62\}, \{ 3, 17\}, \{ 37, 57\}, \{ 5, 44\}, \{ 2, 20\}\\
R:& \{ 1, 10\}, \{ 6, 7\}, \{ 8, 15\}, \{ 11, 24\}, \{ 13, 48\},\\
&\{ 16, 53\}, \{ 18, 58\}, \{ 19, 34\}, \{ 30, 41\}, \{ 40, 56\},\\
&\{ 46, 54\}\\
\end{array}
\end{equation*}

\item $n=65$, $m=7$, $s=4$, $M=1$
\begin{equation*}
\begin{array}{ccc}
P:& \{ 30, 33\}, \{ 26, 49\}, \{ 48, 58\}, \{ 54, 63\}, \{ 43, 50\}\\
R:& \{ 1, 7\}, \{ 2, 10\}, \{ 3, 22\}, \{ 5, 6\}, \{ 8, 28\},\\
&\{ 14, 42\}, \{ 17, 44\}, \{ 21, 34\}, \{ 23, 35\}, \{ 24, 60\},\\
&\{ 31, 56\}, \{ 38, 64\}\\
\end{array}
\end{equation*}

\item $n=66$, $m=13$, $s=5$, $M=1$
\begin{equation*}
\begin{array}{ccc}
P:& \{ 3, 23\}, \{ 37, 56\}, \{ 7, 58\}, \{ 24, 65\}\\
R:& \{ 6, 18\}, \{ 9, 27\}, \{ 10, 12\}, \{ 11, 51\}, \{ 14, 20\},\\
&\{ 16, 40\}, \{ 21, 55\}, \{ 22, 50\}, \{ 32, 62\}, \{ 36, 44\},\\
&\{ 42, 64\}, \{ 52, 63\}\\
\end{array}
\end{equation*}

\item $n=67$, $m=4$, $s=33$, $M=1$
\begin{equation*}
\begin{array}{ccc}
P:& \{1, 3\}\\
\end{array}
\end{equation*}

\item $n=69$, $m=2$, $s=5$, $M=1$
\begin{equation*}
\begin{array}{ccc}
P:& \{ 3, 44\}, \{ 10, 54\}, \{ 56, 57\}, \{ 30, 67\}, \{ 2, 50\}\\
R:& \{ 1, 23\}, \{ 5, 28\}, \{ 13, 64\}, \{ 25, 49\}, \{ 26, 29\},\\
&\{ 27, 63\}, \{ 35, 46\}, \{ 47, 59\}, \{ 52, 58\}\\
\end{array}
\end{equation*}

\item $n=71$, $m=2$, $s=35$, $M=1$
\begin{equation*}
\begin{array}{ccc}
P:& \{ 1, 7\}\\
\end{array}
\end{equation*}

\item $n=73$, $m=2$, $s=9$, $M=1$
\begin{equation*}
\begin{array}{ccc}
P:& \{ 1, 3\}, \{ 5, 15\}, \{ 9, 33\}, \{ 13, 35\}\\
\end{array}
\end{equation*}

\item $n=74$, $m=3$, $s=9$, $M=1$
\begin{equation*}
\begin{array}{ccc}
P:& \{ 1, 4\}, \{ 13, 56\}, \{ 18, 70\}, \{ 61, 73\}\\
\end{array}
\end{equation*}

\item $n=75$, $m=17$, $s=7$, $M=1$
\begin{equation*}
\begin{array}{ccc}
P:& \{ 1, 3\}, \{ 7, 16\}, \{ 24, 43\}\\
R:& \{ 25, 65\}, \{ 31, 35\}, \{ 14, 45\}, \{ 60, 70\}, \{ 23, 71\},\\
&\{ 37, 66\}, \{ 9, 69\}, \{ 27, 72\}, \{ 13, 20\}, \{ 2, 15\},\\
&\{ 11, 29\}, \{ 34, 40\}, \{ 10, 53\}, \{ 48, 68\}, \{ 50, 55\},\\
&\{ 5, 30\}\\
\end{array}
\end{equation*}

\item $n=77$, $m=2$, $s=5$, $M=1$
\begin{equation*}
\begin{array}{ccc}
P:& \{ 9, 32\}, \{ 1, 57\}, \{ 28, 47\}, \{ 45, 48\}, \{ 40, 49\}\\
R:& \{ 5, 31\}, \{ 10, 23\}, \{ 11, 43\}, \{ 15, 55\}, \{ 20, 54\},\\
&\{ 22, 30\}, \{ 29, 62\}, \{ 33, 53\}, \{ 39, 61\}, \{ 41, 66\},\\
&\{ 44, 60\}, \{ 46, 73\}, \{ 58, 69\}\\
\end{array}
\end{equation*}

\item $n=79$, $m=2$, $s=39$, $M=1$
\begin{equation*}
\begin{array}{ccc}
P:& \{ 1, 3\}\\
\end{array}
\end{equation*}


\item $n=83$, $m=3$, $s=41$, $M=1$
\begin{equation*}
\begin{array}{ccc}
P:& \{ 1, 5\}\\
\end{array}
\end{equation*}

\item $n=85$, $m=22$, $s=13$, $M=1$
\begin{equation*}
\begin{array}{ccc}
P:& \{ 1, 3\}, \{ 2, 6\}\\
R:& \{ 34, 60\}, \{ 39, 74\}, \{ 37, 68\}, \{ 49, 62\}, \{ 45, 50\},\\
&\{ 52, 75\}, \{ 15, 26\}, \{ 10, 70\}, \{ 17, 80\}, \{ 4, 55\},\\
&\{ 30, 40\}, \{ 8, 25\}, \{ 13, 58\}, \{ 31, 51\}, \{ 5, 20\},\\
&\{ 35, 65\}\\
\end{array}
\end{equation*}

\item $n=87$, $m=2$, $s=11$, $M=1$
\begin{equation*}
\begin{array}{ccc}
P:& \{ 57, 74\}, \{ 44, 71\}, \{ 14, 69\}\\
R:& \{ 7, 22\}, \{ 9, 67\}, \{ 11, 24\}, \{ 18, 48\}, \{ 29, 79\},\\
&\{ 31, 47\}, \{ 36, 62\}, \{ 37, 72\}, \{ 49, 85\}, \{ 58, 83\}\\
\end{array}
\end{equation*}

\item $n=88$, $m=7$, $s=5$, $M=1$
\begin{equation*}
\begin{array}{ccc}
P:& \{ 9, 67\}, \{ 48, 65\}, \{ 23, 30\}, \{ 21, 69\}, \{ 79, 84\},\\
&\{ 38, 58\}\\
R:& \{ 4, 18\}, \{ 5, 16\}, \{ 11, 24\}, \{ 12, 22\}, \{ 19, 52\},\\
&\{ 20, 86\}, \{ 28, 55\}, \{ 32, 50\}, \{ 33, 35\}, \{ 40, 78\},\\
&\{ 45, 66\}, \{ 51, 80\}, \{ 74, 77\}\\
\end{array}
\end{equation*}

\item $n=89$, $m=2$, $s=11$, $M=1$
\begin{equation*}
\begin{array}{ccc}
P:& \{ 1, 3\}, \{ 5, 15\}, \{ 9, 22\}, \{ 19, 65\}\\
\end{array}
\end{equation*}

\item $n=93$, $m=14$, $s=12$, $M=1$
\begin{equation*}
\begin{array}{ccc}
P:& \{ 35, 80\}, \{ 48, 85\}, \{ 23, 66\}\\
R:& \{ 3, 77\}, \{ 5, 8\}, \{ 18, 26\}, \{ 19, 49\}, \{ 30, 81\},\\
&\{ 31, 86\}, \{ 39, 70\}, \{ 42, 55\}, \{ 46, 50\}, \{ 62, 88\}\\
\end{array}
\end{equation*}

\item $n=95$, $m=21$, $s=7$, $M=1$
\begin{equation*}
\begin{array}{ccc}
P:& \{ 20, 51\}, \{ 39, 63\}, \{ 2, 37\}, \{ 81, 94\}, \{ 83, 90\}\\
R:& \{ 3, 9\}, \{ 5, 24\}, \{ 6, 38\}, \{ 8, 73\}, \{ 10, 67\},\\
&\{ 11, 14\}, \{ 13, 29\}, \{ 19, 41\}, \{ 25, 76\}, \{ 31, 57\},\\
&\{ 47, 62\}, \{ 50, 77\}\\
\end{array}
\end{equation*}

%

\item $n=99$, $m=20$, $s=8$, $M=1$
\begin{equation*}
\begin{array}{ccc}
P:& \{ 1, 3\}, \{ 2, 17\}, \{ 21, 38\}, \{ 34, 41\}\\
R:& \{ 72, 88\}, \{ 6, 30\}, \{ 5, 27\}, \{ 7, 83\}, \{ 26, 81\},\\
&\{ 35, 44\}, \{ 77, 95\}, \{ 22, 33\}, \{ 11, 75\}, \{ 15, 36\},\\
&\{ 54, 58\}, \{ 9, 63\}, \{ 18, 45\}, \{ 71, 90\}, \{ 19, 55\},\\
&\{ 25, 91\}, \{ 66, 76\}\\
\end{array}
\end{equation*}

\item $n=103$, $m=2$, $s=51$, $M=1$
\begin{equation*}
\begin{array}{ccc}
P:& \{ 1, 3\}\\
\end{array}
\end{equation*}

\item $n=104$, $m=11$, $s=6$, $M=1$
\begin{equation*}
\begin{array}{ccc}
P:& \{ 27, 59\}, \{ 12, 56\}, \{ 45, 48\}, \{ 31, 50\}, \{ 70, 92\},\\
&\{ 2, 47\}\\
R:& \{ 1, 39\}, \{ 6, 35\}, \{ 10, 23\}, \{ 11, 13\}, \{ 17, 78\},\\
&\{ 19, 93\}, \{ 21, 86\}, \{ 26, 83\}, \{ 41, 51\}, \{ 65, 91\},\\
&\{ 66, 73\}, \{ 75, 102\}, \{ 81, 87\}, \{ 82, 97\}, \{ 85, 103\}\\
\end{array}
\end{equation*}

\item $n=106$, $m=7$, $s=13$, $M=1$
\begin{equation*}
\begin{array}{ccc}
P:& \{ 1, 3\}, \{ 8, 24\}, \{ 82, 103\}, \{ 98, 105\}\\
\end{array}
\end{equation*}

\item $n=107$, $m=3$, $s=53$, $M=1$
\begin{equation*}
\begin{array}{ccc}
P:& \{ 1, 5\}\\
\end{array}
\end{equation*}

\item $n=109$, $m=2$, $s=18$, $M=1$
\begin{equation*}
\begin{array}{ccc}
P:& \{ 1, 3\}, \{ 50, 106\}, \{ 59, 108\}\\
\end{array}
\end{equation*}

\item $n=111$, $m=7$, $s=8$, $M=1$
\begin{equation*}
\begin{array}{ccc}
P:& \{ 2, 95\}, \{ 13, 80\}, \{ 4, 86\}, \{ 36, 106\}, \{ 15, 70\},\\
&\{ 12, 45\}\\
R:& \{ 10, 18\}, \{ 21, 31\}, \{ 32, 77\}, \{ 37, 64\}, \{ 44, 81\},\\
&\{ 54, 74\}, \{ 59, 97\}\\
\end{array}
\end{equation*}



\item $n=123$, $m=5$, $s=9$, $M=1$
\begin{equation*}
\begin{array}{ccc}
P:& \{ 76, 113\}, \{ 56, 116\}, \{ 26, 114\}, \{ 45, 117\}, \{ 2, 6\},\\
&\{ 16, 92\}\\
R:& \{ 5, 121\}, \{ 9, 41\}, \{ 25, 75\}, \{ 43, 82\}, \{ 48, 89\},\\
&\{ 51, 85\}, \{ 67, 79\}\\
\end{array}
\end{equation*}

\item $n=125$, $m=2$, $s=10$, $M=1$
\begin{equation*}
\begin{array}{ccc}
P:& \{ 1, 11\}, \{ 10, 67\}, \{ 28, 31\}, \{ 29, 111\}, \{ 108, 115\}\\
R:& \{ 14, 48\}, \{ 17, 78\}, \{ 24, 100\}, \{ 25, 96\}, \{ 34, 59\},\\
&\{ 39, 71\}, \{ 47, 119\}, \{ 50, 77\}, \{ 63, 101\}, \{ 75, 92\},\\
&\{ 68, 118\}, \{ 94, 113\}\\
\end{array}
\end{equation*}

\item $n=127$, $m=9$, $s=63$, $M=1$
\begin{equation*}
\begin{array}{ccc}
P:& \{ 1, 3\}\\
\end{array}
\end{equation*}


\item $n=131$, $m=3$, $s=65$, $M=1$
\begin{equation*}
\begin{array}{ccc}
P:& \{ 1, 6\}\\
\end{array}
\end{equation*}

\item $n=133$, $m=11$, $s=3$, $M=1$
\begin{equation*}
\begin{array}{ccc}
P:& \{ 1, 3\}, \{ 2, 5\}, \{ 89, 124\}, \{ 6, 10\}, \{ 7, 8\},\\
&\{ 80, 87\}, \{ 24, 69\}, \{ 15, 31\}, \{ 56, 95\}, \{ 17, 30\},\\
&\{ 23, 105\}, \{ 28, 45\}, \{ 29, 47\}, \{ 46, 98\}, \{ 35, 67\},\\
&\{ 58, 128\}, \{ 40, 59\}, \{ 25, 85\}, \{ 16, 36\}, \{ 12, 79\},\\
&\{ 19, 50\}, \{ 68, 93\}\\
\end{array}
\end{equation*}

\item $n=138$, $m=19$, $s=10$, $M=1$
\begin{equation*}
\begin{array}{ccc}
P:& \{ 1, 3\}, \{ 2, 5\}, \{ 4, 12\}, \{ 8, 81\}\\
R:& \{ 37, 78\}, \{ 7, 96\}, \{ 43, 115\}, \{ 77, 131\}, \{ 41, 66\},\\
&\{ 47, 124\}, \{ 6, 89\}, \{ 13, 59\}, \{ 34, 87\}, \{ 73, 92\},\\
&\{ 18, 48\}, \{ 17, 52\}, \{ 23, 127\}, \{ 31, 91\}, \{ 44, 135\},\\
&\{ 70, 121\}, \{ 83, 84\}, \{ 10, 104\}, \{ 30, 72\}, \{ 65, 88\},\\
&\{ 67, 126\}, \{ 35, 53\}, \{ 16, 118\}, \{ 22, 28\}, \{ 46, 94\},\\
&\{ 113, 130\}, \{ 102, 114\}, \{ 109, 133\}\\
\end{array}
\end{equation*}

\item $n=139$, $m=4$, $s=69$, $M=1$
\begin{equation*}
\begin{array}{ccc}
P:& \{ 1, 3\}.\\
\end{array}
\end{equation*}

\end{enumerate}
\end{proof}
\vskip 10pt

\begin{proposition}
$A_4(8,5,[2,1,1]) \ge 18$.
\end{proposition}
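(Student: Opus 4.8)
The plan is to prove the inequality by exhibiting an explicit $(8,5,[2,1,1])_4$-code with $18$ codewords; any such code witnesses $A_4(8,5,[2,1,1])\ge 18$. I would take the point set $X=\bbZ_8$ and write each codeword in the $4$-tuple form $\langle i,j,r,c\rangle$ used throughout this paper, so that the symbol $1$ occupies coordinates $i$ and $j$, the symbol $2$ occupies coordinate $r$, and the symbol $3$ occupies coordinate $c$. This encoding forces the composition to be $[2,1,1]$ automatically, so the whole problem reduces to controlling the pairwise Hamming distances.

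Since every codeword has weight $4$ in a code of length $8$, for two codewords $\vu,\vv$ one has $d_H(\vu,\vv)=8-(k+a)$, where $k=|\supp(\vu)\cap\supp(\vv)|$ and $a$ is the number of those common coordinates on which $\vu$ and $\vv$ carry the same symbol; indeed the $k$ coordinates outside $\supp(\vu)\cup\supp(\vv)$ are zero in both words, and these together with the $a$ equal-symbol coordinates account for all agreements. Hence $d_H(\vu,\vv)\ge 5$ is equivalent to $k+a\le 3$: an intersection of size $0$ or $1$ is always harmless, an intersection of size $2$ forces agreement in at most one of the two shared coordinates, an intersection of size $3$ forces disagreement in all three shared coordinates, and equal supports (size $4$) are forbidden. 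The construction then proceeds by a backtracking (or randomized) computer search over $4$-tuples of distinct elements of $\bbZ_8$: grow a candidate code one codeword at a time, reject any addition that violates the condition above against a previously chosen codeword, and backtrack until a code of size $18$ is reached. To keep the final list short one may restrict the search to codes invariant under a cyclic or quasi-cyclic automorphism group — translations of $\bbZ_8$, possibly together with a multiplier map — so that only a few base codewords together with a developing rule need to be recorded, exactly as in the preceding propositions of this appendix.

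Verifying the output is then routine: constant composition holds by the encoding, and the distance requirement is confirmed by checking the finitely many pairs of codewords (and, when an automorphism group is imposed, only base codewords against orbit representatives, since the group action preserves Hamming distance). The only genuine obstacle is the search itself, namely confirming that a configuration of size $18$ actually exists; but because the ambient space is tiny the search terminates almost instantly. Note that this argument yields only a lower bound — the exact value of $A_4(8,5,[2,1,1])$, which lies between $18$ and the upper bound $U(8,5,[2,1,1])=24$, is not pinned down here — so the resulting $18$ codewords are simply recorded below as the proof of the stated inequality.
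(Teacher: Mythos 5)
Your proposal takes essentially the same route as the paper: the paper's proof is nothing more than an explicit list of $18$ codewords of the form $\langle i,j,r,c\rangle$ over $\bbZ_8$ (found by computer search and listed in full, without any imposed automorphism group), whose pairwise distances are checked directly, and your reduction of the distance condition to $k+a\le 3$ is a correct and convenient way to organize that check. The only thing separating your write-up from a complete proof is that the actual list of $18$ codewords must be exhibited, since the existence claim cannot be certified by describing the search alone.
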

\begin{proof}
For $n=8$, the $18$  required codewords are:
\begin{equation*}
\begin{array}{cccc}
\langle 3, 7, 5, 2\rangle & \langle 2, 5, 1, 7\rangle & \langle 1,
3, 6, 7\rangle & \langle 2, 3, 4, 1\rangle\\
\langle 0, 5, 4, 2\rangle & \langle 6, 7, 2, 1\rangle & \langle 0,
7, 6, 5\rangle  & \langle 4, 6, 3, 2\rangle\\
\langle 0, 6, 1, 4\rangle & \langle 5, 6, 7, 3\rangle & \langle 3,
4, 0, 5\rangle  & \langle 0, 3, 2, 6\rangle\\
\langle 4, 5, 6, 1\rangle & \langle 4, 7, 1, 3\rangle & \langle 2,
7, 0, 4\rangle  & \langle 0, 1, 5, 3\rangle\\
\langle 1, 4, 2, 0\rangle & \langle 2, 6, 5, 0\rangle.\\
\end{array}
\end{equation*}
\end{proof}

\begin{proposition}
$A_4(5,5,[2,1,1])=2$, $A_4(9,5,[2,1,1])\ge 27$
and $A_4(13,5,[2,1,1])\ge 72$.
\end{proposition}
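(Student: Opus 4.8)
The plan is to treat the three lengths separately, noting that only $n=5$ needs both an upper and a lower bound, whereas $n\in\{9,13\}$ require only explicit constructions.

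First I would settle $n=5$. For the upper bound, apply Lemma~\ref{bound2} with $q=4$, $d=5$ and $\overline{w}=[2,1,1]$: this gives $A_4(5,5,[2,1,1])\le \frac{5}{2}\,A_4(4,5,[1,1,1])$. Since two distinct vectors of length $4$ can differ in at most $4<5$ coordinates, a $(4,5,[1,1,1])_4$-code contains at most one codeword, so $A_4(4,5,[1,1,1])=1$; as $A_4(5,5,[2,1,1])$ is a non-negative integer bounded by $5/2$, it is at most $2$. For the matching lower bound I would exhibit two codewords on the point set $\{0,1,2,3,4\}$, for instance $\langle 0,1,2,3\rangle$ and $\langle 3,4,0,1\rangle$, i.e.\ the vectors $(1,1,2,3,0)$ and $(2,3,0,1,1)$; both have composition $[2,1,1]$ and they disagree in all five coordinates, so their Hamming distance is exactly $5$. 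Hence $A_4(5,5,[2,1,1])=2$.

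For $n=9$ and $n=13$ I would follow the difference-method framework used throughout this section: fix the ambient group (I expect $\bbZ_9$ for $n=9$, and for $n=13$ either $\bbZ_{13}$ or a group of order $12$ with one adjoined fixed point, since $72=6\cdot 12$), list a short set of base codewords of composition $[2,1,1]$, and develop them under the group together with the declared multiplier $m$ and translation $M$ to obtain $27$, respectively $72$, codewords. The verification consists of checking that every generated codeword lies in $\bbZ_4^X$ with $|X|=n$ and has composition $[2,1,1]$, and that the minimum Hamming distance is at least $5$; by the usual orbit reduction it suffices to compare each base codeword with all translates and multiples of the others, which is a finite check performed by computer.

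The main obstacle is the construction for $n=13$: the target size $72$ lies strictly below the Johnson-type bound $U(13,5,[2,1,1])=78$, so no clean combinatorial object forces it, and one has to locate a sufficiently dense set of codewords by exhaustive or heuristic computer search and then package it as base codewords whose orbit sizes sum to $72$. The $n=9$ construction is analogous but smaller, and the only genuinely structural ingredient in the whole statement is the short argument that pins $A_4(5,5,[2,1,1])$ at exactly $2$.
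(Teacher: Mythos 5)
Your proposal is correct and, for $n=9$ and $n=13$, follows essentially the same route as the paper: both lower bounds are pure existence claims certified by explicit computer-found codes, and the paper's data has exactly the shape you predicted (a flat list of $27$ codewords for $n=9$, and for $n=13$ twelve base codewords on $\bbZ_{12}\cup\{a\}$ with $a$ fixed, developed by $+2$ so that each orbit has length $6$ and $12\cdot 6=72$). The one place you genuinely add something is at $n=5$: the paper merely asserts $A_4(5,5,[2,1,1])=2$ and lists two codewords, whereas you derive the upper bound from Lemma~\ref{bound2}, namely $A_4(5,5,[2,1,1])\le \tfrac{5}{2}A_4(4,5,[1,1,1])=\tfrac{5}{2}$, combined with the observation that two distinct length-$4$ vectors are at distance at most $4<5$ so $A_4(4,5,[1,1,1])=1$, and integrality then gives the bound $2$; this is a correct argument that closes a step the paper leaves implicit. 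Your pair $(1,1,2,3,0)$ and $(2,3,0,1,1)$ indeed has composition $[2,1,1]$ and Hamming distance $5$, so the matching lower bound stands. The only shortfall, unavoidable in a blind attempt, is that you do not exhibit the actual codeword lists for $n=9$ and $n=13$; since those cases have no structural content beyond the data itself, the proof is complete only once such lists are produced and verified, but your description of how to generate and check them is precisely the paper's procedure.
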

\begin{proof}
For $n=5$, the $2$ required codewords are:
\begin{equation*}
\begin{array}{ccc}
\langle 0, 1, 2, 3\rangle & \langle 2, 3, 0, 4\rangle\\
\end{array}
\end{equation*}

For $n=9$, the $27$  required codewords are:
\begin{equation*}
\begin{array}{cccc}
\langle 0, 5, 4, 1\rangle & \langle 1, 8, 0, 3\rangle & \langle 0,
6, 2, 3\rangle & \langle 2, 3, 0, 4\rangle\\
\langle 1, 2, 6, 0\rangle & \langle 0, 8, 6, 4\rangle & \langle 4, 5, 7, 8\rangle & \langle 6, 8, 1, 2\rangle\\
\langle 5, 8, 2, 0\rangle & \langle 1, 3, 2, 7\rangle & \langle 0, 2, 7, 5\rangle & \langle 1, 5, 8, 6\rangle\\
\langle 2, 7, 3, 8\rangle & \langle 0, 3, 5, 8\rangle & \langle 7,
8, 5, 1\rangle & \langle 3, 4, 8, 1\rangle\\
\langle 4, 7, 1, 6\rangle & \langle 2, 5, 1, 3\rangle & \langle 4, 6, 0, 5\rangle & \langle 6, 7, 8, 0\rangle\\
\langle 5, 7, 6, 2\rangle & \langle 0, 4, 3, 7\rangle & \langle 1, 4, 5, 2\rangle & \langle 1, 6, 3, 4\rangle\\
\langle 2, 6, 4, 7\rangle & \langle 3, 7, 4, 5\rangle & \langle 3, 8, 7, 6\rangle\\
\end{array}
\end{equation*}

For $n=13$, the $72$  required  codewords are all quasi-cyclic shifts with length $2$ of the following vectors, where the element $a$ keeps fixed under the action of the automorphism group.
\begin{equation*}
\begin{array}{cccc}
\langle a, 0, 3, 2\rangle & \langle a, 9, 2, 7\rangle & \langle 0,
11, a, 8\rangle & \langle 0, 7, 2, a\rangle\\
\langle 0, 5, 4, 7\rangle & \langle 0, 3, 5, 11\rangle & \langle 0, 4, 11, 1\rangle & \langle 0, 9, 6, 3\rangle\\
\langle 0, 10, 8, 4\rangle & \langle 1, 7, 5, 2\rangle & \langle 1, 3, 4, 6\rangle & \langle 0, 1, 9, 5\rangle.\\
\end{array}
\end{equation*}
\end{proof}
\vskip 10pt

\begin{proposition}
$A_4(6,5,[2,1,1])=6$, $A_4(10,5,[2,1,1]) \ge 36$.
\end{proposition}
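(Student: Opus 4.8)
The plan is to prove the two assertions separately: the exact value $A_4(6,5,[2,1,1])=6$ (an upper bound together with a matching construction) and the lower bound $A_4(10,5,[2,1,1])\ge 36$ (a construction only, since $36$ lies strictly below the Johnson-type value $U(10,5,[2,1,1])=40$ and the case $n=10$ is not expected to be resolved here).

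For the upper bound at $n=6$ I would not use the generic bound $A_4(6,5,[2,1,1])\le U(6,5,[2,1,1])=12$, but instead apply Lemma~\ref{bound2} with $w_1=2$:
\[
A_4(6,5,[2,1,1])\ \le\ \tfrac{6}{2}\,A_4(5,5,[1,1,1])\ =\ 3\,A_4(5,5,[1,1,1]).
\]
It then suffices to prove $A_4(5,5,[1,1,1])\le 2$. For such a code the minimum distance $5$ equals the length, so any two distinct codewords disagree in every coordinate; in particular they cannot both vanish on a common coordinate. Since a codeword of composition $[1,1,1]$ is zero on exactly two of its five coordinates, three codewords would require three pairwise disjoint $2$-subsets of a $5$-set, which is impossible. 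Hence $A_4(5,5,[1,1,1])\le 2$ and $A_4(6,5,[2,1,1])\le 6$.

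For the lower bounds I would exhibit explicit codewords (obtained by computer search) in the $4$-tuple notation for codewords introduced earlier, where the first two entries index the coordinates carrying the value $1$, the third the coordinate carrying $2$, and the fourth the coordinate carrying $3$, and then verify that every two of them agree in at most one coordinate. For $n=6$ this amounts to listing six $4$-tuples over $\bbZ_6$; since such a list attains the upper bound just proved, it yields $A_4(6,5,[2,1,1])=6$. For $n=10$ it amounts to listing $36$ $4$-tuples over $\bbZ_{10}$; it is natural to record these as cyclic (or quasi-cyclic) shifts of a few base codewords, in the style of the preceding propositions, although an unstructured list would serve equally well.

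The routine work consists of the distance checks for the listed codewords; the only genuinely non-mechanical point is the reduction yielding $A_4(6,5,[2,1,1])\le 6$, in particular the observation that distance five in a length-five weight-three code forces total disagreement and hence pairwise disjoint zero-sets. I expect the main practical obstacle to be producing the $36$-codeword family for $n=10$: because $36<40$ there is no clean design-theoretic construction (a $[2,1,1]$-GDC$(5)$ of type $2^5$ would give $40$ but does not appear to exist), so one must rely on a search and then simply transcribe the output.
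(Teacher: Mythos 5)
Your overall structure is sound and, for the constructive half, coincides with what the paper actually does: both lower bounds (the $6$ codewords of length $6$ and the $36$ codewords of length $10$) are established in the paper purely by exhibiting computer-found codeword lists in the appendix (Proposition 6.10), exactly as you propose. Your write-up is incomplete only to the extent that you do not transcribe such lists, and those lists are the entire content of the paper's proof; the distance checks you defer to are the whole verification. Your side remark that $36<U(10,5,[2,1,1])=40$ and that $n=10$ is therefore not settled here is consistent with the paper, where $n=10$ is one of the five lengths left undetermined.

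Where you genuinely depart from (and improve on) the paper is the upper bound $A_4(6,5,[2,1,1])\le 6$. The paper nowhere argues this bound: the generic bound $U(6,5,[2,1,1])=12$ is too weak, and the stated equality at $6$ is implicitly certified by exhaustive search, as with the other sporadic exact values $n=4,5,7$. Your reduction via Lemma \ref{bound2} to $A_4(5,5,[1,1,1])\le 2$, combined with the observation that distance $5$ in length $5$ forces any two codewords to disagree in every coordinate, hence to have disjoint two-element zero-sets, of which a $5$-set accommodates at most two, is correct and yields $A_4(6,5,[2,1,1])\le 3\cdot 2=6$ by a short, human-checkable argument. This is a worthwhile addition in rigor and verifiability; the trade-off is that it is tailored to this one small case, whereas the paper's (implicit) computational verification handles all the small exact values uniformly. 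In short: same route for the existence statements, a different and more self-contained route for the optimality at $n=6$, with no genuine error in either part.
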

\begin{proof}
For $n=6$, the $6$ required codewords are:
\begin{equation*}
\begin{array}{cccc}
\langle 0, 1, 2, 3\rangle & \langle 0, 2, 4, 5\rangle & \langle 1,
3, 5, 4\rangle & \langle 2, 4, 3, 1\rangle\\
\langle 3, 5, 0, 2\rangle & \langle 4, 5, 1, 0\rangle\\
\end{array}
\end{equation*}
For $n=10$, the $36$  required codewords are:
\begin{equation*}
\begin{array}{cccc}
\langle 8, 9, 4, 1\rangle & \langle 4, 8, 7, 6\rangle & \langle 4,
5, 2, 7\rangle & \langle 1, 7, 8, 4\rangle\\
\langle 5, 9, 3, 4\rangle & \langle 0, 8, 1, 7\rangle & \langle 1, 8, 6, 3\rangle & \langle 1, 6, 7, 0\rangle\\
\langle 1, 9, 2, 6\rangle & \langle 3, 7, 0, 1\rangle & \langle 4, 9, 6, 0\rangle & \langle 6, 9, 8, 7\rangle\\
\langle 6, 7, 2, 3\rangle & \langle 3, 9, 7, 2\rangle & \langle 0,
1, 5, 9\rangle & \langle 2, 8, 5, 4\rangle\\
\langle 1, 2, 3, 7\rangle & \langle 2, 7, 4, 0\rangle & \langle 0, 4, 8, 5\rangle & \langle 4, 6, 5, 1\rangle\\
\langle 0, 3, 2, 4\rangle & \langle 3, 4, 9, 8\rangle & \langle 1, 4, 0, 2\rangle & \langle 0, 2, 6, 1\rangle\\
\langle 7, 8, 3, 9\rangle & \langle 0, 6, 3, 8\rangle & \langle 0,
5, 7, 3\rangle & \langle 5, 7, 6, 8\rangle\\
\langle 3, 6, 1, 9\rangle & \langle 2, 3, 8, 6\rangle & \langle 7, 9, 1, 5\rangle & \langle 5, 8, 9, 0\rangle\\
\langle 0, 7, 9, 6\rangle & \langle 1, 3, 4, 5\rangle & \langle 2, 6, 9, 5\rangle & \langle 2, 9, 0, 8\rangle.\\
\end{array}
\end{equation*}
\end{proof}
\vskip 10pt

\begin{proposition}
$A_4(7,5,[2,1,1])=10$, $A_4(11,5,[2,1,1])\ge 48$.
\end{proposition}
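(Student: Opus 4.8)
The plan is to prove the equality $A_4(7,5,[2,1,1])=10$ by establishing the two inequalities separately, and to prove $A_4(11,5,[2,1,1])\ge 48$ by a single explicit construction; the lower bounds and the whole of the $n=11$ statement amount to exhibiting codes, and only the $n=7$ case needs a genuine upper-bound argument.

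For the lower bounds I would use the difference method already exploited in Propositions \ref{GDC(5)gt}--\ref{CCC(5s)}: work in $\bbZ_7$ (resp.\ $\bbZ_{11}$), write down a short list of base codewords of composition $[2,1,1]$, and develop them under translation (and, if convenient, a multiplier $x\mapsto m^i x$); it then remains only to check, over the finitely many orbit representatives, that every pair of resulting codewords has $d_H\ge 5$. This produces $10$ codewords of length $7$ and $48$ of length $11$, hence $A_4(7,5,[2,1,1])\ge 10$ and $A_4(11,5,[2,1,1])\ge 48$. Since $10<U(7,5,[2,1,1])=21$ and $48<U(11,5,[2,1,1])=55$, no optimality is being asserted for $n=11$; these explicit codewords are exactly the data relegated to Proposition 6.11 in the appendix.

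For the upper bound $A_4(7,5,[2,1,1])\le 10$ I would first extract what the general estimates yield. A direct distance computation shows that two distinct codewords of a $(7,5,[2,1,1])_4$-code cannot have the same pair of coordinates carrying the symbol $1$ (agreeing there already forces $d_H\le 4$), so such a code has size at most $\binom{7}{2}=21$; more to the point, Lemma \ref{bound2} applied with the component $2$ placed first gives $A_4(7,5,[2,1,1])\le\lfloor\tfrac{7}{2}A_4(6,5,[1,1,1])\rfloor$, and $A_4(6,5,[1,1,1])\le 4$ because the supports of such a code are triples of a $6$-set pairwise meeting in at most one point while a Steiner triple system of order $6$ does not exist, whence $A_4(7,5,[2,1,1])\le 14$. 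Closing the gap from $14$ down to $10$ is the crux of the matter: if a code had size $\ge 14$ then, since each of the $7$ coordinates carries the symbol $1$ in at most $A_4(6,5,[1,1,1])=4$ codewords, every coordinate would carry it in exactly $4$, so the symbol-$1$ pairs would form a $4$-regular simple graph on $7$ vertices, i.e.\ $K_7$ minus a $2$-factor ($C_7$ or $C_3\cup C_4$); a finite case analysis of how the symbol-$2$ and symbol-$3$ positions can then be fitted in, avoiding the five forbidden coincidences catalogued in the proof of Theorem \ref{SSRStoCodes}, eliminates this configuration together with the intermediate sizes $11,12,13$, leaving exactly $10$. This last elimination is the only real obstacle — it is finite but delicate, and it is the step discharged computationally behind Proposition 6.11.
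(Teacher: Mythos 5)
Your proposal is correct and, where the paper actually argues anything, coincides with it: the paper's entire proof of this lemma is the exhibition of $10$ explicit codewords of length $7$ and $48$ of length $11$ in Proposition 6.11 (listed outright rather than developed from base codewords under a group action, which is immaterial), and this establishes exactly the two lower bounds. For the upper bound $A_4(7,5,[2,1,1])\le 10$ the paper offers nothing in writing beyond the implicit exhaustive computer search announced at the start of the section, so your combinatorial reduction is a genuine addition rather than a rederivation. Your steps check out: two codewords sharing the same pair of $1$-positions agree there and at one or more common zeros among the remaining five coordinates, forcing $d_H\le 4$; Lemma~\ref{bound2} gives $A_4(7,5,[2,1,1])\le\left\lfloor\tfrac{7}{2}A_4(6,5,[1,1,1])\right\rfloor$; and $A_4(6,5,[1,1,1])\le 4$ follows from the packing bound for triples of a $6$-set meeting pairwise in at most one point, yielding the bound $14$ and, at size exactly $14$, the rigid ``$K_7$ minus a $2$-factor'' structure on the $1$-positions. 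What your route buys is a written certificate that the answer lies in $\{10,\ldots,14\}$ and a concrete configuration to refute at size $14$; what it still owes, and explicitly defers to finite computation, is the elimination of sizes $11$ through $14$ --- but that is precisely the portion the paper also discharges only by machine, so your treatment is strictly closer to a complete proof of the stated equality than the paper's own.
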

\begin{proof}
For $n=7$, the $10$ required codewords are:
\begin{equation*}
\begin{array}{cccc}
\langle 0, 1, 2, 3\rangle & \langle 0, 2, 4, 5\rangle & \langle 0,
3, 5, 6\rangle & \langle 0, 4, 6, 1\rangle\\
\langle 1, 4, 3, 5\rangle & \langle 1, 6, 4, 2\rangle & \langle 2, 3, 1, 4\rangle & \langle 2, 6, 3, 0\rangle\\
\langle 4, 5, 2, 6\rangle & \langle 5, 6, 1, 3\rangle\\
\end{array}
\end{equation*}

For $n=11$, the $48$  required  codewords are:
\begin{equation*}
\begin{array}{cccc}
\langle 4, 6, 10, 9\rangle & \langle 3, 10, 2, 0\rangle & \langle 0,
8, 2, 9\rangle & \langle 3, 8, 10, 5\rangle\\
\langle 1, 8, 5, 2\rangle & \langle 0, 2, 1, 5\rangle & \langle 6, 8, 9, 1\rangle & \langle 3, 4, 1, 2\rangle\\
\langle 2, 8, 7, 4\rangle & \langle 2, 10, 8, 1\rangle & \langle 6, 10, 1, 7\rangle & \langle 5, 8, 6, 7\rangle\\
\langle 0, 10, 9, 4\rangle & \langle 2, 4, 0, 10\rangle & \langle 7,
8, 1, 3\rangle & \langle 3, 9, 5, 4\rangle\\
\langle 8, 9, 4, 10\rangle & \langle 5, 9, 1, 8\rangle & \langle 4, 7, 6, 1\rangle & \langle 1, 4, 9, 5\rangle\\
\langle 5, 7, 0, 4\rangle & \langle 1, 2, 3, 7\rangle & \langle 8, 10, 0, 6\rangle & \langle 6, 9, 0, 2\rangle\\
\langle 7, 10, 5, 8\rangle & \langle 0, 4, 5, 6\rangle & \langle 4,
10, 7, 3\rangle & \langle 4, 8, 3, 0\rangle\\
\langle 1, 3, 8, 9\rangle & \langle 3, 5, 7, 1\rangle & \langle 2, 9, 10, 3\rangle & \langle 0, 6, 7, 8\rangle\\
\langle 5, 10, 4, 2\rangle & \langle 1, 9, 7, 6\rangle & \langle 0, 1, 6, 10\rangle & \langle 0, 3, 4, 7\rangle\\
\langle 9, 10, 6, 5\rangle & \langle 3, 7, 9, 10\rangle & \langle 0,
5, 8, 3\rangle & \langle 1, 6, 4, 3\rangle\\
\langle 2, 5, 9, 6\rangle & \langle 0, 7, 10, 2\rangle & \langle 4, 9, 2, 7\rangle & \langle 7, 9, 8, 0\rangle\\
\langle 5, 6, 3, 10\rangle & \langle 0, 9, 3, 1\rangle & \langle 2, 3, 6, 8\rangle & \langle 6, 7, 2, 5\rangle.\\
\end{array}
\end{equation*}
\end{proof}
\vskip 10pt

\section{Base codewords for CCCs and GDCs with distance $6$ and type $[2,1,1]$}
\vskip 10pt
\begin{table*}
\centering
\begin{tabular}{cccccccc}
\hline $\langle 6, 10, 8, 16\rangle$ & $\langle 9, 15, 12,
16\rangle$ & $\langle 2, 12,
 1, 9\rangle$ & $\langle 6, 14, 9, 13\rangle$ & $\langle 11, 13, 10, 6\rangle$ &
 $\langle 0, 4, 7, 14\rangle$ & $\langle 2, 15, 14, 6\rangle$ & $\langle 0, 15,
11, 3\rangle$\\
$\langle 1, 6, 12, 11\rangle$ & $\langle 5, 12, 0, 4\rangle$ &
$\langle 3, 11, 8 , 9\rangle$ & $\langle 5, 10, 9, 11\rangle$ &
$\langle 0, 1, 16, 2\rangle$ & $\langle 1, 3, 15, 5\rangle$ &
$\langle 7, 8, 10, 2\rangle$ & $\langle 3, 13, 12, 0
\rangle$\\
$\langle 6, 7, 15, 0\rangle$ & $\langle 10, 15, 1, 4\rangle$ &
$\langle 11, 12, 16, 14\rangle$ & $\langle 3, 16, 7, 6\rangle$ &
$\langle 8, 14, 4, 11\rangle$ & $\langle 5, 13, 1, 14\rangle$ &
$\langle 10, 12, 7, 13\rangle$ & $\langle 7, 11,
 5, 1\rangle$\\
$\langle 0, 8, 13, 5\rangle$ & $\langle 1, 4, 8, 13\rangle$ &
$\langle 0, 9, 6, 10\rangle$ & $\langle 4, 16, 10, 12\rangle$ &
$\langle 4, 9, 5, 3\rangle$ & $\langle 5, 15, 8, 7\rangle$ &
$\langle 8, 16, 9, 1\rangle$ & $\langle 5, 6, 3, 2\rangle$\\
$\langle 4, 11, 15, 2\rangle$ & $\langle 1, 9, 14, 7\rangle$ &
$\langle 2, 16, 11, 13\rangle$ & $\langle 7, 13, 16, 4\rangle$ &
$\langle 7, 14, 3, 12\rangle$ & $\langle 8, 12, 6, 3\rangle$ &
$\langle 14, 16, 5, 15\rangle$ & $\langle 9, 13,
2, 8\rangle$\\
$\langle 10, 14, 2, 0\rangle$ & $\langle 2, 3, 4, 10\rangle$\\
\hline
\end{tabular}
\caption{Codewords of an Optimal $(17,6,[2,1,1])_4$-code with Size
$42$.} \label{17,6,[2,1,1]}
\end{table*}

\begin{table*}
\centering
\begin{tabular}{cccccccc}
\hline $\langle 3, 10, 2, 15\rangle$ & $\langle 14, 22, 3, 4\rangle$
& $\langle 7, 8, 21, 9\rangle$ & $\langle 20, 21, 5, 7\rangle$ &
$\langle 5, 14, 15, 1\rangle$ & $ \langle 15, 18, 21, 10\rangle$ &
$\langle 5, 13, 18, 20\rangle$ & $\langle 0, 9,
 14, 15\rangle$\\
$\langle 8, 13, 1, 15\rangle$ & $\langle 6, 21, 10, 14\rangle$ &
$\langle 4, 9, 18, 8\rangle$ & $\langle 17, 21, 19, 8\rangle$ &
$\langle 8, 11, 20, 6\rangle$ &
 $\langle 11, 22, 21, 0\rangle$ & $\langle 2, 16, 9, 3\rangle$ & $\langle 5, 12,
 0, 4\rangle$\\
$\langle 3, 18, 20, 22\rangle$ & $\langle 3, 8, 12, 14\rangle$ &
$\langle 14, 16 , 0, 11\rangle$ & $\langle 15, 19, 13, 7\rangle$ &
$\langle 20, 22, 8, 13\rangle $ & $\langle 5, 7, 17, 2\rangle$ &
$\langle 0, 20, 3, 12\rangle$ & $\langle 4, 21, 22, 15\rangle$\\
$\langle 12, 22, 16, 7\rangle$ & $\langle 14, 19, 6, 18\rangle$ &
$\langle 5, 19 , 11, 21\rangle$ & $\langle 10, 20, 9, 11\rangle$ &
$\langle 7, 16, 18, 15\rangle$ & $\langle 1, 21, 18, 2\rangle$ &
$\langle 2, 12, 6, 1\rangle$ & $\langle 6,
20, 15, 19\rangle$\\
$\langle 12, 14, 21, 20\rangle$ & $\langle 1, 11, 16, 9\rangle$ &
$\langle 9, 19 , 10, 2\rangle$ & $\langle 4, 20, 14, 2\rangle$ &
$\langle 19, 22, 1, 17\rangle$
 & $\langle 10, 12, 17, 19\rangle$ & $\langle 8, 18, 2, 16\rangle$ & $\langle 0,
 10, 7, 21\rangle$\\
$\langle 0, 6, 11, 2\rangle$ & $\langle 2, 14, 8, 7\rangle$ &
$\langle 6, 9, 13,
 22\rangle$ & $\langle 3, 19, 4, 0\rangle$ & $\langle 1, 7, 22, 20\rangle$ & $\langle 3, 9, 7, 1\rangle$ & $\langle 0, 4, 16, 13\rangle$ & $\langle
4, 7, 6, 10\rangle$\\
$\langle 12, 18, 9, 13\rangle$ & $\langle 2, 13, 19, 4\rangle$ &
$\langle 1, 15,
 3, 6\rangle$ & $\langle 8, 10, 4, 5\rangle$ & $\langle 11, 13, 10, 12\rangle$ &
 $\langle 0, 8, 17, 22\rangle$ & $\langle 17, 18, 7, 11\rangle$ & $\langle 3, 21
, 13, 11\rangle$\\
$\langle 2, 11, 5, 18\rangle$ & $\langle 4, 11, 17, 3\rangle$ &
$\langle 2, 17, 22, 10\rangle$ & $\langle 1, 4, 5, 19\rangle$ &
$\langle 15, 22, 5, 9\rangle$ & $\langle 5, 6, 8, 3\rangle$ &
$\langle 16, 19, 8, 12\rangle$ & $\langle 15, 17,
4, 14\rangle$\\
$\langle 1, 10, 14, 13\rangle$ & $\langle 16, 20, 17, 1\rangle$ &
$\langle 9, 17 , 20, 5\rangle$ & $\langle 12, 15, 11, 8\rangle$ &
$\langle 10, 22, 18, 6\rangle $ & $\langle 7, 11, 14, 19\rangle$ &
$\langle 5, 16, 10, 22\rangle$ & $\langle 1
, 17, 12, 0\rangle$\\
$\langle 7, 13, 0, 3\rangle$ & $\langle 13, 16, 21, 6\rangle$ &
$\langle 0, 18, 19, 5\rangle$ & $\langle 3, 17, 6, 16\rangle$ &
$\langle 13, 14, 17, 9\rangle$ &
 $\langle 2, 15, 20, 0\rangle$ & $\langle 9, 21, 12, 16\rangle$ & $\langle 6, 18
, 1, 4\rangle$\\
\hline
\end{tabular}
\caption{Codewords of an Optimal $(23,6,[2,1,1])_4$-code with Size
$80$.} \label{23,6,[2,1,1]}
\end{table*}

\begin{table*}
\centering
\begin{tabular}{cccccccc}
\hline
 $\langle 1, 3, 20, 30\rangle$ & $\langle 16, 34, 33,
17\rangle$ & $\langle 4, 31 , 16, 6\rangle$ & $\langle 13, 19, 4,
20\rangle$ & $\langle 10, 21, 2, 25\rangle $ & $\langle 19, 25, 16,
14\rangle$ & $\langle 6, 18, 0, 32\rangle$ & $\langle 22, 24, 28, 34\rangle$\\
$\langle 24, 33, 8, 3\rangle$ & $\langle 9, 10, 6, 30\rangle$ &
$\langle 8, 10, 15, 24\rangle$ & $\langle 4, 23, 17, 33\rangle$ &
$\langle 8, 31, 20, 25\rangle$
 & $\langle 25, 27, 30, 34\rangle$ & $\langle 12, 17, 24, 13\rangle$ & $\langle
13, 29, 25, 28\rangle$\\
$\langle 4, 10, 7, 14\rangle$ & $\langle 32, 33, 18, 4\rangle$ &
$\langle 14, 27 , 4, 23\rangle$ & $\langle 20, 33, 30, 16\rangle$ &
$\langle 23, 29, 16, 7\rangle$ & $\langle 7, 16, 27, 25\rangle$ &
$\langle 1, 31, 11, 24\rangle$ & $\langle
2, 22, 7, 0\rangle$\\
$\langle 13, 23, 30, 1\rangle$ & $\langle 4, 12, 15, 32\rangle$ &
$\langle 20, 27, 24, 26\rangle$ & $\langle 3, 9, 12, 31\rangle$ &
$\langle 2, 24, 17, 27\rangle$ & $\langle 17, 32, 6, 3\rangle$ &
$\langle 7, 30, 3, 13\rangle$ & $\langle 9,
 29, 21, 34\rangle$\\
$\langle 10, 20, 18, 19\rangle$ & $\langle 1, 22, 19, 27\rangle$ &
$\langle 27, 29, 0, 2\rangle$ & $\langle 14, 22, 13, 17\rangle$ &
$\langle 17, 21, 30, 19\rangle$ & $\langle 2, 9, 16, 5\rangle$ &
$\langle 17, 29, 4, 15\rangle$ & $\langle
13, 18, 10, 22\rangle$\\
$\langle 0, 31, 26, 30\rangle$ & $\langle 3, 25, 7, 32\rangle$ &
$\langle 2, 20,
 14, 1\rangle$ & $\langle 5, 8, 26, 29\rangle$ & $\langle 25, 33, 10, 0\rangle$
& $\langle 3, 8, 23, 16\rangle$ & $\langle 11, 16, 29, 20\rangle$ &
$\langle 16,
 32, 12, 10\rangle$\\
$\langle 17, 33, 27, 22\rangle$ & $\langle 20, 25, 13, 6\rangle$ &
$\langle 4, 11, 2, 9\rangle$ & $\langle 5, 23, 32, 9\rangle$ &
$\langle 16, 18, 19, 24\rangle $ & $\langle 15, 24, 31, 29\rangle$ &
$\langle 13, 31, 17, 21\rangle$ & $\langle
 5, 30, 4, 11\rangle$\\
$\langle 21, 22, 26, 15\rangle$ & $\langle 19, 28, 17, 32\rangle$ &
$\langle 31,
 33, 19, 23\rangle$ & $\langle 0, 32, 19, 21\rangle$ & $\langle 20, 21, 7, 29\rangle$ & $\langle 11, 17, 31, 0\rangle$ & $\langle 9, 25, 17,
8\rangle$ & $\langle 27, 32, 22, 5\rangle$\\
$\langle 9, 24, 1, 20\rangle$ & $\langle 6, 11, 22, 10\rangle$ &
$\langle 18, 30 , 15, 21\rangle$ & $\langle 3, 6, 24, 5\rangle$ &
$\langle 8, 22, 33, 30\rangle$
 & $\langle 11, 19, 21, 5\rangle$ & $\langle 4, 21, 1, 0\rangle$ & $\langle 16,
22, 9, 4\rangle$\\
$\langle 1, 8, 6, 14\rangle$ & $\langle 6, 28, 14, 9\rangle$ &
$\langle 21, 34, 18, 6\rangle$ & $\langle 26, 30, 9, 17\rangle$ &
$\langle 6, 29, 19, 26\rangle$ & $\langle 20, 23, 11, 34\rangle$ &
$\langle 2, 25, 29, 18\rangle$ & $\langle 4,
 24, 30, 18\rangle$\\
$\langle 7, 32, 2, 20\rangle$ & $\langle 13, 27, 9, 12\rangle$ &
$\langle 5, 7, 6, 28\rangle$ & $\langle 12, 22, 5, 18\rangle$ &
$\langle 4, 20, 5, 8\rangle$ & $\langle 21, 27, 28, 16\rangle$ &
$\langle 0, 5, 33, 20\rangle$ & $\langle 14, 29, 20, 10\rangle$\\
$\langle 15, 20, 9, 28\rangle$ & $\langle 19, 30, 1, 2\rangle$ &
$\langle 2, 3, 21, 33\rangle$ & $\langle 2, 31, 28, 15\rangle$ &
$\langle 5, 31, 27, 14\rangle$
 & $\langle 3, 27, 10, 15\rangle$ & $\langle 14, 16, 5, 30\rangle$ & $\langle 7,
 19, 22, 33\rangle$\\
$\langle 28, 33, 12, 21\rangle$ & $\langle 32, 34, 23, 15\rangle$ &
$\langle 8, 21, 9, 32\rangle$ & $\langle 2, 12, 26, 19\rangle$ &
$\langle 4, 34, 22, 25\rangle$ & $\langle 7, 14, 1, 26\rangle$ &
$\langle 29, 31, 22, 12\rangle$ & $\langle
 26, 34, 7, 27\rangle$\\
$\langle 5, 24, 10, 12\rangle$ & $\langle 1, 26, 21, 18\rangle$ &
$\langle 29, 32, 1, 8\rangle$ & $\langle 15, 23, 3, 6\rangle$ &
$\langle 11, 34, 24, 8\rangle$
 & $\langle 19, 34, 9, 3\rangle$ & $\langle 12, 30, 16, 8\rangle$ & $\langle 16,
 28, 13, 2\rangle$\\
$\langle 10, 31, 32, 34\rangle$ & $\langle 5, 15, 1, 22\rangle$ &
$\langle 5, 18 , 34, 2\rangle$ & $\langle 9, 14, 15, 19\rangle$ &
$\langle 8, 27, 11, 19\rangle $ & $\langle 8, 17, 2, 28\rangle$ &
$\langle 6, 12, 31, 7\rangle$ & $\langle 6,
30, 25, 23\rangle$\\
$\langle 6, 13, 33, 15\rangle$ & $\langle 28, 30, 0, 22\rangle$ &
$\langle 7, 15 , 17, 10\rangle$ & $\langle 5, 21, 13, 3\rangle$ &
$\langle 19, 23, 27, 18\rangle$ & $\langle 29, 30, 24, 33\rangle$ &
$\langle 17, 18, 20, 7\rangle$ & $\langle
 14, 18, 3, 8\rangle$\\
$\langle 18, 28, 23, 29\rangle$ & $\langle 20, 22, 31, 32\rangle$ &
$\langle 0, 7, 29, 24\rangle$ & $\langle 0, 10, 11, 28\rangle$ &
$\langle 30, 34, 10, 20\rangle$ & $\langle 4, 26, 29, 13\rangle$ &
$\langle 26, 28, 8, 20\rangle$ & $\langle 12, 34, 29, 1\rangle$\\
$\langle 15, 25, 21, 4\rangle$ & $\langle 1, 10, 5, 13\rangle$ &
$\langle 9, 18,
 4, 26\rangle$ & $\langle 18, 27, 31, 33\rangle$ & $\langle 13, 24, 7, 16\rangle
$ & $\langle 15, 33, 34, 14\rangle$ & $\langle 3, 29, 18, 11\rangle$
& $\langle
1, 17, 34, 9\rangle$\\
$\langle 7, 11, 12, 23\rangle$ & $\langle 23, 26, 14, 22\rangle$ &
$\langle 11, 28, 3, 27\rangle$ & $\langle 12, 21, 14, 11\rangle$ &
$\langle 26, 33, 5, 6\rangle$ & $\langle 2, 23, 10, 8\rangle$ &
$\langle 1, 33, 2, 29\rangle$ & $\langle 0, 14, 2, 6\rangle$\\
$\langle 21, 23, 24, 31\rangle$ & $\langle 11, 18, 1, 25\rangle$ &
$\langle 10, 17, 23, 26\rangle$ & $\langle 0, 1, 25, 12\rangle$ &
$\langle 14, 34, 31, 28\rangle$ & $\langle 0, 8, 18, 13\rangle$ &
$\langle 3, 13, 26, 14\rangle$ & $\langle
 25, 26, 12, 24\rangle$\\
$\langle 10, 12, 33, 27\rangle$ & $\langle 19, 26, 10, 31\rangle$ &
$\langle 25,
 28, 5, 31\rangle$ & $\langle 2, 6, 4, 34\rangle$ & $\langle 6, 27, 1, 17\rangle
$ & $\langle 30, 32, 14, 31\rangle$ & $\langle 0, 3, 17, 34\rangle$
& $\langle 7
, 31, 9, 18\rangle$\\
$\langle 5, 17, 25, 16\rangle$ & $\langle 10, 22, 3, 29\rangle$ &
$\langle 0, 15 , 16, 27\rangle$ & $\langle 1, 28, 4, 7\rangle$ &
$\langle 1, 16, 15, 23\rangle$
 & $\langle 12, 20, 0, 3\rangle$ & $\langle 9, 33, 7, 11\rangle$ & $\langle 15,
19, 8, 12\rangle$\\
$\langle 3, 4, 28, 19\rangle$ & $\langle 24, 32, 26, 11\rangle$ &
$\langle 19, 24, 6, 0\rangle$ & $\langle 7, 8, 34, 4\rangle$ &
$\langle 9, 32, 28, 13\rangle$ & $\langle 15, 26, 32, 2\rangle$ &
$\langle 22, 25, 23, 11\rangle$ & $\langle 14, 24, 25, 21\rangle$\\
$\langle 2, 13, 11, 32\rangle$ & $\langle 0, 9, 22, 23\rangle$ &
$\langle 13, 34 , 0, 5\rangle$ & $\langle 6, 16, 8, 21\rangle$ &
$\langle 11, 15, 13, 30\rangle$
 & $\langle 11, 14, 32, 33\rangle$ & $\langle 12, 23, 28, 25\rangle$ & $\langle
16, 26, 3, 0\rangle$\\\hline
\end{tabular}
\caption{Codewords of an Optimal $(35,6,[2,1,1])_4$-code with Size
$192$.} \label{35,6,[2,1,1]}
\end{table*}

\begin{proposition}
There exists a $[2,1,1]$-GDC$(6)$ of type $2^{3t+1}$
with size $6t^2+2t$ for each $t \in \{3,4,5,6,7,8,9,11\}$.
\end{proposition}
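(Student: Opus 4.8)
The plan is to construct each of the eight group divisible codes directly, by the difference method, exactly as the distance-five analogues in Propositions~\ref{GDC(5)gt} and~\ref{GDC(5)4t21} are built; these cases are the small ``seeds'' that later feed the recursive arguments (e.g. Theorem~\ref{CCC6t+2(3)}), so a direct construction is the natural route. For each $t\in\{3,4,5,6,7,8,9,11\}$ set $n=2(3t+1)$ and take the point set $X_t=\bbZ_{n}$ with group set $\G_t=\{\{i,\,3t+1+i\}:i\in\bbZ_{3t+1}\}$, the cosets of the unique subgroup of order $2$; this partitions $X_t$ into $3t+1$ parts of size $2$, so the type is $2^{3t+1}$. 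The code $\C_t$ is generated from a short list of base codewords $\langle a_1,a_2,a_3,a_4\rangle$ (the first two coordinates carrying the symbol~$1$, the third~$2$, the fourth~$3$, so composition $[2,1,1]$ holds by fiat) by developing modulo $n$ (cyclically or quasi-cyclically), possibly together with a multiplier $m$ acting via $x\mapsto m^{i}x$, $0\le i\le s-1$, on a designated sub-list $P$, just as in the earlier propositions. The orbit lengths are chosen so that the total number of codewords is $nt=2t(3t+1)=6t^{2}+2t$, which is also $\tfrac{g^{2}n(n-1)}{6}$ with $g=2$ and which meets the bound recalled in Lemma~\ref{CCC6t+2(1)}.

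With this framework fixed, three properties must be verified for each $t$, and since the construction is invariant under the group action it suffices to check each on the finite base list together with its difference data. \emph{(i) Constant composition $[2,1,1]$}: immediate from the tuple notation, and preserved by translation and by $m$ (the multiplier only permutes points within a symbol class). \emph{(ii) Transversality}, i.e. $\|u|_{G_i}\|\le1$ for every codeword $u$ and group $G_i$: equivalently, the four entries $a_1,a_2,a_3,a_4$ of every base codeword are pairwise distinct modulo $3t+1$, a finite check. \emph{(iii) Minimum distance $6$}: writing $d_H(u,v)=8-2k+r$, where $k=|\supp(u)\cap\supp(v)|$ and $r$ is the number of those coordinates at which $u$ and $v$ differ, one sees $d_H(u,v)\ge6$ holds automatically when $k\le1$, is impossible when $k\ge3$, and for $k=2$ holds iff $u$ and $v$ disagree at \emph{both} shared coordinates. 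So the distance requirement amounts to: no two codewords share three or more support coordinates, and any two sharing exactly two share them in different roles of the tuple. By translation- (and multiplier-) invariance this reduces to a bookkeeping statement about the internal ordered differences $a_i-a_j$ inside each base codeword and the cross differences between base codewords.

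Concretely, then: for each of the eight values of $t$, search by computer for a base set $P\cup R$ and multiplier data $(m,s)$ for which all orbits have the predicted full length and (i)--(iii) hold, and record the resulting base codewords (as is done for the other group sizes in the remaining propositions of this section). I expect no genuine mathematical obstacle; the only real work is the search itself and the exhaustive difference check certifying (iii), which is exactly why these objects are presented as explicit base-codeword lists rather than derived from a design. (For $t$ outside this list one instead builds $2^{3t+1}$ recursively from PBDs/GDDs via Construction~\ref{FundCtr}, as in Theorems~\ref{CCC6t+2(2)}--\ref{CCC6t+2(3)}, which is why only these eight small seeds need a direct treatment here.)
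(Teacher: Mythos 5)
Your framework is exactly the paper's: point set $\bbZ_{2(3t+1)}$, groups the cosets $\{i,3t+1+i\}$ of the order-$2$ subgroup, base codewords in $P\cup R$ expanded by a multiplier $m$ of order $s$ and then developed additively modulo $n$, with $t$ full-length orbits giving $2t(3t+1)=6t^2+2t$ codewords. Your reduction of the distance condition is also correct: with both codewords of weight $4$, $d_H(u,v)=8-2k+r$ where $k$ is the size of the common support and $r$ the number of shared coordinates at which the symbols differ, so $d\ge 6$ forces $k\le 2$, with equality of $k=2$ requiring disagreement at both shared positions; together with the transversality check this is precisely what must be certified on the base data.

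The gap is that you never produce the base data. This proposition is a bare existence statement whose entire proof, in the paper, consists of the explicit lists of base codewords and parameters $(n,m,s,M)$ for each of the eight values of $t$ (e.g.\ for $t=3$: $n=20$, $m=1$, $s=1$, $M=1$, $P=\{\langle 0,1,4,7\rangle,\langle 0,2,14,15\rangle,\langle 0,9,5,17\rangle\}$), verified by the finite difference check you describe. Saying ``search by computer \dots I expect no genuine mathematical obstacle'' identifies the right algorithm but does not establish existence: there is no a priori guarantee that a base set with the required orbit structure exists for every listed $t$ (indeed the paper's own lists for other types show that such searches sometimes fail for particular parameters, which is why only certain $t$ appear in these propositions). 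So the proposal is the correct approach with its essential content --- the exhibited codes --- missing; it is a proof outline, not a proof.
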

\begin{proof}
For each $t \in \{3,4,5,6,7,8,9,11\}$, let $X_t=\bbZ_{2(3t+1)}$,
$\G_t=\{\{i,3t+1+i\}:i \in \bbZ_{3t+1}\}$ and $\C_t$ be the set of cyclic shifts of the vectors generated by the following vectors respectively. Then $(X_t,\G_t,\C_t)$ is a
$[2,1,1]$-GDC$(6)$ of type $2^{3t+1}$ with size $6t^2+2t$, where
\begin{enumerate}
\item $t=3$, $n=20$, $m=1$, $s=1$, $M=1$
\begin{equation*}
\begin{array}{ccc}
P: \langle0, 1, 4, 7\rangle & \langle0, 2, 14, 15\rangle & \langle0, 9, 5, 17\rangle\\
\end{array}
\end{equation*}

\item $t=4$, $n=26$, $m=3$, $s=3$, $M=1$
\begin{equation*}
\begin{array}{ccc}
P: \langle 0, 1, 5, 22\rangle\\
R: \langle 0, 2, 8, 20\rangle\\
\end{array}
\end{equation*}

\item  $t=5$, $n=32$, $m=7$, $s=2$, $M=1$
\begin{equation*}
\begin{array}{ccc}
P: \langle 0, 1, 3, 29\rangle & \langle 0, 6, 18, 19\rangle\\
R: \langle 0, 8, 17, 23\rangle\\
\end{array}
\end{equation*}

\item $t=6$, $n=38$, $m=7$, $s=3$, $M=1$
\begin{equation*}
\begin{array}{ccc}
P: \langle 0, 1, 3, 30\rangle & \langle 0, 4, 9, 16\rangle\\
\end{array}
\end{equation*}

\item $t=7$, $n=44$, $m=3$, $s=2$, $M=1$
\begin{equation*}
\begin{array}{ccc}
P: \langle 0, 1, 5, 8\rangle & \langle 0, 2, 39, 36\rangle\\
R: \langle 0, 9, 27, 40\rangle & \langle 0, 16, 33, 26\rangle & \langle 0, 19, 32, 30\rangle\\
\end{array}
\end{equation*}

\item $t=8$, $n=50$, $m=3$, $s=5$, $M=1$
\begin{equation*}
\begin{array}{ccc}
P: \langle 0, 2, 19, 33\rangle\\
R: \langle 0, 14, 42, 22\rangle & \langle 0, 20, 5, 15\rangle & \langle 0, 24, 34, 40\rangle\\
\end{array}
\end{equation*}

\item $t=9$, $n=56$, $m=3$, $s=2$, $M=1$
\begin{equation*}
\begin{array}{ccc}
P: \langle 0, 1, 5, 8\rangle & \langle 0, 2, 51, 44\rangle & \langle 0, 25, 36, 48\rangle\\
R: \langle 0, 18, 45, 47\rangle & \langle 0, 22, 9, 39\rangle & \langle 0, 26, 10,16\rangle\\
\end{array}
\end{equation*}

\item $t=11$, $n=68$, $m=11$, $s=3$, $M=1$
\begin{equation*}
\begin{array}{ccc}
P: \langle 0, 1, 43, 63\rangle & \langle 0, 20, 46, 39\rangle\\
R: \langle 0, 17, 10, 9\rangle & \langle 0, 44, 47, 25\rangle & \langle 0, 36, 59, 6\rangle\\
\langle 0, 31, 29, 4\rangle & \langle 0, 12, 45, 8\rangle.\\
\end{array}
\end{equation*}
\end{enumerate}
\end{proof}
\vskip 10pt

\begin{proposition}
There exists a $[2,1,1]$-GDC$(6)$ of type $3^{2t+1}$
with size $6t^2+3t$ for each $t \in \{2,3\}$.
\end{proposition}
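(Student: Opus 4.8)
The plan is to prove this the same way as the earlier propositions of this section: by a direct construction over a cyclic group using the difference method. For $t\in\{2,3\}$ put $u=2t+1$ and take $X_t=\bbZ_{3u}=\bbZ_{6t+3}$, with $\G_t=\{\{i,\,u+i,\,2u+i\}:i\in\bbZ_u\}$; this is a partition of $X_t$ into $2t+1$ groups of size $3$, as required for type $3^{2t+1}$. Each codeword is recorded as a tuple $\langle a_1,a_2,a_3,a_4\rangle$ with $a_1,a_2$ holding the symbol $1$, $a_3$ the symbol $2$ and $a_4$ the symbol $3$, so it automatically has weight $4$ and composition $[2,1,1]$. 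The code $\C_t$ is the set of all translates $+j$, $j\in\bbZ_{3u}$, of a short list of base codewords; since $3u$ is odd, each such translation orbit has size exactly $3u$, so $|\C_t|$ equals $3u$ times the number of base codewords, and I therefore want $t$ of them, giving $|\C_t|=t\cdot 3u=3t(2t+1)=6t^2+3t$ as claimed (for $t=2$, two base codewords on $\bbZ_{15}$, size $30$; for $t=3$, three base codewords on $\bbZ_{21}$, size $63$). If economical, a multiplier $m$ coprime to $3u$ may be used to compress the list, some base codewords being recorded only up to their $m$-orbit.

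Two observations reduce the verification to a finite check on the base codewords. First, translating all four coordinates by the same constant, and multiplying by any $m$ with $\gcd(m,3u)=1$, each map groups to groups: multiplication by $m$ fixes the order-$3$ subgroup $\{0,u,2u\}$ since $\gcd(m,3)=1$, hence permutes its cosets, while translation preserves coordinate differences modulo $u$; so the ``at most one coordinate per group'' requirement holds throughout $\C_t$ as soon as the four coordinates of each base codeword are pairwise incongruent modulo $u$. Second, for two weight-$4$ codewords $\vu,\vv$ of composition $[2,1,1]$ one computes $d_H(\vu,\vv)=8-k-j$, where $k=|\supp(\vu)\cap\supp(\vv)|$ and $j$ counts the common-support positions on which $\vu,\vv$ carry the same symbol; hence $d_H\ge 6$ is equivalent to $k+j\le 2$, that is, no two codewords of $\C_t$ share three support coordinates, and no two share two support coordinates with equal symbol on at least one of them. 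Translated into differences over $\bbZ_{3u}$, this says that the labelled difference list of the base codewords — labels recording the pair of symbol-classes of the two endpoints of each coordinate pair, differences read modulo the multiplier action — exhibits no such forbidden coincidence; a coordinate pair with difference in $\{u,2u\}$ cannot occur once the group condition holds, and difference $0$ is impossible.

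The only genuine work is to find base-codeword lists that pass this difference check, which is done by a short computer search; once such lists are displayed, the two checks above are routine and can be recorded by inspection. I thus expect the proof to consist of the data $\{n,m,s,M,P,R\}$ for $t=2$ (on $\bbZ_{15}$) and $t=3$ (on $\bbZ_{21}$), together with the assertion that developing $P$ (after multiplier expansion) and $R$ by $+M$ modulo $n$ yields a $[2,1,1]$-GDC$(6)$ of type $3^{2t+1}$ of the stated size. As in the type-$1^n$ case, these small GDCs are not merely of intrinsic interest: they serve as ingredients for Construction~\ref{FundCtr} (applied with weight $3$) and for Construction~\ref{Inflation} in the recursive constructions for distance~$6$ used elsewhere in the paper.
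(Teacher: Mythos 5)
Your proposal matches the paper's proof in essence: the paper takes $X_t=\bbZ_{3(2t+1)}$ with groups $\{i,2t+1+i,4t+2+i\}$ and exhibits a single computer-found base codeword in each case ($\langle 0,1,4,12\rangle$ on $\bbZ_{15}$ with multiplier $m=2$, $s=2$; $\langle 0,1,9,13\rangle$ on $\bbZ_{21}$ with $m=2$, $s=3$), developed additively modulo $n$ — exactly the multiplier-compressed form of the two (resp.\ three) base codewords your counting predicts. Your reduction of the distance check to $k+j\le 2$ and of the group condition to incongruence modulo $2t+1$ is sound, so the only remaining content is the explicit data, as you anticipated.
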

\begin{proof}
For each $t \in \{2,3\}$, let $X_t=\bbZ_{3(2t+1)}$, $\G_t=\{\{i,2t+1+i,4t+2+i\}:i \in \bbZ_{2t+1}\}$ and  $\C_t$ be the set of cyclic shifts of the vectors generated by the following vectors respectively. Then $(X_t,\G_t,\C_t)$
is a $[2,1,1]$-GDC$(6)$ of type $3^{2t+1}$ with size $6t^2+3t$, where
\begin{enumerate}
\item $t=2$, $n=15$, $m=2$, $s=2$, $M=1$
\begin{equation*}
\begin{array}{ccc}
P: \langle 0, 1, 4, 12\rangle\\
\end{array}
\end{equation*}

\item $t=3$, $n=21$, $m=2$, $s=3$, $M=1$
\begin{equation*}
\begin{array}{ccc}
P: \langle 0, 1, 9, 13\rangle.\\
\end{array}
\end{equation*}
\end{enumerate}
\end{proof}
\vskip 10pt

\begin{proposition}
There exists a $[2,1,1]$-GDC$(6)$ of type $4^t$ with size $\frac{8t(t-1)}{3}$ for each $t \in \{4,7\}$.
\end{proposition}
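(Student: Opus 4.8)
The plan is to build both group divisible codes explicitly by the difference method over a cyclic group, in the same style as the earlier $[2,1,1]$-GDC constructions in Sections~6 and~7. For $t\in\{4,7\}$ put $X_t=\bbZ_{4t}$ and take for $\G_t$ the partition of $\bbZ_{4t}$ into the cosets of its unique subgroup of order $t$, namely
\[
\G_t=\{\{i,\,t+i,\,2t+i,\,3t+i\}:i\in\bbZ_t\},
\]
which has type $4^t$ as required. The code $\C_t$ will be generated from a short list of base codewords $\langle a_1,a_2,a_3,a_4\rangle$ by developing them modulo $4t$ (translation by a fixed $M$), together with a multiplier $m$ of suitable multiplicative order (possibly trivial) acting on a designated sublist, exactly as in the other propositions of this type. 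The number of base codewords and the multiplier order are chosen so that $|\C_t|=8t(t-1)/3$, that is, $32$ when $t=4$ and $112$ when $t=7$.

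The verification then splits into three routine parts. Since every map $x\mapsto mx+k$ is a bijection of $\bbZ_{4t}$, it sends a $[2,1,1]$ codeword to a $[2,1,1]$ codeword, so every element of $\C_t$ has composition $[2,1,1]$ and weight $4$. The group condition $\|u|_{G_i}\|\le 1$ says that the four coordinates of each codeword are pairwise distinct modulo $t$; a translation shifts all residues mod $t$ by the same amount, and choosing $m$ coprime to $4t$ makes $m$ a unit mod $t$, so it suffices to check this for the finitely many base codewords. Finally, the size count $|\C_t|=8t(t-1)/3$ is a bookkeeping check on the orbit lengths, and this value coincides with the size asserted for a $[2,1,1]$-GDC$(6)$ of type $4^t$ in Proposition~\ref{GDC(6)gt}.

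The real content is the minimum distance. For codewords $u,v$ with supports $S,T$ one has $d_H(u,v)=8-|S\cap T|-e$, where $e$ is the number of positions of $S\cap T$ at which $u$ and $v$ carry the same nonzero value; hence $d_H(u,v)\ge 6$ if and only if either $|S\cap T|\le 1$, or $|S\cap T|=2$ and $u,v$ disagree at both common positions. Two codewords obtained from base codewords $u_0,v_0$ by group elements share a given pair of coordinates exactly when a corresponding pair of difference equations of the form $a_i-a_j\equiv m^{\ell}(b_k-b_l)\pmod{4t}$ holds, so the distance condition reduces to a finite verification over the very small set of base codewords and over $\bbZ_{4t}$: one checks that no such coincidence of two coordinates is ever accompanied by agreement of the values there. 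I expect this last step to be the main obstacle, since it is precisely the constraint that forces the base lists to be located by computer search; once lists with the stated parameters passing this check are exhibited, developing them yields the desired $[2,1,1]$-GDC$(6)$s of types $4^4$ and $4^7$, which completes the proof.
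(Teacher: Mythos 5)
Your framework is the same as the paper's: both constructions live on $X_t=\bbZ_{4t}$ with groups $\{i,t+i,2t+i,3t+i\}$, $i\in\bbZ_t$, and the code is obtained by developing a short list of base codewords under translation and a multiplier. Your reduction of the distance condition ($d_H(u,v)=8-|S\cap T|-e$, so one needs $|S\cap T|\le 1$, or $|S\cap T|=2$ with disagreement at both common positions) is correct, as is the observation that the group condition and the composition are preserved by the affine maps $x\mapsto mx+k$ and therefore only need checking on base codewords. (One small slip: the groups are the cosets of the \emph{order-$4$} subgroup $\{0,t,2t,3t\}$ of $\bbZ_{4t}$, i.e.\ the subgroup of index $t$, not of order $t$; the displayed formula you wrote is nevertheless the right one.)

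However, there is a genuine gap: you never exhibit the base codewords, and for an existence statement of this kind the explicit data \emph{is} the proof. You correctly identify that the distance check "is precisely the constraint that forces the base lists to be located by computer search," but then stop at "once lists with the stated parameters passing this check are exhibited" --- which is exactly the step that cannot be waved through, since there is no a priori guarantee that suitable base codewords exist for these particular parameters. The paper supplies them: for $t=4$ the single starter $\langle 0,1,7,10\rangle$ with multiplier $m=5$ of order $s=2$, developed modulo $16$ (giving $2\cdot 16=32$ codewords); for $t=7$ the starter $\langle 0,1,5,17\rangle$ under $m=3$, $s=2$, together with the two further base codewords $\langle 0,2,10,11\rangle$ and $\langle 0,6,19,24\rangle$, all developed modulo $28$ (giving $(2+2)\cdot 28=112$ codewords). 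Without these lists (or equivalent ones that pass your own verification scheme), the proposition remains unproved.
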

\begin{proof}
For each $t \in \{4,7\}$, let $X_t=\bbZ_{4t}$, $\G_t=\{\{i,t+i,2t+i,3t+i\}:i \in \bbZ_t\}$ and  $\C_t$ be the set of cyclic shifts of the vectors generated by the following vectors respectively. Then $(X_t,\G_t,\C_t)$
is a $[2,1,1]$-GDC$(6)$ of type $4^t$ with size $\frac{8t(t-1)}{3}$,
where
\begin{enumerate}
\item $t=4$, $n=16$, $m=5$, $s=2$, $M=1$
\begin{equation*}
\begin{array}{ccc}
P: \langle 0, 1, 7, 10\rangle\\
\end{array}
\end{equation*}

\item $t=7$, $n=28$, $m=3$, $s=2$, $M=1$
\begin{equation*}
\begin{array}{ccc}
P: \langle 0, 1, 5, 17\rangle\\
R: \langle 0, 2, 10, 11\rangle & \langle 0, 6, 19, 24\rangle.\\
\end{array}
\end{equation*}
\end{enumerate}
\end{proof}
\vskip 10pt

\begin{proposition}
There exists a $[2,1,1]$-GDC$(6)$ of type $6^t$ with size $6t(t-1)$ for each $t \in \{4,5,6,7\}$.
\end{proposition}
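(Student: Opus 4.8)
The plan is to build each of the four required codes directly, by the difference (base‑block) method over the cyclic group $\bbZ_{6t}$, in exactly the style of Propositions 7.1--7.5. Fix $t\in\{4,5,6,7\}$, put $X_t=\bbZ_{6t}$, and take as group partition $\G_t=\{\{i,t+i,2t+i,3t+i,4t+i,5t+i\}:i\in\bbZ_t\}$, i.e.\ the $t$ cosets of the order‑$6$ subgroup $\langle t\rangle\le\bbZ_{6t}$. The code $\C_t$ will be the union of the orbits of a short list of base codewords of composition $[2,1,1]$ under the group generated by the translation $x\mapsto x+1 \pmod{6t}$ together, when needed, with a multiplier $x\mapsto mx$ of order $s$: each base codeword in $P$ contributes the $s$ codewords $m^iP$, $0\le i\le s-1$, each developed $+1$ modulo $6t$, and each base codeword in $R$ is simply developed $+1$ modulo $6t$.

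First I would note why the ambient structure automatically carries the GDD action: translation by $1$ cyclically permutes the groups of $\G_t$ and preserves the composition $[2,1,1]$, and so does the multiplier when it is an automorphism of the group partition; hence it suffices to verify the defining properties on the base codewords and the pairwise conditions between orbit representatives. Two things must be checked. (i) No codeword meets a group twice: for a base codeword $\langle a_1,a_2,a_3,a_4\rangle$ this is the requirement that no two of $a_1,\dots,a_4$ are congruent modulo $t$, which is preserved by all the automorphisms used. (ii) The minimum distance is at least $6$: since all codewords have weight $4$, two distinct codewords are at distance $<6$ precisely when they share at least three coordinate positions, or share exactly two positions and agree in value on one of them; by the translation/multiplier invariance this reduces to a finite check of coincidences among the (typed) differences $a_i-a_j$ of the base codewords, which for $n=6t\le 42$ is a routine computation. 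A final counting check shows the resulting code has exactly $6t(t-1)$ codewords, which is the size claimed for a $[2,1,1]$‑GDC$(6)$ of type $6^t$ and equals $\frac{g^2t(t-1)}{6}$ with $g=6$.

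The only genuinely non‑mechanical step is producing the base codewords themselves; this is done by computer search (exhaustive or randomized over candidate multipliers $m$ and orbit lengths $s$). I would then simply exhibit, for each $t\in\{4,5,6,7\}$, the data $n=6t$, $m$, $s$, $M$ together with the sets $P$ and $R$ in the list format used throughout the appendix, after which the verification of (i) and (ii) is immediate. I do not expect any conceptual obstacle beyond the search: once a valid base‑codeword set for a given $t$ is in hand, everything else is a bounded finite check.
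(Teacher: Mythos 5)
Your framework is exactly the paper's: for each $t\in\{4,5,6,7\}$ the paper takes $X_t=\bbZ_{6t}$, the cosets of the order-$6$ subgroup $\langle t\rangle$ as groups, and a short list of base codewords in $P$ and $R$ developed $+1$ modulo $6t$ with a multiplier $m$ of orbit length $s$ (e.g.\ $t=4$: $n=24$, $m=5$, $s=2$; $t=7$: $n=42$, $m=11$, $s=3$). Your verification criteria are also correct — in particular your characterization that two weight-$4$ codewords are at distance $<6$ exactly when they share at least three positions, or share exactly two positions and agree in value on at least one of them, is the right reduction.

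The gap is that you never actually exhibit the base codewords. For a pure existence statement of this kind, the list of base codewords \emph{is} the proof; everything you describe around it (orbit structure, invariance under the automorphisms, the distance reduction, the count $6t(t-1)=\frac{6^2t(t-1)}{6}$) is routine bookkeeping that any reader could reproduce. Saying ``the data would be found by computer search'' asserts that a search would succeed without evidence that it does — and for difference constructions of this type there is no a priori guarantee that a suitable $(m,s,P,R)$ exists for every parameter set; that is precisely what must be demonstrated. So as written the proposal is a correct plan in the paper's style, but not yet a proof: it becomes one only once the four explicit base-codeword lists (which the paper supplies in its Proposition 7.4) are produced and checked.
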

\begin{proof}
For each $t \in \{4,5,6,7\}$, let $X_t=\bbZ_{6t}$,
$\G_t=\{\{i,t+i,2t+i,3t+i,4t+i,5t+i\}:i \in \bbZ_t\}$ and  $\C_t$ be the set of cyclic shifts of the vectors generated by the following vectors respectively. Then
$(X_t,\G_t,\C_t)$ is a $[2,1,1]$-GDC$(6)$ of type $6^t$ with size
$6t(t-1)$, where
\begin{enumerate}
\item $t=4$, $n=24$, $m=5$, $s=2$, $M=1$
\begin{equation*}
\begin{array}{ccc}
P: \langle 0, 1, 3, 22\rangle\\
R: \langle 0, 7, 13, 18\rangle\\
\end{array}
\end{equation*}

\item $t=5$, $n=30$, $m=17$, $s=2$, $M=1$
\begin{equation*}
\begin{array}{ccc}
P: \langle 0, 3, 1, 7\rangle & \langle 0, 6, 19, 22\rangle\\
\end{array}
\end{equation*}

\item $t=6$, $n=36$, $m=7$, $s=2$, $M=1$
\begin{equation*}
\begin{array}{ccc}
P: \langle 0, 1, 10, 14\rangle\\
R: \langle 0, 25, 4, 21\rangle & \langle 0, 16, 33, 2\rangle & \langle 0, 5, 28, 8\rangle\\
\end{array}
\end{equation*}

\item $t=7$, $n=42$, $m=11$, $s=3$, $M=1$
\begin{equation*}
\begin{array}{ccc}
P: \langle 0, 3, 4, 34\rangle & \langle 0, 12, 29, 32\rangle.\\
\end{array}
\end{equation*}
\end{enumerate}
\end{proof}
\vskip 10pt

\begin{proposition}
There exists a $[2,1,1]$-GDC$(6)$ of type $7^4$ with size $98$.
\end{proposition}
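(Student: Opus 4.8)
The plan is to construct the required group divisible code explicitly by the difference method, in the style used for the other small ingredients of this appendix. I would take the point set $X=\bbZ_{28}$ together with the partition $\G=\{\{i,4+i,8+i,12+i,16+i,20+i,24+i\}:i\in\bbZ_4\}$, whose four parts are the cosets of the unique subgroup of order $7$ in $\bbZ_{28}$; this is a partition of type $7^4$. The codeword set $\C$ is then generated from a short list of base $4$-tuples $\langle a_1,a_2,a_3,a_4\rangle$, where $a_1,a_2$ carry the symbol $1$, $a_3$ carries $2$ and $a_4$ carries $3$ so that every codeword has composition $[2,1,1]$, by developing them modulo $28$ after possibly applying a multiplier $m$ (a unit modulo $28$) to one part of the list. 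The concrete data --- the development step $M$, the multiplier $m$, the number $s$ of its powers used, and the two parts $P$ and $R$ of the base list --- would be found by a short computer search and tabulated in the usual format. Since no nonzero translation of $\bbZ_{28}$ can fix a weight-$4$ vector of composition $[2,1,1]$ (such a vector would be constant on the cosets of a nontrivial subgroup, so every symbol would occur a number of times divisible by at least $2$, contradicting the single occurrences in $[2,1,1]$), each orbit under pure development by $+M$ has length exactly $28/\gcd(M,28)$; taking $M=2$ gives orbits of length $14$, so seven base codewords in total yield $\frac{7^2\cdot 4\cdot 3}{6}=98$ codewords.

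With such a list in hand the verification reduces to three routine checks. First, each base codeword must be a transversal of $\G$, i.e.\ $a_1,a_2,a_3,a_4$ must be pairwise distinct modulo $4$; this property is preserved by translation and by any unit multiplier (which merely permutes the four cosets), so it suffices to inspect $P\cup R$. Second, the minimum distance must be at least $6$: two $[2,1,1]$-codewords lie at Hamming distance less than $6$ precisely when they share a coordinate carrying the same nonzero symbol, that is, when their unordered pairs of symbol-$1$ coordinates meet, or their symbol-$2$ coordinates coincide, or their symbol-$3$ coordinates coincide --- the same-symbol analogue of the forbidden patterns isolated in the proof of Theorem~\ref{SSRStoCodes}. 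Translating this into differences within and across the symbol classes, one records for each base codeword the multiset of relevant differences, applies the multiplier action, and checks that every nonzero residue is covered with exactly the prescribed multiplicity; this is precisely the statement that no two distinct developed codewords collide in a same-symbol coordinate, hence that the distance is at least $6$. Third, the orbits are pairwise disjoint and of the expected length, so that $|\C|=98$, matching the claimed size.

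The hard part is the search itself: one must produce a base list whose difference data simultaneously avoids all intra-group collisions, covers each same-symbol and cross-symbol difference exactly the required number of times so that distance $6$ holds, and yields exactly $98$ codewords with no accidental short orbits. Once a suitable triple $(M,m,s)$ together with $P$ and $R$ is exhibited, the remaining verification is a finite mechanical computation of the same kind performed for every other small ingredient here, so no additional ideas are required.
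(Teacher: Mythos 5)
Your construction framework is exactly the paper's: $X=\bbZ_{28}$, groups the residue classes modulo $4$ (type $7^4$), base codewords developed by $+2$ with a unit multiplier, seven orbits of length $14$ giving $98$ codewords. (The paper's data are $m=5$, $s=2$, $M=2$, $P=\{\langle 0,13,3,18\rangle,\langle 1,12,3,14\rangle\}$, $R=\{\langle 1,4,18,27\rangle,\langle 0,5,26,27\rangle,\langle 1,8,2,15\rangle\}$.) However, there is a concrete error in your verification criterion. For two weight-$4$ words of composition $[2,1,1]$ one has
\begin{equation*}
d_H(\vu,\vv)=8-|\supp(\vu)\cap\supp(\vv)|-|\{x:\vu_x=\vv_x\neq 0\}|,
\end{equation*}
so $d_H(\vu,\vv)<6$ if and only if $|\supp(\vu)\cap\supp(\vv)|+|\{x:\vu_x=\vv_x\neq 0\}|\geq 3$. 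This is not the same as ``sharing a coordinate carrying the same nonzero symbol'': two codewords whose supports meet in three coordinates with a \emph{different} symbol at each of them have distance $5$ yet satisfy none of your forbidden patterns, while two codewords that share exactly one same-symbol coordinate and are otherwise disjoint have distance exactly $6$ and are perfectly admissible. A computer search filtered by your condition would therefore both discard valid base lists and, worse, certify invalid ones. The correct check is that any two distinct codewords intersect in at most two coordinates, and when they intersect in exactly two, the symbols differ at both.

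Separately, since the proposition is a bare existence statement whose only proof is the exhibited object, a description of the search procedure without the actual base codewords does not yet establish it; the paper's proof \emph{is} the data listed above, and your argument remains incomplete until an analogous verified list is produced.
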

\begin{proof}
Let $X=\bbZ_{28}$, $\G=\{\{i,4+i,8+i,\ldots,24+i\}:i
\in \bbZ_4\}$ and  $\C$ be the set of quasi-cyclic shifts of the vectors generated by the following vectors. Then $(X,\G,\C)$ is a $[2,1,1]$-GDC$(6)$ of type
$7^4$ with size $98$, where $n=28$, $m=5$, $s=2$, $M=2$ and
\begin{equation*}
\begin{array}{ccc}
P: \langle 0, 13, 3, 18\rangle & \langle 1, 12, 3, 14\rangle\\
R: \langle 1, 4, 18, 27\rangle & \langle 0, 5, 26, 27\rangle & \langle 1, 8, 2, 15\rangle.\\
\end{array}
\end{equation*}

\end{proof}
\vskip 10pt

\begin{proposition}
There exists a $[2,1,1]$-GDC$(6)$ of type $10^4$
with size $200$.
\end{proposition}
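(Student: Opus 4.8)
The plan is to construct the GDC by the difference method, exactly as in the preceding propositions of this section. I would take the point set $X=\bbZ_{40}$ and let the four groups be the cosets of the subgroup of order $10$, namely $\G=\{\{i,4+i,8+i,\dots,36+i\}:i\in\bbZ_4\}$, so that the type is $10^4$. I then look for a code $\C\subseteq\bbZ_4^{X}$ of composition $[2,1,1]$ invariant under a subgroup of the translations $x\mapsto x+M$ of $\bbZ_{40}$ (and, if it shortens the base list, also under a multiplier $x\mapsto mx$ with $\gcd(m,40)=1$, which permutes the four cosets and is therefore an automorphism of the group partition). Since $|\C|$ must equal $200=10^2\cdot4\cdot3/6$, the whole task reduces to exhibiting a short list of base codewords $\langle a_1,a_2,a_3,a_4\rangle$ (so that the value $1$ is placed at $a_1$ and $a_2$, the value $2$ at $a_3$, and the value $3$ at $a_4$) whose development under $\langle m\rangle$ and $+M$ produces a valid $[2,1,1]$-GDC$(6)$ of the right cardinality.

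Next I would translate the two defining properties of a GDC$(6)$ into conditions on the base codewords. The group condition $\|u|_{G_i}\|\le 1$ says precisely that in each base codeword the four coordinates are pairwise incongruent modulo $4$. For the distance condition, observe that two weight-$4$ codewords $u,v$ which share $k$ coordinate positions and agree in value on $j$ of those $k$ positions satisfy $d_H(u,v)=8-k-j$; hence $d_H(u,v)\ge 6$ is equivalent to $k+j\le 2$, i.e. no two distinct codewords of $\C$ may share three or more coordinate positions, and if two of them share exactly two positions then they must disagree in value at both. Since $\C$ is a union of orbits, it suffices to impose this on each base codeword paired with an arbitrary translate or multiplier-image of a base codeword; this turns it into a controlled-multiplicity requirement on the multiset of internal differences $a_r-a_s$ of the base codewords, together with a compatibility constraint describing how the ``$1$-pair'', the ``$2$-coordinate'' and the ``$3$-coordinate'' of one codeword may overlap those of another. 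This is exactly the kind of difference system one attacks with a backtracking or hill-climbing search over $\bbZ_{40}$.

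Finally I would run such a search to produce the base list, record it in the format used above (the data $n=40$ together with a multiplier $m$, an exponent $s$, an increment $M$, and the sets $P$ and $R$), and then verify directly that the $200$ resulting codewords meet the group condition and have pairwise Hamming distance at least $6$. The main obstacle is the search itself: the distance-$6$ requirement is strictly more restrictive than the distance-$5$ analogue used earlier in the paper, so admissible base codewords are scarce and no clean algebraic family is to be expected; keeping the search tractable hinges on encoding all of the conditions as difference constraints and on exploiting the translation (and, where available, multiplier) symmetry to cut down the number of orbits that must be chosen simultaneously. Once such a list is found, correctness is a routine finite verification, and this $10^4$ design --- together with the other small group-divisible codes of this section --- supplies one of the ingredients needed for the recursive constructions in the cases $n\equiv 3,4\pmod 6$.
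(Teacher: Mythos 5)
Your setup is exactly the paper's: the point set $\bbZ_{40}$, the groups $\{i,4+i,\dots,36+i\}$ for $i\in\bbZ_4$ (cosets mod $4$, giving type $10^4$), the reformulation of the group condition as ``the four support positions are pairwise incongruent mod $4$,'' and the reformulation of $d_H\ge 6$ as $k+j\le 2$ via $d_H(u,v)=8-k-j$ are all correct, as is the orbit count forcing $|\C|=200$. The difficulty is that for a proposition of this kind the entire mathematical content \emph{is} the explicit list of base codewords: the statement is a bare existence claim settled by exhibiting a concrete object found by computer search, and your proposal stops exactly at the point where that object would have to appear (``I would run such a search\dots''). As written, nothing has been proved; you have only reduced the problem to a finite search whose feasibility is not self-evident (indeed, for some parameter sets in this family no such design exists, so one cannot wave at the search and declare success).

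For the record, the paper's proof consists precisely of the data you left unspecified: $n=40$, $m=3$, $s=3$, $M=2$, with
$P=\{\langle 0,13,2,11\rangle,\ \langle 0,3,21,22\rangle\}$ and
$R=\{\langle 0,35,1,34\rangle,\ \langle 0,25,10,15\rangle,\ \langle 1,34,11,24\rangle,\ \langle 0,5,14,31\rangle\}$,
each element of $P$ being multiplied by $3^i$ for $0\le i\le 2$ and the resulting $10$ base codewords developed by $+2$ modulo $40$, yielding $10\times 20=200$ codewords. Supplying such a verified list (or an equivalent one) is what would turn your plan into a proof; everything else in your write-up is sound and matches the paper's method.
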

\begin{proof}
Let $X=\bbZ_{40}$, $\G=\{\{i,4+i,8+i,\ldots,36+i\}:i
\in \bbZ_4\}$ and $\C$ be the set of quasi-cyclic shifts of the vectors generated by the following vectors. Then $(X,\G,\C)$ is a $[2,1,1]$-GDC$(6)$ of type
$10^4$ with size $200$, where $n=40$, $m=3$, $s=3$, $M=2$ and
\begin{equation*}
\begin{array}{ccc}
P: \langle 0, 13, 2, 11\rangle & \langle 0, 3, 21, 22\rangle\\
R: \langle 0, 35, 1, 34\rangle & \langle 0, 25, 10, 15\rangle & \langle 1, 34, 11, 24\rangle\\
\langle 0, 5, 14, 31\rangle.\\
\end{array}
\end{equation*}
\end{proof}
\vskip 10pt

\begin{proposition}
There exists a $[2,1,1]$-GDC$(6)$ of type
$12^t9^1$ with size $12t(2t+1)$ for each $t \in \{4,5,\ldots,15,17,18,$
$19,23\}$.
\end{proposition}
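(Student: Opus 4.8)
The plan is to prove this by the difference method, exhibiting for each of the finitely many values $t\in\{4,5,\ldots,15,17,18,19,23\}$ an explicit $[2,1,1]$-GDC$(6)$ of type $12^t9^1$ in exactly the format of the preceding propositions: an ambient set built from a cyclic group together with a block of nine ``infinite'' points, a multiplier $m$ of multiplicative order $s$, a translation step $M$, and two lists $P$ and $R$ of base codewords.

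First I would fix the ambient structure uniformly. For each $t$ take $X_t=\bbZ_{12t}\cup I$, where $I$ is a set of nine infinite points; the $t$ groups of size $12$ are the cosets of the unique subgroup of order $t$ in $\bbZ_{12t}$, and $I$ is the ninth group. The developing automorphism acts on $\bbZ_{12t}$ by $x\mapsto x+1$ and maps $I$ onto itself, following the convention for infinite points stated in the preamble to this section. The code $\C_t$ is then the set obtained by developing the listed base codewords by $+M$ modulo $12t$, after first replacing each codeword of $P$ by its $s$ images under multiplication by powers of $m$.

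What remains, for each $t$, is the verification of three conditions. (i) \emph{Composition and group condition:} every codeword $\langle a_1,a_2,a_3,a_4\rangle$ has its two coordinates carrying symbol $1$, its coordinate carrying symbol $2$, and its coordinate carrying symbol $3$ lying in four distinct groups; since this is preserved under the shift and the multiplier it suffices to check it on the base codewords. (ii) \emph{Distance condition:} $d_H(u,v)\ge 6$ for all distinct $u,v\in\C_t$; as the codewords have weight $4$, this is equivalent to requiring that no two distinct codewords agree in two or more nonzero coordinates, which in turn reduces to a finite ``mixed-difference'' check on the base codewords, namely that the relevant lists of coordinate differences and of (symbol, difference) incidences contain no forbidden coincidence. (iii) \emph{Size:} the development produces exactly $12t(2t+1)$ distinct codewords, i.e.\ no base codeword has a short orbit under the combined translation-multiplier group; when $M=1$ this amounts to $|P|\cdot s+|R|=2t+1$. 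Granted (i)--(iii), $(X_t,\G_t,\C_t)$ is the required $[2,1,1]$-GDC$(6)$ of type $12^t9^1$.

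The only real difficulty is the search itself: for each of these values of $t$ one must produce lists $P$ and $R$, with compatible $m$ and $M$, satisfying (i)--(iii) simultaneously, and this is carried out by computer. I would organise the search so that $P$ absorbs the codewords whose orbits are to be shortened by the multiplier and $R$ handles the residual differences, keeping the effective base count minimal. The values of $t$ outside the stated list are precisely those for which no direct construction is attempted here; for them a GDC of type $12^t9^1$ is instead obtained recursively (cf.\ Lemma~\ref{GDC12t91(1)}), so that only this finite set of direct searches is needed.
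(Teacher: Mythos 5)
Your proposal follows exactly the paper's route: the paper proves this proposition by listing, for each of the sixteen values of $t$, explicit base codewords over $X_t=\bbZ_{12t}\cup(\{a,b,c\}\times\bbZ_3)$ (the nine infinite points forming the group of size $9$, with subscripts developed modulo $3$), developed by a multiplier $m$ of order $s$ and translation by $M$, precisely the $(n,m,s,M,P,R)$ format you describe, and your three verification conditions (composition/group condition, the pure- and mixed-difference check for distance $6$, and the orbit-length count $|P|\cdot s+|R|=2t+1$ when $M=1$) are the right ones. Two caveats. First, a small slip: the groups of size $12$ are the cosets of the subgroup of order $12$ generated by $t$ (i.e.\ $\{i,t+i,\ldots,11t+i\}$ for $i\in\bbZ_t$), not of the subgroup of order $t$, whose cosets would have size $t$ and give type $t^{12}$. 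Second, and more substantively, for an existence statement established by computer search the proof \emph{is} the witness data; your write-up supplies the verification framework but none of the sixteen sets of base codewords, so as it stands the existence is asserted rather than proved. The paper's appendix supplies these lists explicitly, and they would have to be reproduced (or independently regenerated and checked) for your argument to be complete.
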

\begin{proof}
For each $t \in \{4,5,\ldots,15,17,18,19,23\}$, let $X_t=\bbZ_{12t}
\cup (\{a,b,c\}\times \bbZ_3)$, $\G_t=\{\{i,t+i,2t+i,\ldots,11t+i\}:i \in \bbZ_t\} \cup
\{\{a,b,c\}\times \bbZ_3\}$ and $\C_t$ be the set of cyclic shifts of the vectors generated by the following vectors respectively. Then $(X_t,\G_t,\C_t)$ is a
$[2,1,1]$-GDC$(6)$ of type $12^t9^1$ with size $12t(2t+1)$.
\begin{enumerate}

\item $t=4$, $n=48$, $m=5$, $s=2$, $M=1$
\begin{equation*}
\begin{array}{ccc}
P: \langle 0, 3, 21, 30\rangle\\
R: \langle a_0, 0, 13, 35\rangle & \langle 5, 43, a_0, 36\rangle & \langle b_0, 0, 23, 25\rangle\\
\langle 23, 37, b_0, 18\rangle & \langle c_0, 0, 7, 5\rangle & \langle 25, 47, c_0, 36\rangle\\
\langle 0, 2, 1, 19\rangle\\
\end{array}
\end{equation*}

\item $t=5$, $n=60$, $m=17$, $s=3$, $M=1$
\begin{equation*}
\begin{array}{ccc}
P: \langle 0, 2, 3, 6\rangle\\
R: \langle a_0, 0, 52, 53\rangle & \langle 17, 46, a_0, 0\rangle & \langle b_0, 0, 46, 59\rangle\\
\langle 4, 41, b_0, 0\rangle & \langle c_0, 0, 44, 7\rangle & \langle 5, 52, c_0, 33\rangle\\
\langle 0, 24, 12, 33\rangle & \langle 0, 21, 32, 18\rangle\\
\end{array}
\end{equation*}

\item $t=6$, $n=72$, $m=19$, $s=2$, $M=1$
\begin{equation*}
\begin{array}{ccc}
P: \langle 0, 9, 50, 31\rangle & \langle 0, 7, 33, 10\rangle & \langle 0, 67, 34, 69\rangle\\
R: \langle a_0, 0, 20, 28\rangle & \langle 8, 61, a_0, 45\rangle & \langle b_0, 0, 32, 52\rangle\\
\langle 55, 59, b_0, 12\rangle & \langle c_0, 0, 35, 40\rangle & \langle 29, 37, c_0, 12\rangle\\
\langle 0, 1, 17, 44\rangle\\
\end{array}
\end{equation*}

\item $t=7$, $n=84$, $m=5$, $s=3$, $M=1$
\begin{equation*}
\begin{array}{ccc}
P: \langle 0, 2, 20, 71\rangle & \langle 0, 8, 83, 66\rangle\\
R: \langle a_0, 0, 65, 73\rangle & \langle 16, 77, a_0, 78\rangle & \langle b_0, 0, 68, 25\rangle\\
\langle 7, 74, b_0, 54\rangle & \langle c_0, 0, 13, 53\rangle & \langle 25, 68, c_0, 30\rangle\\
\langle 0, 55, 31, 81\rangle & \langle 0, 12, 15, 48\rangle & \langle 0, 33, 37, 57\rangle\\
\end{array}
\end{equation*}

\item $t=8$, $n=96$, $m=77$, $s=4$, $M=1$
\begin{equation*}
\begin{array}{ccc}
P: \langle 0, 13, 87, 46\rangle & \langle 0, 23, 62, 81\rangle\\
R: \langle a_0, 0, 1, 68\rangle & \langle 26, 31, a_0, 51\rangle & \langle b_0, 0, 95, 28\rangle\\
\langle 34, 41, b_0, 78\rangle & \langle c_0, 0, 59, 76\rangle & \langle 28, 47, c_0, 3\rangle\\
\langle 0, 66, 78, 84\rangle & \langle 0, 54, 60, 90\rangle & \langle 0, 35, 4, 31\rangle\\
\end{array}
\end{equation*}

\item $t=9$, $n=108$, $m=7$, $s=3$, $M=1$
\begin{equation*}
\begin{array}{ccc}
P: \langle 0, 25, 39, 76\rangle & \langle 0, 24, 55, 77\rangle & \langle 0, 95, 105 , 80\rangle\\
\langle 0, 30, 4, 8\rangle\\
R: \langle a_0, 0, 106, 59\rangle & \langle 82, 98, a_0, 9\rangle & \langle b_0, 0,  79, 74\rangle\\
\langle 23, 46, b_0, 72\rangle & \langle c_0, 0, 50, 73\rangle & \langle 22, 86, c_0, 18\rangle\\
\langle 0, 43, 89, 29\rangle\\
\end{array}
\end{equation*}

\item $t=10$, $n=120$, $m=113$, $s=3$, $M=1$
\begin{equation*}
\begin{array}{ccc}
P: \langle 0, 82, 33, 86\rangle & \langle 0, 29, 117, 78\rangle & \langle 0, 77, 79 , 73\rangle\\
R: \langle a_0, 0, 46, 32\rangle & \langle 37, 44, a_0, 78\rangle & \langle b_0, 0,  109, 74\rangle\\
\langle 34, 86, b_0, 99\rangle & \langle c_0, 0, 8, 55\rangle & \langle 38, 43, c_0, 54\rangle\\
\langle 0, 15, 99, 51\rangle & \langle 0, 114, 63, 12\rangle & \langle 0, 96, 42, 87\rangle\\
\langle 0, 81, 25, 56\rangle & \langle 0, 75, 27, 3\rangle & \langle 0, 35, 23, 22\rangle\\
\end{array}
\end{equation*}

\item $t=11$, $n=132$, $m=5$, $s=5$, $M=1$
\begin{equation*}
\begin{array}{ccc}
P: \langle 0, 62, 90, 104\rangle & \langle 0, 16, 129, 81\rangle & \langle 0, 120, 19, 27\rangle\\
R: \langle a_0, 0, 125, 67\rangle & \langle 62, 112, a_0, 87\rangle & \langle b_0, 0, 7, 35\rangle\\
\langle 61, 71, b_0, 120\rangle & \langle c_0, 0, 97, 17\rangle & \langle 0, 14, 85, 43\rangle\\
\langle 0, 106, 83, 119\rangle & \langle 56, 58, c_0, 15\rangle\\
\end{array}
\end{equation*}

\item $t=12$, $n=144$, $m=11$, $s=3$, $M=1$
\begin{equation*}
\begin{array}{ccc}
P: \langle 0, 62, 77, 49\rangle & \langle 0, 44, 118, 5\rangle & \langle 0, 27, 37, 136\rangle\\
\langle 0, 16, 67, 35\rangle\\
R: \langle a_0, 0, 23, 142\rangle & \langle 10, 86, a_0, 132\rangle & \langle b_0, 0, 88, 50\rangle\\
\langle 5, 25, b_0, 108\rangle & \langle c_0, 0, 53, 91\rangle & \langle 16, 47, c_0, 24\rangle\\
\langle 0, 114, 95, 104\rangle & \langle 0, 90, 79, 61\rangle & \langle 0, 63, 141, 138\rangle\\
\langle 0, 137, 66, 86\rangle & \langle 0, 102, 69, 143\rangle & \langle 0, 28, 34, 98\rangle\\
\langle 0, 18, 57, 89\rangle\\
\end{array}
\end{equation*}

\item $t=13$, $n=156$, $m=37$, $s=5$, $M=1$
\begin{equation*}
\begin{array}{ccc}
P: \langle 0, 50, 29, 125\rangle & \langle 0, 154, 100, 54\rangle & \langle 0, 148, 49, 155\rangle\\
R: \langle 0, 48, 69, 25\rangle & \langle 0, 90, 145, 17\rangle & \langle 0, 14, 73, 45\rangle \\
\langle 55, 116, b_0, 135\rangle & \langle c_0, 0, 115, 107\rangle & \langle 0, 12, 99, 9\rangle\\
\langle 68, 139, c_0, 144\rangle & \langle 0, 24, 96, 84\rangle & \langle 0, 120, 105, 43\rangle\\
\langle 136, 140, a_0, 111\rangle & \langle a_0, 0, 118, 11\rangle & \langle b_0, 0, 38, 109\rangle\\
\end{array}
\end{equation*}

\item $t=14$, $n=168$, $m=103$, $s=5$, $M=1$
\begin{equation*}
\begin{array}{ccc}
P: \langle 0, 25, 64, 19\rangle & \langle 0, 46, 29, 90\rangle & \langle 0, 38, 61, 125\rangle\\
R: \langle b_0, 0, 76, 2\rangle & \langle 0, 60, 48, 72\rangle & \langle 52, 59, a_0, 111\rangle\\
\langle 22, 113, b_0, 30\rangle & \langle c_0, 0, 41, 4\rangle & \langle 35, 154, c_0, 138\rangle\\
\langle 0, 93, 136, 102\rangle & \langle 0, 45, 150, 65\rangle & \langle 0, 24, 124, 104\rangle\\
\langle 0, 51, 133, 11\rangle & \langle 0, 69, 96, 132\rangle & \langle 0, 147, 120, 145\rangle\\
\langle a_0, 0, 148, 86\rangle & \langle 0, 3, 36, 35\rangle\\
\end{array}
\end{equation*}

\item $t=15$, $n=180$, $m=7$, $s=4$, $M=1$
\begin{equation*}
\begin{array}{ccc}
P: \langle 0, 153, 96, 12\rangle & \langle 0, 161, 3, 177\rangle & \langle 0, 109, 116, 14\rangle\\
\langle 0, 6, 82, 38\rangle & \langle 0, 41, 142, 5\rangle\\
R: \langle a_0, 0, 68, 13\rangle & \langle 4, 71, a_0, 162\rangle & \langle b_0, 0, 4, 170\rangle\\
\langle 5, 145, b_0, 165\rangle & \langle c_0, 0, 79, 119\rangle & \langle 4, 134, c_0, 135\rangle\\
\langle 0, 80, 136, 10\rangle & \langle 0, 54, 179, 26\rangle & \langle 0, 169, 17, 134\rangle\\
\langle 0, 18, 2, 115\rangle & \langle 0, 77, 52, 70\rangle\\
\end{array}
\end{equation*}

\item $t=17$, $n=204$, $m=41$, $s=8$, $M=1$
\begin{equation*}
\begin{array}{ccc}
P: \langle 0, 3, 50, 125\rangle & \langle 0, 126, 37, 55\rangle & \langle 0, 105, 14, 191\rangle\\
R: \langle a_0, 0, 20, 52\rangle & \langle 65, 73, a_0, 9\rangle & \langle b_0, 0, 164, 4\rangle\\
\langle 41, 169, b_0, 141\rangle & \langle c_0, 0, 116, 64\rangle & \langle 38, 82, c_0, 66\rangle\\
\langle 0, 108, 56, 104\rangle & \langle 0, 80, 112, 172\rangle & \langle 0, 168, 184, 192\rangle\\
\langle 0, 60, 72, 180\rangle & \langle 0, 48, 132, 88\rangle\\
\end{array}
\end{equation*}

\item $t=18$, $n=216$, $m=23$, $s=4$, $M=1$
\begin{equation*}
\begin{array}{ccc}
P: \langle 0, 89, 25, 38\rangle & \langle 0, 118, 61, 78\rangle & \langle 0, 172, 185, 84\rangle\\
\langle 0, 163, 88, 140\rangle & \langle 0, 29, 105, 203\rangle & \langle 0, 86, 48, 109\rangle\\
R: \langle a_0, 0, 65, 142\rangle & \langle 5, 151, a_0, 129\rangle & \langle 0, 1, 27, 190\rangle\\
\langle 2, 151, b_0, 201\rangle & \langle c_0, 0, 22, 41\rangle & \langle 7, 86, c_0, 39\rangle\\
\langle b_0, 0, 167, 37\rangle & \langle 0, 153, 147, 92\rangle & \langle 0, 45, 60, 195\rangle\\
\langle 0, 154, 12, 129\rangle & \langle 0, 99, 192, 96\rangle & \langle 0, 207, 177, 64\rangle\\
\langle 0, 81, 31, 183\rangle\\
\end{array}
\end{equation*}

\item $t=19$, $n=228$, $m=43$, $s=7$, $M=1$
\begin{equation*}
\begin{array}{ccc}
P: \langle 0, 74, 43, 139\rangle & \langle 0, 88, 226, 186\rangle & \langle 0, 21, 156, 93\rangle\\
\langle 0, 23, 196, 31\rangle\\
R: \langle 0, 40, 82, 126\rangle & \langle 14, 142, a_0, 93\rangle & \langle b_0, 0, 67, 149\rangle\\
\langle 29, 112, b_0, 201\rangle & \langle c_0, 0, 97, 32\rangle & \langle 28, 74, c_0, 189\rangle\\
\langle 0, 104, 210, 50\rangle & \langle 0, 63, 51, 12\rangle & \langle 0, 27, 168, 87\rangle\\
\langle a_0, 0, 203, 157\rangle & \langle 0, 70, 131, 71\rangle\\
\end{array}
\end{equation*}

\item $t=23$, $n=276$, $m=7$, $s=7$, $M=1$
\begin{equation*}
\begin{array}{ccc}
P: \langle 0, 8, 86, 114\rangle & \langle 0, 200, 257, 43\rangle & \langle 0, 58, 261, 125\rangle\\
\langle 0, 228, 211, 105\rangle\\
R: \langle 0, 51, 88, 15\rangle & \langle 0, 2, 170, 165\rangle & \langle b_0, 0, 109, 83\rangle\\
\langle 13, 161, b_0, 240\rangle & \langle c_0, 0, 251, 181\rangle & \langle 11, 187, c_0, 216\rangle\\
\langle 16, 134, a_0, 204\rangle & \langle a_0, 0, 89, 13\rangle & \langle 0, 80, 87, 145\rangle\\
\langle 0, 55, 126, 210\rangle & \langle 0, 98, 179, 30\rangle & \langle 0, 249, 75, 162\rangle\\
\langle 0, 224, 49, 167\rangle & \langle 0, 72, 108, 90\rangle & \langle 0, 214, 147, 1\rangle\\
\langle 0, 252, 110, 142\rangle & \langle 0, 14, 201, 68\rangle & \langle 0, 64, 233, 149\rangle\\
\langle 0, 104, 195, 215\rangle.\\
\end{array}
\end{equation*}

\end{enumerate}
\end{proof}
\vskip 10pt

\begin{proposition}
\label{GDC(6)12t151}There exists a $[2,1,1]$-GDC$(6)$ of type
$12^t15^1$ with size $12t(2t+3)$ for each $t \in \{7,8,\ldots,15\}$.
\end{proposition}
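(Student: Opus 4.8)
The plan is to construct each of the nine required designs explicitly by the difference method used throughout this section, in exactly the style of the proof of Proposition~\ref{GDC(6)12tu1}. For a given $t\in\{7,8,\ldots,15\}$ I would take the point set $X_t=\bbZ_{12t}\cup(\{a,b,c,d,e\}\times\bbZ_{3})$ and the group partition $\G_t=\bigl\{\{i,t+i,2t+i,\ldots,11t+i\}:i\in\bbZ_{t}\bigr\}\cup\bigl\{\{a,b,c,d,e\}\times\bbZ_{3}\bigr\}$, which has $t$ groups of size $12$ and one group of size $15$; the code $\C_{t}$ is obtained from a short list of base codewords --- split, as usual, into a part $P$ whose vectors are first multiplied by $m^{i}$ for $0\le i\le s-1$ and a remaining part $R$ --- by developing $+M$ modulo $12t$, with the translation acting on the infinite points through the natural quotient of $\bbZ_{12t}$ of order $3$. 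The concrete output is, for each of the nine values of $t$, an explicit choice of $m$, $s$, $M$, $P$ and $R$, displayed in the same format as the earlier propositions of this section.

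Once those lists are written down, the verification splits into three routine (and machine-checkable) checks. First, every developed codeword must have composition $[2,1,1]$ and its four coordinates must lie in four distinct members of $\G_t$; this is the ``at most one point per group'' defining property of a GDC. Second, one counts: a base list with $|P|$ multiplier seeds, each producing $s$ vectors, together with $|R|$ further seeds, gives $12t\,(s|P|+|R|)$ codewords provided no translate-orbit is short, so matching the claimed size $12t(2t+3)$ amounts to the bookkeeping identity $s|P|+|R|=2t+3$ (which, incidentally, is exactly the natural packing bound for a $[2,1,1]$-GDC$(6)$ of type $12^{t}15^{1}$). Third --- the substantive point --- the minimum distance must be at least $6$: for two distinct weight-$4$ codewords of composition $[2,1,1]$ one has $d_H(\vu,\vv)=8-k-a$, where $k$ is the number of coordinate positions they share and $a$ the number of those at which they carry the same symbol, so $d\ge 6$ is equivalent to $k+a\le 2$ for every pair. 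Under the difference method this becomes a finite condition on the base codewords: the multiset of differences they generate must meet each nonzero element of $\bbZ_{12t}$, and each admissible difference involving the infinite points, with the right multiplicity, and no two seeds may collide.

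The main obstacle is simply finding base codeword sets that meet the group-disjointness and the distance-$6$ requirements simultaneously for all nine lengths; this is a finite computer search, and once a valid set is in hand the remaining work is mechanical. The only feature distinguishing this case from the purely cyclic propositions of the section is the long group of size $15$: since $15\nmid 12t$ in general, the infinite part cannot be developed by all of $\bbZ_{12t}$, so the base codewords that involve $a,b,c,d,e$ must be chosen compatibly with the order-$3$ partial action --- this trims the search but does not alter its character. I expect the proof body to consist of the point-set and group description above followed by the nine displayed parameter lists, exactly as in Propositions~\ref{GDC(6)12tu1} and~\ref{GDC(6)gt}.
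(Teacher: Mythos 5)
Your proposal follows essentially the same route as the paper: the paper's proof is precisely an explicit computer-search construction on $X_t=\bbZ_{12t}\cup(\{a,b,c,d,e\}\times\bbZ_3)$ with the stated group partition, base codewords split into $P$ and $R$ and developed additively (with the infinite-point subscripts developed modulo the order-$3$ subgroup), and your bookkeeping identity $s|P|+|R|=2t+3$ checks out against the paper's data (e.g.\ $t=7$: $s=3$, $|P|=2$, $|R|=11$). The only thing missing from your write-up is the actual nine lists of base codewords, which is exactly the content the paper supplies in its appendix.
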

\begin{proof}
For each $t \in \{7,8,\ldots,15\}$, let $X_t=\bbZ_{12t} \cup
(\{a,b,c,d,e\}\times \bbZ_3)$,
$\G_t=\{\{i,t+i,2t+i,\ldots,11t+i\}:i \in \bbZ_t\} \cup
\{\{a,b,c,d,e\}\times \bbZ_3\}$  and  $\C_t$ be the set of cyclic shifts of the vectors generated by the following vectors respectively. Then $(X_t,\G_t,\C_t)$ is a
$[2,1,1]$-GDC$(6)$ of type $12^t15^1$ with size $12t(2t+3)$.
\begin{enumerate}

\item $t=7$, $n=84$, $m=11$, $s=3$, $M=1$
\begin{equation*}
\begin{array}{ccc}
P: \langle 0, 39, 36, 3\rangle & \langle 0, 18, 52, 83\rangle\\
R: \langle b_0, 0, 25, 50\rangle & \langle 17, 22, a_0, 0\rangle & \langle a_0, 0, 16, 8\rangle\\
\langle e_0, 0, 55, 23\rangle & \langle c_0, 0, 37, 32\rangle & \langle 49, 59, c_0, 78\rangle\\
\langle 79, 83, d_0, 12\rangle & \langle 4, 68, b_0, 45\rangle & \langle 2, 28, e_0, 3\rangle\\
\langle d_0, 0, 11, 22\rangle & \langle 0, 40, 71, 2\rangle\\
\end{array}
\end{equation*}

\item $t=8$, $n=96$, $m=5$, $s=3$, $M=1$
\begin{equation*}
\begin{array}{ccc}
P: \langle 0, 6, 59, 93\rangle & \langle 0, 1, 11, 38\rangle\\
R: \langle a_0, 0, 20, 19\rangle & \langle 20, 88, a_0, 9\rangle & \langle b_0, 0, 34, 41\rangle\\
\langle 13, 95, b_0, 78\rangle & \langle c_0, 0, 67, 62\rangle & \langle 25, 47, c_0, 51\rangle\\
\langle e_0, 0, 70, 29\rangle & \langle 0, 60, 3, 69\rangle & \langle 17, 64, e_0, 60\rangle\\
\langle 19, 71, d_0, 21\rangle & \langle d_0, 0, 23, 76\rangle & \langle 0, 84, 15, 33\rangle\\
\langle 0, 18, 31, 75\rangle\\
\end{array}
\end{equation*}

\item $t=9$, $n=108$, $m=5$, $s=3$, $M=1$
\begin{equation*}
\begin{array}{ccc}
P: \langle 0, 58, 39, 42\rangle & \langle 0, 83, 41, 105\rangle & \langle 0, 79, 94, 57\rangle\\
R: \langle a_0, 0, 26, 43\rangle & \langle 58, 62, a_0, 78\rangle & \langle b_0, 0, 88, 14\rangle\\
\langle 56, 61, b_0, 96\rangle & \langle c_0, 0, 59, 7\rangle & \langle 34, 35, c_0, 102\rangle\\
\langle 31, 92, d_0, 3\rangle & \langle e_0, 0, 101, 76\rangle & \langle 22, 86, e_0, 33\rangle\\
\langle 0, 100, 48, 52\rangle & \langle d_0, 0, 70, 95\rangle & \langle 0, 24, 12, 73\rangle\\
\end{array}
\end{equation*}

\item $t=10$, $n=120$, $m=43$, $s=3$, $M=1$
\begin{equation*}
\begin{array}{ccc}
P: \langle 0, 68, 61, 39\rangle & \langle 0, 104, 58, 102\rangle & \langle 0, 107, 14, 23\rangle\\
R: \langle a_0, 0, 55, 11\rangle & \langle 67, 110, a_0, 102\rangle & \langle b_0, 0, 8, 1\rangle\\
\langle 55, 80, b_0, 6\rangle & \langle c_0, 0, 98, 85\rangle & \langle 0, 63, 72, 105\rangle\\
\langle 64, 95, d_0, 102\rangle & \langle 0, 33, 45, 51\rangle & \langle 50, 55, e_0, 36\rangle\\
\langle 58, 62, c_0, 111\rangle & \langle d_0, 0, 65, 97\rangle & \langle e_0, 0, 67, 119\rangle\\
\langle 0, 114, 48, 15\rangle & \langle 0, 24, 99, 93\rangle\\
\end{array}
\end{equation*}

\item $t=11$, $n=132$, $m=5$, $s=4$, $M=1$
\begin{equation*}
\begin{array}{ccc}
P: \langle 0, 102, 82, 61\rangle & \langle 0, 7, 16, 124\rangle & \langle 0, 12, 3, 50\rangle\\
R: \langle d_0, 0, 74, 109\rangle & \langle a_0, 0, 25, 131\rangle & \langle b_0, 0, 1, 128\rangle\\
\langle 34, 44, b_0, 6\rangle & \langle c_0, 0, 106, 2\rangle & \langle 14, 79, c_0, 54\rangle\\
\langle 19, 98, d_0, 3\rangle & \langle 25, 128, a_0, 96\rangle & \langle 17, 34, e_0, 102\rangle\\
\langle e_0, 0, 127, 5\rangle & \langle 0, 108, 27, 52\rangle & \langle 0, 34, 81, 129\rangle\\
\langle 0, 6, 119, 19\rangle\\
\end{array}
\end{equation*}

\item $t=12$, $n=144$, $m=101$, $s=3$, $M=1$
\begin{equation*}
\begin{array}{ccc}
P: \langle 0, 8, 17, 3\rangle & \langle 0, 14, 57, 100\rangle & \langle 0, 16, 113, 127\rangle\\
R: \langle a_0, 0, 79, 77\rangle & \langle 14, 103, a_0, 84\rangle & \langle e_0, 0, 47, 52\rangle\\
\langle 20, 73, b_0, 21\rangle & \langle c_0, 0, 98, 10\rangle & \langle 11, 61, c_0, 60\rangle\\
\langle 11, 70, d_0, 132\rangle & \langle 0, 42, 124, 63\rangle & \langle 11, 13, e_0, 120\rangle\\
\langle 0, 93, 66, 27\rangle & \langle d_0, 0, 35, 22\rangle & \langle 0, 76, 105, 101\rangle\\
\langle 0, 30, 58, 99\rangle & \langle 0, 13, 7, 44\rangle & \langle 0, 126, 122, 116\rangle\\
\langle 0, 54, 87, 119\rangle & \langle b_0, 0, 106, 95\rangle & \langle 0, 61, 135, 67\rangle\\
\end{array}
\end{equation*}

\item $t=13$, $n=156$, $m=7$, $s=6$, $M=1$
\begin{equation*}
\begin{array}{ccc}
P: \langle 0, 56, 21, 135\rangle & \langle 0, 12, 11, 131\rangle & \langle 0, 114, 61, 57\rangle\\
R: \langle a_0, 0, 94, 122\rangle & \langle 47, 55, a_0, 69\rangle & \langle b_0, 0, 154, 62\rangle\\
\langle 59, 91, b_0, 69\rangle & \langle d_0, 0, 110, 34\rangle & \langle 49, 53, c_0, 99\rangle\\
\langle 64, 104, d_0, 6\rangle & \langle e_0, 0, 106, 86\rangle & \langle 23, 151, e_0, 141\rangle\\
\langle c_0, 0, 38, 82\rangle & \langle 0, 68, 142, 70\rangle\\
\end{array}
\end{equation*}

\item $t=14$, $n=168$, $m=101$, $s=5$, $M=1$
\begin{equation*}
\begin{array}{ccc}
P: \langle 0, 39, 37, 34\rangle & \langle 0, 19, 120, 123\rangle & \langle 0, 54, 13, 58\rangle\\
R: \langle a_0, 0, 77, 22\rangle & \langle 20, 28, a_0, 60\rangle & \langle b_0, 0, 62, 31\rangle\\
\langle 29, 94, b_0, 45\rangle & \langle d_0, 0, 7, 128\rangle & \langle c_0, 0, 44, 49\rangle\\
\langle 22, 98, d_0, 153\rangle & \langle 20, 55, c_0, 162\rangle & \langle 23, 139, e_0, 159\rangle\\
\langle 0, 147, 81, 111\rangle & \langle 0, 60, 69, 2\rangle & \langle 0, 73, 1, 66\rangle\\
\langle 0, 27, 63, 17\rangle & \langle 0, 122, 45, 11\rangle & \langle e_0, 0, 38, 124\rangle\\
\langle 0, 12, 117, 155\rangle\\
\end{array}
\end{equation*}

\item $t=15$, $n=180$, $m=37$, $s=4$, $M=1$
\begin{equation*}
\begin{array}{ccc}
P: \langle 0, 52, 58, 49\rangle & \langle 0, 61, 37, 54\rangle & \langle 0, 68, 134, 31\rangle\\
\langle 0, 74, 167, 118\rangle & \langle 0, 81, 24, 28\rangle\\
R: \langle a_0, 0, 53, 43\rangle & \langle 41, 61, a_0, 171\rangle & \langle b_0, 0, 17, 85\rangle\\
\langle 59, 64, b_0, 24\rangle & \langle c_0, 0, 113, 79\rangle & \langle 85, 140, c_0, 0\rangle\\
\langle d_0, 0, 151, 122\rangle & \langle 58, 173, d_0, 93\rangle & \langle e_0, 0, 149, 7\rangle\\
\langle 43, 53, e_0, 123\rangle & \langle 0, 36, 77, 50\rangle & \langle 0, 25,3, 111\rangle\\
\langle 0, 108, 89, 147\rangle.\\
\end{array}
\end{equation*}
\end{enumerate}
\end{proof}
\vskip 10pt

\begin{proposition}
There exists a $[2,1,1]$-GDC$(6)$ of type $13^4$
with size $338$.
\end{proposition}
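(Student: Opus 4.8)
The plan is to build this GDC directly by the difference method, mirroring the earlier constructions of the $[2,1,1]$-GDC$(6)$s of types $7^4$ and $10^4$. Since $52 = 4\cdot 13$, take $X=\bbZ_{52}$ and let $\G=\{\{i,\,4+i,\,8+i,\ldots,48+i\}:i\in\bbZ_4\}$ be the set of the four cosets of the order-$13$ subgroup $4\bbZ_{52}$; this is a partition of $X$ of type $13^4$, and it is preserved by multiplication by any unit $m$ of $\bbZ_{52}$ (such an $m$ permutes the four cosets according to $i\mapsto mi\bmod 4$). One then exhibits, via computer search, a short list of base codewords $\langle a,b,c,d\rangle$ of composition $[2,1,1]$, split into a part $P$ (each member to be replicated under $x\mapsto m^ix$ for $0\le i\le s-1$) and a part $R$, together with a unit $m$ with $m\equiv 1\pmod 4$ and a development step $M=2$; letting $\C$ be the set of all translates modulo $52$ of these vectors, each orbit under $x\mapsto x+M$ has length $26$, so $|\C|=26\,(s\,|P|+|R|)$, and the base codewords are chosen so that $s\,|P|+|R|=13$, whence $|\C|=338=\frac{13^2\cdot 4\cdot 3}{6}$, the claimed size. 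It then remains to verify the two defining properties of a $[2,1,1]$-GDC$(6)$.

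For the group property $\|u|_{G_i}\|\le 1$: each codeword has exactly four nonzero coordinates and $\G$ has four parts, so the property is equivalent to the four coordinates of every codeword being pairwise incongruent modulo $4$. This is invariant under $x\mapsto x+M$ and under $x\mapsto mx$ (as $m$ is odd, it permutes the residues mod~$4$), so it suffices to inspect the finitely many base codewords. For the distance property: if $e$ is the number of coordinates where two words $u,v$ carry a common nonzero value and $f$ the number where they carry two different nonzero values, then for weight-$4$ words $d_H(u,v)=8-2e-f$, so $d_H(u,v)\ge 6$ is equivalent to $2e+f\le 2$, i.e.\ to $(e,f)\in\{(0,0),(0,1),(0,2),(1,0)\}$. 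Translating this to the translate-orbits of the base codewords yields the usual finite ``difference'' conditions: to each base codeword $\langle a,b,c,d\rangle$ one attaches the differences $\pm(a-b)$ (a repeated value-$1$ coordinate), $\pm(a-c),\pm(b-c),\pm(a-d),\pm(b-d)$ (a value-$1$ coordinate colliding with the value-$2$ or value-$3$ coordinate), and $\pm(c-d)$ (the value-$2$ coordinate against the value-$3$ coordinate), and one checks that over the full list of base codewords (together with their $m$-images) no two codewords lying in a common translate-orbit collide in a way violating $2e+f\le 2$; since translation and multiplication by $m$ act on $\bbZ_{52}$, this reduces to a finite computation on these difference multisets.

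The main obstacle is not conceptual but computational: one must locate, over $\bbZ_{52}$, a set of base codewords together with a suitable multiplier $m$ and shift $M$ meeting all of the above simultaneously. Once such a set has been found by computer search---and it is recorded simply by listing $n=52$, $m$, $s$, $M$ and the vectors $P$ and $R$---the verifications of the group condition and of the difference conditions are entirely mechanical, and they show that $(X,\G,\C)$ is a $[2,1,1]$-GDC$(6)$ of type $13^4$ of size $338$.
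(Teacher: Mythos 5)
Your proposal follows essentially the same route as the paper: the paper takes $X=\bbZ_{52}$ with groups the four cosets of $4\bbZ_{52}$, and lists base codewords found by computer search with $n=52$, $m=5$, $s=3$, $M=2$, $|P|=3$ and $|R|=4$, so that $s|P|+|R|=13$ and development by $+2$ modulo $52$ gives $26\cdot 13=338$ codewords, exactly matching your counting and your reduction of the distance condition $d_H\ge 6$ to $2e+f\le 2$ on differences. The only thing missing from your write-up is the explicit list of base vectors itself, which is the entire content of the paper's proof; your description of the finite verification (group condition modulo $4$, difference conditions, and full-length orbits) is otherwise accurate.
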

\begin{proof}
Let $X=\bbZ_{52}$, $\G=\{\{i,4+i,8+i,\ldots,48+i\}:i
\in \bbZ_4\}$ and $\C$ be the set of quasi-cyclic shifts of the vectors generated by the following vectors. Then $(X,\G,\C)$ is a $[2,1,1]$-GDC$(6)$ of type
$13^4$ with size $338$, where
$n=52$, $m=5$, $s=3$, $M=2$, and
\begin{equation*}
\begin{array}{cccc}
P: \langle 0, 5, 3, 46\rangle & \langle 1, 20, 22, 47\rangle & \langle 0, 7, 18, 41\rangle\\
R: \langle 0, 1, 11, 26\rangle & \langle 0, 13, 39, 42\rangle & \langle 0, 17, 30, 47\rangle\\
\langle 0, 43, 14, 29\rangle.\\
\end{array}
\end{equation*}
\end{proof}
\vskip 10pt

\begin{proposition}
There exists a $[2,1,1]$-GDC$(6)$ of type $22^4$
with size $968$.
\end{proposition}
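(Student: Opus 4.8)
The plan is to construct this $[2,1,1]$-GDC$(6)$ of type $22^4$ directly by the difference method, exactly in the style used for Proposition \ref{GDC(6)gt} and the small GDCs preceding it. First I would take the point set $X=\bbZ_{88}$ and let $\G=\{\{i,4+i,8+i,\ldots,84+i\}:i\in\bbZ_4\}$, i.e.\ the cosets of the order-$4$ subgroup of $\bbZ_{88}$; since $88=22\cdot 4$, each group has size $22$, so the type is $22^4$. The size that the construction must realise is $\frac{22^2\cdot 4\cdot 3}{6}=968$, matching the generic size $\frac{g^2t(t-1)}{6}$ recorded in Proposition \ref{GDC(6)gt} for $(g,t)=(22,4)$.

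Next I would run a computer search to produce a list of base codewords of composition $[2,1,1]$, that is, of the form $\langle a_1,a_2,a_3,a_4\rangle$ with $\{a_1,a_2\}$ carrying the symbol $1$ and $a_3,a_4$ carrying the symbols $2$ and $3$. As in the earlier propositions the list is split into a part $P$, each of whose members is multiplied by $m^i$ for $0\le i\le s-1$ (for a suitable multiplier $m$ of appropriate multiplicative order acting on $\bbZ_{88}$), and a remaining part $R$; the whole code $\C$ is then obtained by developing all these vectors through $+jM\pmod{88}$. The search must enforce three conditions at once: (a) each codeword meets every group in at most one coordinate, i.e.\ the four coordinates are pairwise distinct modulo $4$; (b) the resulting code has minimum Hamming distance at least $6$; and (c) the total count of developed codewords equals $968$.

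Once a valid list of base codewords is in hand, the verification reduces — as in Propositions \ref{GDC(6)gt}, \ref{GDC(6)12tu1}, and \ref{GDC(6)12t151} — to a finite check on the pure and mixed differences arising from the base codewords and their images under multiplication by $m$: one confirms that each prescribed pair of symbol-positions covers every residue class modulo $88$ lying outside the order-$4$ subgroup exactly the required number of times, and that no two developed codewords agree in two or more coordinates unless they coincide. I expect the main obstacle to be producing base codewords that make (a)--(c) hold simultaneously; given such a list, the correctness argument is routine difference bookkeeping entirely parallel to the preceding propositions.
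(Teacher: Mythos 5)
Your proposal follows essentially the same route as the paper: the paper's proof takes $X=\bbZ_{88}$ with exactly the groups $\G=\{\{i,4+i,\ldots,84+i\}:i\in\bbZ_4\}$ you describe and exhibits computer-found base codewords (with $m=5$, $s=3$, $M=2$, two vectors in $P$ and five in $R$) developed as you outline, yielding the required $968$ codewords. The only caveats are the minor slip of calling $\{0,4,\ldots,84\}$ the ``order-$4$'' subgroup (it has order $22$ and index $4$; your explicit set description is nonetheless correct) and the fact that the substance of the paper's proof is the explicit base-codeword data, which your plan still has to produce by the search you describe.
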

\begin{proof}
Let $X=\bbZ_{88}$, $\G=\{\{i,4+i,8+i,\ldots,84+i\}:i
\in \bbZ_4\}$ and $\C$ be the set of quasi-cyclic shifts of the vectors generated by the following vectors. Then $(X,\G,\C)$ is a $[2,1,1]$-GDC$(6)$ of type
$22^4$ with size $968$, where
$n=88$, $m=5$, $s=3$, $M=2$, and
\begin{equation*}
\begin{array}{ccc}
P: \langle 0, 9, 23, 26\rangle & \langle 0, 1, 3, 62\rangle\\
R: \langle 0, 77, 67, 22\rangle & \langle 0, 69, 55, 38\rangle & \langle 0, 51, 82, 21\rangle\\
\langle 0, 7, 66, 13\rangle & \langle 0, 35, 18, 65\rangle.\\
\end{array}
\end{equation*}
\end{proof}
\vskip 10pt

\begin{proposition}
There exists a $[2,1,1]$-GDC$(6)$ of type
$1^{12}2^1$ with size $28$.
\end{proposition}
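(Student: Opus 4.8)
The plan is to realise the code explicitly on the $14$ points and then verify its three defining conditions directly. Take $X=\bbZ_{12}\cup\{\infty_1,\infty_2\}$ and let $\G$ consist of the twelve singletons $\{i\}$ for $i\in\bbZ_{12}$ together with the one pair $\{\infty_1,\infty_2\}$. Since a size-one group imposes no constraint, a $[2,1,1]$-${\rm GDC}(6)$ of type $1^{12}2^1$ is exactly a set $\C\subseteq\bbZ_4^X$ of words of composition $[2,1,1]$, pairwise at Hamming distance at least $6$, such that no codeword is nonzero in both of the coordinates $\infty_1$ and $\infty_2$. Note that $28$ is the best one could hope for: a GDC is in particular a constant-composition code, so Proposition~\ref{CCC(6)} gives $|\C|\le A_4(14,6,[2,1,1])=U(14,6,[2,1,1])=\lfloor\tfrac{14}{2}\lfloor\tfrac{13}{3}\rfloor\rfloor=28$, and the construction is therefore tight.

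To build $\C$ I would look for a small generating set of base codewords $\langle x_1,x_2,x_3,x_4\rangle$ (the pair $\{x_1,x_2\}$ carrying value $1$, $x_3$ value $2$, $x_4$ value $3$) and develop them under the additive group $\bbZ_{12}$ acting on the singleton coordinates, with the transposition $\infty_1\leftrightarrow\infty_2$ applied together with each shift so that the pair-group is preserved; a multiplier automorphism or a few explicitly listed codewords fixed by part of the group can be used to make the orbit sizes total exactly $28$, and failing all else one simply tabulates the $28$ codewords outright. The group condition is then free: each base codeword is chosen to contain at most one of $\infty_1,\infty_2$, and since the group action only swaps these two symbols the same holds for every codeword of $\C$.

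What remains --- and this is where the substance lies --- is to check that any two distinct codewords $u,v$ satisfy $d_H(u,v)\ge6$. Both have weight $4$; writing $k=|\supp(u)\cap\supp(v)|$ and letting $j$ count the shared support positions at which $u$ and $v$ carry different values, one computes $d_H(u,v)=2(4-k)+j=8-2k+j$. Hence the distance requirement is equivalent to the two combinatorial conditions (i) no two codewords share three or four support positions, and (ii) whenever two codewords share exactly two support positions they carry distinct values at both. With a cyclic (or cyclic-plus-multiplier) development available, (i) and (ii) need only be tested between each base codeword and every image of every base codeword, so the check is finite and can be run as a difference computation in $\bbZ_{12}$ or simply by computer. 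The real obstacle is the search, not the verification: one is hunting, inside the $[2,1,1]$-composition codes on $14$ coordinates, for an optimal-size code that in addition never activates the pair $\{\infty_1,\infty_2\}$ simultaneously. I would perform this by backtracking or hill-climbing over candidate base blocks, pruning aggressively via reformulation (ii); the explicit list of $28$ codewords that results is what Proposition~7.11 records, and exhibiting it together with the three checks above completes the proof.
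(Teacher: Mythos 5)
Your framework is sound and matches the paper's in spirit: the paper's Proposition~7.11 realises the code on $\bbZ_{14}$ with groups $\{0,1\}$ and twelve singletons and simply lists the $28$ codewords outright, with no group development at all. Your reduction of the distance condition is correct --- writing $k=|\supp(u)\cap\supp(v)|$ and $j$ for the number of common support positions where the values differ, $d_H(u,v)=8-2k+j\ge 6$ is equivalent to $j\ge 2k-2$, i.e.\ to your conditions (i) and (ii) --- and your observation that $28=U(14,6,[2,1,1])$ is the ceiling is also right (one small caveat: since $12\nmid 28$, a pure $\bbZ_{12}$ development from full orbits cannot produce exactly $28$ codewords, so you would in any case be forced into short orbits or outright tabulation, which is what the paper does).

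The genuine gap is that the witness is never produced. The entire mathematical content of this proposition is the existence of a specific set of $28$ codewords, and ``I would perform this by backtracking or hill-climbing'' is a plan for obtaining a proof, not a proof. Nothing in the surrounding theory guarantees a priori that an optimal-size $(14,6,[2,1,1])_4$-code can be chosen so that no codeword is nonzero on both points of the distinguished pair; the search could in principle come back empty, and your argument gives no reason why it does not. To complete the proof you must exhibit the $28$ codewords (as the paper does) and then run the finite check you correctly describe; without that list, or some other certificate that the search succeeds, the existence claim remains unsupported.
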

\begin{proof}Let $X=\bbZ_{14}$, and $\G_t=\{\{0,1\}\} \cup \{\{i\}:i
\in \bbZ_{14} \setminus \{0,1\}\}$. Then $(X,\G,\C)$ is a
$[2,1,1]$-GDC$(6)$ of type $1^{12}2^1$ with size $28$, where $\C$ is
the set of
\begin{equation*}
\begin{array}{cccc}
\langle 4, 6, 7, 11\rangle & \langle 1, 5, 11, 7\rangle & \langle 2, 10, 8, 7\rangle & \langle 5, 6, 0, 8\rangle\\
\langle 8, 11, 5, 10\rangle & \langle 8, 12, 6, 2\rangle & \langle 10, 12, 9, 5\rangle & \langle 0, 9, 11, 12\rangle\\
\langle 3, 7, 1, 10\rangle & \langle 1, 13, 2, 3\rangle & \langle 1, 6, 10, 12\rangle & \langle 2, 5, 12, 13\rangle\\
\langle 3, 6, 13, 9\rangle & \langle 4, 10, 13, 1\rangle & \langle 0, 8, 7, 13\rangle & \langle 11, 12, 3, 1\rangle\\
\langle 2, 9, 1, 6\rangle & \langle 0, 4, 2, 5\rangle & \langle 9, 13, 10, 8\rangle & \langle 1, 8, 4, 9\rangle\\
\langle 7, 13, 5, 6\rangle & \langle 5, 9, 3, 4\rangle & \langle 7, 11, 2, 9\rangle & \langle 0, 10, 3, 6\rangle\\
\langle 3, 4, 8, 12\rangle & \langle 2, 3, 0, 11\rangle & \langle 11, 13, 4, 0\rangle & \langle 7, 12, 0, 4\rangle.\\
\end{array}
\end{equation*}
\end{proof}
\vskip 10pt

\begin{proposition}
There exists a $[2,1,1]$-GDC$(6)$ of type
$1^t11^1$ with size $6u^2+20u$, where $6u=t$ for each $t \in \{30,36,54,$
$66,78\}$.
\end{proposition}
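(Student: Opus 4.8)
The statement is a pure construction claim, to be proved exactly as the other propositions of this appendix: for each of the five values $t=6u$ with $u\in\{5,6,9,11,13\}$ I would exhibit an explicit $[2,1,1]$-GDC$(6)$ of type $1^{t}11^1$ by the difference method. Concretely, the point set is a cyclic part together with an adjoined $11$-element set, the partition $\G_t$ consists of that $11$-set (the unique non-singleton group) together with $t$ singletons, and the code $\C_t$ is the union of the orbits of a listed collection $P\cup R$ of base codewords under a suitable group of automorphisms -- a translation (or a subgroup of translations) of the cyclic part together with an order-$s$ multiplier that fixes the $11$-set -- all of this encoded as a tuple $(n,m,s,M,P,R)$ in the format already used throughout Section~7. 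The size is then pure orbit bookkeeping: the parameters are chosen so that the $s$-fold expansion of $P$ plus the single expansion of $R$, developed over a group of the appropriate order, yields exactly $6u^2+20u$ codewords, which is precisely the number that makes the ``fill in the groups'' step of Theorems~\ref{CCC12t+11(1)} and~\ref{CCC12t+17(1)} produce optimal codes of lengths $6u+11$.

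With such lists in hand, three finite checks are needed for each $u$. (Composition.) Each base codeword is a $4$-tuple $\langle a_1,a_2,a_3,a_4\rangle$, read with $a_1,a_2$ carrying symbol $1$ and $a_3,a_4$ carrying symbols $2$ and $3$; composition $[2,1,1]$ is then automatic and is preserved by translations and multipliers. (Group condition.) One verifies that no base codeword meets the $11$-set in two coordinates, nor repeats a residue of the cyclic part; since $\G_t$ is invariant under the automorphism group, this transfers to all of $\C_t$. (Distance.) Since every codeword has Hamming weight $4$, one computes $d_H(u,v)=8-k-a$ with $k=|\supp(u)\cap\supp(v)|$ and $a=|\{x\in\supp(u)\cap\supp(v):u_x=v_x\}|$, so $d_H(u,v)\ge 6$ is equivalent to $k+a\le 2$; by translation-invariance it suffices to check this for each base codeword against every nonzero translate and multiple of a base codeword, which is a finite system of ``difference''-type conditions on the listed blocks. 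All three checks I would perform by computer.

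The real work -- and the main obstacle -- is finding the base codeword lists themselves: for each $u\in\{5,6,9,11,13\}$ one must search the (large) space of candidate base blocks over the chosen automorphism group for a collection that meets every difference condition and has exactly the prescribed size $6u^2+20u$. This is a computer search, exhaustive or randomized, and its output is precisely the five tuples $(n,m,s,M,P,R)$ that the proof displays for $t\in\{30,36,54,66,78\}$. Once those are fixed, the three checks above complete the verification for each value of $t$, and the five constructions together establish the proposition.
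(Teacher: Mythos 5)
Your proposal matches the paper's proof in both structure and substance: the paper's Proposition~7.12 takes $X_t=\bbZ_t\cup(\{a,b,c,d,e\}\times\bbZ_2)\cup\{f\}$ with the $11$-point set as the unique non-singleton group (invariant under the development, with the subscripts on $a,\ldots,e$ developed modulo $2$ and $f$ fixed), and simply lists computer-found base codewords in the $(n,m,s,M,P,R)$ format for each $t\in\{30,36,54,66,78\}$, the verification being exactly the finite composition/group/distance checks you describe. The only content you cannot supply blind is the explicit base-codeword lists themselves, which is precisely what the paper's proof consists of.
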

\begin{proof}
For each $t \in \{30,36,54,66,78\}$, let $X_t=\bbZ_{t} \cup
(\{a,b,c,d,e\}\times \bbZ_2) \cup \{f\}$, $\G_t=\{\{i\}:i \in
\bbZ_t\} \cup \{\{a,b,c,d,e\}\times \bbZ_2 \cup \{f\}\}$ and  $\C_t$ be the set of quasi-cyclic shifts of the vectors generated by the following vectors respectively. Then
$(X_t,\G_t,\C_t)$ is a $[2,1,1]$-GDC$(6)$ of type $1^t11^1$ with size
$6u^2+20u$,  where $6u=t$ and
\begin{enumerate}

\item $t=30$, $n=30$, $m=1$, $s=1$, $M=3$
\begin{equation*}
\begin{array}{ccc}
P: \langle a_0, 24, 7, 2 \rangle & \langle 1, 21, a_0, 0 \rangle & \langle a_0, 28, 29, 3 \rangle\\
\langle 2, 11, a_0, 28 \rangle & \langle b_0, 15, 13, 10 \rangle & \langle 1, 20,b_0, 15 \rangle\\
\langle b_0, 0, 23, 2 \rangle & \langle 4, 29, b_0, 12 \rangle & \langle c_0, 6, 5, 27 \rangle\\
\langle 3, 29, c_0, 14 \rangle & \langle c_0, 26, 19, 4 \rangle & \langle 1, 22, c_0, 24 \rangle\\
\langle d_0, 2, 29, 4 \rangle & \langle 1, 11, d_0, 18 \rangle & \langle d_0, 12, 19, 9 \rangle\\
\langle 3, 22, d_0, 8 \rangle & \langle e_0, 14, 17, 28 \rangle & \langle 3, 17, e_0, 18 \rangle\\
\langle e_0, 6, 9, 7 \rangle & \langle 1, 4, e_0, 8 \rangle & \langle f, 11, 0, 10 \rangle\\
\langle 0, 4, f, 17 \rangle & \langle 0, 24, 12, 16 \rangle & \langle 1, 7, 19, 29 \rangle\\
\langle 2, 8, 0, 20 \rangle\\
\end{array}
\end{equation*}

\item $t=36$, $n=36$, $m=1$, $s=1$, $M=3$
\begin{equation*}
\begin{array}{ccc}
P: \langle a_0, 24, 13, 14 \rangle & \langle 1, 3, a_0, 6 \rangle & \langle a_0, 28, 15, 35 \rangle\\
\langle 2, 35, a_0, 4 \rangle & \langle b_0, 12, 19, 16 \rangle & \langle 1, 2, b_0, 27 \rangle\\
\langle b_0, 9, 8, 11 \rangle & \langle 4, 23, b_0, 18 \rangle & \langle c_0, 30, 5, 27 \rangle\\
\langle 1, 4, c_0, 0 \rangle & \langle c_0, 26, 22, 1 \rangle & \langle 3, 35, c_0, 26 \rangle\\
\langle d_0, 0, 15, 10 \rangle & \langle 1, 17, d_0, 30 \rangle & \langle d_0, 32,17, 25 \rangle\\
\langle 3, 22, d_0, 20 \rangle & \langle e_0, 23, 2, 6 \rangle & \langle 1, 14, e_0, 23 \rangle\\
\langle e_0, 10, 21, 1 \rangle & \langle 0, 9, e_0, 22 \rangle & \langle f, 0, 31, 5 \rangle\\
\langle 1, 29, f, 9 \rangle & \langle 0, 24, 8, 6 \rangle & \langle 1, 31, 10,26 \rangle\\
\langle 2, 9, 30, 3 \rangle & \langle 2, 28, 32, 16 \rangle & \langle 0, 16, 28, 1 \rangle\\
\langle 2, 14, 24, 20 \rangle\\
\end{array}
\end{equation*}

\item $t=54$, $n=54$, $m=5$, $s=6$, $M=3$
\begin{equation*}
\begin{array}{ccc}
P: \langle 0, 1, 2, 3\rangle\\
R: \langle a_0, 30, 19, 4\rangle & \langle 1, 23, a_0, 38\rangle & \langle a_0, 8, 39,  41\rangle\\
\langle 3, 16, a_0, 48\rangle & \langle b_0, 24, 33, 47\rangle & \langle 1, 52, b_0, 12\rangle\\
\langle b_0, 38, 40, 19\rangle & \langle 3, 29, b_0, 2\rangle & \langle c_0, 18, 22, 25\rangle\\
\langle 1, 47, c_0, 50\rangle & \langle c_0, 2, 27, 23\rangle & \langle 3, 40, c_0, 30\rangle\\
\langle d_0, 42, 29, 37\rangle & \langle 1, 5, d_0, 21\rangle & \langle d_0, 45, 10, 20\rangle\\
\langle 2, 16, d_0, 0\rangle & \langle e_0, 0, 52, 38\rangle & \langle 1, 46, e_0,27\rangle\\
\langle e_0, 15, 29, 31\rangle & \langle 2, 47, e_0, 6\rangle & \langle f, 31, 24, 5\rangle\\
\langle 0, 32, f, 22\rangle & \langle 0, 42, 36, 6\rangle & \langle 1, 31, 20, 13\rangle\\
\langle 1, 22, 40, 49\rangle & \langle 1, 14, 6, 43\rangle & \langle 2, 26, 8, 45\rangle\\
\langle 1, 17, 11, 16\rangle & \langle 0, 24, 35, 44\rangle & \langle 1, 35, 7, 53\rangle\\
\langle 2, 14, 53, 25\rangle\\
\end{array}
\end{equation*}

\item $t=66$, $n=66$, $m=17$, $s=4$, $M=3$
\begin{equation*}
\begin{array}{ccc}
P: \langle 0, 1, 2, 3\rangle & \langle 0, 5, 6, 10\rangle\\
R: \langle a_0, 42, 22, 13\rangle & \langle 1, 11, a_0, 44\rangle & \langle a_0, 26, 17, 39\rangle\\
\langle 3, 64, a_0, 36\rangle & \langle b_0, 24, 39, 2\rangle & \langle 1, 64, b_0, 47\rangle\\
\langle b_0, 59, 58, 1\rangle & \langle 2, 21, b_0, 0\rangle & \langle c_0, 42, 32, 17\rangle\\
\langle 1, 46, c_0, 12\rangle & \langle c_0, 51, 49, 34\rangle & \langle 2, 5, c_0, 57\rangle\\
\langle d_0, 18, 26, 46\rangle & \langle 1, 16, d_0, 9\rangle & \langle d_0, 63, 29, 13\rangle\\
\langle 2, 53, d_0, 18\rangle & \langle e_0, 14, 23, 18\rangle & \langle 1, 17, e_0, 56\rangle\\
\langle e_0, 63, 1, 4\rangle & \langle 0, 22, e_0, 57\rangle & \langle f, 36, 10, 32\rangle\\
\langle 1, 29, f, 57\rangle & \langle 0, 54, 43, 48\rangle & \langle 1, 19, 31, 59\rangle\\
\langle 1, 10, 24, 37\rangle & \langle 1, 7, 40, 27\rangle & \langle 1, 35, 55, 65\rangle\\
\langle 1, 53, 23, 63\rangle & \langle 2, 20, 42, 64\rangle & \langle 2, 26, 49, 28\rangle\\
\langle 1, 25, 62, 32\rangle & \langle 0, 23, 11, 35\rangle & \langle 2, 62, 37, 23\rangle\\
\langle 0, 39, 63, 53\rangle & \langle 0, 20, 31, 65\rangle\\
\end{array}
\end{equation*}

\item $t=78$, $n=78$, $m=11$, $s=4$, $M=3$
\begin{equation*}
\begin{array}{ccc}
P: \langle 0, 1, 2, 9\rangle & \langle 0, 44, 62, 56\rangle & \langle 0, 61, 66, 19\rangle\\
R: \langle a_0, 18, 41, 7\rangle & \langle 1, 11, a_0, 56\rangle & \langle a_0, 44, 28, 3\rangle\\
\langle 3, 10, a_0, 6\rangle & \langle b_0, 6, 32, 46\rangle & \langle 1, 52, b_0, 42\rangle\\
\langle b_0, 75, 31, 29\rangle & \langle 2, 23, b_0, 21\rangle & \langle c_0, 72, 67, 68\rangle\\
\langle 1, 35, c_0, 21\rangle & \langle c_0, 3, 16, 41\rangle & \langle 2, 16, c_0, 72\rangle\\
\langle d_0, 72, 27, 29\rangle & \langle 1, 20, d_0, 46\rangle & \langle d_0, 28, 13, 20\rangle\\
\langle 3, 53, d_0, 66\rangle & \langle e_0, 48, 27, 52\rangle & \langle 1, 8, e_0, 47\rangle\\
\langle e_0, 29, 2, 49\rangle & \langle 3, 58, e_0, 18\rangle & \langle f, 15, 46, 56\rangle\\
\langle 1, 5, f, 27\rangle & \langle 0, 6, 60, 48\rangle & \langle 0, 29, 65, 12\rangle\\
\langle 0, 77, 51, 59\rangle & \langle 1, 77, 34, 14\rangle & \langle 1, 32, 17, 48\rangle\\
\langle 1, 41, 74, 29\rangle & \langle 1, 50, 22, 73\rangle & \langle 1, 49, 23, 40\rangle\\
\langle 2, 77, 45, 29\rangle & \langle 2, 11, 67, 10\rangle & \langle 0, 25, 69, 27\rangle\\
\langle 1, 62, 38, 66\rangle & \langle 1, 26, 68, 58\rangle & \langle 1, 4, 10,19\rangle\\
\langle 0, 28, 39, 52\rangle.\\
\end{array}
\end{equation*}
\end{enumerate}
\end{proof}

\begin{proposition}
$A_4(n,6,[2,1,1])$  =  $U(n,6,[2,1,1])$ for each $n
\in \{6, 8, 10, 11, 13, 14, 16, 17, 19, 22, 23, 25, 28, 31, 34,$ $
35, 37, 43, 55, 67, 79, 103\}$.
\end{proposition}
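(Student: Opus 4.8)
The plan is to prove each instance $A_4(n,6,[2,1,1])=U(n,6,[2,1,1])$ by constructing, for each of the $22$ listed values of $n$, an $(n,6,[2,1,1])_4$-code whose size is exactly $U(n,6,[2,1,1])=\lfloor\frac{n}{2}\lfloor\frac{n-1}{3}\rfloor\rfloor$. The upper-bound half is already available: $U(n,6,[2,1,1])$ is an upper bound on $A_4(n,6,[2,1,1])$ by the Corollary of Section~II (apply Lemma~\ref{bound2} to pass from $[2,1,1]$ to $[1,1,1]$, then Lemma~\ref{bound1}). Hence any code of that size is optimal and meets the bound, and the equality follows. First I would tabulate the target sizes $U(n,6,[2,1,1])$ for the listed $n$ (e.g.\ $3$ for $n=6$, $8$ for $n=8$, $15$ for $n=10$, $16$ for $n=11$, $\dots$, $42$ for $n=17$, $80$ for $n=23$, $192$ for $n=35$, $\dots$, $1751$ for $n=103$), so that each constructed code can be matched against its target.

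For the three largest lengths $n\in\{17,23,35\}$ I would use the codes displayed explicitly in the three tables above, of sizes $42$, $80$, $192$, which coincide with $U(17,6,[2,1,1])$, $U(23,6,[2,1,1])$, $U(35,6,[2,1,1])$; one checks directly that each listed $4$-tuple $\langle i,j,r,c\rangle$ has four distinct entries (so the associated word over $\bbZ_4$ has composition $[2,1,1]$, with $i,j$ carrying the symbol $1$, $r$ the symbol $2$, $c$ the symbol $3$) and that any two of them are at Hamming distance at least $6$. For this last point it is convenient to use the elementary identity that, for weight-$4$ words $\vu,\vv$ of composition $[2,1,1]$, $d_H(\vu,\vv)=8-|\supp(\vu)\cap\supp(\vv)|-a(\vu,\vv)$, where $a(\vu,\vv)$ counts coordinates carrying the same nonzero symbol in both; thus $d_H(\vu,\vv)\ge 6$ is equivalent to $|\supp(\vu)\cap\supp(\vv)|+a(\vu,\vv)\le 2$, a purely combinatorial incidence condition.

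For the remaining lengths I would use the difference method already set up in the paper: take $X=\bbZ_n$ (adjoining, for the cases where $U(n,6,[2,1,1])$ is not a multiple of $n$, such as $n\in\{6,10,11,16,22,28,34\}$, a few fixed ``infinite'' points or using half-length quasi-cyclic orbits exactly as in the other propositions of this appendix, or, for the smallest such $n$, simply listing the codewords outright), specify a multiplier $m$, a number $s$ of multiplicative shifts, a translation step $M$, and base-codeword lists $P$ (each developed through $m^0,\dots,m^{s-1}$) and $R$, then develop everything by translation modulo $n$. For each $n$ one then checks: (i) the orbit sizes under this action sum to exactly $U(n,6,[2,1,1])$; (ii) multiplication by the unit $m$ and translation are bijections of $\bbZ_n$, so they preserve distinctness of a codeword's four entries, hence composition $[2,1,1]$; and (iii) via the identity above, the minimum distance is at least $6$ — within an orbit this reduces to a condition on the differences of the base codeword's entries (no translation step may simultaneously create a two-element support overlap and a symbol agreement), and between distinct orbits to finitely many pairwise comparisons. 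The data $(n,m,s,M,P,R)$ would be listed for each $n$ in the format used throughout the appendix.

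The genuinely laborious step, and the main obstacle, is item~(iii) together with actually finding base codewords for which all the required difference/incidence conditions hold: there is no unifying structural idea here, so each $n$ is a separate finite search (carried out by computer) to produce valid base codewords and then certify that no pair of generated codewords lies at distance $\le 5$ while the total count equals $U(n,6,[2,1,1])$. Once suitable base codewords are exhibited, items~(i) and~(ii) are immediate bookkeeping.
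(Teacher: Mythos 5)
Your proposal follows essentially the same route as the paper: the upper bound comes from the Johnson-type recursion together with Lemma~\ref{bound1}, and each length is handled by exhibiting a code of size $U(n,6,[2,1,1])$ found by computer, presented either as explicit codeword tables (for $n=17,23,35$) or as base codewords $(n,m,s,M,P,R)$ developed by multiplication and translation modulo $n$, with quasi-cyclic orbits handling the lengths where $U(n,6,[2,1,1])$ is not a multiple of $n$. Your distance identity $d_H(\vu,\vv)=8-|\supp(\vu)\cap\supp(\vv)|-a(\vu,\vv)$ is a correct and convenient formulation of the check the paper performs implicitly; the only remaining content is the explicit search data itself, which is exactly what the paper's Proposition~7.13 supplies.
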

\begin{proof}
For each $n \in \{6, 8, 10, 11, 13, 14, 16, 17, 19, 22,$ $ 23, 25,
28, 31, 34, 35, 37, 43, 55, 67, 79, 103\}$, $\C_n$ is an optimal
$(n,6,[2,1,1])_4$-code with size $U(n,6,[2,1,1])$, where $\C_n$ is the set of cyclic (or quasi-cyclic) shifts of the vectors generated by the following vectors respectively.

\begin{enumerate}

\item $n=6$, $m=1$, $s=1$, $M=2$
\begin{equation*}
\begin{array}{ccc}
P: \langle0, 1, 2, 3\rangle\\
\end{array}
\end{equation*}

\item $n=8$, $m=1$, $s=1$,  $M=1$
\begin{equation*}
\begin{array}{ccc}
P: \langle0, 1, 3, 5\rangle\\
\end{array}
\end{equation*}

\item $n=10$, $m=1$, $s=1$,  $M=2$
\begin{equation*}
\begin{array}{ccc}
P: \langle0, 1, 2, 8\rangle & \langle0, 4, 7, 9\rangle & \langle1, 3, 6, 7\rangle\\
\end{array}
\end{equation*}

\item $n=11$, $m=3$, $s=3$, $M=11$
\begin{equation*}
\begin{array}{ccc}
P: \langle 1, 2, 3, 4\rangle & \langle 1, 5, 6, 7\rangle & \langle 2, 6, 5, 10\rangle\\
R: \langle 2, 8, 9, 7\rangle & \langle 0, 10, 4, 5\rangle & \langle 4, 8, 1, 0\rangle\\
\langle 3, 5, 0, 8\rangle & \langle 8, 10, 6, 3\rangle & \langle 0, 9, 6, 2\rangle\\
\langle 4, 5, 2, 9\rangle\\
\end{array}
\end{equation*}

\item $n=13$, $m=7$, $s=2$, $M=1$
\begin{equation*}
\begin{array}{ccc}
P: \langle 0, 1, 4, 10\rangle\\
\end{array}
\end{equation*}

\item $n=14$, $m=1$, $s=1$,  $M=1$
\begin{equation*}
\begin{array}{ccc}
P: \langle0, 2, 3, 9\rangle & \langle0, 6, 5, 10\rangle\\
\end{array}
\end{equation*}

\item $n=16$, $m=1$, $s=1$,  $M=2$
\begin{equation*}
\begin{array}{ccc}
P: \langle0, 1, 2, 3\rangle & \langle0, 4, 9, 15\rangle & \langle0, 6, 13, 14\rangle\\
\langle1, 5, 10, 12\rangle & \langle1, 11, 9, 14\rangle\\
\end{array}
\end{equation*}

\item $n=17$, $M=17$, the code is given in Table \ref{17,6,[2,1,1]}.

\item $n=19$, $m=5$, $s=3$, $M=1$
\begin{equation*}
\begin{array}{ccc}
P: \langle 0, 1, 8, 12\rangle\\
\end{array}
\end{equation*}

\item $n=22$, $m=3$, $s=3$, $M=2$
\begin{equation*}
\begin{array}{ccc}
P: \langle 0, 13, 1, 10\rangle\\
R: \langle 1, 17, 13, 21\rangle & \langle 1, 4, 2, 15\rangle & \langle 0, 16, 12, 21\rangle\\
\langle 1, 8, 0, 12\rangle\\
\end{array}
\end{equation*}

\item $n=23$, $M=23$, the code  is given in Table \ref{23,6,[2,1,1]}.

\item $n=25$, $m=7$, $s=2$, $M=1$
\begin{equation*}
\begin{array}{ccc}
P: \langle 0, 1, 3, 23\rangle & \langle 0, 8, 20, 13\rangle\\
\end{array}
\end{equation*}

\item $n=28$, $m=5$, $s=5$, $M=2$
\begin{equation*}
\begin{array}{ccc}
P: \langle 0, 1, 3, 6\rangle\\
R: \langle 1, 5, 12, 17\rangle & \langle 0, 24, 14, 17\rangle & \langle 0, 16, 8, 23\rangle\\
\langle 1, 9, 2, 15\rangle\\
\end{array}
\end{equation*}

\item $n=31$, $m=2$, $s=5$, $M=1$
\begin{equation*}
\begin{array}{ccc}
P: \langle 0, 1, 6, 14\rangle\\
\end{array}
\end{equation*}

\item $n=34$, $m=7$, $s=5$, $M=2$
\begin{equation*}
\begin{array}{ccc}
P: \langle 0, 3, 14, 23\rangle\\
R: \langle 0, 32, 10, 24\rangle & \langle 1, 3, 11, 20\rangle & \langle 0, 16, 13, 20\rangle\\
\langle 0, 27, 17, 28\rangle & \langle 1, 19, 7, 22\rangle & \langle1, 16, 15, 31\rangle\\
\end{array}
\end{equation*}

\item $n=35$, $M=35$, the code  is given in Table \ref{35,6,[2,1,1]}.

\item $n=37$, $m=5$, $s=6$, $M=1$
\begin{equation*}
\begin{array}{ccc}
P: \langle 0, 1, 11, 27\rangle\\
\end{array}
\end{equation*}

\item $n=43$, $m=4$, $s=7$, $M=1$
\begin{equation*}
\begin{array}{ccc}
P: \langle 0, 1, 7, 13\rangle\\
\end{array}
\end{equation*}

\item $n=55$, $m=7$, $s=2$, $M=1$
\begin{equation*}
\begin{array}{ccc}
P: \langle 0, 9, 53, 6\rangle & \langle 0, 54, 11, 15\rangle & \langle 0, 4, 40, 25\rangle\\
R: \langle 0, 31, 49, 45\rangle & \langle 0, 20, 39, 23\rangle & \langle 0, 17, 30, 43\rangle\\
\end{array}
\end{equation*}

\item $n=67$, $m=3$, $s=11$, $M=1$
\begin{equation*}
\begin{array}{ccc}
P: \langle 0, 1, 13, 55\rangle\\
\end{array}
\end{equation*}

\item $n=79$, $m=3$, $s=13$, $M=1$
\begin{equation*}
\begin{array}{ccc}
P: \langle 0, 1, 24, 56\rangle\\
\end{array}
\end{equation*}

\item $n=103$, $m=3$, $s=17$, $M=1$
\begin{equation*}
\begin{array}{ccc}
P: \langle 0, 1, 7, 97\rangle.\\
\end{array}
\end{equation*}
\end{enumerate}
\end{proof}
\vskip 10pt

\begin{proposition}
\label{CCC(6)7} $A_4(7,6,[2,1,1])=4$.
\end{proposition}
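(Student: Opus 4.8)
The statement has two halves. For the lower bound $A_4(7,6,[2,1,1])\ge 4$ I would simply exhibit four codewords on the point set $\bbZ_7$ and check all six pairwise distances by hand; one such choice is
\begin{equation*}
\langle 0,1,2,3\rangle,\quad \langle 4,5,3,2\rangle,\quad \langle 3,6,1,5\rangle,\quad \langle 2,6,4,0\rangle
\end{equation*}
(in the $\langle a_1,a_2,a_3,a_4\rangle$ notation of Section~II, so $a_1,a_2$ carry the symbol $1$, $a_3$ the symbol $2$, $a_4$ the symbol $3$); every pair turns out to be at Hamming distance exactly $6$. The real content is the upper bound $A_4(7,6,[2,1,1])\le 4$, since the generic bound of the Corollary only gives $U(7,6,[2,1,1])=7$; so a length-specific argument is required.

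For the upper bound, suppose $\C$ is a $(7,6,[2,1,1])_4$-code with $|\C|=5$, to be contradicted. I would first translate the distance condition into a statement about supports. For distinct $u,v\in\C$ write $S_u=\supp(u)$, so $|S_u|=4$ and $|S_u\cap S_v|\ge 4+4-7=1$. Counting agreeing coordinates, $d_H(u,v)=7-(|S_u\cap S_v|-1)-e(u,v)$, where $e(u,v)$ is the number of points of $S_u\cap S_v$ on which $u$ and $v$ carry the same symbol. Hence $d_H\ge 6$ is equivalent to: (a) $|S_u\cap S_v|\le 2$ for every pair, and (b) if $|S_u\cap S_v|=2$ then $u$ and $v$ disagree at both common points. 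Now I would double count with $d(x)=|\{u\in\C: x\in S_u\}|$: one has $\sum_x d(x)=4\cdot 5=20$ and $\sum_{\{u,v\}}|S_u\cap S_v|=\sum_x\binom{d(x)}{2}$, which by (a) is at most $2\binom{5}{2}=20$. Since the $d(x)$ are seven nonnegative integers summing to $20$, a short calculation shows $\sum_x\binom{d(x)}{2}\ge 19$, with the only admissible degree sequences being $(3,3,3,3,3,3,2)$ (sum $19$) and $(4,3,3,3,3,2,2)$ (sum $20$).

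If the degree sequence is $(4,3,3,3,3,2,2)$, then $\sum_x\binom{d(x)}{2}=20=2\binom{5}{2}$ forces every one of the ten pairs of supports to meet in exactly two points; the unique degree-$4$ point $x_0$ lies in four codewords, and by (b) these four codewords carry pairwise distinct symbols at $x_0$ — impossible, since only $1,2,3$ are available. If the degree sequence is $(3,3,3,3,3,3,2)$, then exactly one pair of supports meets in one point, say $S_1\cap S_2=\{y\}$, and the other nine meet in two. The crucial point is that any degree-$3$ point $x\ne y$ lies in three codewords, no two of which are the exceptional pair (else $x\in S_1\cap S_2=\{y\}$); by (b) those three codewords then carry the three distinct symbols $1,2,3$ at $x$, so $x$ contributes exactly one occurrence of the symbol $2$ and exactly one of the symbol $3$. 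Comparing this with the totals $\sum_u|\{x:u_x=2\}|=5=\sum_u|\{x:u_x=3\}|$ gives the contradiction: if $d(y)=2$ then the six degree-$3$ points are all distinct from $y$ and already account for $6>5$ symbol-$2$ coordinates; if $d(y)=3$, the five degree-$3$ points other than $y$ plus the lone degree-$2$ point force every symbol at $y$ to be $1$, whereas (b) applied to the (necessarily distance-$2$) pair formed by $u^1$ and the third codeword through $y$ forbids that. In either case $|\C|=5$ is impossible.

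\textbf{Expected obstacle.} The degree-sequence dichotomy and the $(4,3,3,3,3,2,2)$ case are short; the delicate part is the $(3,3,3,3,3,3,2)$ case, where the counting of supports alone does not conclude and one must layer in the symbol constraints, carefully separating according to the degree of the exceptional intersection point $y$ (and, when $d(y)=3$, of the degree-$2$ point). Keeping precise track of how many coordinates carry each of the symbols $2$ and $3$ is what ultimately closes the gap between the combinatorial bound $7$ and the true value $4$.
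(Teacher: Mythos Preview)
Your proof is correct and in fact goes well beyond the paper's own treatment. The paper's ``proof'' of this statement merely exhibits four codewords
\[
\langle 0,1,2,3\rangle,\ \langle 0,4,5,6\rangle,\ \langle 2,3,4,5\rangle,\ \langle 5,6,1,2\rangle
\]
and says nothing about the upper bound; since the Johnson-type Corollary only yields $U(7,6,[2,1,1])=\lfloor\tfrac{7}{2}\lfloor\tfrac{6}{3}\rfloor\rfloor=7$, the equality $A_4(7,6,[2,1,1])=4$ is presumably being asserted on the strength of an unstated exhaustive computer search. Your argument --- translating $d_H\ge 6$ into the support conditions (a) and (b), double counting to reduce to the two degree sequences $(4,3^4,2^2)$ and $(3^6,2)$, and then using the pigeonhole on symbols in the first case and the symbol-$2$/symbol-$3$ budget of $5$ in the second --- gives a clean theoretical upper bound that the paper simply does not supply. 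Your four exhibited codewords also check out (all six pairwise distances are exactly $6$).

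Two very minor remarks on the write-up of sub-case~2b. First, where you write ``the (necessarily distance-$2$) pair formed by $u^1$ and the third codeword through $y$'' you mean the pair whose \emph{supports meet in two points}; the Hamming distance of that pair is $6$, not $2$. Second, the contradiction there can be reached one step earlier at the degree-$2$ point $z$: since $z\neq y$, the two codewords through $z$ cannot be the exceptional pair $(u^1,u^2)$, hence their supports meet in two points and by~(b) they disagree at $z$ --- yet your symbol budget already forces both to carry a~$1$ at $z$. Either route closes the case.
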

\begin{proof}
The $4$ required codewords are:
\begin{equation*}
\begin{array}{ccc}
\langle 0, 1, 2, 3\rangle & \langle 0, 4, 5, 6\rangle & \langle 2, 3, 4, 5\rangle\\
\langle 5, 6, 1, 2\rangle.\\
\end{array}
\end{equation*}
\end{proof}
\vskip 10pt
%

\end{document}